\newtheoremstyle{custom}
  {3pt}
  {3pt}
  {\slshape}
  {}
  {\bfseries}
  {.}
  { }
   {}
\theoremstyle{custom}
\numberwithin{equation}{subsection}
\newtheorem{theorem}[equation]{Theorem}
\newtheorem{proposition}[equation]{Proposition}
\newtheorem{proposition/definition}[equation]{Proposition/Definition}
\newtheorem{lemma}[equation]{Lemma}
\newtheorem{corollary}[equation]{Corollary}
\newtheorem{conjecture}[equation]{Conjecture}
\theoremstyle{definition}
\newtheorem{definition}[equation]{Definition}
\newtheorem{example}[equation]{Example}
\newtheorem{problem}[equation]{Problem}
\theoremstyle{remark}
\newtheorem{remark}[equation]{Remark}
\newtheoremstyle{exercise}
  {3pt}
  {6pt}
  {}
  {}
  {\bfseries}
  {:}
  { }
   {}
\theoremstyle{exercise}
\newtheorem{exercise}[equation]{Exercise}
\newtheoremstyle{exercises}
  {3pt}
  {6pt}
  {}
  {}
  {\bfseries}
  {:}
  {\newline}
   {}
\theoremstyle{exercise}
\newtheorem{exercises}[equation]{Exercises}
\def\boxit#1{\vbox{\hrule height1pt\hbox{\vrule width1pt\kern3pt
  \vbox{\kern3pt#1\kern3pt}\kern3pt\vrule width1pt}\hrule height1pt}}
\def\trank{\text{rank}}
\def\BC{\mathbb C}
\def\BR{\mathbb R}
\def\BP{\mathbb P}
\def\pp#1{\mathbb P^{#1}}
\def\pp#1{{\mathbb P}^{#1}}
\def\tdim{{\rm dim}}
\def\hd{,...,}
\def\upperp{{}^\perp}
\def\inv{{}^{-1}}
\def\cR{{\mathcal R}}
\def\cL{{\mathcal L}}
\def\cO{{\mathcal O}}
\def\11{\mathbf 1}
\def\fm{{\mathfrak m}}
\def\l{\lambda}
\def\a{\alpha}
\def\o{\omega}
\def\g{\gamma}
\def\s{\sigma}
\def\d{\delta}
\def\ot{{\mathord{ \otimes } }}
\def\op{{\mathord{\,\oplus }\,}}
\def\ra{{\mathord{\;\rightarrow\;}}}
\def\dim{{\rm dim}\;}
\def\frak{\mathfrak}
\def\op{\oplus}
\def\BZ{\Bbb Z}
\def\ep{\epsilon}
\def\op{\oplus}
\def\ul{\underline}
\def\s{\sigma}
\def\a{\alpha}
\def\g{\gamma}
\def\l{\lambda}
\def\FS{\mathfrak  S}
\def\ol{\overline}
\def\BP{\mathbb  P}
\def\BC{\mathbb  C}
\def\pp#1{\mathbb  P^{#1}}
\def\cC{\mathcal  C}
\def\BR{\mathbb  R}
\def\ep{\epsilon}
\def\opc{\op\cdots\op}
\newcommand{\vvirg}{ , \dots , }
\def\hd{, \hdots ,}
\def\inv{{}^{-1}}
\def\pp#1{\mathbb  P^{#1}}
\def\ra{\rightarrow}
\def\tdeg{\operatorname{deg}}
\def\tdim{\operatorname{dim}}
\def\tlim{\lim}
\def\tmin{\operatorname{min}}
\def\trank{\operatorname{rank}}
\def\upperp{{}^{\perp}}
\def\ctimes{\times \cdots\times}
\def\tspan{\operatorname{span}}
\def\be{\begin{equation}}
\def\ene{\end{equation}}
\def\tlog{{\rm{log}}}
\def\tspan{{\rm span}}
\def\G{\Gamma}
\def\tspan{{\rm span}}
\DeclareMathOperator{\Van}{\textit{Van}} 
\DeclareMathOperator{\Vand}{\textit{Vand}} 
\DeclareMathOperator{\Mat}{\textit{Mat}} 
\DeclareMathOperator{\Bfly}{\textit{Bfly}} 
\tikzset{vertex/.style={
        circle,
        fill=black,
        inner ysep=0pt,
        inner xsep=0pt,
        minimum width = 0.15cm,
        minimum height = 0.15cm}}
\def\trank{{\mathrm {rank}}}
\def\tmult{{\rm mult}}
\def\Om{\Omega}
\def\rig#1{\smash{ \mathop{\longrightarrow}
    \limits^{#1}}}
\def\lc{\cL\cC}
\def\tb{B} 
\def\rig{\cR ig}
\begin{document}
\sloppy 

\def\BI{{\mathcal I}}\def\BJ{{\bold J}} 

\title{Complexity of linear circuits and geometry}
\author[F. Gesmundo, J.D. Hauenstein, C. Ikenmeyer, J.M. Landsberg]{Fulvio Gesmundo, Jonathan D. Hauenstein, Christian Ikenmeyer and J.M. Landsberg}
 \begin{abstract}
We use algebraic geometry to study matrix rigidity, and more generally, the complexity of computing
a matrix-vector product, continuing a study initiated in
  \cite{LV,MR2870721}. 
In particular, we (i) exhibit
many  non-obvious equations testing for (border) rigidity, (ii) compute degrees of  varieties
associated to rigidity, (iii) describe
 algebraic varieties associated to families of matrices that are expected to have super-linear rigidity,
 and   (iv) prove   results about the ideals and degrees of cones  that are of interest in their own right.
\end{abstract}
\thanks{Hauenstein supported by NSF grant DMS-1262428 and DARPA Young Faculty Award (YFA)}
\thanks{Landsberg supported by NSF grant DMS-1006353}
\email{fulges@math.tamu.edu, hauenstein@nd.edu, ciken@math.tamu.edu,  jml@math.tamu.edu}
\keywords{matrix rigidity, Discrete Fourier Transform, Vandermonde matrix, Cauchy matrix  MSC 68Q17,15B05,65T50 }
\maketitle

\begin{center} Communicated by Stephen Cook \end{center}

\section{Introduction}
\label{sec:intro}
 Given an $n\times n$ matrix $A$, how many additions  
 are needed to perform the map
\be\label{yesthemap}
x\mapsto Ax,
\ene
where $x$ is a column vector?
L. Valiant  initiated a study of this question in   \cite{MR0660702}.
He used the model of computation of {\it linear circuits} (see \S\ref{lincirsect})
and observed that for a generic linear map one requires a linear circuit of size $n^2$.
He posed the following problem: 
\begin{problem}\label{explicitproblem}  Find  an {\it explicit} sequence of matrices $A_n$
needing   linear circuits of size      super-linear in  $n$ to compute \eqref{yesthemap}.
\end{problem}
\lq\lq Explicit\rq\rq\ has a precise meaning, see \cite{MR2870721}.
Valiant defined a notion of {\it rigidity} that
is a measurement of the size of the best circuit
of a very restricted type~(see \S\ref{lincirsect}) needed to 
compute~\eqref{yesthemap}.  He proved
that strong lower bounds for rigidity implies super-linear lower bounds for any linear circuit
computing \eqref{yesthemap}, see Theorem \ref{valrigthm} below.
This article continues the use of algebraic geometry, initiated in \cite{MR2870721} and the unpublished notes~\cite{LV}, to~study~these~issues.

\subsection{Why algebraic geometry?}
 Given  a polynomial $P$
on the space of $n\times n$ matrices that vanishes on all matrices of low rigidity (complexity), and
a matrix $A$
such that $P(A)\neq 0$, one obtains a lower bound on the rigidity (complexity) of $A$.

For a simple example,
let $\hat \s_{r,n}\subset Mat_{n}$ denote the variety of $n\times n$ matrices of rank at most~$r$.
(If $n$ is understood, we write $\hat\s_r=\hat\s_{r,n}$.)
Then, $\hat \s_{r,n}$ is the zero set of all minors of size $r+1$.  If one minor of size $r+1$ does not vanish on $A$,
we know the rank of $A$ is at least $r+1$.

Define the {\it $r$-rigidity} of an $n\times n$ matrix $M$ to be the smallest $s$ such that $M=A+B$ where $A\in \hat\s_{r,n}$ and $B$ has
 {exactly} 
$s$ nonzero entries. Write $\rig_r(M)=s$.

  Define the set of {\it matrices of $r$-rigidity at most $s$}:
\be \label{hcrnrs}
\hat \cR[n,r,s]^0:
 =\{ M\in Mat_{n\times n}\mid \rig_r(M)\leq s\}.
\ene

Thus if we can find a polynomial $P$ vanishing on $\hat \cR[n,r,s]^0$ and a matrix $M$  such that \mbox{$P(M)\neq0$}, we know
$\rig_r(M)>s$. One says $M$ is {\it maximally $r$-rigid} if
$\rig_r(M)=(n-r)^2$, and that $M$ is {\it maximally rigid}
if it is maximally $r$-rigid for all $r$. (See  \S\ref{bbgeomlang} for justification
of this terminology.)


 Our  study has two aspects: finding explicit polynomials vanishing on $\hat \cR[n,r,s]^0$, and proving qualitative information about
such polynomials. The utility of explicit polynomials has already been explained.
For a simple example of a qualitative property,  consider the {\it degree} of a polynomial.
As observed in \cite{MR2870721}, for a given $d$, one can describe matrices that cannot be in the zero set of any polynomial
of degree at most $d$ with integer coefficients. They then give an upper bound $2 n^{2n^2}$ for the degrees of the polynomials
generating the ideal
of the polynomials vanishing on  $\hat \cR[n,r,s]^0$, and describe a  family of matrices that do not satisfy
polynomials of degree $2 n^{2n^2}$ (but this family is not explicit in Valiant's sense).

Following ideas in \cite{LV,MR2870721}, we not only study polynomials related to rigidity, but also to
different classes of matrices of interest, such as Vandermonde matrices. As discussed in
\cite{MR2870721}, one could first try to prove a general Vandermonde matrix is maximally rigid, and then
 afterwards try  to find an explicit sequence of maximally rigid Vandermonde matrices (a problem in $n$ variables instead of $n^2$ variables).

Our results are described in \S\ref{ourresults}. We first recall basic definitions
regarding linear circuits in \S\ref{lincirsect}, give brief descriptions of the relevant
varieties in \S\ref{ourvars},    establish notation in \S\ref{notation},  and describe previous
work in \S\ref{prevwork}.
We have attempted to make this paper readable for both computer scientists and geometers. To this end, we put
off the use of algebraic geometry until~\S\ref{geomsect}, although we use results from it in earlier sections,
and   introduce a minimal amount of geometric language in \S\ref{bbgeomlang}.
We suggest  geometers read \S\ref{geomsect} immediately after  \S\ref{bbgeomlang}.
In \S\ref{idealofjs}, we present our qualitative results about equations.
We give examples of  explicit equations in
\S\ref{exsect}.
We give descriptions of several varieties of matrices  in \S\ref{matvarsect}.
In \S\ref{geomsect}, after reviewing standard facts on joins in \S\ref{gjoinsect},  we present
generalities about the ideals of joins in   \S\ref{idealofjoins},
 discuss   degrees of cones  in \S\ref{degsect} and then apply them to our situation in    \S\ref{degourvars}.

\subsection{Linear circuits}\label{lincirsect}
\begin{definition} A {\it linear circuit}\index{linear circuit}  is a directed acyclic graph $\cL\cC$
in which each directed edge is labeled by a nonzero element of $\BC$. If
$u$ is a  vertex   with incoming edges labeled by $\l_1\hd \l_k$ from
vertices  $u_1\hd u_k$, then $\lc_u$ is the expression
$\l_1 \lc_{u_1}+\cdots +\l_k \lc_{u_k}$.

If $\cL\cC$ has $n$ input vertices  and $m$ output vertices, it determines a matrix
$A_{\lc}\in Mat_{n,m}(\BC)$  by setting
\[
A_i^j := \sum_{{p \text{ path}}\atop{\text{from }i \text{ to } j}} \prod_{{e \text{ edge}}\atop{\text{of } p}} \lambda_e,
\]
and $\cL\cC$ is said to compute $A_{\lc}$.

The {\it size} of   $\lc$ is the number of edges in $\lc$. The {\it depth} of $\lc$ is the length
of a longest path from an input node to an output node.
\end{definition}

Note that size is essentially counting
the number of additions needed to compute $x\mapsto Ax$, so in this model, multiplication by
scalars is \lq\lq free.\rq\rq

For example, the na\"ive algorithm for computing a map $A: \BC^2\ra \BC^3$ gives rise to the complete bipartite graph
as in Figure~\ref{naivefig}.  More generally, the na\"ive algorithm produces a linear circuit of size $O(nm)$.

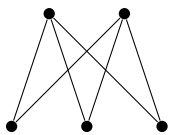
\begin{figure}[!htb]\begin{center}
\begin{tikzpicture}
\node[vertex] (aI) at (-0.5,1.5) {};
\node[vertex] (bI) at (0.5,1.5) {};
\node[vertex] (aII) at (-1,0) {};
\node[vertex] (bII) at (0,0) {};
\node[vertex] (cII) at (1,0) {};
\draw (aI) -- (aII);
\draw (aI) -- (bII);
\draw (aI) -- (cII);
\draw (bI) -- (aII);
\draw (bI) -- (bII);
\draw (bI) -- (cII);
\end{tikzpicture}
\caption{\small{na\"ive linear circuit for $A\in Mat_{2\times 3}$}}  \label{naivefig}
\end{center}
\end{figure}

If an entry in $A$ is zero, we  may   delete the corresponding edge as in Figure~\ref{edgefig}.

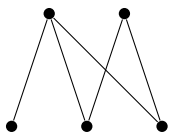
\begin{figure}[!htb]\begin{center}
\begin{tikzpicture}
\node[vertex] (aI) at (-0.5,1.5) {};
\node[vertex] (bI) at (0.5,1.5) {};
\node[vertex] (aII) at (-1,0) {};
\node[vertex] (bII) at (0,0) {};
\node[vertex] (cII) at (1,0) {};
\draw (aI) -- (aII);
\draw (aI) -- (bII);
\draw (aI) -- (cII);
\draw (bI) -- (bII);
\draw (bI) -- (cII);
\end{tikzpicture}
\caption{\small{linear circuit for   $A\in Mat_{2\times 3}$ with $a^2_1=0$}}  \label{edgefig}
\end{center}
\end{figure}

 Stacking  two graphs $\Gamma_1$ and $\Gamma_2$ on top of each other and
identifying the input vertices of $\Gamma_2$ with the output vertices of $\Gamma_1$,
the matrix of the resulting graph is just the matrix product of the matrices of $\Gamma_1$ and $\Gamma_2$. So, if 
$\trank(A)=1$, we may write $A$ as a product $A=A_1A_2$ where $A_1:\BC^2\ra \BC^1$ and $A_2:\BC^1\ra \BC^3$ and
concatenate the two complete graphs as in Figure 3.

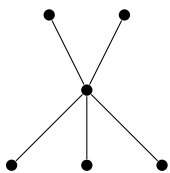
\begin{figure}[!htb]\begin{center}
\begin{tikzpicture}
\node[vertex] (aI) at (-0.5,2) {};
\node[vertex] (bI) at (0.5,2) {};
\node[vertex] (III) at (0,1) {};
\node[vertex] (aII) at (-1,0) {};
\node[vertex] (bII) at (0,0) {};
\node[vertex] (cII) at (1,0) {};
\draw (aI) -- (III);
\draw (bI) -- (III);
\draw (III) -- (aII);
\draw (III) -- (bII);
\draw (III) -- (cII);
\end{tikzpicture}
\caption{\small{linear circuit for rank one  $A\in Mat_{2\times 3}$}}  \label{twofig}
\end{center}
\end{figure}

 Given two directed acyclic graphs, $\Gamma_1$ and $\Gamma_2$, whose vertex sets are disjoint,
with an ordered list of $n$ input nodes and an ordered list of $m$ output nodes each,
we define the sum $\Gamma_1 + \Gamma_2$ to be the directed graph resulting from
(1) identifying the input nodes of $\Gamma_1$ with the input nodes of $\Gamma_2$,
(2) doing the same for the output nodes,
and (3) summing up their adjacency matrices, see Figure~\ref{fig:sumofgraphs} for an example. 
\begin{figure}[!htb]\begin{center}
\begin{tikzpicture}
\node[vertex] (aI) at (-0.5,2) {};
\node[vertex] (bI) at (0.5,2) {};
\node[vertex] (III) at (0,1) {};
\node[vertex] (aII) at (-1,0) {};
\node[vertex] (bII) at (0,0) {};
\node[vertex] (cII) at (1,0) {};
\draw (aI) -- (III);
\draw (bI) -- (III);
\draw (III) -- (aII);
\draw (III) -- (bII);
\draw (III) -- (cII);
\end{tikzpicture}
\quad\quad\raisebox{1cm}{+}\quad\quad
\begin{tikzpicture}
\node[vertex] (aI) at (-0.5,2) {};
\node[vertex] (bI) at (0.5,2) {};
\node[vertex] (aII) at (-1,0) {};
\node[vertex] (bII) at (0,0) {};
\node[vertex] (cII) at (1,0) {};
\draw (aI) -- (aII);
\draw (bI) -- (bII);
\end{tikzpicture}
\quad\quad\raisebox{1cm}{=}\quad\quad
\begin{tikzpicture}
\node[vertex] (aI) at (-0.5,2) {};
\node[vertex] (bI) at (0.5,2) {};
\node[vertex] (III) at (0,1) {};
\node[vertex] (aII) at (-1,0) {};
\node[vertex] (bII) at (0,0) {};
\node[vertex] (cII) at (1,0) {};
\draw (aI) -- (III);
\draw (bI) -- (III);
\draw (III) -- (aII);
\draw (III) -- (bII);
\draw (III) -- (cII);
\draw (aI) -- (aII);
\draw (bI) -- (bII);
\end{tikzpicture}
\caption{ {\small{The sum of two graphs.}}}  \label{fig:sumofgraphs}
\end{center}
\end{figure}
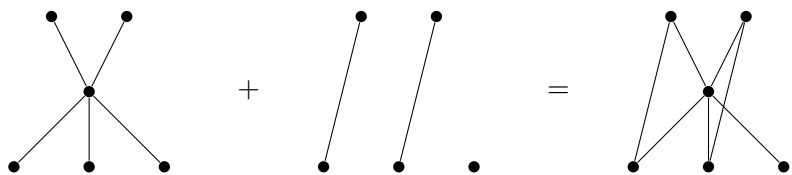

In what follows, for simplicity of discussion, we restrict to the case $n=m$.

 With these descriptions in mind, we see rigidity is a measure of the complexity of a 
 restricted  depth two
circuit   computing  \eqref{yesthemap}. 
Namely the circuit is the sum of two graphs, one of depth one which
has $s$ edges and the other is depth two with $r$ vertices at the middle level.
 The motivation for the restriction to such  circuits is Theorem~\ref{valrigthm}.

\subsection{The varieties we study} \label{ourvars}
Define $\hat\cR[n,r,s]:=\ol{\hat\cR[n,r,s]^0}$, the {\it variety of matrices of $r$-border rigidity at most $s$}, where the
overline denotes the common zero set of all polynomials vanishing on
$\hat\cR[n,r,s]^0$, called the {\it Zariski closure}. This equals the   closure of $\hat\cR[n,r,s]^0$ in the classical topology obtained by taking limits,
see  \cite[Thm 2.33]{MR1344216}.
If $M\in \hat \cR[n,r,s]$ we write $\ul{\rig}_r(M)\leq s$, and say $M$ has  {\it $r$-border rigidity at most $s$}.
By definition, $\ul{\rig}_r(M)\leq \rig_r(M)$.
As pointed out in \cite{MR2870721}, strict inequality can occur.
For example, when $s=1$, one obtains points in the tangent cone as in Proposition \ref{joinstdprop}\eqref{as3}.

\smallskip

It is generally expected that there are  super-linear lower bounds
for the size of a linear circuit computing the linear map $x_n\mapsto A_nx_n$ for
 the following sequences of matrices $A_n=(y^i_j)$,  $1 \leq i,j \leq n$, where $y^i_j$ is the entry of $A$ in row $i$ and column $j$:

\smallskip 

{\it Discrete Fourier Transform} (DFT) matrix: let $\o$ be a primitive $n$-th root of unity. Define the size $n$ DFT matrix by  $y^i_j=\o^{(i-1)(j-1)}$.

\smallskip 

{\it Cauchy} matrix: Let $x^i,z_j$ be variables $1\leq i,j\leq n$, and define   $y^i_j=\frac 1{x^i+z_j}$.
(Here and in the next example, one means super linear lower bounds for a sufficiently  general assignment of the variables.)

\smallskip

{\it Vandermonde} matrix: Let $x_i$, $1\leq i\leq n$,  be  variables, define   $y^i_j:=(x_j)^{i-1}$.

\smallskip

{\it Sylvester} matrix: $Syl_1=\begin{pmatrix} 1&1\\ 1&-1\end{pmatrix}$, $Syl_k=\begin{pmatrix} S_{k-1}&S_{k-1}\\ S_{k-1}&-S_{k-1}\end{pmatrix}$.

\smallskip

We describe algebraic varieties associated to classes of matrices generalizing these examples, describe their ideals and make
basic observations about their rigidity.

To each directed acyclic graph $\G$ with $n$ inputs and outputs, or sums of such, we may associate a variety
$\Sigma_{\G}\subset Mat_n$ consisting of the closure of all matrices $A$ such that \eqref{yesthemap} is computable by $\G$.
For example, to the graph in Figure \ref{figfour}
we associate the variety $\Sigma_{\G} := \hat\s_{2,4}$
 since any $4\times4$ matrix of rank at most $2$ can be written a product of a $4\times2$ matrix and a $2\times4$ matrix.

\begin{figure}[!htb]\begin{center}
\begin{tikzpicture}
\node[vertex] (aI) at (-1.5,2) {};
\node[vertex] (bI) at (-0.5,2) {};
\node[vertex] (cI) at (0.5,2) {};
\node[vertex] (dI) at (1.5,2) {};
\node[vertex] (aIII) at (-0.5,1) {};
\node[vertex] (bIII) at (0.5,1) {};
\node[vertex] (aII) at (-1.5,0) {};
\node[vertex] (bII) at (-0.5,0) {};
\node[vertex] (cII) at (0.5,0) {};
\node[vertex] (dII) at (1.5,0) {};
\draw (aI) -- (aIII);
\draw (aI) -- (bIII);
\draw (bI) -- (aIII);
\draw (bI) -- (bIII);
\draw (cI) -- (aIII);
\draw (cI) -- (bIII);
\draw (dI) -- (aIII);
\draw (dI) -- (bIII);
\draw (aIII) -- (aII);
\draw (bIII) -- (aII);
\draw (aIII) -- (bII);
\draw (bIII) -- (bII);
\draw (aIII) -- (cII);
\draw (bIII) -- (cII);
\draw (aIII) -- (dII);
\draw (bIII) -- (dII);
\end{tikzpicture}
\caption{\small{linear circuit for rank two  $A\in Mat_{4}$}}  \label{figfour}
\end{center}
\end{figure}
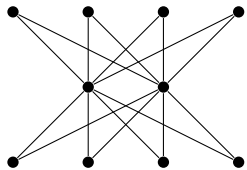
 Note that the number of edges of $\G$ gives an upper bound of
the dimension of $\Sigma_{\G}$, but the actual dimension is often less, for example
$\tdim\hat\s_{2,4}=12$
but $\G$ has $16$ edges. This is because there are four parameters of
choices for expressing a rank two matrix
as a sum of two rank~one~matrices.

\subsection{Notation and conventions}\label{notation} Since this article is for geometers and computer scientists,
here and throughout, we include  a substantial amount of material  that is not usually~mentioned.

We work exclusively over the complex numbers $\BC$.

For simplicity of exposition, we generally restrict to square matrices, although most results carry over
to rectangular matrices as well.

Throughout, $V$ denotes a complex vector space, $\BP V$ is  the associated projective
space of lines through the origin in $V$, $S^dV^*$ denotes the space of homogenous polynomials of degree $d$ on~$V$,
and $Sym(V^*)=\oplus_d S^dV^*$ denotes the symmetric algebra, i.e.,
the ring of polynomials on $V$, i.e, after a choice of basis,
the ring of polynomials in $\tdim V$ variables. We work with projective space because
the objects of interest are invariant under rescaling and to take advantage of results in projective
algebraic  geometry, e.g., Proposition \ref{conedegprop}.   For a subset  $Z\subset \BP V$,   $\hat Z:=\pi\inv(Z)\cup \{0\}\subset V$  is called  the  {\it affine  cone~over~$Z$}.

Let  $Z\subset \BP V$ be a {\it projective variety}, the zero set of a collection of homogeneous
polynomials on $V$ projected to $\BP V$. The ideal of $Z$, denoted $\BI(Z)$,  is the ideal in $Sym(V^*)$
of all polynomials vanishing on $\hat Z$.   Let $\BI_d(Z)\subset S^dV^*$ denote
the degree $d$ component of the ideal of $Z$.
 The {\it codimension} of $Z$ is the smallest non-negative integer $c$ such that every linear $\BP^c\subset \BP V$
intersects~$Z$ and its {\it dimension} is $\tdim \BP V-c$. 
The {\it degree} of  $Z$  is the number of points of intersection
with a general linear space of dimension $c$. 
Here and throughout, a {\it general point} or {\it general linear space} is a point
(or linear space) that does not satisfy certain (specific to the problem) polynomials,
so the set of general points is of full measure, and one may  simply view 
  a general point or linear space as one that has been randomly chosen. 
 A codimension $1$ variety is called a {\it hypersurface}
and is defined by a single equation.  The degree of a hypersurface is the degree
of its defining equation. 

For a linear subspace $U\subset V$, its annihilator in the dual space is denoted $U\upperp \subset V^*$, and
we abuse notation and write $(\BP U)\upperp \subset V^*$ for the annihilator of $U$ as well.
The group of invertible endomorphisms of $V$ is denoted $GL(V)$. If   $G\subset GL(V)$ is a subgroup and
$Z\subset \BP V$ is a subvariety such that $g\cdot z\in Z$ for all $z\in Z$ and all $g\in G$, we say
$Z$ is a {\it  $G$-variety}. The group of permutations on $d$ elements is denoted $\FS_d$.

We   write $\tlog$ to mean $\tlog_2$. 

Let $f,g: \BR \ra \BR$ be functions. Write $f= \Om(g)$ (resp. $f= O(g)$) if and only if there exists $C>0$ and $x_0$ such that
$|f(x)|\geq C|g(x)|$ (resp. $|f(x)|\leq C|g(x)|$)  for all $x\geq x_0$.   Write $f=\o(g)$  (resp. $f=o(g)$)
if and only if for all  $C>0$ there exists   $x_0$ such  that
$|f(x)|\geq C|g(x)|$ (resp. $|f(x)|\leq C|g(x)|$)  for all $x\geq x_0$.
These definitions are used for any ordered range and domain, in particular $\BZ$.
In particular, for a function $f(n)$,   $f=\o(1)$ means $f$  goes to infinity as $n\ra \infty$.

For $I,J\subset[n] :=  \{1,2,\ldots,n\}$ of size $r+1$,
 {let} $M^I_J$  {be} the determinant of the size $r+1$ submatrix defined by $(I,J)$.
Set $\Delta^I_J:=M^{I^c}_{J^c}$, where   $I^c$ and $J^c$ denote the complementary index set to $I$ and $J$, respectively.  We often use $x^i_j$ to denote coordinates
on the space of $n\times n$ matrices. Write $\{ x^i_j\}:= \{ x^i_j :  {i,j \in [n]}\}$.  

\subsection{Previous Work}\label{prevwork}

The starting point is the following theorem of L. Valiant:
\begin{theorem} \cite[ Thm.~6.1, Prop.~6.2]{MR0660702}\label{valrigthm}
 Suppose that a sequence   $A_n\in Mat_{n}$ admits  a sequence of linear circuits of size $\Sigma=\Sigma(n)$ and depth $d=d(n)$ where each gate
has fan-in two. Then for any $t>1$,
$$
\rig_{\frac{\Sigma\tlog(t)}{\tlog(d)}}(A_n)\leq 2^{O(d/t)}n.
$$
 In particular, if there exist $\ep,\d>0$ such that $\rig_{\ep n}(A_n)=\Omega( n^{1+\d})$, then any sequence of linear circuits
of logarithmic (in $n$) depth computing $\{A_n\}$ must have size $\Om(n\tlog(\tlog n))$. 
\end{theorem}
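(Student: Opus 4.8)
The theorem rests on one graph-theoretic fact; the rest is bookkeeping. The plan is to reduce everything to the following. \emph{Lemma:} if $H$ is a directed acyclic graph with $m$ edges and depth $d$ and $2\le t\le d$, then some set of at most $\frac{m\,\tlog t}{\tlog d}$ edges of $H$ can be deleted so that the remaining graph has depth at most $Cd/t$, for an absolute constant $C$. (The ranges $t<2$ and $t>d$ are immediate.)

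Granting this, fix a linear circuit $\cL\cC$ for $A:=A_n$ of size $\Sigma$ and depth $d$ with each gate of fan-in at most two; as a graph it has $\Sigma$ edges, $n$ inputs and $n$ outputs. Apply the Lemma to delete a set $S$ of at most $\Sigma\tlog t/\tlog d$ edges, and let $A^{(0)}$ be the matrix computed by the resulting circuit $\cL\cC\setminus S$ (we also drop the labels of the deleted edges). Every input-to-output path of $\cL\cC\setminus S$ uses at most $Cd/t$ edges, so tracing back at most $Cd/t$ steps from any output through a fan-in-$\le 2$ graph reaches at most $2^{O(d/t)}$ vertices; hence every row of $A^{(0)}$ has at most $2^{O(d/t)}$ nonzero entries and $A^{(0)}$ has at most $2^{O(d/t)}n$ nonzero entries in all. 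Put $A^{(1)}:=A-A^{(0)}$. By the path-sum formula for $A_{\cL\cC}$, the $(i,j)$ entry of $A^{(1)}$ is the sum over those input-$i$-to-output-$j$ paths of $\cL\cC$ that use an edge of $S$ of the product of the labels along the path; sorting each such path by the first edge $e$ it uses in $S$ — its initial segment up to $e$ then avoids $S$, and its final segment after $e$ is unconstrained — exhibits $A^{(1)}$ as a sum of $|S|$ matrices each of rank at most one, so $\trank A^{(1)}\le|S|\le\Sigma\tlog t/\tlog d$. Thus $A_n=A^{(1)}+A^{(0)}$ witnesses $\rig_{\Sigma\tlog t/\tlog d}(A_n)\le 2^{O(d/t)}n$.

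For the ``in particular'' clause, suppose $\{A_n\}$ is computed by fan-in-$\le 2$ circuits of depth $d=d(n)=\Theta(\tlog n)$ and, for contradiction, of size $\Sigma(n)=o(n\tlog\tlog n)$. Then $\tlog d=\tlog\tlog n+O(1)$; fix a constant $t$ so large that $c_1C/t<\delta/2$, where $d\le c_1\tlog n$ and $C$ is the constant in the exponent $2^{Cd/t}$ above. For all large $n$ we then have $2^{Cd/t}n\le n^{c_1C/t}n<n^{1+\delta/2}$, while $r:=\Sigma(n)\tlog t/\tlog d=o(n\tlog\tlog n)\cdot\tlog t/(\tlog\tlog n+O(1))=o(n)<\epsilon n$. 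Since $\rig_\bullet(A_n)$ is non-increasing in its subscript, the bound just proved gives $\rig_{\epsilon n}(A_n)\le\rig_r(A_n)\le 2^{Cd/t}n<n^{1+\delta/2}=o(n^{1+\delta})$, contradicting $\rig_{\epsilon n}(A_n)=\Omega(n^{1+\delta})$. Hence every logarithmic-depth family computing $\{A_n\}$ has size $\Omega(n\tlog\tlog n)$, as claimed.

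The remaining task, and the crux, is the Lemma, which is Valiant's graph-theoretic lemma \cite[\S6]{MR0660702}. I would prove it by iterating a depth-halving step: in a DAG with $m$ edges and depth $d$ one can delete $O(m/\tlog d)$ edges and leave depth at most $d/2$; applying this $\lceil\tlog t\rceil$ times — the $k$-th time to a graph of depth $\le d/2^{k-1}$, deleting $O(m/(\tlog d-k+1))$ edges — removes in total $O\!\big(m\sum_{1\le k\le\lceil\tlog t\rceil}(\tlog d-k+1)^{-1}\big)=O(m\,\tlog t/\tlog d)$ edges for $t\le\sqrt d$ (the range $\sqrt d<t\le d$ being trivial, since there $m\le O(m\,\tlog t/\tlog d)$). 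For the depth-halving step itself one labels each vertex $v$ by its level $\ell(v)\in\{0,\dots,d\}$, the number of edges on the longest input-to-$v$ path, and observes that deleting the edges straddling a level $\ell$ (tail-level $<\ell\le$ head-level) disconnects every vertex of level $\ge\ell$ from the inputs and so caps the depth at $\ell-1$; the difficulty is that a single straddle-set can have $\Theta(m)$ edges, and Valiant's insight is to replace it by a carefully chosen dyadic family of $O(\tlog d)$ straddle-sets — guided by the binary expansions of the levels — one member of which has cost only $O(m/\tlog d)$ while still driving the depth below $d/2$. Pinning down that family and the averaging that produces the cheap member is the one genuinely hard step; I would reconstruct it by induction on $\tlog d$, treating the high- and low-order halves of the bit range separately. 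The decomposition and the parameter chase above are routine by comparison.
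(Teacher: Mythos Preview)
The paper does not prove this theorem: it is quoted in \S\ref{prevwork} as \lq\lq Previous Work\rq\rq\ and attributed to Valiant \cite[Thm.~6.1, Prop.~6.2]{MR0660702} with no argument supplied. So there is no proof in the paper to compare against.

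That said, your reconstruction is the standard one and is correct in outline. The decomposition $A=A^{(1)}+A^{(0)}$ via edge deletion, the rank bound on $A^{(1)}$ by sorting paths according to their first deleted edge (each deleted edge $e=(u,v)$ contributes the rank-one matrix $\lambda_e\cdot(\text{path sums }i\to u\text{ in }\cL\cC\setminus S)\otimes(\text{path sums }v\to j\text{ in }\cL\cC)$), the sparsity bound on $A^{(0)}$ from fan-in two and depth $O(d/t)$, and the parameter chase for the corollary are all exactly Valiant's argument. One small quibble: in your sketch of the depth-reduction lemma you describe iterating a depth-halving step $\lceil\tlog t\rceil$ times, each time removing $O(m/\tlog d')$ edges where $d'$ is the current depth; Valiant's original proof does this in one shot by assigning to each edge the most significant bit position at which the binary level of its head and tail differ, and deleting the $\lceil\tlog t\rceil$ cheapest of the $\lceil\tlog d\rceil$ resulting edge classes. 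Your iterative version works too, but the direct argument is cleaner and avoids the telescoping sum $\sum_k 1/(\tlog d - k)$.
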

\medskip

\begin{proposition} (\cite{MR1237045} for finite fields and
\cite{MR1608240} for the general case)  Let $r\geq (\tlog n)^2$, and let $A\in Mat_{n\times n}$ be such that all minors
of size $r$ of $A$ are nonzero. Then, for all $s<\frac{n^2}{4(r+1)}\tlog(\frac nr)$,  $A\not\in \hat\cR[n,r,s]^0$. \label{pro:minorsnonzero}
\end{proposition}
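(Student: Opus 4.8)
The plan is to prove the contrapositive: if $A \in \hat\cR[n,r,s]^0$ with $s$ small, then $A$ must have some vanishing $r\times r$ minor. So suppose $A = L + S$ where $\rank(L) \le r$ and $S$ has at most $s$ nonzero entries. The idea is a counting/pigeonhole argument: because $S$ is sparse, most of the rows and columns of $A$ are unaffected by $S$, and on a large submatrix $A$ literally equals $L$, which has rank $\le r$; hence every $(r+1)\times(r+1)$ (and in particular, after shrinking, every $r\times r$) submatrix of that large block has determinant zero. The work is to make the bookkeeping give the stated bound $s < \frac{n^2}{4(r+1)}\tlog(n/r)$.

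First I would set up the combinatorial core. Partition the $n$ columns into $\lceil n/r\rceil$ (or a carefully chosen number of) blocks of roughly $r$ columns each; within each column block, count the nonzero entries of $S$ landing there. Since $S$ has $\le s$ nonzero entries total, by averaging some column block, say with column index set $J$ of size about $r$, receives at most $s/(n/r) = sr/n$ nonzero entries of $S$. Within that block, the rows hit by $S$ number at most $sr/n$; discard those rows, leaving a row set $I$ with $|I| \ge n - sr/n$, on which the submatrix $A^I_J$ coincides exactly with $L^I_J$, hence has rank $\le r$. To get a \emph{vanishing} minor of size $r$, I would instead aim for a submatrix with $r$ rows and $r$ columns all free of $S$-support but arrange, via the rank bound, that one further minor vanishes; more cleanly, take $|J| = r$ and note $\rank(A^I_J) \le r$ is automatic, so I must take $|J| = r$ but force an honest rank drop — this is where the logarithmic factor comes in.

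The logarithmic improvement (due to \cite{MR1237045,MR1608240}) comes from iterating the pigeonhole rather than applying it once. Here is the refined scheme I would carry out: repeatedly extract column blocks. At stage $k$, we have a surviving set of columns; split it into two halves and pass to the half containing fewer nonzero entries of $S$ (in the surviving rows). After $\tlog(n/r)$ halving steps we reach a column block $J$ of size $\approx r$, and at each step we have killed at most half the remaining $S$-entries, but we also had to delete the rows they occupied. Tracking the total number of deleted rows across all $\approx \tlog(n/r)$ stages, the bound $s < \frac{n^2}{4(r+1)}\tlog(n/r)$ is exactly what guarantees that fewer than $n - r$ rows get deleted in total, so a surviving $r\times r$ block $A^I_J = L^I_J$ remains; since this block sits inside a larger surviving block where $A = L$ has rank $\le r$, but we can choose the block to be $(r) \times (r)$ while knowing the ambient $(\text{large})\times r$ matrix equals $L$ of rank $\le r < r$... — the correct statement is that we produce an $r\times r$ submatrix on which $A$ agrees with $L$ \emph{and} which lies in a set of $\ge r$ columns of $L$ that are linearly dependent with a column outside, forcing $\det A^I_J = 0$. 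I would pin this last point down by choosing the block size to be $r$ out of a larger rank-$\le r$ agreement region of $\ge r+1$ columns, so one column is a combination of the others and the minor vanishes.

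The main obstacle is the last step: converting "$A$ equals a rank $\le r$ matrix on a big submatrix" into "some specific $r\times r$ minor of $A$ vanishes." A rank-$\le r$ matrix can certainly have nonvanishing $r\times r$ minors, so one genuinely needs the agreement region to have more than $r$ columns (or rows) and then argue that within the $\binom{\ge r+1}{r}$ available $r$-subsets at least one gives a vanishing determinant — equivalently, that not all $r$-subsets of a linearly dependent column set can be independent, which is immediate once the agreement region has $\ge r+1$ columns of rank $\le r$. Threading this requirement through the halving argument while keeping the deleted-row count below $n-r$ is the delicate accounting that the constant $\frac{1}{4}$ and the factor $(r+1)$ in the denominator are calibrated to handle; I expect the bulk of the proof to be verifying that inequality chain, which I would not grind through here.
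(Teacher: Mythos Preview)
The paper does not supply its own proof of this proposition; it is quoted in \S\ref{prevwork} as prior work and attributed to \cite{MR1237045,MR1608240}. So there is nothing in the paper to compare your attempt against.

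On its own merits, your proposal has a genuine gap at the decisive last step. You correctly flag the obstacle: a matrix of rank at most $r$ may very well have all of its $r\times r$ minors nonzero. But your proposed resolution is wrong. You assert that ``not all $r$-subsets of a linearly dependent column set can be independent, which is immediate once the agreement region has $\ge r+1$ columns of rank $\le r$.'' This is false. Take $r+1$ vectors in general linear position inside an $r$-dimensional space (for instance $e_1,\ldots,e_r,\,e_1+\cdots+e_r$ in $\BC^r$): they are linearly dependent, yet every $r$-element subset is independent. Consequently, even if your halving argument succeeds in producing a large submatrix on which $A$ agrees with the rank-$\le r$ matrix $L$, you obtain at best a vanishing $(r{+}1)\times(r{+}1)$ minor of $A$, which in no way contradicts the hypothesis that all $r\times r$ minors are nonzero. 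The contrapositive you set out to prove simply does not follow from the ingredients you have assembled.

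The arguments in \cite{MR1237045,MR1608240} do not try to exhibit a vanishing minor. Rather, they use the nonvanishing-minor hypothesis in the forward direction: it forces any nontrivial linear combination of a bounded number of rows of $A$ to have very few zero coordinates. Since $\rank L\le r$, for each row $i$ one can write $A_i$ as a fixed linear combination of $r$ reference rows of $A$ plus an error supported on the nonzero positions of $S$ in those $r{+}1$ rows; the hypothesis then lower-bounds the size of that support. Summing over $i$ and optimizing the choice of reference rows (this is where the averaging/halving and the logarithmic factor enter) gives the stated bound on $s$ directly, without ever producing a singular submatrix.
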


Note that if one sets $r=\ep n$, for $n$ sufficiently large, the above result says
$A\not\in \hat\cR[n,\ep n,\frac 1{\ep^2}\tlog(\frac 1\ep)]^0$
which is far from what would be needed to apply Theorem \ref{valrigthm},  as $s$
does not grow with $n$.
 
The matrices $DFT_p$ with $p$ prime, general Cauchy matrix, general Vandermonde matrix, general Sylvester matrix are
such that all minors of all sizes are nonzero (see  \cite[\S 2.2]{MR2539154} and the references therein).
 Thus Proposition~\ref{pro:minorsnonzero} implies: 

\begin{corollary}  \cite{MR1237045, MR1608240,MR1680943} The   matrices of the following types:
 $DFT_p$ with $p$ prime,   Cauchy,  Vandermonde,   and Sylvester,
are such that  for all $s<\frac{n^2}{4(r+1)}\tlog(\frac nr)$,  $A\not\in \hat\cR[n,r,s]^0$.
\end{corollary}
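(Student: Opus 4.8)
The plan is straightforward, since the corollary is essentially a direct application of Proposition~\ref{pro:minorsnonzero} combined with a classical fact about the matrices in question. First I would state the key input: for each of the matrix families listed --- $DFT_p$ with $p$ prime, general Cauchy, general Vandermonde, and general Sylvester --- every minor of every size is nonzero. This is the fact cited from \cite[\S 2.2]{MR2539154}; I would either quote it directly or sketch the reasons in each case. For the Vandermonde matrix, any square submatrix is a generalized Vandermonde determinant whose non-vanishing (for generic, hence for a dense set of, parameters $x_i$) follows from the Schur polynomial formula for such determinants, which is a nonzero polynomial in the $x_i$. For the Cauchy matrix, any square submatrix is again a Cauchy-type matrix, whose determinant has the explicit product formula $\prod_{i<i'}(x^i-x^{i'})(z_j-z_{j'})\big/\prod_{i,j}(x^i+z_j)$, manifestly a nonzero rational function of the variables. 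For $DFT_p$ with $p$ prime, the non-vanishing of all minors is the Chebotar\"ev theorem on roots of unity. For the Sylvester (Hadamard) matrices, it is a result of \cite{MR1680943}.

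Given this input, the proof is immediate: fix one of these families, say with the variables specialized so that all minors are nonzero (a generic, in particular nonempty and dense, choice; for $DFT_p$ and $Syl_k$ no specialization is needed). Let $r \geq (\tlog n)^2$ --- the hypothesis is implicit since the bound $s < \frac{n^2}{4(r+1)}\tlog(\frac nr)$ is only nontrivial in that regime, and Proposition~\ref{pro:minorsnonzero} requires it. Since all minors of size $r$ of $A$ are nonzero by the quoted fact, Proposition~\ref{pro:minorsnonzero} applies verbatim and yields $A \notin \hat\cR[n,r,s]^0$ for all $s < \frac{n^2}{4(r+1)}\tlog(\frac nr)$. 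Quantifying over the four families completes the proof.

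The only real subtlety --- and the one step I would be careful about --- is the passage from ``generic parameters'' to ``there exists an explicit matrix.'' For Cauchy and Vandermonde the statement is about a general assignment of variables, as flagged in the paper's own parenthetical remark, so the non-vanishing of minors holds on a Zariski-dense open set and no explicitness is claimed here; the corollary is stated accordingly. For $DFT_p$ (with $p$ prime, so that the relevant sums of roots of unity cannot vanish) and for $Syl_k$, the matrices are genuinely explicit, so the non-vanishing must be invoked as a hard theorem rather than a genericity argument. Thus the \emph{main obstacle}, such as it is, is simply ensuring that the cited non-vanishing results cover exactly the four cases as stated and that the hypothesis $r \geq (\tlog n)^2$ of Proposition~\ref{pro:minorsnonzero} is carried along; there is no new mathematical content to produce beyond assembling these pieces.

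\begin{proof}
By \cite[\S 2.2]{MR2539154} and the references therein, for each of the matrix types $DFT_p$ with $p$ prime, Cauchy, Vandermonde, and Sylvester, all minors of all sizes are nonzero (for Cauchy and Vandermonde this is meant for a sufficiently general assignment of the variables). In particular, taking $r \geq (\tlog n)^2$ and $A$ of one of these types, all minors of $A$ of size $r$ are nonzero, so Proposition~\ref{pro:minorsnonzero} applies and gives $A \notin \hat\cR[n,r,s]^0$ for all $s < \frac{n^2}{4(r+1)}\tlog(\frac nr)$.
\end{proof}
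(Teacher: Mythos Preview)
Your proposal is correct and matches the paper's approach exactly: the paper simply records that all minors of all sizes of these matrices are nonzero (citing \cite[\S 2.2]{MR2539154}) and then states the corollary as an immediate consequence of Proposition~\ref{pro:minorsnonzero}. Your write-up is in fact more detailed than the paper's, since you sketch the reasons for non-vanishing in each case (Chebotar\"ev for $DFT_p$, product formulas for Cauchy and Vandermonde, etc.), whereas the paper just defers to the reference.
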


The following theorem is   proved via a theorem in graph theory from \cite{MR0446750}:

\begin{theorem} (attributed to Strassen in \cite{MR0660702}, also see \cite[\S 2.2]{MR2539154})
 For all $\ep > 0$, 
there exist $n\times n$ matrices $A$
with integer entries, all of whose minors of all  sizes   are nonzero such that
$A\in \hat \cR[n,\ep n,n^{1+o(1)}]^0$. 
\end{theorem}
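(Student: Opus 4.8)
\emph{Proof idea (proposed).} The plan is to produce the matrix already in the rigidity‑friendly form $A=L+S$ with $\rank(L)\le\lfloor\ep n\rfloor$ and $S$ supported on a very sparse set of positions $P\subset[n]\times[n]$, and then to choose the finitely many free entries generically so that all minors become nonzero; the only real input is the existence of a suitable $P$, which is where the graph theory of \cite{MR0446750} enters. Concretely, fix $P$, introduce indeterminates $U\in Mat_{n\times\lfloor\ep n\rfloor}$, $V\in Mat_{\lfloor\ep n\rfloor\times n}$, and $b=(b_{ij})_{(i,j)\in P}$, and work with $M=UV+B$, where $B$ has $b_{ij}$ at each $(i,j)\in P$ and $0$ elsewhere. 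Every minor of $M$ is a polynomial in $(U,V,b)$; if for a good choice of $P$ none of these polynomials is identically zero, then a Zariski‑generic specialization makes all minors nonzero simultaneously, a rational such specialization exists, and clearing denominators (scaling multiplies each size‑$k$ minor by a nonzero $k$‑th power and preserves the form $UV+B$) yields an integer matrix $A$ with all minors of all sizes nonzero and with $S=B$ having $|P|$ nonzero entries.

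The heart is a term‑rank dichotomy for a fixed pair $I,J\subset[n]$ with $|I|=|J|=k$. Let $\tau$ be the term rank of the induced pattern $P\cap(I\times J)$, that is, the largest number of positions of $P\cap(I\times J)$ lying in pairwise distinct rows and pairwise distinct columns, and write $M^{(I,J)}$ for the submatrix of $M$ on rows $I$ and columns $J$. I claim $\det M^{(I,J)}\not\equiv 0$ precisely when $\tau\ge k-\lfloor\ep n\rfloor$. For the forward implication, specialize $B$ so that $B^{(I,J)}$ is supported exactly on a maximum matching of $P\cap(I\times J)$ with entries $1$, and specialize $UV$ so that $(UV)^{(I,J)}$ is the identity on a bijection between the $k-\tau\le\lfloor\ep n\rfloor$ unmatched rows and the $k-\tau$ unmatched columns (a rank‑$\le\lfloor\ep n\rfloor$ matrix, hence realizable by suitable $U,V$); then $M^{(I,J)}$ is block diagonal with two permutation blocks and $\det M^{(I,J)}=\pm1$. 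For the converse, K\"onig's theorem turns a vertex cover of $P\cap(I\times J)$ of size $\tau<k-\lfloor\ep n\rfloor$ into subsets $I_0\subset I$, $J_0\subset J$ with $|I_0|,|J_0|>\lfloor\ep n\rfloor$ and $(I_0\times J_0)\cap P=\emptyset$, and such an empty rectangle in turn forces $\tau<k-\lfloor\ep n\rfloor$ and $\det M^{(I,J)}\equiv 0$. Hence everything reduces to finding $P$ with $|P|=n^{1+o(1)}$ that meets every rectangle $I_0\times J_0$ with $|I_0|,|J_0|>\lfloor\ep n\rfloor$ (equivalently, every one with $|I_0|=|J_0|=\lfloor\ep n\rfloor+1$).

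Such a $P$ is exactly what the cited graph‑theoretic lemma of \cite{MR0446750} provides; it can also be obtained directly by the probabilistic method. Choose $m$ positions of $[n]\times[n]$ independently and uniformly at random. A fixed $(\lfloor\ep n\rfloor+1)\times(\lfloor\ep n\rfloor+1)$ rectangle contains more than $\ep^2 n^2$ positions, so it is missed by all $m$ draws with probability at most $e^{-\ep^2 m}$; since there are fewer than $4^n$ such rectangles, taking $m=\lceil 2\ep^{-2}n\rceil$ makes the union bound strictly less than $1$, so a valid $P$ with $|P|\le m$ exists. For $\ep$ fixed this is $n^{1+o(1)}$ as $n\to\infty$, and the finitely many $n<1/\ep$ may be ignored.

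Assembling these, for each fixed $\ep>0$ and all large $n$ there is an integer matrix $A=L+S$ with $\rank(L)\le\lfloor\ep n\rfloor\le\ep n$, with $S$ having $n^{1+o(1)}$ nonzero entries, and with all minors of all sizes nonzero, i.e.\ $A\in\hat\cR[n,\ep n,n^{1+o(1)}]^0$. I expect the real obstacle to be the combinatorial core: recognizing that ``all minors can be made simultaneously nonvanishing'' is governed by the no‑large‑empty‑rectangle property of the support of the sparse summand, and then invoking (or reproving) the existence of a near‑linear‑size pattern with that property. Once those are in hand, the term‑rank dichotomy is a short argument via K\"onig's theorem and the genericity‑plus‑integrality step is routine.
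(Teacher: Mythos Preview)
The paper does not actually prove this theorem; it is quoted as a known result with the one-line remark that it ``is proved via a theorem in graph theory from \cite{MR0446750}.'' So there is no in-paper proof to compare against. Your proposal supplies a complete argument along the expected lines, and it is correct.

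A couple of small comments. In the converse direction of your term-rank dichotomy the phrasing is a bit tangled (``such an empty rectangle in turn forces $\tau<k-\lfloor\ep n\rfloor$'' is circular, and it is not immediately clear how the empty rectangle by itself forces $\det M^{(I,J)}\equiv 0$). The clean one-line justification is simply the subadditivity of rank:
\[
\trank\, M^{(I,J)}\ \le\ \trank\,(UV)^{(I,J)}+\trank\, B^{(I,J)}\ \le\ r+\tau\ <\ k,
\]
since the term rank $\tau$ of the support bounds $\trank\, B^{(I,J)}$. Your reduction to ``$P$ meets every $(r{+}1)\times(r{+}1)$ rectangle'' is then exactly right, and your direct probabilistic construction of such a $P$ with $|P|=O_\ep(n)$ is valid and in fact sharper than the stated $n^{1+o(1)}$; it serves as a self-contained replacement for invoking \cite{MR0446750}. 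The genericity-over-$\BQ$ step (finitely many nonzero integer-coefficient polynomials have a common rational nonvanishing point) and the clearing of denominators are routine, and scaling by a nonzero integer indeed preserves both the decomposition $UV+B$ and the nonvanishing of all minors.
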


In \cite{MR2870721}, they approach the rigidity problem from the perspective of algebraic geometry. In particular, they
use the effective Nullstellensatz to obtain bounds on the degrees of the hypersurfaces of maximally border rigid matrices.
They show the following.

\begin{theorem} \cite[Thm.~4.4]{MR2870721} \label{thm7}
Let $p_{k,j}>2n^{2n^2}$ be distinct primes for $1\leq k,j\leq n$. Let $A_n$ have entries $a^k_j=e^{2\pi i/p_{k,j}}$.
Then $A_n$ is maximally $r$-border  rigid for all $1\leq r\leq n-2$.
\end{theorem}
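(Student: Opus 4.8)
The plan is to show that a matrix $A_n$ whose entries are transcendental in a sufficiently independent way cannot lie on any hypersurface of low degree that vanishes on $\hat\cR[n,r,s]^0$, and then to combine this with a degree bound for the defining equations of $\hat\cR[n,r,s]$ coming from the effective Nullstellensatz, as in \cite[Thm.~4.4]{MR2870721}. First I would recall that for each $r$ with $1\le r\le n-2$, the variety $\hat\cR[n,r,s]$ with $s=(n-r)^2-1$ is a proper subvariety of $Mat_{n\times n}$: indeed, a general matrix is maximally $r$-rigid, so there exist polynomials in the ideal $\BI(\hat\cR[n,r,(n-r)^2-1])$. The effective Nullstellensatz (or the explicit description of the join/secant construction underlying $\hat\cR[n,r,s]$ together with elimination theory bounds) yields that this ideal is generated in degree at most $D:=2n^{2n^2}$; this is precisely the quantitative input of \cite{MR2870721} that I am allowed to invoke. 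So it suffices to produce, for each $r$, a single polynomial $P_r$ of degree $\le D$ with integer coefficients vanishing on $\hat\cR[n,r,(n-r)^2-1]^0$ and to check $P_r(A_n)\ne 0$.

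The key step is the transcendence/independence argument. Write $a^k_j=e^{2\pi i/p_{k,j}}$ where the $p_{k,j}$ are distinct primes each exceeding $D=2n^{2n^2}$. I would argue that the $n^2$ numbers $a^k_j$ do not satisfy any nonzero polynomial relation with integer coefficients of degree at most $D$ in each variable — more precisely, any monomial $\prod_{k,j}(a^k_j)^{e_{k,j}}$ with $0\le e_{k,j}\le D$ is a root of unity of order dividing $\mathrm{lcm}$ of the $p_{k,j}$, hence the relevant monomials are distinct roots of unity and an integer-coefficient linear combination of distinct roots of unity of this shape vanishes only trivially. The cleanest way to make this rigorous is to observe that $a^k_j$ is a primitive $p_{k,j}$-th root of unity, so $\BQ(a^k_j)$ has degree $p_{k,j}-1>D$ over $\BQ$, and that the fields $\BQ(a^k_j)$ for distinct $(k,j)$ are linearly disjoint over $\BQ$ (their degrees are pairwise coprime, being $p_{k,j}-1$... ) — actually the cleanest route is: the compositum $\BQ(\{a^k_j\})$ is $\BQ(\zeta_N)$ with $N=\prod p_{k,j}$, and $(\ZZ/N)^\times\cong\prod(\ZZ/p_{k,j})^\times$, so a power product $\prod (a^k_j)^{e_{k,j}}$ equals $1$ iff each $e_{k,j}\equiv 0\pmod{p_{k,j}}$, which for $0\le e_{k,j}\le D<p_{k,j}$ forces $e_{k,j}=0$. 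This shows that no two distinct monomials of degree $\le D$ in the $x^k_j$ take equal values at $A_n$ up to a rational multiple, and an averaging/Galois argument over $\mathrm{Gal}(\BQ(\zeta_N)/\BQ)$ then shows that a nonzero integer polynomial $P_r$ of degree $\le D$ cannot vanish at $A_n$ unless $P_r=0$.

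Having established that $A_n$ avoids every degree-$\le D$ hypersurface defined over $\ZZ$, I would finish as follows: since $\BI(\hat\cR[n,r,(n-r)^2-1])$ is nonzero and generated in degree $\le D$, pick a nonzero generator; one may take it to have integer (indeed Gaussian integer, hence after multiplying by a conjugate, rational integer) coefficients because the variety $\hat\cR[n,r,s]$ is defined over $\BQ$ — it is cut out by polynomial conditions (ranks of minors, sparsity patterns) with rational coefficients, so its ideal is generated by polynomials with rational, hence after clearing denominators integer, coefficients. Call it $P_r$. By the previous paragraph $P_r(A_n)\ne 0$, hence $A_n\notin\hat\cR[n,r,(n-r)^2-1]$, i.e. $\ul{\rig}_r(A_n)\ge(n-r)^2$; combined with the trivial upper bound $\ul{\rig}_r(M)\le(n-r)^2$ this gives $\ul{\rig}_r(A_n)=(n-r)^2$ for every $1\le r\le n-2$, which is exactly maximal $r$-border rigidity.

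The main obstacle I expect is not the cyclotomic computation — that is elementary once the primes are chosen larger than $D$ — but rather pinning down the correct degree bound $D$ on the generators of $\BI(\hat\cR[n,r,s])$ and making sure it is genuinely an effective-Nullstellensatz-type bound of the form $2n^{2n^2}$ rather than something larger; this requires either quoting \cite[Thm.~4.4]{MR2870721} essentially verbatim or redoing the Nullstellensatz estimate for the specific presentation of $\hat\cR[n,r,s]$ as a sum (join) of the rank variety $\hat\s_{r,n}$ with the coordinate subspace variety of $s$-sparse matrices, tracking the degrees through the elimination of the auxiliary parameters. A secondary, more bookkeeping-level point is confirming that $\hat\cR[n,r,(n-r)^2-1]$ is really a proper subvariety for all $r$ in the stated range $1\le r\le n-2$ (equivalently that the generic matrix is maximally $r$-rigid), which is where the hypothesis $r\le n-2$ — ensuring $(n-r)^2\ge 4$ so that the sparsity budget $s=(n-r)^2-1$ does not already cover all of $Mat_{n\times n}$ — enters; this should follow from a dimension count comparing $\dim\hat\s_{r,n}+s$ with $n^2$, or directly from Proposition~\ref{joinstdprop} on joins.
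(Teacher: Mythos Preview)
This theorem is not proved in the present paper; it appears in \S1.5 (``Previous Work'') purely as a citation of \cite[Thm.~4.4]{MR2870721}, so there is no in-paper proof to compare against. Your sketch correctly reconstructs the argument of that reference: $\hat\cR[n,r,(n-r)^2-1]$ is a proper subvariety defined over $\BQ$, the effective elimination bound of \cite{MR2870721} furnishes a nonzero integer-coefficient equation of degree at most $D=2n^{2n^2}$, and the entries $a^k_j=\zeta_{p_{k,j}}$ with distinct primes $p_{k,j}>D$ cannot satisfy any such equation.

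One point to tighten: showing that distinct monomials evaluate to \emph{distinct} roots of unity, followed by ``an averaging/Galois argument'', is not quite the statement you need. What you actually want is that the products $\prod_{k,j}\zeta_{p_{k,j}}^{e_{k,j}}$ with $0\le e_{k,j}\le p_{k,j}-2$ form a $\BQ$-basis of $\BQ(\zeta_N)$, $N=\prod p_{k,j}$; this is immediate from the linear disjointness of the cyclotomic fields $\BQ(\zeta_{p_{k,j}})$ for pairwise coprime conductors (equivalently, $[\BQ(\zeta_N):\BQ]=\prod(p_{k,j}-1)$). Since every exponent in a monomial of total degree $\le D$ is itself $\le D<p_{k,j}$, the monomial values lie in this basis (up to the harmless boundary case $e_{k,j}=p_{k,j}-1$, which the strict inequality $p_{k,j}>D$ rules out once one notes $p_{k,j}\ge D+2$ as $p_{k,j}$ is an odd prime and $D$ is even), and hence are $\BQ$-linearly independent. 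With that in place your argument is complete and matches the intended one.
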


See Remark \ref{improverem}  for a  small  improvement of this result.

In \cite{MR2870721}, they do not restrict their field to be $\BC$.

Additional references for matrix rigidity are
\cite{MR1734200,MR1892859,MR2277251,MR2305515,MR2822875,MR1137781,MR1756225,MR2436904}.

\subsection{Our results}\label{ourresults}

Previous to our work, to our knowledge, there were no explicit equations for irreducible components of  $\hat\cR[n,r,s]$ known
other than the minors of size $r+1$. The irreducible components  of  $\hat\cR[n,r,s]$ are determined (non-uniquely) by cardinality $s$ subsets $S\subset
\{ x^i_j\}$      corresponding to the entries one is allowed to change.
Recall that $x^i_j$ are coordinates on the space of $n\times n$ matrices. 
 We  find   equations for  the     cases $r=1$, (Proposition \ref{thm: hypersurface for r=1}),  $r=n-2$
(Theorem \ref{rnmtwothm}), and the cases $s=1,2,3$ (see \S\ref{firstexamples}). 
We also obtain qualitative information about the equations. Here are some sample results:

\begin{proposition}\label{qualprop} Each irreducible component of $\hat \cR[n,r,s]$,
described by some set \mbox{$S\subset\{x^i_j, 1 \leq 1,j \leq n\}$},  has ideal generated by polynomials with the following property;
if $P$ is a generator of degree $d$, then no entries of $S$ appear in $P$ and $P$ is a sum of
terms of the form $\Delta Q$ where $\Delta$ is a minor of size $r+1$ and $\tdeg(Q)=d-r-1$. In particular, there are
no equations of degree less than $r+1$ in the ideal.

Conversely any polynomial $P$ of degree $d$ such that no entries of $S$ appear in $P$ and $P$ is a sum of
terms $\Delta Q$ where $\Delta$ is a minor of size $r+1$ is in the ideal of the component of $\hat\cR[n,r,s]$
determined by $S$.
\end{proposition}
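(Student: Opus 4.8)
The plan is to exploit the structure of an irreducible component $Z_S$ of $\hat\cR[n,r,s]$ as a \emph{join} of a linear space with the determinantal variety $\hat\s_{r,n}$. Fix a cardinality $s$ subset $S\subset\{x^i_j\}$, let $L_S\subset Mat_n$ be the coordinate subspace spanned by the matrix units $E^i_j$ with $x^i_j\in S$, and recall that $\hat\cR[n,r,s]^0 = \bigcup_S (\hat\s_{r,n}+L_S)$, so $Z_S = \ol{\hat\s_{r,n}+L_S}$ is the join $J(\hat\s_{r,n},L_S)$ (as a cone; in projective language this is the join with the linear space $\BP L_S$). The first step is to set this up precisely and to identify $\BI(Z_S)$ with the polynomials that, after the change of coordinates, lie in the ideal of $\hat\s_{r,n}$ pulled back along the projection $Mat_n\to Mat_n/L_S$. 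Concretely: choose coordinates so that $S$ corresponds to a subset of the $x^i_j$; then $P\in\BI(Z_S)$ iff $P$ involves only the coordinates \emph{not} in $S$ (since $Z_S$ is a union of translates along every direction in $L_S$, any $P$ vanishing on it must be constant along those directions — this is a standard fact about cones over joins with linear spaces, and it already yields the ``no entries of $S$ appear in $P$'' clause) and, viewed as a polynomial in the remaining $n^2-s$ coordinates, $P$ vanishes on the image of $\hat\s_{r,n}$ under the coordinate projection forgetting the $S$-coordinates.

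The second step is to describe that image ideal. Since $\hat\s_{r,n}$ is a cone and the projection is linear and surjective onto its coordinate complement, the image is again closed (in fact irreducible) and its ideal is $\BI(\hat\s_{r,n})\cap Sym((V/L_S)^*)$, i.e.\ the elimination ideal obtained by intersecting the ideal of $(r+1)$-minors with the subring generated by the coordinates outside $S$. So the claim reduces to: \emph{a polynomial in the $n^2-s$ variables $\{x^i_j : x^i_j\notin S\}$ lies in the elimination ideal $(\text{$(r{+}1)$-minors})\cap \BC[x^i_j: x^i_j\notin S]$ if and only if it is a $\BC[x^i_j: x^i_j\notin S]$-linear combination of $(r+1)$-minors $\Delta$ that themselves involve no variable from $S$.} The ``if'' direction is immediate — such a combination visibly lies in the minors ideal and uses only the allowed variables. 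The nontrivial ``only if'' direction is where the work lies.

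For the ``only if'' direction I would argue as follows. Suppose $P=\sum_\alpha \Delta_\alpha Q_\alpha$ with each $\Delta_\alpha$ an $(r+1)$-minor (possibly involving $S$-variables) and $P$ free of $S$-variables. One wants to rewrite this so that each minor used is $S$-free. The idea is to substitute $x^i_j\mapsto 0$ for all $x^i_j\in S$ (the specialization realizing the coordinate projection on the level of rings): under this substitution $P$ is unchanged, each $Q_\alpha$ goes to some $Q_\alpha'$, and each $\Delta_\alpha$ goes to the determinant of a submatrix of the matrix with the $S$-entries zeroed out — which, by Laplace/cofactor expansion along the rows or columns meeting $S$, is itself an integer combination of products of (possibly smaller) minors of the $S$-free submatrix; when that submatrix still has format $\ge r+1$ these are combinations of $S$-free $(r+1)$-minors, and when it collapses below size $r+1$ the specialized minor is zero. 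Collecting terms gives $P$ as a combination of $S$-free $(r+1)$-minors with polynomial (hence, after re-expanding, $S$-free polynomial) coefficients; iterating the degree bookkeeping gives $\tdeg Q = d-r-1$ in the homogeneous graded pieces, and the ``no equations of degree $<r+1$'' statement follows since every generator is a nonzero multiple of a degree-$(r+1)$ minor. I expect the main obstacle to be making the cofactor-expansion step clean: one must check that zeroing out an arbitrary set $S$ of entries (not a full row or column) and then taking an $(r+1)$-minor really does produce something in the ideal generated by $S$-free $(r+1)$-minors — this requires a careful multilinear expansion of the determinant separating the $S$-columns from the rest and verifying that every surviving term factors through an $S$-free minor of size at least $r+1$ (or vanishes). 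Everything else — the join/translation-invariance argument and the elimination-ideal identification — is standard and will be recalled from \S\ref{gjoinsect} and \S\ref{idealofjoins}.
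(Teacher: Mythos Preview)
Your first step is essentially right and matches the paper's route: Proposition~\ref{qualprop} is deduced from the general fact (Proposition~\ref{jjens}) that for any variety $X$ and linear space $L$, generators of $\BI(\BJ(X,L))$ may be taken from $\BI(X)\cap Sym(L^\perp)$. Your cylinder/translation-invariance argument for this is fine, though the sentence ``any $P$ vanishing on it must be constant along those directions'' is literally false (e.g.\ $x^1_1\cdot M$ with $M$ an $S$-free minor lies in the ideal but is not $S$-free); what is true, and what you need, is that the ideal is \emph{generated} by $S$-free polynomials.

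The real problem is your second step. You reformulate the claim as: an $S$-free polynomial lies in the elimination ideal iff it is an $S$-free combination of $(r{+}1)$-minors that are \emph{themselves} $S$-free. That is not what the proposition asserts, and it is false. Take $n=3$, $r=1$, $S=\{x^1_1,x^2_2,x^3_3\}$: any $2\times 2$ submatrix with row set $I$ and column set $J$ meets the diagonal (since $|I|=|J|=2$ forces $I\cap J\neq\emptyset$), so there are \emph{no} $S$-free $2\times 2$ minors at all; yet $P=x^2_1x^3_2x^1_3-x^1_2x^2_3x^3_1$ is a nonzero $S$-free element of $\BI(\BJ(\s_1,L^S))$ (this is exactly \eqref{eq:examplewhichgeneralizes}). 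Your specialization argument breaks here too: setting $x^2_2=0$ in $M^{23}_{12}$ produces the monomial $x^2_1x^3_2$, which is not in any ideal generated by $2\times 2$ minors, $S$-free or otherwise. The obstacle you flagged (``one must check that zeroing out an arbitrary set $S$ of entries \ldots really does produce something in the ideal generated by $S$-free $(r{+}1)$-minors'') is fatal, not merely technical.

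The fix is to drop the second step entirely. The proposition only asks that a generator $P$ be $S$-free \emph{and} lie in the minors ideal $\BI(\hat\s_r)$; it does not require the individual minors $\Delta$ in a presentation $P=\sum\Delta Q$ to avoid $S$. Once you know (from step one) that generators can be chosen $S$-free, the containment $\hat\s_r\subset\BJ(\s_r,L^S)$ forces them into $\BI(\hat\s_r)$, hence into the ideal generated by size-$(r{+}1)$ minors, and you are done. The converse is the easy direction you already noted.
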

See \S\ref{idealofjs} for more precise statements. These results
are consequences of more general results   about cones in  \S\ref{gjoinsect}.

We remind the reader that $\Delta^I_J$ is the determinant of the submatrix obtained by deleting the rows of $I$ and the columns of $J$.

\begin{theorem}  There are two types of components
of the hypersurface $\hat\cR[n,n-2,3]$:
\begin{enumerate}
\item Those corresponding to a configuration $S$ where the three entries  are all
in distinct rows and columns, where if   $S=\{ x^{i_1}_{j_1},x^{i_2}_{j_2},x^{i_3}_{j_3}\}$
the
  hypersurface  is  of degree $2n-3$ with equation
$$
\Delta^{i_3}_{j_2}\Delta^{i_1,i_2}_{j_1,j_3}-
\Delta^{i_2}_{j_3}\Delta^{i_1,i_3}_{j_1,j_2}=0.
$$
\item Those corresponding to a configuration where there are
 two elements of $S$ in the same row and one in a different column  from those two,
or such that
one element shares a row with one and a column with the other. In these cases,  the equation
is the unique size $(n-1)$ minor that has no elements of $S$.
\end{enumerate}

If all three elements of $S$  lie on a row or column, then one does not obtain a hypersurface.
\end{theorem}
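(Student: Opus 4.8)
\smallskip
\noindent\emph{Proof plan.} The plan is to use the general description of the components: the irreducible components of $\hat\cR[n,r,s]$ are the Zariski closures $Y_S:=\overline{Y_S^0}$ of the sets $Y_S^0:=\{A+B\mid A\in\hat\s_{r,n},\ \supp(B)\subseteq S\}$, one for each size-$s$ set $S\subseteq\{x^i_j\}$; equivalently $Y_S^0$ is the image of the addition map $\mu_S\colon\hat\s_{r,n}\times L_S\to Mat_n$, $(A,B)\mapsto A+B$, where $L_S\subseteq Mat_n$ is the coordinate subspace of matrices supported on $S$. I would fix $r=n-2$, $s=3$ and analyze $Y_S$ according to the three combinatorial types of $S$, in each case computing $\dim Y_S$ and, when it equals $n^2-1$, identifying the defining equation; the one recurring tool is Proposition \ref{qualprop}, which places a polynomial in $\BI(Y_S)$ as soon as it contains no entry of $S$ and is a sum of terms $\Delta Q$ with $\Delta$ a size $(n-1)$ minor.

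\smallskip
\noindent\emph{Type $(1)$: the positions of $S$ lie in distinct rows and distinct columns,} say $S=\{x^{i_1}_{j_1},x^{i_2}_{j_2},x^{i_3}_{j_3}\}$; set $P_S:=\Delta^{i_3}_{j_2}\Delta^{i_1,i_2}_{j_1,j_3}-\Delta^{i_2}_{j_3}\Delta^{i_1,i_3}_{j_1,j_2}$, of degree $(n-1)+(n-2)=2n-3$. First I would check $P_S\in\BI(Y_S)$: both $\Delta^{i_3}_{j_2}$ and $\Delta^{i_2}_{j_3}$ are size $(n-1)$ minors, and a Laplace-expansion sign check shows that the only element of $S$ which can occur, namely $x^{i_1}_{j_1}$ — entering $\Delta^{i_3}_{j_2}$ with cofactor $\pm\Delta^{i_1,i_3}_{j_1,j_2}$ and $\Delta^{i_2}_{j_3}$ with cofactor $\pm\Delta^{i_1,i_2}_{j_1,j_3}$ — has its two contributions cancel in $P_S$; then Proposition \ref{qualprop} applies. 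Second I would show $\dim Y_S=n^2-1$ by proving $\mu_S$ is birational onto $Y_S$: for generic $M\in Y_S$ the matrix $B$ with $\trank(M-B)\le n-2$ is unique, because such a $B$ must annihilate the three size $(n-1)$ minors $\Delta^{i_3}_{j_2},\Delta^{i_1}_{j_3},\Delta^{i_2}_{j_1}$ of $M-B$, each of which — by the row/column pattern of $S$ — is affine in a single one of the three entries of $B$ with leading coefficient an $S$-free, generically nonzero, size $(n-2)$ minor; since $\dim(\hat\s_{n-2,n}\times L_S)=(n^2-4)+3=n^2-1$, this gives $\dim Y_S=n^2-1$. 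So $Y_S$ is an irreducible hypersurface, equal to $\{f=0\}$ for an irreducible $f$ with $f\mid P_S$, and it remains only to see $\deg f=2n-3$ so that $f$ and $P_S$ agree up to scalar. This last point is exactly the content of the cone-degree computations of \S\ref{degsect} as applied in \S\ref{degourvars} (cf.\ Theorem \ref{rnmtwothm}), which give $\deg Y_S=2n-3$; alternatively one argues directly that $P_S$ is irreducible, viewing it as a degree-one polynomial in the entry $x^{i_2}_{j_1}$ (which occurs only in $\Delta^{i_3}_{j_2}$) whose leading coefficient is a product of minors coprime to, and not dividing, the rest. I expect this degree/irreducibility step to be the main obstacle: everything else is either the combinatorics of $S$ or standard determinantal algebra, but here one must actually recognize that the equation is \emph{not} a single minor — unlike every other case in the paper — and control a difference of products of minors.

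\smallskip
\noindent\emph{Type $(2)$: two positions of $S$ share a row or a column and the third lies off that line, but the three are not collinear.} Transposing if needed, take two positions in a common row $i_0$ at columns $j_1\ne j_2$ and the third at $(i_3,j_3)$ with $i_3\ne i_0$ and $j_3\notin\{j_1,j_2\}$ (an L-shaped configuration has this form after relabelling). A size $(n-1)$ minor $\Delta^a_b$ avoids $S$ precisely when deleting row $a$ and column $b$ covers all three positions; since the three columns are distinct, the deleted column covers at most one, so the deleted row must cover two, forcing $a=i_0$ and then $b=j_3$. Thus $\Delta^{i_0}_{j_3}$ is the \emph{unique} $S$-free size $(n-1)$ minor, and it lies in $\BI(Y_S)$ by Proposition \ref{qualprop}. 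The birational argument above goes through verbatim (three minors of $M-B$ affine in the individual entries of $B$ are again available from this $S$-pattern), so $\dim Y_S=n^2-1$; since $\Delta^{i_0}_{j_3}$ is irreducible of degree $n-1$ and $Y_S\subseteq\{\Delta^{i_0}_{j_3}=0\}$, I conclude $Y_S=\{\Delta^{i_0}_{j_3}=0\}$, so the equation is this minor, of degree $n-1$. Here no cone-degree input is needed, since the candidate equation is already irreducible.

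\smallskip
\noindent\emph{Type $(3)$: all three positions of $S$ lie on one row (resp.\ column).} Say $S$ lies in row $i_0$. If $M=A+B\in Y_S^0$ with $\trank A\le n-2$ and $\supp(B)$ in row $i_0$, then every row of $M$ other than row $i_0$ equals the corresponding row of $A$, so the $(n-1)\times n$ submatrix $M_{\widehat{i_0}}$ obtained by deleting row $i_0$ has $\trank M_{\widehat{i_0}}\le\trank A\le n-2$. The locus of $(n-1)\times n$ matrices of rank $\le n-2$ has codimension $\big((n-1)-(n-2)\big)\big(n-(n-2)\big)=2$, and $M\mapsto M_{\widehat{i_0}}$ is a linear surjection $Mat_n\to Mat_{(n-1)\times n}$, so $Y_S$ is contained in a codimension-$2$ subvariety of $Mat_n$; hence $Y_S$ has codimension $\ge 2$ and is not a hypersurface, the column case being the transpose. (One checks in addition that such a $Y_S$ is contained in a Type-$(2)$ component, so that $\hat\cR[n,n-2,3]$ is genuinely pure of codimension one, with the Type-$(1)$ and Type-$(2)$ varieties as its components.)
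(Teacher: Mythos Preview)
Your proposal is correct and is in fact more thorough than the paper's own argument, which consists of the single line ``Proposition~\ref{tworptwo} implies'' preceding Theorem~\ref{rnmtwothm}. The paper is tacitly combining two earlier ingredients: the equidimensionality of $\hat\cR[n,r,s]$ from \S\ref{bbgeomlang} (equation~\eqref{expdimR}), which makes each nondegenerate $Y_S$ automatically a hypersurface, and the degree formulas of \S\ref{degourvars} (Theorem~\ref{degreethm} with $k=2$, $s=3$ gives $d_{n,n-2,3}=2n-3$), which pin down that the Type~(1) polynomial $P_S$ is \emph{the} equation rather than a multiple of it. Your fiber-counting argument that $\mu_S$ is birational is a clean substitute for the first, and you correctly isolate the degree/irreducibility of $P_S$ as the one nontrivial step. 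The paper gives no independent irreducibility proof, so invoking \S\ref{degourvars} (ultimately Aluffi's equality~\eqref{degsuma}) is the intended route; your alternative sketch via linearity in $x^{i_2}_{j_1}$ is workable but would need a little more to rule out a common irreducible factor of the two coefficients.

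One small slip in Type~(2): the L-shaped configuration does \emph{not} satisfy $j_3\notin\{j_1,j_2\}$ after any relabelling or transposition --- if the pivot is $(i_0,j_1)$ with arms $(i_0,j_2)$ and $(i_3,j_1)$, then $j_3=j_1$. Your conclusion $\Delta^{i_0}_{j_3}$ is nonetheless correct in that case too; the only fix needed is to drop the clause ``since the three columns are distinct'' and argue directly that covering both $(i_0,j_1)$ and $(i_0,j_2)$ with one row/column deletion forces $a=i_0$ (because $j_1\neq j_2$), whence $b=j_3$. With that adjustment your Type~(2) and Type~(3) arguments are complete as written.
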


 We give numerous examples of equations in other special cases in \S\ref{exsect}.
Our main tool for finding these equations are the results  {presented in} \S\ref{idealofjs}, which follow from more
general results regarding joins of projective varieties that we prove in \S\ref{idealofjoins}.

If one holds not just $s$ fixed, but moreover fixes the specific entries of the matrix that one is allowed
to change, and allows the matrix to grow (i.e., the subset $S$ is required to be contained in  some $n_0\times n_0$ submatrix of
$A\in Mat_n$), there is  a    {\it propagation} result (Proposition \ref{degboost}),
that enables one to deduce the equations in the $n\times n$ case from the $n_0\times n_0$ case.

 When one   takes a cone
 (in the sense of Definition \ref{joindef} below, not to be confused with the
 affine cone over a variety)  over a variety 
 with respect to  a general linear space, there is a dramatic increase in the degree
 of the generators of the ideal 
because the equations of the cone are obtained using  
elimination theory. For example,  a general cone over a codimension two complete intersection, whose ideal is generated in
degrees $d_1,d_2$ will have defining equation in  degree $d_1d_2$. However, we are taking cones over very
singular points of varieties that initially are not complete intersections, so the increase in degree is significantly less. We conjecture: 

\begin{conjecture}\label{slowgrowconj} Fix $0<\ep<1$ and $0<\d<1$. Set $r=\ep n$ and $s=n^{1+\d}$. Then the minimal degree of
a polynomial in the ideal of each irreducible component of  $\hat\cR[n,r,s]$ grows like a polynomial in $n$.
\end{conjecture}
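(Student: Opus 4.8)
The plan is to recast each irreducible component of $\hat\cR[n,r,s]$ as a cone and the question as a bound on an elimination ideal. Fix a cardinality-$s$ set $S\subset\{x^i_j\}$ and let $L_S\subset Mat_n$ be the coordinate subspace it spans. As in \S\ref{idealofjs} (cf. Proposition \ref{joinstdprop}), the component of $\hat\cR[n,r,s]$ attached to $S$ is the cone over $\hat\s_{r,n}$ with vertex $\BP L_S$ in the sense of Definition \ref{joindef}; as an affine cone it is $\hat\s_{r,n}+L_S=\pi_S^{-1}(\pi_S(\hat\s_{r,n}))$, where $\pi_S$ is the coordinate projection killing the $S$-variables. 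Since $\hat\s_{r,n}+L_S$ is $L_S$-invariant and closed, its ideal consists exactly of the polynomials in the remaining variables that vanish on $\hat\s_{r,n}$:
\[
\BI\big(\hat\s_{r,n}+L_S\big)\;=\;\BI(\hat\s_{r,n})\cap\BC[x^i_j:x^i_j\notin S],
\]
the elimination ideal of the determinantal ideal $\BI(\hat\s_{r,n})$ (generated in degree $r+1$) with respect to $S$. So the conjecture is equivalent to the assertion that, for $r=\ep n$ and $s=n^{1+\d}$, this elimination ideal has minimal generators of degree polynomial in $n$, uniformly over all $S$ of size $s$. The matching lower bound $r+1=\Theta(n)$ is already in Proposition \ref{qualprop}, so only the upper bound is at issue; the difficulty is that $\hat\s_{r,n}$ meets no linear complement of $L_S$, so a genuine elimination is forced, and its degree is controlled in the generic case only by the super-polynomial mechanism behind Proposition \ref{conedegprop}.

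First I would analyze this elimination ideal through a Gröbner basis of $\BI(\hat\s_{r,n})$ for a block order placing the $S$-variables first: the basis elements free of $S$-variables generate the elimination ideal, and, by the straightening law for minors, each is a $\BC$-linear combination of products of $(r+1)$-minors. The conjecture then reduces to a bound, polynomial in $n$ and uniform over $S$, on the degrees of the products of minors produced by the elimination — concretely, on how many $(r+1)\times(r+1)$ submatrices must be multiplied together before their $S$-entries can be made to cancel. Identifying the exact generating set, and choosing a term order that is simultaneously diagonal (so minors behave well) and an elimination order for $S$, are the first technical points.

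The quantitative heart is to show that enlarging $S$ by one cell — equivalently, taking one further cone, in the iterated-cone framework of \S\ref{degsect} — raises the generating degree only by $O(n)$, or at worst a fixed polynomial in $n$, \emph{additively}, rather than multiplying it. This is where $s=n^{1+\d}\ll n^2$ must be used: each $(r+1)$-minor occupies a $\Theta(n)\times\Theta(n)$ block and there are $\binom{n}{r+1}^2$ of them, so a Hall-type selection should allow one to perturb the row and column sets of the minors used so as to dodge a newly forbidden cell while keeping the bitableau standard, at the cost of appending $O(1)$ further minors. An alternative is to degenerate $\hat\s_{r,n}$ to its initial monomial (Stanley--Reisner) scheme, where the elimination ideal is read off from the minimal non-faces, establish the bound there, and lift it back by flatness. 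Summing the increments over the $s$ cells of $S$ yields a generating degree at most $(r+1)+s\cdot O(n)=O(n^{2+\d})$, polynomial in $n$; the propagation principle of Proposition \ref{degboost} should organize this induction and, for structured $S$, sharpen the exponent.

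The hardest part will be the additive-cost claim of the previous paragraph. The generic behaviour — a cone over a codimension-$c$ complete intersection of degrees $d_1,\dots,d_c$ carries one equation of degree $\prod d_i$, and iterated cones compound this — would produce a bound exponential in $s$, exactly what the conjecture denies. Beating it requires using essentially that $\hat\s_{r,n}$ is arithmetically Cohen--Macaulay with its explicit Lascoux/Akin--Buchsbaum--Weyman resolution and obeys the straightening law, so that its coordinate projections, though no longer determinantal, retain enough structure for Castelnuovo--Mumford regularity to grow only polynomially. A second, subtler obstacle is uniformity over $S$: the estimate must survive adversarial configurations $S$ meeting \emph{every} $(r+1)$-minor, so one cannot produce a single small lucky minor and must bound arbitrarily tangled eliminations — and it is precisely there that $s=o(n^2)$, set against $\tcodim\hat\s_{r,n}=(n-r)^2=\Theta(n^2)$, has to enter quantitatively.
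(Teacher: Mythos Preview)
The statement you are attempting to prove is labeled \textbf{Conjecture} in the paper, and the paper does \emph{not} prove it; the sentence immediately following it says only that ``an affirmative answer to Conjecture~\ref{slowgrowconj} would drastically simplify the study.'' So there is no ``paper's own proof'' to compare against, and your proposal is an attempt at an open problem.

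Your text is not a proof but a research outline, and you are candid about this: you write that ``the hardest part will be the additive-cost claim,'' and that claim is exactly the content of the conjecture. The reformulation in your first paragraph---that the ideal of the component attached to $S$ is the elimination ideal $\BI(\hat\s_{r,n})\cap\BC[x^i_j:x^i_j\notin S]$---is correct and is essentially Proposition~\ref{jjens}/\ref{usefulprop} of the paper, so that part adds nothing new. The substantive step you propose is that adjoining one more coordinate to $S$ increases the minimal generating degree only \emph{additively} by $O(n)$; summing over $s=n^{1+\d}$ cells would then give a polynomial bound. But you give no argument for this additive bound: the ``Hall-type selection'' to ``dodge a newly forbidden cell while keeping the bitableau standard, at the cost of appending $O(1)$ further minors'' is a hope, not a lemma, and you do not explain why the straightening relations or the Cohen--Macaulay/Lascoux resolution of $\hat\s_{r,n}$ would force elimination to behave additively rather than multiplicatively. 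The paper's own partial results (Proposition~\ref{degboost}, Corollary~\ref{hypersurfcor}) give linear growth only when $S$ is held \emph{fixed} inside a bounded submatrix while $n$ grows, which is a much weaker regime than the one in the conjecture.

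In short: the gap is genuine and is precisely the conjecture itself. Your outline identifies the right objects and the right difficulty, but the central technical assertion---additive rather than multiplicative degree growth under successive coordinate eliminations from the determinantal ideal---remains unproved, and nothing in the proposal closes it.
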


Although it would not immediately solve Valiant's problem, an affirmative answer to Conjecture \ref{slowgrowconj} would
drastically simplify the study.

 While  it is difficult to get direct information about the degrees of defining
equations of the irreducible components of  $\hat\cR[n,r,s]$, as na\"\i vely one needs to use elimination theory,
one can use general results from algebraic geometry to get information about the degrees of the varieties.

Let $d_{ {n,r},s}$ denote the 
maximum degree of an irreducible  component of $\hat\cR[n,r,s]$.
 It will be useful to set $k=n-r$. 
Then   (see e.g.,  \cite[p95]{MR770932} for the first equality and e.g. \cite[p. 50,78]{FH} for the fourth and fifth)
\begin{align}\label{drno}
d_{ {n,r},0}&= \prod_{i=0}^{n-r-1}\frac{(n+i)!i!}{(r+i)!(n-r+i)!}\\
\nonumber &=
\frac{\tb(r) \tb(2n-r)\tb(n-r)^2}{\tb(n)^2\tb(2n-2r)}\\
\nonumber &=
\frac{\tb(n-k) \tb(n+k)\tb(k)^2}{\tb(n)^2\tb(2k)}\\
\nonumber &
=\tdim S_{k^{k}}\BC^n\\
&=\frac{\tdim [k^k]}{k^2!}\frac{\tb(n-k) \tb(n+k) }{\tb(n)^2}
\end{align}
Here $\tb(k):=G(k+1)$, where $G(m)=\prod_{i=1}^{m-2}i!$ is the Barnes $G$-function, 
  $S_{k^{k}}\BC^n$ denotes  the irreducible $GL_n$-representation of type $(k,k,\ldots,k)$, and  
  $[k^k]$ denotes the irreducible $\FS_{k^2}$-module corresponding
to the partition $(k\hd k)$.

\begin{remark}\label{improverem}
The shifted Barnes $G$-function $\tb$  
has the following asymptotic expansion
$$
\tb(z) = \left(\frac z{e^{\frac 32}}\right)^{\frac{z^2}2}O(2.51^z)
$$
(see e.g. \url{en.wikipedia.org/wiki/Barnes_G-function}).
 Since the degree of a variety cannot increase  when taking a cone over it, 
  one can replace the $2n^{2n^2}$ upper bound  in Theorem \ref{thm7} with  roughly $n^{ \ep n^2}$
  because, setting
$r=\ep n$,  for some constant $C$, 
$$
 \frac{\tb(\ep n) \tb((2-\ep)n )\tb((1-\ep)n)^2}{\tb(n)^2\tb(2(1-\ep)n)}
\leq  n^{n^2[\frac {\ep^2}2+\frac{(2-\ep)^2}2 +(1-\ep)^2 -1-2(1-\ep)^2]}C^{n^2+n}
=n^{\ep n^2}C^{n^2+n}
$$
\end{remark}

\begin{remark}
A geometric interpretation
of the equality between  $\tdeg d_{n,r,0}$ and  the dimension of an irreducible $GL_n$-module is discussed in \cite{MR2116717}.
\end{remark}

We prove
several   results  about the degrees $d_{n,r,s}$. For example:

\begin{theorem}\label{bbdegthrm} Let $s\leq n$, 
Then,
\be\label{degsum}
d_{ n,r ,s}\leq d_{ {n,r},0}-\sum_{j=1}^s d_{ {n-1,r-1},s-j}
\ene
\end{theorem}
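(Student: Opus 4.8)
The plan is to set up a recursion on $n$ by slicing the variety $\hat\cR[n,r,s]$ with a hyperplane and analyzing which components of the slice come from which components of $\hat\cR[n-1,r-1,s-j]$. Recall from Proposition~\ref{qualprop} and \S\ref{idealofjs} that each irreducible component of $\hat\cR[n,r,s]$ is, up to the group of coordinate rescalings, a cone (in the sense of Definition~\ref{joindef}) over the rank variety $\hat\s_{r,n-?}$ obtained by joining $\hat\s_{r,n}$ with the coordinate subspace spanned by $S$; concretely a component is the join $\hat\s_{r,n}\,\star\,L_S$ where $L_S=\sspan\{x^i_j:(i,j)\in S\}$ has dimension $s$. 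Since $\deg d_{n,r,0}=\deg\hat\s_{r,n}$ and since degree cannot increase under coning over a linear space, we already have $d_{n,r,s}\le d_{n,r,0}$; the content of \eqref{degsum} is the extra subtracted sum, which must be extracted by a more careful intersection-theoretic bookkeeping rather than by the crude ``coning cannot raise degree'' estimate.

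First I would fix a component $\Sigma=\hat\s_{r,n}\star L_S$ of maximal degree $d_{n,r,s}$ and intersect it with a general hyperplane $H$ of the form $\{x^{i_0}_{j_0}=0\}$ for a coordinate $(i_0,j_0)\notin S$; by Bertini/degree-of-a-hyperplane-section, $\deg(\Sigma\cap H)=\deg\Sigma=d_{n,r,s}$. The point is to identify $\Sigma\cap H$ set-theoretically. Setting one coordinate to zero and then further specializing a full row and column (this is where the $s\le n$ hypothesis enters: we can find a row and a column meeting $S$ in at most... — precisely, because $s\le n$ we may choose the deleted row $i_0$ and column $j_0$ so that each of the $s$ entries of $S$ lies in a distinct row, or can be pushed into the ambient $(n-1)\times(n-1)$ block), one sees that $\Sigma\cap H$ decomposes into a component supported on the locus where the entire deleted row/column vanishes — which is exactly a translate of a component of $\hat\cR[n-1,r-1,s]$ living in the $(n-1)\times(n-1)$ submatrix — together with "excess" pieces. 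The identity $\hat\s_{r,n}\cap\{x^{i_0}_{j_0}=0, \text{row }i_0=0\}$ relating to $\hat\s_{r-1,n-1}$ after deleting that row is the Laplace-expansion shadow underlying the $r\mapsto r-1$, $n\mapsto n-1$ shift.

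The key step is then a residual-intersection count: iterating the hyperplane section $s$ times, peeling off one coordinate of $S$ at each stage, we get
\be
d_{n,r,s}=\deg\Sigma=\deg(\Sigma\cap H)=\sum_{\text{components }C\subseteq\Sigma\cap H}\mult_C\cdot\deg C,
\ene
and the right-hand side splits as the degree of the ``main'' residual component (bounded above by $d_{n,r,0}$, coming from the rank variety after the row/column has been eliminated) minus the degrees of the components that were absorbed into the special locus; these absorbed components are precisely the various $\hat\cR[n-1,r-1,s-j]$ for $1\le j\le s$ (the index $j$ records how many of the $s$ allowed entries were "used up" by landing in the deleted row or column), each contributing its degree $d_{n-1,r-1,s-j}$ with multiplicity at least one, which gives the inequality \eqref{degsum}. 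More carefully, one writes $\Sigma = \Sigma' + \text{(stuff)}$ at the cycle level where $\Sigma'$ is the cone over $\hat\s_{r,n}$ with the full block $\{x^i_{j_0}\}\cup\{x^{i_0}_j\}$ adjoined (degree $\le d_{n,r,0}$ since it is a further cone over the rank variety), and the "stuff" are the components forced to carry one of the $S$-coordinates into the deleted row/column, i.e. the components $\hat\cR[n-1,r-1,s-j]$, yielding $d_{n,r,s}\le d_{n,r,0}-\sum_{j=1}^s d_{n-1,r-1,s-j}$ once one checks the signs work out, i.e. that these residual pieces genuinely occur inside the hyperplane section with the right multiplicity.

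The main obstacle I expect is making the residual-intersection bookkeeping rigorous: one must show that (a) each $\hat\cR[n-1,r-1,s-j]$ really appears as a component (or sub-cycle) of the iterated hyperplane section, not just abstractly but embedded via the correct coordinate deletion, and (b) the coefficients/multiplicities with which they appear are $\ge 1$ so that the subtraction is legitimate as an \emph{upper} bound on $d_{n,r,s}$ — an overcount on the subtracted side is harmless for the stated inequality, an undercount would break it. Handling the combinatorics of how the $s$ entries of $S$ distribute relative to the deleted row and column (which forces the sum $\sum_{j=1}^s$ rather than a single term) is the delicate part, and this is exactly where $s\le n$ is needed to guarantee enough room to choose the deletions cleanly. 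The general cone/join degree formulas from \S\ref{degsect} (in particular the behavior of $\deg$ under coning over a linear subspace and the join-degree bounds of \S\ref{idealofjoins}) should supply the remaining estimates $\deg\Sigma'\le d_{n,r,0}$.
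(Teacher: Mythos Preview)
Your approach has a fundamental gap: a hyperplane section cannot produce a \emph{subtraction} of degrees. You correctly note that $\deg(\Sigma\cap H)=\deg\Sigma$, and then you decompose $\Sigma\cap H$ into components $C_i$; but degree is additive over components, so this yields $d_{n,r,s}=\sum_i \mult_{C_i}\deg C_i$, a sum of nonnegative terms. There is no mechanism here by which some of the $d_{n-1,r-1,s-j}$ can appear with a minus sign. Your sentence ``the right-hand side splits as the degree of the main residual component \ldots\ \emph{minus} the degrees of the components that were absorbed'' is where the argument breaks: absorption into a special locus does not subtract degree, it just relocates it. Residual intersection theory does produce difference formulas, but only after one has an \emph{excess} intersection to resolve, and you have not set one up; a generic hyperplane section of an irreducible variety is simply proper.

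The paper's argument gets the subtraction from a completely different source: \emph{projection} from a point rather than hyperplane section. One projects $\BJ(\sigma_r,L^{S'})$ from $x=[x^1_1]\in S$ and uses the classical formula $\deg(\BJ(X,x))=\frac{1}{\deg\pi}\bigl(\deg X-\deg TC_xX\bigr)$ (Proposition~\ref{conedegprop}); the minus sign comes from the multiplicity of the projection center, i.e.\ the degree of the tangent cone. One then bounds $\deg TC_x\BJ(\sigma_r,L^{S'})$ from \emph{below} by $d_{n-1,r-1,s-1}$ via the scheme inclusion $\BJ(TC_x\sigma_r,L^{S'})\subseteq TC_x\BJ(\sigma_r,L^{S'})$ (Proposition~\ref{tconecommutea}) together with the known description of $TC_x\sigma_r$ as a join over $\sigma_{r-1}(\BP A'\times\BP B')$. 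Lemma~\ref{trisecantlemma} verifies $\deg\pi=1$. This gives the one-step recursion $d_{n,r,t}\le d_{n,r,t-1}-d_{n-1,r-1,t-1}$, and iterating it $s$ times yields \eqref{degsum}. The hypothesis $s\le n$ is used only to guarantee a configuration $S$ with no two elements in the same row or column, so that the tangent-cone and trisecant arguments apply at each step.
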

 
In an earlier version of this paper, we conjectured that equality held in
\eqref{degsum}.
After we submitted the paper for publication, our conjecture
was answered affirmatively in \cite{aluffideg}:

\begin{theorem}\label{bbdegthrmaluffi}\cite{aluffideg} Let $s\leq n$, 
Then,
\be\label{degsuma}
d_{ n,r ,s}= d_{ {n,r},0}-\sum_{j=1}^s d_{ {n-1,r-1},s-j}
\ene
\end{theorem}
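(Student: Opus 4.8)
\textbf{Proof proposal for Theorem~\ref{bbdegthrmaluffi}.}
Since Theorem~\ref{bbdegthrm} already gives the inequality ``$\leq$'' in \eqref{degsuma}, the plan is to establish the matching inequality ``$\geq$''. The variety $\hat\cR[n,r,s]$ is a union of irreducible components, each of the form $\ol{\hat\s_{r,n}+L_S}$, where $L_S$ is the coordinate subspace spanned by the $s$ entries $\{x^i_j\}$ indexed by $S$; in the language of \S\ref{gjoinsect} this is the cone over $\sigma_{r,n}$ with respect to the $(s-1)$-plane $\BP L_S$. A component of maximal degree $d_{n,r,s}$ is obtained by a ``most generic'' placement of $S$, e.g.\ with the entries of $S$ spread across distinct rows and distinct columns. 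First I would fix such an $S$ and reduce to computing $\deg\ol{\hat\s_{r,n}+L_S}$ by a degree-theoretic (intersection-theoretic) argument, intersecting the cone with a general linear space of complementary dimension and counting points.

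Second, I would set up an incidence/deformation argument that degenerates one of the $s$ free entries. Pick one coordinate, say $x^{i_1}_{j_1}\in S$. On the locus where this coordinate is not used, the cone over $\sigma_{r,n}$ with respect to the remaining $s-1$ directions contributes; on the locus where it \emph{is} used, one can restrict to the hyperplane $\{x^{i_1}_{j_1}=c\}$ and observe that fixing that entry effectively lands in a configuration on an $(n-1)\times(n-1)$ submatrix: deleting row $i_1$ and column $j_1$, the residual rigidity problem is governed by $\hat\cR[n-1,r-1,s']$ with $s'$ ranging over $0,\dots,s-1$. This is exactly the combinatorial shape of the sum $\sum_{j=1}^s d_{n-1,r-1,s-j}$. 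I would make this precise by writing the total space as a family over $\BA^1$ (the value of $x^{i_1}_{j_1}$, or a pencil of hyperplanes), applying conservation of number / flatness of the degree in the family, and matching the special fiber with the generic fiber.

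Third, I would organize the counting inductively on $s$ (and on $n-r=k$), using the base case $s=0$, which is the classical formula \eqref{drno} for $\deg\sigma_{r,n}=\dim S_{k^k}\BC^n$ (Giambelli--Thom--Porteous). The inductive step amounts to showing that the ``degree drop'' when adding the $s$-th free coordinate is \emph{exactly} $\sum_{j=1}^{s}d_{n-1,r-1,s-j}$, not merely at most that, by exhibiting enough points of intersection with a general linear space: each of the $d_{n-1,r-1,s-j}$ contributions must be shown to correspond to genuinely distinct, reduced intersection points in the degeneration, so that no cancellation occurs. Projecting away from a generic point of $L_S$ (a standard linear-projection computation for cones, cf.\ the discussion preceding Conjecture~\ref{slowgrowconj}) relates $\deg$ of the cone to $\deg$ of its image plus the contribution of the vertex, and one tracks these terms recursively.

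\textbf{Main obstacle.} The hard part is the \emph{reducedness / no-cancellation} step: proving that in the degeneration the intersection cycle breaks up into components whose degrees add up, rather than into a nonreduced scheme whose length overshoots (which would only re-prove the inequality). This requires understanding the local structure of $\ol{\hat\s_{r,n}+L_S}$ along the deeply singular loci created by the coordinate subspace $L_S$ — precisely the phenomenon flagged in the paragraph before Conjecture~\ref{slowgrowconj}, namely that cones are taken over very singular (non-complete-intersection) points. Controlling the multiplicities there, presumably via a careful blow-up or via the theory of Chern classes / Segre classes of the vertex subvariety inside $\sigma_{r,n}$ (which is, I expect, the route taken in \cite{aluffideg}), is where the real work lies; everything else is bookkeeping on Young-diagram dimension formulas and standard facts about degrees of cones and joins from \S\ref{gjoinsect}--\S\ref{degsect}.
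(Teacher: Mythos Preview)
The paper does not contain its own proof of this theorem: it is stated as a result of Aluffi \cite{aluffideg}, established after the authors conjectured it, and the paper's Remark following Theorem~\ref{degreethm} makes clear that the method is intersection theory (Segre/Chern classes of the relevant cones). So there is no in-paper argument to compare against beyond the inequality of Theorem~\ref{bbdegthrm}, whose mechanism is Proposition~\ref{conedegprop} (projection from a point) together with the tangent-cone inclusion of Proposition~\ref{tconecommutea}. Equality in \eqref{degsuma} is, in that framework, exactly the statement that the scheme inclusion $\BJ(TC_x\s_r,L^{S'})\subseteq TC_x\BJ(\s_r,L^{S'})$ is an equality (Proposition~\ref{setlem}), i.e.\ that the tangent cone of the join is reduced and irreducible of the expected degree.

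Your outline correctly isolates the obstacle (reducedness / no excess multiplicity) and correctly anticipates that Aluffi's route is via Segre classes. However, your Step~2 degeneration is not the mechanism the paper sets up, and as written it is shaky: restricting to a hyperplane $\{x^{i_1}_{j_1}=c\}$ does \emph{not} ``effectively land in an $(n-1)\times(n-1)$ configuration with rank $r-1$''---fixing one entry of a rank-$r$ matrix does not drop its rank. The genuine $(n-1,r-1)$ appears only through the tangent-cone identity $TC_{[a_1\ot b^1]}\s_r=\BJ(\BP\hat T_{[a_1\ot b^1]}\s_1,\s_{r-1}(Seg(\BP A'\times\BP B')))$ used in \S\ref{degourvars}, not through a hyperplane section. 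If you want to turn your sketch into an actual proof, you should recast it in the tangent-cone/projection language of \S\ref{degsect}--\S\ref{degourvars}; then the ``real work'' you flag becomes precisely proving Proposition~\ref{setlem}, and that is what requires the intersection-theoretic input you allude to.
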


In the previous version of this paper, the following theorem
was stated 
with the hypothesis that   equality holds in \eqref{degsum} for all $(r',n',s')\leq (r,n,s)$  and $s\leq n$.
Theorem \ref{bbdegthrmaluffi} renders it 
to the present unconditional form:

\begin{theorem} Each irreducible component of $\hat \cR[n,n-k,s]$ has degree at most
\be \label{amazingsum}
\sum_{m=0}^s\binom sm (-1)^m d_{r-m,n-m,0}
\ene
with equality holding if  {no two elements of} $S$  {lie in the same row or column,} 
e.g., if the elements of $S$ appear on the diagonal.

Moreover, if we set $r=n-k$ and $s=k^2-u$ and consider the degree  {$D(n,k,u)$}
as a function of $n,k,u$, then,   fixing $k,u$ and considering $D_{k,u}(n)=D(n,k,u)$ as a function of $n$, 
it is of the form
$$
D_{k,u}(n)= (k^2)!\frac{\tb(k)^2}{\tb(2k)}p(n)
$$
where $p(n)=\frac{n^u}{u!}- \frac{k^2-u}{2(u-1)!}n^{u-1} +O(n^{u-2})$ is a polynomial of degree $u$. 
\end{theorem}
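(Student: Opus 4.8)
The plan is to deduce the degree formula from Theorem \ref{bbdegthrmaluffi} combined with the description of the irreducible components of $\hat\cR[n,r,s]$ as joins (cf. \S\ref{idealofjoins}). First I would recall that, with $k=n-r$ and $s\le n$, each irreducible component of $\hat\cR[n,r,s]$ is the join of $\hat\s_{r,n}$ with a linear space of dimension $s-1$ spanned by a subset $S\subset\{x^i_j\}$ of coordinate points; when no two elements of $S$ lie in the same row or column, this component is the "generic" one and, by Theorem \ref{bbdegthrmaluffi}, its degree is exactly $d_{n,r,0}-\sum_{j=1}^s d_{n-1,r-1,s-j}$. The first step is therefore to show, by induction on $s$, that unwinding this recursion gives the closed form
\[
D(n,k,s)=\sum_{m=0}^{s}\binom{s}{m}(-1)^m d_{r-m,\,n-m,\,0},
\]
the inclusion-exclusion alternating sum in \eqref{amazingsum}. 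For the general (non-generic) configuration the right-hand side is an upper bound: this follows because specializing the linear space to a non-generic position can only make the join degenerate, hence decrease the degree, together with Theorem \ref{bbdegthrm} itself; I would make this precise by a semicontinuity argument on the family of joins parametrized by the choice of $S$.

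The second step is the asymptotic analysis. Writing $r=n-k$, $s=k^2-u$, and substituting the fourth/fifth expressions for $d_{n,r,0}$ from \eqref{drno}, namely $d_{n-k,n,0}=(k^2)!^{-1}\,\tdim[k^k]\,\tb(n-k)\tb(n+k)/\tb(n)^2$, into the alternating sum, I would factor out the $n$-independent constant $(k^2)!\,\tb(k)^2/\tb(2k)$ — note $\tdim[k^k]/(k^2)! \cdot (k^2)! = \tdim[k^k]$ and the normalization in the last line of \eqref{drno} supplies exactly this constant — and reduce the claim to showing that
\[
p(n)=\sum_{m=0}^{s}\binom{s}{m}(-1)^m\frac{\tb(n-m-k)\,\tb(n-m+k)}{\tb(n-m)^2}
\]
(up to the constant) is a polynomial in $n$ of degree $u=k^2-s$ with the stated leading terms. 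The key observation is that $g(n):=\tb(n-k)\tb(n+k)/\tb(n)^2$ is itself, for fixed $k$, a polynomial in $n$ of degree $k^2$: indeed from the Barnes recursion $\tb(z+1)=\Gamma(z)\tb(z)$ one gets $g(n)=\prod_{i=1}^{k-1}\binom{n+i-1}{2i}\big/\binom{?}{?}$-type products — more cleanly, $g(n)$ equals $d_{n-k,n,0}\cdot(\text{const})$ which is $\tdim S_{k^k}\BC^n$, a polynomial in $n$ of degree $k^2$ with leading coefficient $1/(\text{product of hook lengths of }k^k)$. Then the $s$-fold alternating difference of a degree-$k^2$ polynomial is a polynomial of degree $k^2-s=u$, and its leading coefficient and subleading coefficient can be read off from the top two coefficients of $g$ via the standard formula $\sum_m\binom sm(-1)^m q(n-m)=q^{(s)}$-finite-difference identity: if $q(n)=a_0 n^{k^2}+a_1 n^{k^2-1}+\cdots$ then the $s$-th finite difference has leading term $a_0\,\frac{(k^2)!}{u!}n^u$ and the next term involves $a_0$ and $a_1$ in the combination producing $-\frac{k^2-u}{2(u-1)!}n^{u-1}$ after normalization.

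The third and final step is bookkeeping: compute $a_0$ and $a_1$ for $q(n)=\tdim S_{k^k}\BC^n$. The leading coefficient $a_0$ is $1/\prod(\text{hooks of }k\times k)$, which after multiplication by $(k^2)!$ becomes $\tdim[k^k]$ by the hook length formula — matching the constant $(k^2)!\tb(k)^2/\tb(2k)=\tdim[k^k]$; and $a_1$ is determined by the content-sum of the $k\times k$ square diagram (the second symmetric function of contents), a classical computation giving exactly the coefficient needed to produce $-\frac{k^2-u}{2(u-1)!}$. I expect the main obstacle to be this last coefficient extraction: getting the subleading term of $\tdim S_{k^k}\BC^n$ right requires either the Weyl dimension formula expanded to second order in $n$, or an argument via the principal specialization / content evaluation of Schur polynomials, and then carefully tracking how the finite-difference operator mixes the top two coefficients. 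Everything else — the inductive unwinding of the recursion in Step 1 and the semicontinuity for the inequality — is routine given Theorems \ref{bbdegthrm} and \ref{bbdegthrmaluffi}.
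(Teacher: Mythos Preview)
Your approach is essentially the paper's: both unwind the recursion $d_{n,r,s}=d_{n,r,s-1}-d_{n-1,r-1,s-1}$ by induction to reach the alternating sum \eqref{amazingsum}, and both analyze the asymptotics by recognizing that $g(n):=\tb(n-k)\tb(n+k)/\tb(n)^2$ is a degree-$k^2$ polynomial in $n$ and that the alternating sum is its $s$-th finite difference. The paper organizes the computation by expanding $g(n-m)$ in powers of $m$ with $n$-dependent coefficients and invoking the identities $\sum_m\binom sm(-1)^m m^j=0$ for $j<s$, $=(-1)^s s!$ for $j=s$, $=(-1)^s s!\binom{s+1}2$ for $j=s+1$; your finite-difference framing is the same computation seen from the $n$-side, and your identification $g(n)=\prod_{(i,j)\in k\times k}(n+c(i,j))$ via the hook-content formula is a clean way to read off $a_0=1$.

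One small correction: the subleading coefficient $a_1$ of $g(n)$ is governed by the \emph{first} power-sum of contents $\sum_{(i,j)}c(i,j)$, not the second symmetric function, and for the square $k\times k$ this sum is zero (the paper observes the same vanishing directly: the shifts $\pm(k-t)$ cancel in pairs). Hence the entire $n^{u-1}$ coefficient comes from the finite difference of the monomial $n^{k^2}$ alone, which is exactly how the paper obtains $-\tfrac{k^2-u}{2(u-1)!}$. Your plan would discover this once you compute the content sum, so the slip is harmless, but be aware that the ``classical computation'' you anticipate for $a_1$ simply returns zero.
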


For example, when $u=1$,  
$ 
D(n,k,1)=  
 (k^2)!\frac{\tb(k)^2}{\tb(2k)} (n- \frac 12 (k^2-1)). 
$

\begin{remark}Note that  $D_{k,u}(n)= \tdim [k^k]p(n)$. It would be nice to have a geometric or representation-theoretic explanation
of this equality.
\end{remark}

 \smallskip
 
 \begin{remark} In our earlier version of this paper, we realized that the use
 of 
  {\it intersection
theory} (see, e.g. \cite{FultonIT}), could render Theorem \ref{bbdegthrm} unconditional,
so we contacted P. Aluffi, an expert in the subject. 
Not only was he able to render the theorem unconditional, but he determined the
degrees in additional cases. We are delighted that such beautiful geometry can
be  of use to computer science, and look forward to further progress on these
questions.
We expect a substantial reduction in degree when $r=\ep n$ and
$s=(n-r)^2-1$. 
\end{remark}

\medskip

We  define varieties modeled on different classes of families
of matrices as mentioned above. We show that a general Cauchy matrix, or
a general Vandermonde matrix is maximally $1$-rigid and maximally $(n-2)$-rigid (Propositions \ref{cauchyrig}
and \ref{vandrigid}).
One way to understand the
DFT algorithm is to factor the  discrete Fourier transform matrix as a product (set $n=2^k$)
$DFT_{2^k}=S_1\cdots S_k$ where each $S_k$ has only $2n$ nonzero entries. Then, these sparse
matrices can all be multiplied via  a linear circuit of size $2n\tlog n$ (and depth $\tlog n$).
We define the variety of {\it factorizable} or {\it butterfly}  matrices $FM_n$ to be the closure of the set of   matrices admitting such a
description as a product of sparse matrices, all of which admit
a linear circuit of size $2n\tlog n$, and show (Proposition \ref{factorprop}):
\begin{proposition}
A general butterfly matrix admits a linear circuit of size $2n\tlog n$, but does not admit a linear
circuit of size $n(\tlog n+1)-1$.
\end{proposition}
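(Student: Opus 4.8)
## Proof Proposal

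\textbf{Setup and strategy.} The statement has two halves: a positive claim (a general butterfly matrix \emph{does} admit a linear circuit of size $2n\tlog n$) and a negative claim (it does \emph{not} admit one of size $n(\tlog n + 1) - 1$). The positive claim is essentially the definition of $FM_n$ together with the observation that concatenating linear circuits corresponds to matrix multiplication, which was established in \S\ref{lincirsect} (the stacking-of-graphs discussion); so the bulk of the work is the lower bound. The plan is to count, for a \emph{generic} point of $FM_n$, the number of nonzero entries forced in the defining sparse factorization, and then invoke the fact that a linear circuit of size $\Sigma$ computing a matrix $A$ gives a lower bound on certain combinatorial invariants of $A$. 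More precisely, I would work directly with the recursive butterfly structure: $DFT_{2^k} = S_1 \cdots S_k$, where each $S_i$ is a block-structured sparse matrix with exactly $2n$ nonzero entries, and a general butterfly matrix is obtained by perturbing the \emph{nonzero} entries of each $S_i$ to generic values while keeping the support fixed.

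\textbf{Key steps.} First I would pin down the support pattern of each $S_i$: in the standard Cooley–Tukey factorization each $S_i$ has two nonzero entries per row and per column, hence $2n$ nonzero entries total, and the supports are "disjoint" in the sense that the product $S_1\cdots S_k$ has all $n^2$ entries generically nonzero (this uses that a general Vandermonde/DFT-type matrix has all minors nonzero, cited from \cite{MR2539154}). Second, for the lower bound, I would argue that any linear circuit computing a general $A \in FM_n$ must, after removing redundant vertices, have at least one edge per "necessary addition," and that the genericity of the $2n\tlog n$ free parameters forces the circuit to have at least that many edges minus the savings available at the input and output layers. The cleanest route: a size-$\Sigma$ circuit computing $A$ expresses $A$ as a product/sum built from $\Sigma$ scalar-weighted edges, so the coordinate ring of the image of the parametrization-by-circuits-of-size-$\Sigma$ has dimension at most $\Sigma$; since $FM_n$ has dimension exactly $2n\tlog n$ (or rather, the generic fiber dimension shows the image is at least that large once one accounts for the scaling redundancies at the two boundary layers, which contribute the "$-1$" and the difference between $2n\tlog n$ and $n(\tlog n+1)$), a circuit of size $< 2n\tlog n$ (minus boundary savings) cannot have full-dimensional image. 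I would make the dimension count precise: $k$ layers, $2n$ edges each, is $2nk = 2n\tlog n$; the first and last layers can each be partly normalized (one can rescale the $n$ lines entering/leaving a middle vertex), cutting roughly $n$ parameters from one boundary, which is what produces the threshold $n(\tlog n+1) - 1 = nk + n - 1$ rather than $2nk$. So the negative statement is: $\dim FM_n > nk + n - 1$, hence no circuit of size $nk+n-1$ suffices.

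\textbf{Main obstacle.} The hard part will be the dimension computation for $FM_n$ — specifically, showing that the $2n\tlog n$ support-respecting parameters are \emph{algebraically independent} on the image, i.e., that the parametrization $(S_1,\dots,S_k) \mapsto S_1\cdots S_k$ is generically finite onto its image up to the obvious torus redundancies (rescaling the intermediate coordinate axes: the $i$-th layer's output scaling can be absorbed into the $(i+1)$-st layer's input scaling). Counting these redundancies carefully is what distinguishes $2n\tlog n$ from the circuit-size threshold $n(\tlog n+1)-1$, and getting the exact constant requires tracking how many of the $2n$ entries in each $S_i$ are genuinely free versus normalizable. I expect to handle this by an explicit Jacobian/tangent-space computation at a generic point, or by exhibiting $2n\tlog n$ coordinates of $A$ (or of a convenient auxiliary matrix) whose differentials are independent — a direct but somewhat delicate linear-algebra argument. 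The positive half, by contrast, is immediate from the construction of $FM_n$ and the graph-concatenation principle, so no real obstacle there.
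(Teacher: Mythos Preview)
Your overall approach matches the paper's: compute $\dim FM_n$ and invoke Proposition~\ref{bigvarprop}.  However, your dimension count is muddled, and the muddle is exactly where the work lies.  You write that ``$FM_n$ has dimension exactly $2n\log n$'' and then try to locate the redundancies ``at the two boundary layers.''  Both claims are wrong.  The parameter space $\hat L_{s_1}\times\cdots\times\hat L_{s_k}$ has dimension $2nk=2n\log n$, but the map $\Bfly$ has positive-dimensional fibers: for diagonal invertible matrices $D_1,\dots,D_{k-1}$ one has $\Bfly(S_1D_1,\,D_1^{-1}S_2D_2,\dots,D_{k-1}^{-1}S_k)=\Bfly(S_1,\dots,S_k)$, so the fiber has dimension at least $n(k-1)$.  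These are \emph{intermediate}-layer rescalings, not boundary ones, and they give the upper bound $\dim FM_n\le 2nk-n(k-1)=n(k+1)=n(\log n+1)$.

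What remains is to show this is an equality, and here the paper does exactly what you anticipate in your ``main obstacle'' paragraph: a rank computation of the differential $d(\Bfly)$ at a general point.  The key observation is that $Z_1S_2\cdots S_k$ can independently alter $2n$ entries of the product, while each subsequent $S_1\cdots Z_j\cdots S_k$ can alter $2n$ entries, of which $n$ overlap with entries already controlled by previous terms, leaving $n$ new ones.  This gives rank $2n+(k-1)n=n(\log n+1)$.  With the dimension pinned down, Proposition~\ref{bigvarprop} finishes the lower bound immediately.  So your plan is right, but you should redo the arithmetic: the target is $\dim FM_n=n(\log n+1)$, and the ``$-1$'' in the statement comes from Proposition~\ref{bigvarprop}, not from any boundary normalization.
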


\subsection{Future work}   Proposition  \ref{qualprop} gives    qualitative information about the ideals and
we give numerous examples of equations for the relevant varieties.
It would be useful to continue the study of the equations both qualitatively and by computing further
explicit examples,  with the hope of eventually getting  
equations in the Valiant range. In a different direction, an analysis of the degrees of the hypersurface cases
in the range $r=\ep n$ could lead to a substantial reduction of the known degree bounds.

Independent of  complexity theory, several interesting questions relating the differential geometry
and scheme structure of tangent cones are posed in \S\ref{geomsect}.

\subsection*{Acknowledgements} We thank the anonymous referees for   very
careful reading and numerous useful suggestions. 

\section{Geometric formulation}

\subsection{Border rigidity}\label{bbgeomlang}

\begin{definition}\label{joindef}  For varieties $X,Y\subset \BP V$, let
$$\BJ^0(X,Y):= \bigcup_{x\in X, y\in Y, x\neq y} \langle x,y\rangle
$$
and  {define the {\it join of $X$ and $Y$} as} $\BJ(X,Y):=\ol{\BJ^0(X,Y)}$
 {with closure using either the Zariski or the classical topology}.
Here, $\langle x,y\rangle$ is the (projective) linear span of the points $x,y$.
If $Y=L$ is a linear space $\BJ(X,L)$ is called the {\it cone} over $X$ with {\it vertex $L$}.
(Algebraic geometers refer to $L$ as the vertex even when it is not just
a point $\pp 0$.)
\end{definition}

Let $\s_r=\s_{r,n}=\s_r(Seg(\pp{n-1}\times \pp{n-1}))  \subset \BP (\BC^n\ot \BC^n)$ denote the variety of  up to scale  {$n\times n$ matrices of}
rank at most $r$, called the {\it $r$-th secant variety of the Segre variety}.
For those not familiar with this variety, the Segre variety itself, $Seg(\pp{n-1}\times \pp{n-1})$, is the projectivization
of the rank one matrices, one may think of the first $\pp{n-1}$ as parametrizing
column vectors and the second as parametrizing row vectors and
the corresponding (up to scale) rank one matrix as their product. The rank at most
$r$-matrices are those appearing in some secant $\pp{r-1}$ to the 
Segre variety. 

Let $\BC^n\ot \BC^n$ be furnished with linear coordinates $x^i_j$, $1\leq i,j\leq n$.
Let $S\subset \{ x^i_j\}$ be a subset
of cardinality~$s$ and let $L^S:= \tspan~S$.
We may rephrase \eqref{hcrnrs} as
$$
\hat \cR[n,r,s]^0 =\bigcup_{S\subset \{ x^i_j\} , |S|=s } \hat \BJ^0(\s_r,L^S)
$$
 The  dimension of $\s_r$ is $r(2n-r)-1$ and   $\tdim \hat \BJ(\s_r,L^S)\leq \tmin\{ r(2n-r)+s, n^2\}$ (see Proposition \ref{joinstdprop}(3)). 
We say the dimension is the {\it expected dimension}  if equality holds.

The variety $\hat \cR[n,r,s]$, as long as $s>0$ and it is not the ambient space,  is reducible, with at most ${\binom {n^2} s}$ components,
all of the same dimension $r(2n-r)+s$.
(This had been observed in \cite[Thm. 3.8]{MR2870721} and \cite{LV}.)
To see the equidimensionality,
notice that if $\vert S_j \vert = j$ and $S_j \subseteq S_{j+1}$, then the sequence of 
joins $\BJ_j = \BJ(\sigma_r,L^{S_j})$ eventually fills the ambient space. Moreover, $\dim \BJ_{j+1} \leq \dim \BJ_j +1$, 
so the only possibilities are $\BJ_{j+1} = \BJ_j$ or $\dim \BJ_{j+1} = \dim \BJ_j +1$. In particular, this shows that for 
any $j$ there exists a suitable choice of $S_j$ such that $\BJ_j$ has the expected dimension.  
  Now, suppose that $\BJ(\sigma_r,L^S)$ 
does not have the expected dimension, so its dimension is $r(2n-r)+s'$ for some $s' < s$. Let $S' \subseteq S$ be such that
$\vert S'\vert =s'$ and $\BJ(\sigma_r,L^{S'})$ has the expected dimension. Then $\BJ(\sigma_r,L^{S'}) = \BJ(\sigma_r,L^{S})$. 
Now let $R$ be such that $\vert R \vert = s$, $S' \subseteq R$ and $\BJ(\sigma_r, L^R)$ has the expected dimension. Then $\BJ(\sigma_
r, L^R)$ is an irreducible component of $\hat \cR[n,r,s]$ that  contains $\BJ(\sigma_r,L^{S})$, showing  that the 
irreducible components of $\hat \cR[n,r,s]$ have dimension equal to the expected dimension of $\BJ(\sigma_r,L^S)$.
Therefore (again by \cite[Thm. 3.8]{MR2870721} and \cite{LV}), 
\be\label{expdimR}\tdim \hat \cR[n,r,s]=\tmin\{ r(2n-r)+s, n^2\}, 
\ene
and  $\hat \cR[n,r,s]$ is a hypersurface if and only if
\be\label{hypers}
s=(n-r)^2-1.
\ene
We say  a matrix $M$ is {\it maximally $r$-border rigid} if $M\not\in \hat\cR[n,r,(n-r)^2-1]$, and that
$M$ is {\it maximally border rigid} if $M\not\in \hat \cR[n,r,(n-r)^2-1]$ for all $r=1\hd n-2$.
Throughout we assume $r\leq n-2$ to avoid trivialities.

The set of maximally  
rigid matrices is of full measure (in any reasonable measure e.g. absolutely continuous with respect to Lebesgue measure)
on the space of $n\times n$ matrices. In particular, a \lq\lq random\rq\rq\ matrix   will be maximally
rigid.  

\subsection{On the ideal of $\BJ(\s_r,L^S)$}\label{idealofjs}
Write $S^c=\{ x^i_j \}\backslash S$ for the complement of $S$.
The following is a consequence of Proposition \ref{jjens}:

\begin{proposition}\label{usefulprop} Fix $L=L^S$. Generators for the ideal of $\BJ(\s_r,L)$ may be obtained from
polynomials of the form
$P =\sum_{I,J}q_I^{J} M^I_{J}$, where  the  $q_I^J$ are arbitrary homogeneous
polynomials all of the same degree and:
\begin{enumerate}
\item $M^I_{J}$ is the (determinant of the) size $r+1$ minor defined by the index sets $I,J$
(i.e., $I,J\subset [n]$, $|I|=|J|=r+1$), and
\item  only the variables of $S^c$ appear in $P$.
\end{enumerate}

Conversely, any polynomial of the form $P =\sum_{I,J} q_I^{J} M^I_{J}$, 
where the $M^I_{J}$ are
  minors of size $r+1$ and    only the variables of $S^c$ appear in $P$,  is in $\BI(\BJ(\s_r,L))$.
\end{proposition}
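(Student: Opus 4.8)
The plan is to derive Proposition~\ref{usefulprop} from the general description of the ideal of a join stated as Proposition~\ref{jjens}, which handles $\BJ(X,L)$ for $X$ an arbitrary projective variety and $L = \BP U$ a coordinate linear space. First I would recall the abstract principle: for a cone $\BJ(X,L^S)$ with $L^S = \tspan\, S$, the ideal $\BI(\BJ(X,L^S))$ is obtained from $\BI(X)$ by "eliminating" the variables of $S$, i.e.\ it consists precisely of those polynomials in the subring $\BC[S^c] = Sym((L^S)^{\perp})$ that lie in $\BI(X) + (S)$, equivalently the polynomials $P\in\BI(X)$ none of whose monomials involve a variable from $S$. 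Concretely, writing any $P\in\BI(X)$ and discarding all monomials divisible by some $x^i_j\in S$ produces an element of $\BI(\BJ(X,L^S))$, and every element of the latter ideal arises this way; this is the content I would cite from Proposition~\ref{jjens}. Specializing $X = \s_r$ is then almost immediate, since $\BI(\s_r)$ is generated by the size $r+1$ minors $M^I_J$.

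For the direct ("generators") direction, I would argue as follows. Take any generating set for $\BI(\s_r)$ consisting of the minors $M^I_J$. By Proposition~\ref{jjens}, generators of $\BI(\BJ(\s_r,L^S))$ are obtained by taking $\BC[S^c]$-linear combinations — more precisely $Sym((L^S)^\perp)$-module generators — of these minors and then projecting away the monomials that involve variables in $S$; the projection of a combination $\sum_{I,J} q_I^J M^I_J$ is itself of the same shape $\sum_{I,J} \tilde q_I^J M^I_J$ after re-expanding, where now only variables of $S^c$ appear. Requiring the $q_I^J$ to be homogeneous of a common degree is harmless since the ideal is homogeneous and we may work degree by degree, collecting the homogeneous components. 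This yields assertion (1)–(2) exactly as stated.

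For the converse direction, I would simply observe that any $P = \sum_{I,J} q_I^J M^I_J$ with arbitrary homogeneous $q_I^J$ lies in $\BI(\s_r)$ because each $M^I_J$ does; and if moreover only the variables of $S^c$ appear in $P$, then $P$ lies in the subring $\BC[S^c]$, hence $P$ vanishes on $\BJ^0(\s_r,L^S)$: indeed a general point of $\BJ^0(\s_r,L^S)$ has the form $[A + B]$ with $A\in\hat\s_r$ and $B\in L^S$ supported on the coordinates in $S$, and since $P$ uses no coordinate of $S$ we have $P(A+B) = P(A) = 0$. Taking Zariski closure gives $P\in\BI(\BJ(\s_r,L^S))$.

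The main obstacle I anticipate is not in this specialization step but in making precise the passage from a generating set of $\BI(\s_r)$ to a generating set of the "eliminated" ideal: one must be careful that it suffices to combine the given generators $M^I_J$ (rather than needing arbitrary elements of $\BI(\s_r)$ before projecting), and that the homogeneity normalization of the coefficients $q_I^J$ costs nothing. Both points are settled by the general statement in Proposition~\ref{jjens}, so in the write-up I would lean on that proposition and keep the present argument to the short specialization and the elementary converse sketched above.
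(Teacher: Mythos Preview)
Your overall strategy---derive Proposition~\ref{usefulprop} from Proposition~\ref{jjens} by specializing $X=\s_r$---is exactly what the paper does (the paper simply states that the proposition ``is a consequence of Proposition~\ref{jjens}''), and your converse argument is correct.

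However, your paraphrase of Proposition~\ref{jjens} is inaccurate in a way that would derail the write-up if followed literally. You claim that ``writing any $P\in\BI(X)$ and discarding all monomials divisible by some $x^i_j\in S$ produces an element of $\BI(\BJ(X,L^S))$,'' and in the next paragraph you invoke this projection recipe. This is false: take $X=\s_1$ in $Mat_2$, $S=\{x^1_1\}$, and $P=x^1_1x^2_2-x^1_2x^2_1$; discarding the first monomial leaves $-x^1_2x^2_1$, which does \emph{not} vanish on $\BJ(\s_1,[x^1_1])$. Likewise, your assertion that the projection of $\sum q_I^J M^I_J$ ``is itself of the same shape $\sum \tilde q_I^J M^I_J$'' is unjustified and generally wrong.

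What Proposition~\ref{jjens}(2) actually says is simpler and is already one of the equivalent formulations you wrote down: generators of $\BI(\BJ(X,L^S))$ may be taken from $\BI(X)\cap Sym((L^S)^\perp)$, i.e.\ from those $P\in\BI(X)$ that \emph{already} involve only the variables of $S^c$---no projection is needed or allowed. With this correct reading the specialization is immediate: any element of $\BI(\s_r)$ is of the form $\sum_{I,J} q_I^J M^I_J$ since the size $r+1$ minors generate, and membership in $Sym((L^S)^\perp)$ is exactly condition~(2). The converse follows from Proposition~\ref{jjens}(1), since $P\in\BI_d(\s_r)\cap S^d(L^S)^\perp\subseteq\BI_d(\BJ(\s_r,L^S))$; your direct verification via $P(A+B)=P(A)=0$ is an equally valid route. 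So drop the projection/discarding language, quote \ref{jjens}(1)--(2) as stated, and the proof is two lines.
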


Let $E,F=\BC^n$.  The irreducible polynomial representations
of $GL(E)$ are indexed by partitions $\pi$ with at most $\tdim E$ parts. 
Let $\ell(\pi)$ denote the number of parts of $\pi$, and let $S_{\pi}E$ denote the irreducible $GL(E)$-module
corresponding to $\pi$. We have the $GL(E)\times GL(F)$-decomposition 
$$S^d(E\ot F)= \bigoplus_{|\pi|=d,\, \ell(\pi)\leq n} S_{\pi}E\ot S_{\pi}F.
$$
Let $T_E\subset GL(E)$ denote  the torus (the   invertible diagonal matrices).
A vector $e\in E$ is said to be a {\it weight vector} if $[t\cdot e]=[ e]$ for all $t\in T_E$.

\begin{proposition}\label{weightprop}  Write $Mat_n=E\ot F$. For all $S\subset \{ x^i_j\}$, $\BJ(\s_r,L^S)$ is a $T_E\times T_F$-variety.

Thus a set of generators of $\BI(\BJ(\s_r,L))$ may be taken from $GL(E)\times GL(F)$-weight vectors and these weight
vectors must be sums of vectors in modules $S_{\pi}E\ot S_{\pi}F$ where $\ell(\pi)\geq r+1$.
\end{proposition}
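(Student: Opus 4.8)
The plan is to verify the two assertions in turn. First, to see that $\BJ(\s_r,L^S)$ is a $T_E\times T_F$-variety, note that the secant variety $\s_r$ is a $GL(E)\times GL(F)$-variety, since the rank of a matrix is unchanged under multiplication on the left and right by invertible matrices; in particular $\s_r$ is preserved by $T_E\times T_F$. On the other hand, each coordinate $x^i_j$ is a weight vector for $T_E\times T_F$ acting on $E\ot F$ (diagonal matrices rescale $x^i_j$ by a character), so for $t=(t_E,t_F)\in T_E\times T_F$ the span $L^S=\tspan~S$ satisfies $t\cdot L^S=L^S$. Since the join of two $G$-stable subvarieties is $G$-stable (the set $\BJ^0(X,Y)$ is visibly $G$-invariant when $X,Y$ are, and Zariski closure preserves $G$-invariance), we conclude $\BJ(\s_r,L^S)$ is a $T_E\times T_F$-variety.

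Second, because $\BI(\BJ(\s_r,L^S))$ is a $T_E\times T_F$-stable ideal and $T_E\times T_F$ is a torus acting on the polynomial ring $Sym((E\ot F)^*)$, the ideal decomposes as a direct sum of its weight spaces; hence it is generated by weight vectors. It remains to identify which irreducible $GL(E)\times GL(F)$-modules $S_{\pi}E\ot S_{\pi}F$ can contain elements of the ideal. Here I would invoke Proposition \ref{usefulprop}: every generator can be taken of the form $P=\sum_{I,J}q_I^J M^I_J$ with each $M^I_J$ a minor of size $r+1$. A size $r+1$ minor $M^I_J$, as an element of $S^{r+1}(E\ot F)^*$, lies in the component $S_{(1^{r+1})}E^*\ot S_{(1^{r+1})}F^*$, i.e. it is a highest weight vector (up to the Weyl group action) for a partition with exactly $r+1$ parts. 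Multiplying by $q_I^J$ and projecting onto an irreducible summand $S_{\pi}E\ot S_{\pi}F$, the Littlewood--Richardson rule forces $\ell(\pi)\geq r+1$, since tensoring a module $S_{\mu}$ with $\ell(\mu)\geq r+1$ by any polynomial module and decomposing only produces constituents $S_{\pi}$ with $\ell(\pi)\geq \ell(\mu)\geq r+1$ (the number of rows cannot decrease when adding boxes). Therefore every weight vector in the ideal, being a sum of such projections, lies in $\bigoplus_{\ell(\pi)\geq r+1}S_{\pi}E\ot S_{\pi}F$.

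The main obstacle is the last step: making precise the claim that the ideal generators, and hence all ideal elements in a given isotypic component, are supported on partitions with at least $r+1$ rows. The cleanest route is the representation-theoretic one just sketched — observing that $S^d(E\ot F)$ is multiplicity-free as a $GL(E)\times GL(F)$-module (only $S_\pi E\ot S_\pi F$ with the \emph{same} $\pi$ appear), so that identifying the allowed $\pi$ reduces to a Pieri/Littlewood--Richardson bound on the number of rows after multiplying a minor by an arbitrary polynomial. An alternative, more hands-on argument is to use that a weight vector of weight with a zero somewhere can be specialized, but the row-count bound via Proposition \ref{usefulprop} is the most direct and is essentially forced by the structure already established. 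No heavy computation is needed once Proposition \ref{usefulprop} is in hand; the content is purely the interplay between torus-stability (giving the weight-vector generation) and the Pieri rule (giving the $\ell(\pi)\geq r+1$ constraint).
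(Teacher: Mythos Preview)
Your proposal is correct and follows essentially the same approach as the paper: the paper's entire justification is the single sentence ``The length requirement follows from Proposition~\ref{usefulprop}(1),'' and you have simply unpacked this by (i) spelling out why both $\s_r$ and $L^S$ are torus-stable so their join is, (ii) noting that torus-stability of the ideal gives weight-vector generators, and (iii) using Proposition~\ref{usefulprop} together with the Pieri/Littlewood--Richardson row-count to pin down $\ell(\pi)\geq r+1$. Your Littlewood--Richardson argument is in fact a derivation of the standard fact that $\BI_d(\s_r)=\bigoplus_{\ell(\pi)\geq r+1}S_\pi E^*\ot S_\pi F^*$, which is all the paper is implicitly invoking.
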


The length requirement follows from Proposition \ref{usefulprop}(1).

Proposition \ref{qualprop}(1)  is Proposition \ref{weightprop} expressed in coordinates.
For many examples, the generators   have nonzero projections onto all the modules $S_{\pi}E\ot S_{\pi}F$ with $\ell(\pi)\geq r+1$.

Recall  the notation $\Delta^{I}_{J}=M^{I^c}_{J^c}$, where $I^c$ denotes the complementary index set to $I$. This
will allow us to work independently  of the size of our matrices.

   Let $S\subset \{x^i_j\}_{1\leq i,j\leq n}$ and let $P\in \BI_d(\s_r)\cap S^d(L^S)\upperp$, so $P\in \BI_d(\BJ(\s_r,L^S))$,
and require further that $P$ be a $T_E\times T_F$ weight vector. Write 
\be\label{pexpr}
P=\sum_v \Delta^{I_1^v}_{J_1^v}  \cdots \Delta^{I_f^v}_{J_f^v}
\ene
where $|I_{\a}^1|=|J_{\a}^1| =: \d_{\a}$,  so
$d=fn-\sum_{\a}\d_{\a}$. Since we allow $\d_{\a}=1$,
any polynomial that is a weight vector may be written in this way.
Write $x^i_j=\ol{e}_i\ot \ol{f}_j$.  
 The $T_E$-weight of  $P$ is $ (1^{\l_1}\hd n^{\l_n})$ where $\ol{e}_j$ appears $\l_j$ times in the union of the   $(I^v_{\a})^c$'s, and the $T_F$-weight of $P$
 is  $ (1^{\mu_1}\hd n^{\mu_n})$ where $\ol{f}_j$ appears $\mu_j$ times in  the union of the   $(J^v_{\a})^c$'s. 

Define
\be\label{pqexpr}
P_q\in S^{d+qf}Mat^*_{n+q }
\ene
by \eqref{pexpr} only considered as a polynomial on  $Mat_{n+q}$, and note that   $P=P_0$.

\medskip

\begin{proposition} \label{degboost} For $P\in \BI_d(\s_{r,n})\cap S^d(L^S)\upperp$ as in \eqref{pexpr},
$$P_q\in \BI_{d+fq} (\s_{r,n+q})\cap S^{d+fq}(L^S)\upperp
$$ where $S$ is the same for $Mat_n$ and $Mat_{n+q}$. In particular, $P_q\in \BI_{d+fq} (\BJ(\s_{r+q,n+q},L^S))$.
\end{proposition}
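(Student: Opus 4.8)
The plan is to show that the two defining properties of $P_q$ -- vanishing on $\s_{r,n+q}$ and lying in $S^{d+fq}(L^S)^\perp$ -- both follow more or less formally from the corresponding properties of $P$, by exploiting the ``stability'' of minors under enlarging the matrix. The containment in $\BI_{d+fq}(\BJ(\s_{r+q,n+q},L^S))$ then drops out of Proposition \ref{usefulprop} together with the observation that a minor of size $r+1$ in $Mat_n$, read inside $Mat_{n+q}$, can be completed to a minor of size $r+q+1$.

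First I would handle the annihilator condition, which is the easy half. The expression \eqref{pexpr} writes $P$ as a sum of products of factors $\Delta^{I^v_\a}_{J^v_\a} = M^{(I^v_\a)^c}_{(J^v_\a)^c}$, and each such factor is a polynomial in precisely the variables $x^i_j$ with $i \in (I^v_\a)^c$, $j \in (J^v_\a)^c$. By hypothesis $P \in S^d(L^S)^\perp$, i.e.\ no variable of $S$ occurs in $P$; since passing from $Mat_n$ to $Mat_{n+q}$ only adds new variables (those in rows or columns indexed by $n+1,\dots,n+q$) and does not change which of the old variables appear in any given $\Delta$, the polynomial $P_q$ still involves no variable of $S$. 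Hence $P_q \in S^{d+fq}(L^S)^\perp$. (One should note $P_q$ is genuinely homogeneous of degree $d+fq$: each of the $f$ factors $\Delta^{I^v_\a}_{J^v_\a}$, being a complementary minor, has its degree raised by exactly $q$ when the ambient matrix grows by $q$, so the total degree becomes $d + fq$, consistent with the subscript.)

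Next, the vanishing on $\s_{r,n+q}$. I would argue that $P_q$ lies in the ideal generated by the size-$(r+1)$ minors of the generic $(n+q)\times(n+q)$ matrix, which is exactly $\BI(\s_{r,n+q})$. Here is where the main work is. By hypothesis $P \in \BI_d(\s_{r,n})$, so by the description of $\BI(\s_{r,n})$ (or directly by Proposition \ref{usefulprop} with $S = \emptyset$) we may write $P = \sum q_I^J M^I_J$ with the $M^I_J$ of size $r+1$ on $Mat_n$. However, $P_q$ is \emph{not} obtained by substituting into this expression -- the $q_I^J$ are not stable under enlargement -- so I would instead argue directly from \eqref{pexpr}. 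Each summand $\Delta^{I_1^v}_{J_1^v}\cdots\Delta^{I_f^v}_{J_f^v}$ of $P$ has $\sum_\a \d_\a^v = fn - d$, i.e.\ the complementary index sets have total size $fn-d$; since $d \geq r+1$ (there are no equations of degree $< r+1$, cf.\ Proposition \ref{qualprop}), \emph{at least one} factor $\Delta^{I^v_\a}_{J^v_\a} = M^{(I^v_\a)^c}_{(J^v_\a)^c}$ has complementary-index size $n - \d^v_\a \geq r+1$ -- wait, this needs care, so let me phrase the key lemma I actually need: \emph{if $P$, written as in \eqref{pexpr}, vanishes on $\s_{r,n}$, then in each monomial at least one complementary minor $M^{(I^v_\a)^c}_{(J^v_\a)^c}$ has size $\geq r+1$, and moreover the same combinatorial vanishing that forces $P \in \BI(\s_{r,n})$ forces $P_q \in \BI(\s_{r,n+q})$.} The cleanest route is functorial: a point of $\s_{r,n+q}$ is a matrix $\tilde A$ of rank $\leq r$; I want $P_q(\tilde A) = 0$. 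Restricting $\tilde A$ to leading principal $n\times n$ (or more generally, to any $n\times n$ submatrix on row set $R$ and column set $C$ with $[n]\subseteq R, C$ suitably) still has rank $\leq r$, so $P$ vanishes on it; the content of the lemma is that $P_q(\tilde A)$ is an alternating/polynomial combination of such restrictions $P(\tilde A|_{R\times C})$, via Laplace/cofactor expansion of each enlarged complementary minor along the new rows and columns. So the main obstacle, and the step I would spend the most care on, is proving this cofactor-expansion identity: expanding each $M^{(I^v_\a)^c}_{(J^v_\a)^c}$ inside $Mat_{n+q}$ along the $q$ new rows and columns expresses it as a signed sum of products (new-rows minor)$\times$(old complementary minor), and one must check these reassemble, over all $v$, into a combination of genuine copies of $P$ evaluated on $n\times n$ sub-blocks -- using crucially that $P$ is a $T_E\times T_F$ weight vector so that the ``old'' parts line up coherently across the sum.

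Finally, for the promoted statement $P_q \in \BI_{d+fq}(\BJ(\s_{r+q,n+q},L^S))$: by the first two steps $P_q \in \BI(\s_{r,n+q}) \cap S^{d+fq}(L^S)^\perp$, and since $\s_{r,n+q} \supseteq \s_{r+q,n+q}$ we have $\BI(\s_{r,n+q}) \subseteq \BI(\s_{r+q,n+q})$; thus $P_q \in \BI(\s_{r+q,n+q}) \cap S^{d+fq}(L^S)^\perp$, and by the converse direction of Proposition \ref{usefulprop} applied with $r$ replaced by $r+q$ -- noting that any minor of size $r+1$ of the $(n+q)\times(n+q)$ matrix extends to one of size $r+q+1$, so $P_q$ is still a sum of terms $\Delta Q$ with $\Delta$ of the appropriate size -- this is exactly membership in $\BI(\BJ(\s_{r+q,n+q},L^S))$.
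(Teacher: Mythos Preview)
Your argument for the annihilator condition $P_q \in S^{d+fq}(L^S)^\perp$ has a genuine gap, and that gap is precisely where the content of the proposition lies. You write that since $P$ involves no variable of $S$, and since enlarging each $\Delta$ only adjoins new variables while keeping the old ones, $P_q$ also involves no variable of $S$. But $P_q$ is built from the \emph{presentation} \eqref{pexpr}, not from $P$ as a polynomial, and the individual summands $\Delta^{I_1^v}_{J_1^v}\cdots\Delta^{I_f^v}_{J_f^v}$ in that presentation typically \emph{do} involve variables of $S$---what the hypothesis $P\in S^d(L^S)^\perp$ says is only that these occurrences \emph{cancel in the sum over $v$}. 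The question is whether this cancellation survives enlargement of the minors. For instance, with $n=3$, $r=1$, $S=\{x^1_1,x^2_2,x^3_3\}$ and the presentation $P=\Delta^1_3\Delta^{23}_{12}-\Delta^3_1\Delta^{12}_{23}$, each summand separately contains the monomial $\pm x^2_2 x^1_3 x^3_1$ (with $x^2_2\in S$), and it is only the cancellation across the two terms that puts $P$ into $S^3(L^S)^\perp$; after enlarging, one must verify that the corresponding contributions still cancel. This is exactly what the paper's proof does: it reduces to $q=1$, isolates a monomial from $S$ appearing in some summand, uses $P\in S^d(L^S)^\perp$ to extract an identity among products of smaller minors (equation \eqref{deleqn}), and then shows via Laplace expansion of the enlarged minors along the new row and column that this same identity forces all the $S$-monomials in $P_1$ to cancel as well.

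So you have the two halves inverted: the paper treats $P_q\in\BI(\s_{r,n+q})$ as the routine step and spends the proof on the annihilator condition, whereas you dismissed the annihilator condition with a flawed one-liner and sketched a Laplace-expansion argument for the $\sigma$-vanishing. (That Laplace-expansion idea is not wrong in spirit and is in fact cousin to what the paper does for the other half.) A separate minor issue: in your final paragraph the inclusion is backwards---$\sigma_{r,n+q}\subseteq\sigma_{r+q,n+q}$, not the reverse---so the passage from $\BI(\s_{r,n+q})$ to $\BI(\s_{r+q,n+q})$ does not go through as written.
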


\begin{proof}
It is clear $P_q\in \BI_{d+fq} (\s_{r,n+q})$, so it remains to show it is in $S^{d+fq}(L^S)\upperp$.
By induction, it will be sufficient to prove the case $q=1$.
Say in some term, say $v=1$,  in the summation of $P$ in \eqref{pexpr} a monomial in $S$ appears as a factor,
some
$x^{s_1}_{t_1}\cdots x^{s_g}_{t_g}Q $.
Then, by Laplace expansions, we may write $Q =\tilde \Delta^{I_1^1}_{J_1^1}  \cdots \tilde \Delta^{I_f^1}_{J_f^1}$,
for some minors (smaller than or equal to the originals).
Since this term is erased we must have, after re-ordering terms, for $v=2\hd h$ (for some $h$),
$$
x^{s_1}_{t_1}\cdots x^{s_g}_{t_g}(\tilde \Delta^{I_1^1}_{J_1^1}  \cdots \tilde \Delta^{I_f^1}_{J_f^1}
+ \cdots +
\tilde \Delta^{I^h_1}_{J^h_1}  \cdots \tilde \Delta^{I_f^h}_{J_f^h})=0
$$
that is,
\be\label{deleqn}
\tilde \Delta^{I_1^1}_{J_1^1}  \cdots \tilde \Delta^{I_f^1}_{J_f^1}
+ \cdots +
\tilde \Delta^{I_h^1}_{J_h^1}  \cdots  \tilde \Delta^{I_f^h}_{J_f^h}=0.
\ene
Now consider the same monomial's appearance in $P_1$ (only the monomials of $S$  appearing in
the summands of  $P$ could possibly appear in $P_1$).
In the $v=1$ term it will appear with $\tilde Q $ where $\tilde Q $ is a sum of terms, the first of which is
$(x^{n+1}_{n+1})^f \tilde \Delta^{I_1^1,n+1}_{J_1^1,n+1}  \cdots \tilde  \Delta^{I_f^1,n+1}_{J_f^1,n+1}$ and each appearance will have such a term,
so these add to zero because $\Delta^{I_1^1,n+1}_{J_1^1,n+1}$ in $Mat_{n+1}$, is the same minor as $\Delta^{I_1^1 }_{J_1^1 }$
in $Mat_{n}$. Next is a term say
$(x^{n+1}_{n+1})^{f-1}x^1_{n+1}\tilde \Delta^{1, I_1^1}_{J_1^1,n+1}  \cdots \tilde \Delta^{I_f^1,n+1}_{J_f^1,n+1}$, but then there
must be corresponding terms
$(x^{n+1}_{n+1})^{f-1}x^1_{n+1}\tilde \Delta^{1, I_1^\mu}_{J_1^\mu,n+1}  \cdots \tilde \Delta^{I_f^\mu,n+1}_{J_f^\mu,n+1}$ for
each $2\leq \mu\leq h$. But these must also sum to zero because it is an identity  among minors of the same form
as the original. One continues in this fashion to show all terms in $S$ in the expression of $P_1$ indeed cancel.
\end{proof}

\begin{corollary}\label{hypersurfcor} Fix $k=n-r$ and $S$ with $|S|=k^2-1$,
and allow $n$ to grow.  Then the degrees of the hypersurfaces $\BJ(\s_{n-k,n},L^S)$ grow
at most linearly with respect to $n$.
\end{corollary}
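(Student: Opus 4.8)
The plan is to exhibit, for all large $n$, a single nonzero element of $\BI\bigl(\BJ(\s_{n-k,n},L^S)\bigr)$ whose degree grows linearly in $n$, by feeding a fixed equation of a small instance into the propagation result Proposition \ref{degboost}. Since for every $n$ the variety $\BJ(\s_{n-k,n},L^S)$ is an \emph{irreducible} hypersurface (it is a join of the irreducible variety $\s_{n-k,n}$ with the linear space $L^S$, and $|S|=k^2-1=(n-(n-k))^2-1$, so \eqref{hypers} applies), its ideal is principal, generated by an irreducible $F_n$ of some degree $D(n)$; bounding $D(n)$ linearly is then the whole task.

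First I would fix a base size $n_0$ large enough that $S\subset\{x^i_j:1\le i,j\le n_0\}$ and $n_0\ge k+2$, and set $r_0:=n_0-k$. Let $F_{n_0}$ generate the (principal) ideal of the hypersurface $\BJ(\s_{r_0,n_0},L^S)$, say $\deg F_{n_0}=d_0$. By Proposition \ref{weightprop} this variety is a $T_E\times T_F$-variety, so its principal ideal is torus-stable, hence $F_{n_0}$ spans a torus-stable line and is a $T_E\times T_F$ weight vector; by the discussion around \eqref{pexpr} (allowing $\d_\a=1$) it can therefore be written as $F_{n_0}=\sum_v\Delta^{I_1^v}_{J_1^v}\cdots\Delta^{I_f^v}_{J_f^v}$ for some fixed number $f$ of factors, with $d_0=fn_0-\sum_\a\d_\a$. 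Also $F_{n_0}\in\BI_{d_0}(\s_{r_0,n_0})\cap S^{d_0}(L^S)\upperp$: it vanishes on $\s_{r_0,n_0}\subset\BJ(\s_{r_0,n_0},L^S)$, and it involves no variable of $S$, because by Proposition \ref{usefulprop} the ideal $(F_{n_0})$ contains a nonzero polynomial free of $S$-variables, which $F_{n_0}$ divides, and the $x$-degree ($x\in S$) is additive under multiplication.

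Next I would apply Proposition \ref{degboost} with $q:=n-n_0$. As $r_0+q=n-k$ and $n_0+q=n$, it yields
$$
(F_{n_0})_q\in\BI_{d_0+fq}\bigl(\BJ(\s_{n-k,n},L^S)\bigr),\qquad \deg(F_{n_0})_q=d_0+f(n-n_0).
$$
Since $\BJ(\s_{n-k,n},L^S)$ is the irreducible hypersurface $\{F_n=0\}$, any nonzero element of its ideal is divisible by $F_n$; provided $(F_{n_0})_q\neq 0$ we thus get
$$
D(n)\le\deg(F_{n_0})_q=fn+(d_0-fn_0),
$$
which is linear in $n$ with slope $f$ depending only on $k$ (through the fixed data $n_0,S,F_{n_0}$) — the assertion of the corollary.

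The one step needing a genuine (though short) verification — the main obstacle — is the nonvanishing $(F_{n_0})_q\neq0$. I would check it by evaluating $(F_{n_0})_q$ on the block-diagonal matrix $\mathrm{diag}(A,\Id_q)$ with $A\in Mat_{n_0}$ generic: each minor $M^{I^c}_{J^c}$ appearing in $(F_{n_0})_q$ (complement taken in $[n]$) has row- and column-index sets agreeing on $\{n_0+1,\dots,n\}$, so it restricts to the corresponding complementary minor of $A$, and hence $(F_{n_0})_q$ restricts to $F_{n_0}(A)\not\equiv0$. (The coordinates set in this substitution all lie outside $S$, consistent with $(F_{n_0})_q\in S^{\bullet}(L^S)\upperp$.) Everything else is immediate from Propositions \ref{degboost} and \ref{weightprop} and from \eqref{hypers}.
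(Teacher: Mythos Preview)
Your proof is correct and follows essentially the same approach as the paper: propagate a fixed equation of a base case via Proposition \ref{degboost}, and observe that the resulting degree $d_0+fq$ is linear in $n$. You supply two details the paper leaves implicit---that the hypersurface generator $F_{n_0}$ may be taken as a torus weight vector free of the $S$-variables (so it admits the $\Delta$-presentation \eqref{pexpr}), and that the propagated polynomial $(F_{n_0})_q$ is nonzero, which you verify cleanly by restricting to $\mathrm{diag}(A,\Id_q)$.
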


\begin{proof} If  we are in the hypersurface case and $P\in \BI_d(\BJ(\s_{r,n},L^S))$, then even in the worst
possible case where all factors $\Delta^{I^v_s}_{J^v_s}$  in the expression \eqref{pexpr}   but the first have degree one,  the ideal of the hypersurface
$\BJ(\s_{r+u,n+u},L^S)$ is nonempty in degree $(d-r)u$.
\end{proof}

\begin{definition} Let $P$ be a generator of $\BI(\BJ(\s_{r,n},L^S))$ with a presentation of the form \eqref{pexpr}. We say $P$ is
{\it well presented} if $P_q$ constructed as in \eqref{pqexpr}  is a generator of $\BI(\BJ(\s_{r+q,n+q},L^S))$ for all $q$.
\end{definition}

\begin{conjecture}\label{weneedconj} For all $r,n,S$, there exists a set of generators
$P^1\hd P^{\mu}$ of $\BI(\BJ(\s_{r,n},L^S))$ that can be
well presented.
\end{conjecture}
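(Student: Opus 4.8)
\medskip
\noindent\emph{Toward a proof.} As this is a conjecture, what follows is a plan, with an indication of where it is likely to stall. The idea is to recognize the propagation $P\mapsto P_q$ of \eqref{pqexpr} as (a section of) restriction along the block inclusion $\iota_q\colon Mat_n\hookrightarrow Mat_{n+q}$, $A\mapsto \mathrm{diag}(A,\Id_q)$, and then to argue finite generation for the whole tower of ideals by a representation-stability / Noetherianity argument.

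First I would check the compatibility $\iota_q^*(\Delta^I_J)=\Delta^I_J$ for $I,J\subseteq[n]$: the submatrix of $\mathrm{diag}(A,\Id_q)$ cut out by the complementary index sets is block triangular with an $\Id_q$ block, so its determinant is the corresponding minor of $A$. Hence, for any $\Delta$-presentation $P=\sum_v\Delta^{I_1^v}_{J_1^v}\cdots\Delta^{I_f^v}_{J_f^v}$, one gets $\iota_q^*P_q=P$, so $P\mapsto P_q$ is an injective linear section of $\iota_q^*$; combined with Proposition \ref{degboost}, which already places $P_q$ in $\BI(\BJ(\s_{r+q,n+q},L^S))$, this exhibits (with $k=n-r$ and $S$ fixed inside a bounded submatrix) the ideals $\BI(\BJ(\s_{n-k,n},L^S))$ as a directed system of modules compatible with propagation and with the coordinate torus. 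The next step is to package $\bigoplus_n \BC[Mat_n]/\BI(\BJ(\s_{n-k,n},L^S))$ as a module over the small category generated by the $\iota_q$ and $T_E\times T_F$ --- a torus-equivariant analogue of an $\mathrm{FI}$-module, for which the $\Delta$-presentation language of \eqref{pexpr} furnishes a stable set of combinatorial generators. Granting the appropriate Noetherianity statement for that category (a torus-equivariant analogue of the theorems of Church--Ellenberg--Farb and of Sam--Snowden), the sub-object $\bigoplus_n \BI(\BJ(\s_{n-k,n},L^S))$ is then finitely generated, with generators realized for $n$ in a bounded range; choosing weight-vector representatives with $\Delta$-presentations produces a finite list $P^1,\dots,P^\mu$ whose propagations generate every ideal in the tower --- which is exactly what "well presented" demands.

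The crux, and the reason the statement is only conjectured, is this last Noetherianity claim. Two features defeat the off-the-shelf machinery. First, the symmetry is only the torus $T_E\times T_F$, not $GL_E\times GL_F$, so one cannot pass to the much better understood $GL$-equivariant setting and must instead control, uniformly in $n$, the entire weight-space lattice and the straightening (Plücker/Laplace) relations among the $\Delta^I_J$. Second, the corank $k=n-r$ is held fixed while $n$ grows, so by Proposition \ref{usefulprop} the relevant object is the elimination ideal $\BI(\s_{r,n})\cap\BC[S^c]$, whose generators can have jumping degree; one has to prove that every such elimination-born generator in $Mat_{n+q}$ is the propagation of one supported on a bounded $Mat_{n_0}$. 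I would attack this using the standard-monomial bases for determinantal ideals (Doubilet--Rota--Stein, De Concini--Eisenbud--Procesi), which are combinatorially uniform in $n$: after straightening a candidate new generator, the relations certifying membership in $\BC[S^c]$ involve only index sets meeting the bounded set $S$ in a bounded way, and therefore lift from $Mat_{n_0}$. If a clean uniform argument proves elusive, a natural first target is the hypersurface range $s=(n-r)^2-1$, where Corollary \ref{hypersurfcor} already bounds the degree of the unique defining equation linearly in $n$, leaving only finitely many weights and degrees to control, so that the stabilization can plausibly be verified directly.
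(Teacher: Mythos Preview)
The paper does not prove this statement: it is stated as Conjecture~\ref{weneedconj} and left open, so there is no proof in the paper to compare against. You have correctly identified this and framed your contribution as a research plan rather than a proof.

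Your observation that $\iota_q^*(\Delta^I_J)=\Delta^I_J$ under the block embedding $A\mapsto\mathrm{diag}(A,\Id_q)$ is correct and gives a clean conceptual home for the propagation map of Proposition~\ref{degboost}. The idea of organizing the tower $\{\BI(\BJ(\s_{n-k,n},L^S))\}_n$ as a module over a category built from these embeddings, and then invoking a Noetherianity result, is a natural line of attack and goes well beyond anything the paper attempts. Your diagnosis of the obstacles is also accurate: the symmetry group here is only the torus $T_E\times T_F$ rather than $GL_E\times GL_F$, which puts the problem outside the reach of the existing Sam--Snowden and Church--Ellenberg--Farb machinery, and the elimination step $\BI(\s_{r,n})\cap\BC[S^c]$ is exactly where degree control becomes delicate. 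One small caution: even if the Noetherianity step goes through and yields a finite generating set whose propagations generate all ideals in the tower, you would still need to verify that the specific $\Delta$-presentations you choose for those generators are ``well presented'' in the precise sense of the definition, i.e., that each $P^j_q$ is itself a \emph{generator} (not merely an element) of the ideal at level $n+q$; finite generation of the tower does not automatically give this without an additional argument about minimality or redundancy.
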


\begin{remark} Well presented expressions are far from unique because of the various Laplace expansions.
\end{remark}

\begin{remark} $\BI(\BJ(\s_{r+q,n+q},L^S))$ may require additional generators beyond the $P^1_q\hd P^{\mu}_q$.
\end{remark}

\section{Examples of equations for $\BJ(\s_r,L^S)$}\label{exsect}


\subsection{First examples}\label{firstexamples}
The simplest equations for $\BJ(\s_r,L^S)$ occur when $S$ omits a submatrix of size $r+1$, and one
simply takes the corresponding size $r+1$ minor. The proofs of the following propositions are immediate
consequences of Proposition \ref{usefulprop}, as when one expands each expression, the elements of $S$ cancel.

Consider the example $n=3$, $r=1$, $S=\{x^1_1,x^2_2,x^3_3\}$, and $r=1$.
Then
\begin{equation}
x^2_1x^3_2x^1_3-x^1_2x^2_3x^3_1=M^{23}_{12}x^1_3- M^{12}_{23}x^3_1 \in \BI_3(\BJ(\s_1,L^S)). \label{eq:examplewhichgeneralizes}
\end{equation}
This example generalizes in the following two ways.  
First,   Proposition \ref{degboost} implies: 


\begin{proposition}\label{twoprop} If there are two size $r+1$ submatrices of $Mat_{n}$,  say respectively indexed by
$(I,J)$ and $(K,L)$, that each contain some $x^{i_0}_{j_0}\in S$
but no other point of $S$, then setting $I'=I\backslash i_0$, $J'=J\backslash j_0$, $K'=K\backslash i_0$, $L'=L\backslash j_0$,
  the degree $2r+1$ equations
\be\label{s1}
M^I_JM^{K'}_{L'}-M^{K}_{L}M^{I'}_{J'}
\ene
are in the ideal of $J(\s_r,L^S)$.
\end{proposition}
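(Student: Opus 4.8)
The plan is to verify the two hypotheses of Proposition \ref{usefulprop} for the polynomial $P = M^I_J M^{K'}_{L'} - M^K_L M^{I'}_{J'}$, which by the converse direction of that proposition will immediately place $P$ in $\BI(\BJ(\s_r,L^S))$. The first hypothesis — that $P$ is a combination $\sum q_I^J M^I_J$ of size $r+1$ minors with polynomial coefficients all of the same degree — is visible from the very form in which $P$ is written: $M^I_J$ and $M^K_L$ are size $r+1$ minors, while $M^{K'}_{L'}$ and $M^{I'}_{J'}$ are minors of size $r$, hence polynomials of degree $r$; so $P = M^{K'}_{L'}\cdot M^I_J + (-M^{I'}_{J'})\cdot M^K_L$ is of the required shape with degree-$r$ coefficients. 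The total degree is $(r+1)+r = 2r+1$, as claimed.

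The real content is the second hypothesis: that after expanding, no variable of $S$ survives in $P$. Here I would argue as follows. By assumption the only element of $S$ appearing in the submatrix $(I,J)$ is $x^{i_0}_{j_0}$, and likewise for $(K,L)$; the primed minors $M^{I'}_{J'}$ and $M^{K'}_{L'}$ are obtained by striking the row $i_0$ and column $j_0$, so they involve none of the entries of $S$ at all. Performing the Laplace (cofactor) expansion of $M^I_J$ along the row $i_0$, one gets $M^I_J = \pm x^{i_0}_{j_0} M^{I'}_{J'} + (\text{terms with no entry of }S)$, since every other entry in row $i_0$ of the submatrix $(I,J)$ lies outside $S$, and the complementary minors to those entries are themselves $S$-free. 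Substituting the analogous expansion for $M^K_L = \pm x^{i_0}_{j_0} M^{K'}_{L'} + (\text{$S$-free terms})$ — with a compatible choice of sign conventions, which is the one delicate bookkeeping point — the two terms $\pm x^{i_0}_{j_0} M^{I'}_{J'} M^{K'}_{L'}$ cancel in $P$, and what remains is a sum of products of $S$-free polynomials. Hence $P$ lies in $S^{2r+1}(L^S)\upperp$, i.e.\ no variable of $S$ occurs.

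Granting both hypotheses, the converse part of Proposition \ref{usefulprop} gives $P \in \BI(\BJ(\s_r,L^S))$, which is the assertion. I would also remark, for orientation, that this is exactly the mechanism behind \eqref{eq:examplewhichgeneralizes}: there $r=1$, $x^{i_0}_{j_0}$ runs over the diagonal entries, the $2\times 2$ minors $M^{23}_{12}, M^{12}_{23}$ play the role of $M^I_J, M^K_L$, and the size-$1$ minors (single entries $x^1_3$, $x^3_1$) play the role of the primed factors. One can alternatively see the statement as the instance of Proposition \ref{degboost} obtained by propagating the $n=2r+1$, $f=2$ relation upward, which is the viewpoint the surrounding text emphasizes; but the direct verification via Proposition \ref{usefulprop} is cleanest.

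\textbf{Expected main obstacle.} The only genuine subtlety is the sign bookkeeping in the two Laplace expansions: one must choose the row/column orderings in $I,J,K,L$ (and the induced orderings in $I',J',K',L'$) so that the coefficient of $x^{i_0}_{j_0}$ in the expansion of $M^I_J$ is $+M^{I'}_{J'}$ and in that of $M^K_L$ is $+M^{K'}_{L'}$ — or, failing a uniform choice, to absorb the discrepancy into the $\pm$ in \eqref{s1}. Everything else is a routine cofactor-expansion argument, and the paper indeed calls the proof an "immediate consequence" of Proposition \ref{usefulprop}.
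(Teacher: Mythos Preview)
Your proposal is correct and follows the paper's own approach: the paper states that the propositions in \S\ref{firstexamples} are immediate consequences of Proposition \ref{usefulprop} ``as when one expands each expression, the elements of $S$ cancel,'' and separately notes that Proposition \ref{degboost} implies Proposition \ref{twoprop}; you carry out the first route in detail (Laplace expansion to exhibit the cancellation of $x^{i_0}_{j_0}$) and correctly flag the second as an alternative. Your identification of the sign bookkeeping as the only subtlety is also apt --- the paper is silent on it, and the equation \eqref{s1} should indeed be read up to an overall sign depending on the positions of $i_0,j_0$ in $I,J,K,L$.
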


 By Proposition \ref{degboost},  \eqref{eq:examplewhichgeneralizes} also  generalizes to:

\begin{proposition} \label{tworptwo}   {Suppose that} there exists two  size $r+2$ submatrices of $S$, indexed by $(I,J),(K,L)$,
such that
\begin{enumerate}
\item there are only three elements of $S$ appearing in them, say $x^{i_1}_{j_1}, x^{i_2}_{j_2},x^{i_3}_{j_3}$
with both $i_1,i_2,i_3$ and $j_1,j_2,j_3$ distinct,  {and}
\item each element appears in exactly two of the minors.
\end{enumerate}
Then the degree $2r+1$ equations
\be\label{s3}
M^{I\backslash i_1}_{J\backslash j_1} M^{K\backslash i_2,i_3}_{L\backslash i_2,i_3}
-M^{I\backslash i_2}_{J\backslash j_2} M^{K\backslash i_1,i_3}_{L\backslash i_1,i_3}
\ene
are in the ideal of $J(\s_r,L^S)$.
\end{proposition}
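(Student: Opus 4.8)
The plan is to reduce everything to Proposition~\ref{usefulprop}: I will exhibit the polynomial in \eqref{s3} as a linear combination $\sum_{I',J'} q^{I'}_{J'} M^{I'}_{J'}$ of size $(r+1)$ minors with polynomial coefficients, and then verify that no variable of $S$ actually survives in the expression. The converse direction of Proposition~\ref{usefulprop} then immediately places it in $\BI(\BJ(\s_r,L^S))$. The representation as a combination of size $r+1$ minors is automatic from Laplace expansion: each factor $M^{I\backslash i_1}_{J\backslash j_1}$ has size $r+1$ (since $|I|=|J|=r+2$), and the second factor $M^{K\backslash i_2,i_3}_{L\backslash j_2,j_3}$ also has size $r+2-2 = r$, so the product $M^{I\backslash i_1}_{J\backslash j_1} M^{K\backslash i_2,i_3}_{L\backslash j_2,j_3}$ is a size $r+1$ minor times a polynomial of the appropriate degree — wait, I should double-check the index bookkeeping here. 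With $|K|=|L|=r+2$ and removing two rows and two columns, $M^{K\backslash i_2,i_3}_{L\backslash j_2,j_3}$ has size $r$; since $r \le r+1$ we may absorb it into $q$ once we have a genuine size $r+1$ minor, which is $M^{I\backslash i_1}_{J\backslash j_1}$. So the total degree is $(r+1)+r = 2r+1$, matching the claim, and condition (1) of Proposition~\ref{usefulprop} holds for each summand.

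The real content is the cancellation of the variables of $S$. Here I would invoke Proposition~\ref{degboost} exactly as the statement directs: the $n=3$, $r=1$ identity \eqref{eq:examplewhichgeneralizes}, namely $M^{23}_{12}x^1_3 - M^{12}_{23}x^3_1 \in \BI_3(\BJ(\s_1,L^S))$ with $S = \{x^1_1,x^2_2,x^3_3\}$, is the base case; I would rewrite \eqref{eq:examplewhichgeneralizes} in the $\Delta$-notation so that it becomes size-independent, and then apply $P \mapsto P_q$ with $q = (r-1) + (n-3)$ to propagate it. One must check that the combinatorial hypotheses (1)--(2) of the proposition — three elements of $S$ with distinct rows and distinct columns, each lying in exactly two of the two minors — are precisely what is preserved under the $P_q$ construction, i.e. that they force the $S$-variables appearing in the expansion of \eqref{s3} to pair up and cancel in the way engineered in the proof of Proposition~\ref{degboost}. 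Concretely: expanding $M^{I\backslash i_1}_{J\backslash j_1}$ along, say, the row/column of $x^{i_2}_{j_2}$ or $x^{i_3}_{j_3}$ produces terms containing those $S$-variables, and the matching expansion of the second term of \eqref{s3} produces the same $S$-monomials with opposite sign because of the symmetry between the roles of $i_1$ and $i_2$; the distinctness of the three row and three column indices guarantees no unexpected extra $S$-variable (the third one, $x^{i_3}_{j_3}$, sits in a piece that is common to both terms) intrudes.

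The step I expect to be the main obstacle is the careful sign and index verification that the two terms of \eqref{s3} differ exactly by the monomials one wishes to kill, with the correct signs — this is the analogue of the bookkeeping in the proof of Proposition~\ref{degboost} (``one continues in this fashion''), and it is genuinely a computation about Laplace cofactor signs rather than something one sees at a glance. I would organize it by first doing the smallest genuinely new case ($r=1$, $n=4$, with $I=J=\{1,2,3\}$-type index sets made size $4$, or whatever minimal configuration realizes hypotheses (1)--(2)) by hand to pin down the signs, confirming it matches the shape of \eqref{eq:examplewhichgeneralizes}, and then appeal to Proposition~\ref{degboost} for the general $n$ and $r$. If the signs in \eqref{s3} as written do not come out right, the fix is cosmetic — one inserts $(-1)^{\epsilon}$ depending on the relative positions of $i_1,i_2,i_3$ within $I$ — but I anticipate the statement is correct as given because it is the direct $\Delta$-ified propagation of the explicitly verified identity \eqref{eq:examplewhichgeneralizes}.
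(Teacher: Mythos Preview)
Your proposal is correct and follows essentially the same route as the paper. The paper's ``proof'' is in fact only the sentence preceding Proposition~\ref{twoprop} --- ``The proofs of the following propositions are immediate consequences of Proposition~\ref{usefulprop}, as when one expands each expression, the elements of $S$ cancel'' --- together with the clause ``By Proposition~\ref{degboost}, \eqref{eq:examplewhichgeneralizes} also generalizes to:''; your plan (exhibit \eqref{s3} as a size $(r+1)$ minor times a degree~$r$ coefficient, then verify the $S$-variables cancel, invoking the $n=3$, $r=1$ base case and Proposition~\ref{degboost}) is exactly this, spelled out. One small correction: your ``$q=(r-1)+(n-3)$'' is garbled, since Proposition~\ref{degboost} shifts $r$ and $n$ by the \emph{same} $q$; what it actually delivers is the case $n=r+2$ (equivalently $I=K=J=L=[n]$), and the general $(I,J),(K,L)$ case then follows because the cancellation of the $S$-monomials in the Laplace expansion is a polynomial identity in the entries of the $(I\cup K)\times(J\cup L)$ submatrix alone --- but you already anticipate this in your remark about doing the minimal configuration by hand.
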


For example, when $S=\{x^1_1,x^2_2,x^3_3\}$,   equation \eqref{s3} may be written
\be\label{eE}
 \Delta^{3}_{2}\Delta^{12}_{13} - \Delta^{2}_{3}\Delta^{13}_{12}.
\ene

 Now  consider the case $n=4$, $r=1$ and $S=\{ x^1_3,x^1_4,x^2_1,x^2_4,x^3_1,x^3_2,x^4_2,x^4_3\}$.
 Proposition \ref{tworptwo} cannot be applied. Instead  we
have  the  equation
$$
x^1_1x^2_2x^3_3x^4_4-x^1_2x^2_3x^3_4x^4_1=M^{12}_{12}x^3_3x^4_4+M^{23}_{13}x^1_2x^4_4+M^{34}_{14}x^2_3x^1_2.
$$
This case generalizes to

\begin{proposition} If there
  are three size $r+1$ submatrices of $Mat_{n\times n}$, indexed by $(I,J),(K,L),(P,Q)$, such that
\begin{enumerate}
\item  the first two contain one element of
$S$ each, say the elements are $x^{i_1}_{j_1}$ for $(I,J)$ and
$x^{i_2}_{j_1}$ for $(K,L)$,
\item   these two elements lie in the same column (or row), and
\item  the third submatrix contains $x^{i_1}_{j_1},x^{i_2}_{j_1}$ and no other element of $S$,
\end{enumerate}
then   the degree $3r+1$ equations
\be\label{s2}
M^I_JM^{K'}_{L'}M^{P'}_{Q'} +M^{K}_{L}M^{I'}_{J'}M^{P'}_{Q''}+ M^{P}_{Q}M^{I'}_{J'}M^{K'}_{L'},
\ene
where $I'=I\backslash i_0$, $J'=J\backslash j_0$, $K'=K\backslash i_0$, $L'=L\backslash j_0$, $P'=P\backslash i_1$, $Q'=Q\backslash j_1$, and $P''=P\backslash i_2$,
are in the ideal of $J(\s_r,L^S)$.
\end{proposition}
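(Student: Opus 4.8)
The plan is to deduce this from the converse half of Proposition~\ref{usefulprop}, in the same spirit as the proofs of Propositions~\ref{twoprop} and~\ref{tworptwo}. Write $P$ for the degree $3r+1$ polynomial displayed in \eqref{s2}. It suffices to check two things: (i) $P$ is a sum of terms of the shape (size $r+1$ minor)$\,\cdot\,$(homogeneous polynomial), and (ii) after full expansion $P$ involves only the variables of $S^c$. Point (i) is immediate from the shape of \eqref{s2}: the only size $r+1$ minors present are $M^I_J$, $M^K_L$, $M^P_Q$, each occurring in exactly one of the three summands multiplied by a product of two size $r$ minors (each of degree $r$); so $P=\sum q_A^B M^A_B$ with the $q$'s homogeneous of degree $2r$, matching $3r+1=(r+1)+2r$. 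Thus the entire content is point (ii).

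For (ii) the first observation is that among all minors occurring in \eqref{s2}, only $M^I_J$, $M^K_L$, $M^P_Q$ can involve a variable of $S$: by hypotheses (1)--(3), $M^I_J$ contains exactly the $S$-variable $x^{i_1}_{j_1}$, $M^K_L$ contains exactly $x^{i_2}_{j_1}$, and $M^P_Q$ contains exactly $x^{i_1}_{j_1}$ and $x^{i_2}_{j_1}$; each primed minor is obtained by deleting the row and/or column through the relevant $S$-entry, hence lies inside a submatrix with no entry of $S$ and is $S$-free. I would then Laplace-expand $M^I_J$ along its row $i_1$, $M^K_L$ along its row $i_2$, and $M^P_Q$ along its column $j_1$ (treating the ``same column'' case; the ``same row'' case is the transpose). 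Every Laplace term whose coefficient is $x^{i_1}_{j'}$ with $j'\neq j_1$ (resp.\ $x^i_{j_1}$ with $i\notin\{i_1,i_2\}$) has coefficient outside $S$ and cofactor missing all $S$-entries, so it contributes only $S^c$-variables. The remaining Laplace terms give the ``$S$-parts'': $M^I_J$ yields $\pm\,x^{i_1}_{j_1}M^{I'}_{J'}$, $M^K_L$ yields $\pm\,x^{i_2}_{j_1}M^{K'}_{L'}$, and $M^P_Q$ yields $\pm\,x^{i_1}_{j_1}M^{P'}_{Q'}\pm x^{i_2}_{j_1}M^{P''}_{Q'}$, all four cofactor minors being $S$-free.

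Substituting these expansions into \eqref{s2} and collecting the monomials divisible by an entry of $S$, the coefficient of $x^{i_1}_{j_1}$ is, up to sign, $M^{I'}_{J'}M^{K'}_{L'}M^{P'}_{Q'}$ coming from the first summand and the same product coming from the third summand, while the coefficient of $x^{i_2}_{j_1}$ is $M^{K'}_{L'}M^{I'}_{J'}M^{P''}_{Q'}$ coming from the second and third summands; provided the signs in \eqref{s2} (equivalently, the orientations chosen for the four cofactor minors) are taken compatibly, both pairs cancel, so $P$ involves only $S^c$-variables. Together with (i), Proposition~\ref{usefulprop} then gives $P\in\BI(\BJ(\s_r,L^S))$. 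Note that for this telescoping to close one must read the small minor in the middle summand of \eqref{s2} as $M^{P''}_{Q'}$, i.e.\ obtained by deleting row $i_2$ and column $j_1$ from $(P,Q)$.

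The step I expect to be the main obstacle is precisely this sign bookkeeping: the cofactor of $x^{i_1}_{j_1}$ extracted from $M^I_J$ and the one extracted from $M^P_Q$ are the same minor $M^{P'}_{Q'}$ only up to a sign depending on the relative positions of the deleted indices inside $I,J,P,Q$, and likewise for $M^K_L$ versus $M^P_Q$; one must either fix the sign conventions in \eqref{s2} so the cancellation is literal, or, exactly as in the proof of Proposition~\ref{degboost}, perform further Laplace expansions to bring the pieces to a common normal form. A useful sanity check is the worked instance recorded just above the proposition ($n=4$, $r=1$, $S=\{x^1_3,x^1_4,x^2_1,x^2_4,x^3_1,x^3_2,x^4_2,x^4_3\}$), where the three products telescope to $x^1_1x^2_2x^3_3x^4_4-x^1_2x^2_3x^3_4x^4_1$, manifestly free of $S$; the general statement is the same telescoping with $(r+1)\times(r+1)$ blocks in place of single entries, and can alternatively be derived from that model configuration via the propagation mechanism of Proposition~\ref{degboost}.
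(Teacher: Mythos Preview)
Your approach is essentially the paper's own: the paper states that ``the proofs of the following propositions are immediate consequences of Proposition~\ref{usefulprop}, as when one expands each expression, the elements of $S$ cancel,'' and you carry out exactly that expansion via Laplace cofactors along the relevant rows/column, verifying the two required conditions of the converse in Proposition~\ref{usefulprop}. Your caveats about sign bookkeeping and the reading of the middle minor as $M^{P''}_{Q'}$ are appropriate and do not indicate a gap---they simply make explicit what the paper leaves implicit.
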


 \begin{lemma}\label{Lemma: autoelimination}
  Let $S' \subsetneq S$ with $S' = \{x^1_1 \vvirg x^{n-r}_1\}$ and $S = S' \cup \{ x^{n-r+1}_1 \vvirg x^n_1\}$. Then $\BJ(\sigma_r,L^S) = \BJ(\sigma_r,L^{S'})$.
  \begin{proof}
  Clearly $\BJ(\sigma_r,L^S) \supseteq \BJ(\sigma_r,L^{S'})$. To prove the other inclusion, let $A=(a^i_j) \in \hat \BJ(\sigma_r,L^S)$ be general. Let $\tilde{A}$ be the $r \times r$ submatrix of $A$ given by the last $r$ rows and the last $r$ columns. By generality assumptions, $\tilde{A}$ is non-singular. Therefore, there exist $c_{n-r+1} \vvirg c_n \in \BC$ such that
  \[
   \left(\begin{array}{c}
   a^{n-r+1}_1\\
   \vdots\\
   a^n_1
   \end{array}\right) = \sum_{j={n-r+1}}^n c_j \left(\begin{array}{c}
   a^{n-r+1}_j\\
   \vdots\\
   a^n_j
   \end{array}\right).
  \]
  
  Let $B = (b^i_j)$ be an $n\times n$ matrix such that $b^i_j = a^i_j$ if $j\geq 2$ and $b^i_1 = \sum_{n-r+1}^n c_j b^i_j$. Then $B \in \hat \sigma_r$ and $A \in \hat \BJ([B],L^{S'})
  \subset \hat J(\s_r,L^{S'})$.
  \end{proof}
 \end{lemma}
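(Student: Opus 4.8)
\medskip
\noindent\textbf{Proof proposal.}
The inclusion $\BJ(\sigma_r,L^{S'})\subseteq\BJ(\sigma_r,L^S)$ is immediate since $L^{S'}\subseteq L^S$, so all the content is in the reverse inclusion. The plan is to first record a structural fact: \emph{every} $A=(a^i_j)\in\hat\BJ(\sigma_r,L^S)$ has its last $n-1$ columns of rank at most $r$. On the dense subset $\hat\BJ^0(\sigma_r,L^S)$ this is clear, because there $A=B_0+C_0$ with $B_0\in\hat\sigma_r$ and $C_0$ supported in the first column, so columns $2,\dots,n$ of $A$ equal those of $B_0$; and since bounding a column rank is a closed condition, it persists on the Zariski closure. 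Because both joins are irreducible and closed, it will then suffice to show that a general point $A\in\hat\BJ(\sigma_r,L^S)$ lies in $\hat\BJ(\sigma_r,L^{S'})$.

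The one additional genericity hypothesis I would impose on $A$ is that the $r\times r$ submatrix $\widetilde A$ on the last $r$ rows and last $r$ columns be invertible; this holds on a nonempty open, hence dense, subset (the rank-$r$ matrix supported on that block with an identity there already lies in $\hat\sigma_r\subseteq\hat\BJ(\sigma_r,L^S)$). For such an $A$, invertibility of $\widetilde A$ lets me solve an $r\times r$ linear system for unique scalars $c_{n-r+1},\dots,c_n$ making the last $r$ entries of the first column of $A$ equal to $\sum_{j=n-r+1}^{n}c_j$ times the corresponding entries of column $j$. I then overwrite the first column: set $B=(b^i_j)$ with $b^i_j=a^i_j$ for $j\ge 2$ and $b^i_1=\sum_{j=n-r+1}^n c_j\,a^i_j$. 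By the choice of the $c_j$, $A$ and $B$ agree off the first $n-r$ entries of the first column, so $A-B\in L^{S'}$. Since column $1$ of $B$ is a linear combination of columns $n-r+1,\dots,n$ of $B$, the rank of $B$ equals the rank of columns $2,\dots,n$ of $B$, i.e.\ of columns $2,\dots,n$ of $A$, which is at most $r$ by the structural fact; hence $B\in\hat\sigma_r$ and $A=B+(A-B)\in\hat\BJ([B],L^{S'})\subseteq\hat\BJ(\sigma_r,L^{S'})$. Letting $A$ vary over this dense set and taking closures yields $\BJ(\sigma_r,L^S)\subseteq\BJ(\sigma_r,L^{S'})$.

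The only step carrying real weight is the rank bound $\operatorname{rank}B\le r$, and the main thing I would be careful to make explicit is that it genuinely uses that $A$ lies in the \emph{join} --- so that its last $n-1$ columns come from a rank-$\le r$ matrix --- rather than being an arbitrary matrix; for a generic $n\times n$ matrix the identical construction would produce a $B$ of rank $n-1$. Everything else (solvability of the $r\times r$ system, the verification that $A-B\in L^{S'}$, and the irreducibility-plus-closure reduction) is routine. As a sanity check, the lemma forces $\dim\hat\BJ(\sigma_r,L^S)\le\dim\hat\BJ(\sigma_r,L^{S'})\le r(2n-r)+(n-r)$, strictly less than the count $r(2n-r)+|S|=r(2n-r)+n$, which is precisely the drop in dimension the lemma is recording.
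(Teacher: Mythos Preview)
Your proof is correct and follows essentially the same approach as the paper's: take a general $A$, use invertibility of the bottom-right $r\times r$ block to solve for coefficients $c_j$, replace the first column by the corresponding combination of the last $r$ columns to obtain $B\in\hat\sigma_r$, and observe $A-B\in L^{S'}$. The only difference is that you are more explicit than the paper about \emph{why} $B\in\hat\sigma_r$: you isolate the structural fact that columns $2,\dots,n$ of any $A\in\hat\BJ(\sigma_r,L^S)$ have rank at most $r$ and justify it by a closure argument, whereas the paper simply asserts $B\in\hat\sigma_r$ without spelling this out.
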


It will be useful to represent various $S$ pictorially. We will use black diamonds $\blacklozenge$ for entries in $S$ and white diamonds $\lozenge$ for entries omitted by $S$. For example, 
$S=\{ x^1_1,x^2_2\hd x^5_5\}$ is represented by 
$$
\begin{pmatrix}
\blacklozenge& & & & \\
 &\blacklozenge& & &\\
 & &\blacklozenge& & \\
 & & &\blacklozenge&\\ 
 & & & &\blacklozenge
\end{pmatrix}
$$
while $S = \{x^i_j\}_{ij} \smallsetminus \{ x^1_1,x^2_2\hd x^5_5\}$ is represented by
$$
\begin{pmatrix}
\lozenge& & & &\\
&\lozenge& & &\\
& & \lozenge& &\\
& & &\lozenge&\\
& & & &\lozenge
\end{pmatrix}.
$$

\subsection{Case $r=1$}


\begin{lemma}\label{Lemma: binomial equation}
 Let $S$ be a configuration omitting $x^1_1 \vvirg x^k_k,x^1_2 \vvirg x^{k-1}_k$
 and $x^k_1$ for some $k \geq 2$. Then $\BI_k(\BJ(\sigma_1,L^S))$ contains the binomial
 \[
  x^1_1 \cdots x^k_k - x^1_2 \cdots x^k_1
 \]
 Moreover, if the complement $S^c = \{x^1_1 \vvirg x^k_k,x^1_2 \vvirg x^{k-1}_k,x^k_1\}$, then $\BJ(\sigma_1,L^S)$ is a hypersurface.
 \end{lemma}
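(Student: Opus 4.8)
The statement has two parts, and I would attack them with different tools: Proposition~\ref{usefulprop} for the membership claim, and a direct parametrization of $\hat\BJ(\sigma_1,L^S)$ for the hypersurface claim.

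For the first part, the plan is to invoke the converse direction of Proposition~\ref{usefulprop}: it suffices to display the binomial $P:=x^1_1\cdots x^k_k-x^1_2\cdots x^{k-1}_k\,x^k_1$ as a sum of multiples of size-$2$ minors in which, after cancellation, only coordinates of $S^c$ survive. Since $P$ itself only involves $\{x^1_1,\dots,x^k_k,\ x^1_2,\dots,x^{k-1}_k,\ x^k_1\}$ — all in $S^c$ by hypothesis — the content is just that $P$ lies in the ideal generated by the $2\times2$ minors, i.e.\ in $\BI(\sigma_1)$; and that is clear, because $P$ vanishes on every rank-one matrix $C=(a^ib_j)$ (both monomials evaluate to $(\prod_i a^i)(\prod_j b_j)$). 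To make the combination explicit I would ``rotate the diagonal cycle one step at a time'': for $1\le\ell\le k-1$ set
\[
q_\ell:=\Big(\textstyle\prod_{i=1}^{\ell-1}x^i_{i+1}\Big)\Big(\textstyle\prod_{i=\ell+2}^{k}x^i_i\Big),\qquad M_\ell:=M^{\,\ell,\ell+1}_{\,1,\ell+1}=x^\ell_1x^{\ell+1}_{\ell+1}-x^\ell_{\ell+1}x^{\ell+1}_1
\]
(so $M_1=M^{1,2}_{1,2}$), and write $B_0:=x^1_1\cdots x^k_k$ and $B_\ell:=\big(\prod_{i=1}^{\ell}x^i_{i+1}\big)\,x^{\ell+1}_1\,\big(\prod_{i=\ell+2}^{k}x^i_i\big)$ for $1\le\ell\le k-1$. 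One checks $q_\ell M_\ell=B_{\ell-1}-B_\ell$, hence $\sum_{\ell=1}^{k-1}q_\ell M_\ell=B_0-B_{k-1}=P$. For $1\le\ell\le k-2$ the monomial $B_\ell$ contains exactly one coordinate outside $S^c$, namely $x^{\ell+1}_1$, and it occurs in two consecutive summands with opposite signs, so it cancels; the only reason the process closes up is that the final monomial $B_{k-1}$ uses the wrap-around entry $x^k_1\in S^c$. This yields $P\in\BI_k(\BJ(\sigma_1,L^S))$.

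For the ``moreover'' part, I would specialize to $S^c$ equal to exactly the $2k$ listed coordinates, so $n=k$, $r=1$ and $|S|=k^2-2k=(k-1)^2-1$, which by \eqref{hypers} is precisely the value of $s$ for which $\hat\cR[k,1,s]$ is a hypersurface. What remains is to show that this particular component $\BJ(\sigma_1,L^S)$ attains the expected dimension, and the plan is to prove the stronger statement $\hat\BJ(\sigma_1,L^S)=\{P=0\}$. One inclusion is the first part. For the other, given a general $C$ with $P(C)=0$ I would solve the $2k$ equations $a^ib_j=C^i_j$, $(i,j)\in S^c$, for a rank-one matrix $a\otimes b$: the diagonal and super-diagonal equations force $b_{i+1}/b_i=C^i_{i+1}/C^i_i$ for $1\le i\le k-1$, and the single remaining equation $a^kb_1=C^k_1$ becomes, after multiplying these ratios around the cycle, exactly $x^1_1\cdots x^k_k=x^1_2\cdots x^{k-1}_k x^k_1$ at $C$ — that is, $P(C)=0$. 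So the system is solvable, its solutions forming a single orbit of the rescaling $(a,b)\mapsto(\lambda a,\lambda^{-1}b)$, and $B:=C-a\otimes b$ lies in $L^S$ since it vanishes on $S^c$ by construction; thus $C\in\hat\BJ^0(\sigma_1,L^S)$. Because $P$ is irreducible — its two monomials are coprime and the exponent difference between them is a primitive integer vector — $\{P=0\}$ is an irreducible hypersurface whose general point lies in $\hat\BJ(\sigma_1,L^S)$, which together with the first inclusion forces equality. Hence $\BJ(\sigma_1,L^S)$ is a hypersurface.

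The routine steps are the two local computations: the expansion $q_\ell M_\ell=B_{\ell-1}-B_\ell$, and solving the system $a^ib_j=C^i_j$ on the $S^c$-positions. The step needing genuine care is the bookkeeping in the telescoping sum — verifying that each coordinate outside $S^c$ (every one of the form $x^j_1$, $2\le j\le k-1$) is introduced exactly twice with opposite signs, and that $B_{k-1}$ really is supported on $S^c$, which rests on $x^k_1$ being among the omitted entries. For the ``moreover'' part the only non-formal inputs are the irreducibility of $P$ and the one-dimensionality of the general fibre of the rank-one parametrization restricted to the $S^c$-positions; both are standard but should be stated carefully.
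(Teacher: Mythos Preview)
Your argument is correct, and for the ``moreover'' part it takes a different route from the paper.

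For the first assertion both you and the paper do the same thing: exhibit $P$ as lying in $\BI(\sigma_1)\cap Sym((L^S)^\perp)$. Your telescoping $\sum_\ell q_\ell M_\ell=B_0-B_{k-1}$ and the paper's recursion $f_j=x^j_jf_{j-1}-x^1_2\cdots x^{j-2}_{j-1}M^{j-1,j}_{1,j}$ are two presentations of essentially the same identity. Your parenthetical remark that $P$ vanishes on rank-one matrices because both monomials evaluate to $(\prod a^i)(\prod b_j)$ is in fact all that is needed here, and is cleaner than either explicit decomposition.

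For the ``moreover'' clause the approaches genuinely differ. The paper argues indirectly via Lemma~\ref{Lemma: autoelimination}: for any $S'\supsetneq S$ one shows by iterated column/row absorption that $\BJ(\sigma_1,L^{S'})=\BP Mat_n$, so coning by one more point already fills the ambient space and hence $\BJ(\sigma_1,L^S)$ has codimension at most one. You instead prove the sharper equality $\hat\BJ(\sigma_1,L^S)=\{P=0\}$ directly, by solving the $2k$ rank-one constraints $a^ib_j=C^i_j$ on the $S^c$-positions for a general $C$ with $P(C)=0$. Your route is self-contained (it does not rely on Lemma~\ref{Lemma: autoelimination}) and immediately identifies the generator of the ideal; the paper's route is quicker given that lemma and generalises more readily to other configurations.

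One small slip: you write ``so $n=k$'', but the hypothesis $S^c=\{x^1_1,\dots,x^k_k,x^1_2,\dots,x^{k-1}_k,x^k_1\}$ does not force this; the ambient $n$ may exceed $k$, and indeed this case is exactly what is needed in Lemma~\ref{Lemma: two entries in each row and each column} and Theorem~\ref{thm: hypersurface for r=1}. Fortunately your actual construction does not use $n=k$: the system only constrains $a^1,\dots,a^k,b_1,\dots,b_k$, and the remaining $a^i,b_j$ with $i,j>k$ may be chosen freely, so $C-a\otimes b$ still lies in $\hat L^S$. With that clarification your proof goes through for all $n\ge k$, and your appeal to \eqref{hypers} becomes unnecessary since you establish the hypersurface statement directly.
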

\begin{proof}
Let $ f = x^1_1 \cdots x^k_k - x^1_2 \cdots x^k_1$. It suffices to show that $f\in \BI(\sigma_1)$, namely that it can be generated by $2 \times 2$ minors. If $k = 2$, then $f$ is the $2 \times 2$ minor $M^{12}_{12}$. Suppose $k \geq 3$. 

Define $f_2 = x^1_1 x^2_2 - x^1_2 x^2_1 = M^{12}_{12}$. For any $j = 3 \vvirg k$ define 
\[ f_j = x^j_j f_{j-1} - x^1_2 \cdots x^{j-2}_{j-1}  M^{j-1,j}_{1,j} . \]
Thus  $f_j = x^1_1 \cdots x^j_j - x^1_2 \cdots x^j_1\in \BI(\sigma_1)$ for all  $j=3\hd k$   and  $f_k = f$.

The last assertion  follows because 
for  any $S' \supsetneq S$,       iterated applications of Lemma \ref{Lemma: autoelimination} implies $\BJ(\sigma_1,L^{S'}) = Mat_{n \times n}$.
\end{proof}

\begin{lemma}\label{fulfirlem}
 Let $S$ be a configuration    omitting at least two entries in each row and in each column. Then there exists $k \geq 2$ such that, up to a permutation of rows and columns, 
 $S^c\supseteq \{x^1_1 \vvirg x^k_k,x^1_2 \vvirg x^k_1\}$.
 \end{lemma}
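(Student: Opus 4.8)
The plan is to reformulate the statement as a claim about cycles in a bipartite graph. First I would form the bipartite graph $\Gamma$ with vertex set $[n]\sqcup[n]$, the first copy indexing rows and the second indexing columns, and with an edge joining row $i$ to column $j$ precisely when $x^i_j\in S^c$. The hypothesis that $S$ omits at least two entries in each row and in each column is then exactly the statement that every vertex of $\Gamma$ has degree at least $2$.

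Next I would invoke the elementary fact that a finite graph of minimum degree at least $2$ cannot be acyclic (an acyclic graph containing an edge has an endpoint of a maximal path, which is a vertex of degree $1$). Hence $\Gamma$ contains a cycle, and since $\Gamma$ is bipartite this cycle has even length $2k$; as a simple bipartite graph has no cycle of length $2$, we get $k\geq 2$. Writing the cycle so that it alternates between the two vertex classes, it has the form
$$ i_1 - j_1 - i_2 - j_2 - \cdots - i_k - j_k - i_1, $$
with $i_1\vvirg i_k$ distinct rows and $j_1\vvirg j_k$ distinct columns. Each edge of the cycle corresponds to an element of $S^c$, so $x^{i_a}_{j_a}\in S^c$ and $x^{i_{a+1}}_{j_a}\in S^c$ for all $a$, with indices read cyclically modulo $k$.

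To conclude I would permute rows by the permutation of $[n]$ sending $i_a\mapsto a$, and permute columns by the permutation sending $j_a\mapsto a+1$ modulo $k$ (so $j_k\mapsto 1$) and fixing all remaining indices. After this relabeling, $x^{i_{a+1}}_{j_a}$ becomes $x^{a+1}_{a+1}$, so $S^c$ contains $x^1_1\vvirg x^k_k$; and $x^{i_a}_{j_a}$ becomes $x^a_{a+1}$, which as $a$ runs over $1\vvirg k$ (cyclically) produces exactly $x^1_2\vvirg x^k_1$. Thus $S^c\supseteq\{x^1_1\vvirg x^k_k,\,x^1_2\vvirg x^k_1\}$, as required.

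There is no substantial obstacle here: the single genuine input is the standard graph-theoretic fact that minimum degree $\geq 2$ forces a cycle, and the only point that needs care is keeping the cyclic indices straight when relabeling the rows and columns along the cycle so that the off-diagonal entries land on the superdiagonal with the wraparound entry $x^k_1$.
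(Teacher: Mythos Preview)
Your proof is correct. The core idea is the same as the paper's: a walk in the bipartite ``rows/columns'' graph on the entries of $S^c$ must eventually close up into a cycle, and that cycle, after relabeling, yields the desired $2k$ entries. The difference is in packaging. The paper carries out the walk by hand---starting at $x^1_1$, moving to another omitted entry in its column, then to another omitted entry in that row, and so on---and then observes that the two possible terminal configurations (diagonal $+$ superdiagonal with wraparound, and its transpose) are equivalent under the anti-diagonal permutation. You instead recognize the walk argument as the standard fact that minimum degree $\geq 2$ forces a cycle, invoke it as a black box, and do a single explicit relabeling. Your route is cleaner and has the additional virtue of making the link to cycles in $K_{n,n}$ explicit up front, which is exactly what Theorem~\ref{thm: hypersurface for r=1} needs later; the paper's route has the minor advantage of being entirely self-contained.
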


\begin{proof}
 After a  permutation, we may assume $x^1_1 \in S^c$, and, since $S$ omits at least another entry in the first column,   $x^2_1 \in S^c$. Since $S$ omits at least $2$ entries in the second row, assume $x^2_2 \in S^c$. $S$ omits at least another entry in the second column: if that  entry is $x^1_2$, then $k=2$ and $S$ omits a $2\times 2$ minor; otherwise we may assume $x^3_2 \in S^c$. Again $S$ omits another entry on the third row: if that  entry is $x^3_1$ (resp.  $x^3_2$), then   $k=3$ (resp.  $k=2$) and $S$ omits a set of the desired form. After at most $2n$ steps, this procedure terminates, giving a $k\times k$ submatrix $K$ with one of the following configurations, one the tranpose of the other:
 \[
 \begingroup
\arraycolsep=6pt\def\arraystretch{1}
 \left[\begin{array}{cccc}
\lozenge & \lozenge & & \\
&  \ddots &  \ddots & \\
 &  & \ddots & \lozenge \\
\lozenge & & & \lozenge \\             
            \end{array}\right],
            \qquad
 \left[\begin{array}{cccc}
\lozenge &  & & \lozenge \\
\lozenge &  \ddots &   & \\
 &  \ddots & \ddots & \\
 & & \lozenge & \lozenge \\             
            \end{array}\right].
\endgroup
 \]
  $K$ and its transpose are equivalent under permutations of rows and columns
  because  $K^T = PKP$ where $P$ is the $k \times k$ permutation matrix having $1$ on the anti-diagonal and $0$ elsewhere.
\end{proof}

 \begin{lemma}\label{Lemma: two entries in each row and each column}
 Let $S$ be a configuration of $n^2-2n$ entries. Then there exist $k \in [n]$ and a $k\times k$ submatrix $K$ such that, up to a permutation of rows and columns, at least $2k$ entries of the complement $S^c$ of $S$ lie in $K$ in the following configuration
 \begin{equation}\label{eqn: Lemma: two entries in each row and in each column}
 \left(
   \begin{array}{cccc|cccc}
\lozenge & \lozenge & & & & & &\\
&  \rotatebox{-10}{$\ddots$} &  \rotatebox{-10}{$\ddots$} & & & & &\\
 &  & \rotatebox{-10}{$\ddots$} & \lozenge & & & &\\
\lozenge & & & \lozenge & & & & \\
\hline
& & & & & & &  \\
& & & & & & &  \\
& & & & & & &  \\
 \end{array}\right).
 \end{equation}
  Moreover, if $\BJ(\sigma_1,L^S)$ is a hypersurface  then these are the only omitted entries in $K$ and the ideal of $\BJ(\sigma_1,L^S)$ is 
  generated by $\left(  x^1_1 \cdots x^k_k - x^1_2 \cdots x^k_1\right)$. 
  \end{lemma}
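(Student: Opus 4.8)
The plan is to prove this in two parts: a counting/combinatorial lemma to produce the configuration, and then a dimension argument to pin down the hypersurface case.

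\medskip

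\textbf{Step 1: Producing the configuration.} First I would invoke Lemma \ref{fulfirlem}. The hypothesis there is that $S$ omits at least two entries in each row and in each column. Here we are given $|S|=n^2-2n$, so $|S^c|=2n$. If every row contained at most $1$ omitted entry, then $|S^c|\leq n< 2n$, a contradiction; but this only shows \emph{some} row has $\geq 2$ omitted entries, not all of them. So the real content is a pigeonhole-type argument: I would show that, after permuting rows and columns, we can still extract a $k\times k$ submatrix $K$ carrying the cyclic configuration of \eqref{eqn: Lemma: two entries in each row and in each column}. The cleanest route is to re-run the greedy procedure in the proof of Lemma \ref{fulfirlem} directly: start from any row with $\geq 2$ omitted entries (which exists by the count), pick $x^1_1,x^2_1\in S^c$ after permuting, then look at row $2$; if row $2$ has another omitted entry we continue, otherwise we must argue using the global count $|S^c|=2n$ that the chain cannot die prematurely without closing up into a cycle. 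The key observation is that once a partial "staircase" $x^1_1,x^2_2,x^3_2,x^3_3,\dots$ is built using $t$ rows and $t$ columns, the only way to avoid continuing is to close the loop (land back on column $1$), and the bound $2n$ on $|S^c|$ guarantees we cannot simply keep extending forever, so a cycle of some length $k\le n$ closes. This yields the submatrix $K$ with the displayed configuration.

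\medskip

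\textbf{Step 2: The hypersurface case.} Assume now $\BJ(\sigma_1,L^S)$ is a hypersurface, i.e.\ by \eqref{hypers} (with $r=1$) it has codimension $1$, equivalently $s=(n-1)^2-1$. But we also have $s=|S|=n^2-2n=(n-1)^2-1$, so this is automatically consistent; the point is rather to use the hypersurface hypothesis to force $K$ to contain \emph{exactly} $2k$ omitted entries, namely precisely those in the cyclic pattern. Suppose $K$ contained a further omitted entry, say $x^a_b\in S^c$ with $a,b\in[k]$. Then $S$, restricted to the rows/columns indexed by $[k]$, omits strictly more than the cyclic pattern; by Lemma \ref{Lemma: binomial equation} (or rather its proof structure) combined with Lemma \ref{Lemma: autoelimination}, the extra omitted entry in column $1$ (or whichever column now has $\geq 2$ omitted entries inside $K$ beyond the cyclic ones) lets us auto-eliminate, collapsing a variable, and one checks the resulting join on fewer variables forces $\dim \BJ(\sigma_1,L^S)$ up to the ambient dimension — contradicting the hypersurface assumption. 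So $K$ has exactly the $2k$ omitted entries $\{x^1_1\vvirg x^k_k, x^1_2 \vvirg x^k_1\}$ and no others.

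\medskip

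\textbf{Step 3: Identifying the ideal.} With $K$ pinned down, the complement $S^c$ consists of exactly the $2k$ entries inside $K$ plus $2n-2k$ entries outside; I would then apply Lemma \ref{Lemma: autoelimination} iteratively to every row and column outside $K$ that carries omitted entries, using Lemma \ref{Lemma: autoelimination}'s conclusion $\BJ(\sigma_1,L^S)=\BJ(\sigma_1,L^{S'})$ to strip those entries away until the complement is exactly $\{x^1_1\vvirg x^k_k, x^1_2\vvirg x^k_1\}$ inside $K$ — but wait, one must be careful: Lemma \ref{Lemma: autoelimination} removes entries \emph{from} a column that already has $n-r=n-1$ omitted entries in it; here the mechanism needed is the reverse direction (adding entries to $S$ doesn't change the join), so the correct statement to cite is that enlarging $S$ keeps the join inside the ambient space, and when the complement is exactly the $k\times k$ cyclic block we are in the "Moreover" situation of Lemma \ref{Lemma: binomial equation}, which gives that $\BJ(\sigma_1,L^S)$ is a hypersurface with ideal generated by $x^1_1\cdots x^k_k - x^1_2\cdots x^k_1$. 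Since a hypersurface ideal is principal and we have exhibited a generator of the correct (minimal, by Proposition \ref{usefulprop}) degree $k$, this generator generates the full ideal.

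\medskip

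\textbf{Main obstacle.} The delicate point is Step 1: turning the weak consequence of the count $|S^c|=2n$ (some row has two omitted entries) into the strong structural statement that a \emph{closed cyclic} pattern of omitted entries exists within a $k\times k$ submatrix. The greedy staircase argument from Lemma \ref{fulfirlem} needs to be adapted to a setting where not every line is guaranteed to have two omitted entries, and I expect the bookkeeping — showing the staircase must close into a cycle rather than terminating at a row/column with only one omitted entry — to be where the real work lies. A clean way to handle it is to consider the bipartite graph on rows $\sqcup$ columns with an edge $i\!-\!j$ for each $x^i_j\in S^c$: it has $n+n$ vertices and $2n$ edges, so it contains a cycle, and any cycle in this bipartite graph is exactly a cyclic configuration of the desired type after relabeling, with $k$ = half the cycle length.
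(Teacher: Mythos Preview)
Your bipartite-graph observation at the end of Step~1 is correct and is in fact cleaner than the paper's approach: the paper proves the first assertion by induction on $n$, peeling off a row or column and reducing to a smaller matrix, eventually landing in the situation of Lemma~\ref{fulfirlem}. Your argument---the bipartite graph on rows~$\sqcup$~columns has $2n$ vertices and $2n$ edges, hence is not a forest, hence contains a cycle, and a $2k$-cycle in this graph is precisely the displayed configuration after permutation---bypasses the induction entirely and connects directly to the cycle-counting in Theorem~\ref{thm: hypersurface for r=1}. This is a genuine simplification.

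Step~2, however, is wrong as written. Lemma~\ref{Lemma: autoelimination} concerns a column with at least $n-r=n-1$ entries \emph{in}~$S$, and lets you add the remaining entries of that column to~$S$ without changing the join; it says nothing about extra \emph{omitted} entries, and certainly does not force the dimension up to the ambient. The correct mechanism---which the paper states in one line (``they provide another equation'') and which your own bipartite picture makes transparent---is this: an extra omitted entry $x^a_b$ with $a,b\in[k]$ adds an edge to the bipartite graph on $[k]\sqcup[k]$, giving $2k$ vertices and $\geq 2k+1$ edges, hence a second cycle, hence a second irreducible binomial in $\BI(\BJ(\sigma_1,L^S))$ not proportional to the first. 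A hypersurface has principal ideal, contradiction.

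Step~3 is salvageable but the bookkeeping is backwards. You do not need to strip entries away: since $S^c$ contains the $2k$ cyclic entries, $S$ is contained in the configuration $F$ whose complement is exactly those $2k$ entries, so $\BJ(\sigma_1,L^S)\subseteq\BJ(\sigma_1,L^F)$. By Lemma~\ref{Lemma: binomial equation} the latter is an irreducible hypersurface; since the former is also a hypersurface by hypothesis, they coincide, and the ideal is as claimed.
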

\begin{proof}
  To prove the first assertion, we proceed by induction on $n$. The case $n=2$ provides $s=0$ and $k =2$ trivially satisfies the statement.

If $S$ omits at least (and therefore exactly) $2$ entries in each row and in each column, then  we conclude by Lemma \ref {fulfirlem}.

Suppose that $S$ contains an entire row (or an entire column). Then $S^c$ is concentrated in a $(n-1) \times n$ submatrix. In this case we may consider an $(n-1)\times(n-1)$ submatrix obtained by removing a column that contains at most $2$ entries of $S^c$. Thus, up to reduction to a smaller matrix, we may always assume that $S$ omits at least one entry in every row (or at least one entry in every column).

After a  permutation, we may assume that the first row omits $x^1_1$. If the first column omits at most one more entry, then $S$ omits at least $2(n-1)$ entries in the the submatrix obtained by removing the first row and the first column. We conclude by induction that there exists a $k$ and a $k\times k$ submatrix with the desired configuration in the submatrix.

If the first column omits at least $2$ entries other than $x^1_1$, then there is another column omitting only $1$ entry. Consider the submatrix obtained by removing this column and the first row: $S$ omits exactly $2(n-1)$ entries in this submatrix, and again we conclude by induction.

To prove the last assertion,  if other omitted entries lie in $K$, then they provide another equation for $\BJ(\sigma_1,L^S)$.
\end{proof}

\begin{theorem}\label{thm: hypersurface for r=1}
The number of irreducible components of $\cR[n,1,n^2-2n]$ coincides with the number of cycles of the complete bipartite graph $K_{n,n}$. Moreover, every ideal of an irreducible component is generated by a binomial of the form
\[
 x^{i_1}_{j_1} \cdots x^{i_k}_{j_k} -  x^{i_1}_{j_{\tau(1)}} \cdots x^{i_k}_{j_{\tau(k)}}, 
\]
for some $k$, where $\tau \in \frak S_k$ is a cycle and $I,J\subset[n]$ have
size $k$.
\end{theorem}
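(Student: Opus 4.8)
The plan is to combine the three preceding lemmas. The content splits into a combinatorial statement (the count of irreducible components) and a structural statement (each component's ideal is principal, generated by a cycle binomial). I would handle them in tandem by first establishing a bijection between irreducible components and a natural family of "cycle configurations'' inside the bipartite adjacency picture.

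\textbf{Step 1: reduce to minimal configurations.} Recall from \S\ref{bbgeomlang} that the irreducible components of $\hat\cR[n,1,n^2-2n]$ are indexed by subsets $S\subset\{x^i_j\}$ of size $n^2-2n$ for which $\BJ(\s_1,L^S)$ has the expected dimension, i.e.\ is a hypersurface (since $s=(n-1)^2-1$ exactly when $r=1$, $s=n^2-2n$). By Lemma \ref{Lemma: two entries in each row and each column}, for such an $S$ there is, up to permutation of rows and columns, a $k\times k$ submatrix $K$ in which $S^c$ is exactly the $2k$-element "cycle'' configuration \eqref{eqn: Lemma: two entries in each row and in each column}, and \emph{no other entries of $K$ are omitted}; moreover the ideal of the hypersurface is then generated by the single binomial $x^1_1\cdots x^k_k - x^1_2\cdots x^k_1$. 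Translating back through the permutations, this says: every component corresponds to a choice of $I,J\subset[n]$ with $|I|=|J|=k$ and a cyclic identification of $I$ with $J$ (equivalently, a $k$-cycle $\tau\in\FS_k$ recording which off-diagonal entries of the $K$-submatrix are the second omitted entry in each row), and its ideal is generated by the corresponding cycle binomial $x^{i_1}_{j_1}\cdots x^{i_k}_{j_k}-x^{i_1}_{j_{\tau(1)}}\cdots x^{i_k}_{j_{\tau(k)}}$. That this binomial does lie in $\BI(\BJ(\s_1,L^S))$ is Lemma \ref{Lemma: binomial equation}, and the hypersurface claim there gives the converse inclusion.

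\textbf{Step 2: match configurations with cycles of $K_{n,n}$.} It remains to see that the data "(choice of $I,J$ of common size $k$, plus a $k$-cycle pairing)'' is the same as the data of a cycle in the complete bipartite graph $K_{n,n}$ (whose $2n$ vertices are the rows $\ol e_1,\dots,\ol e_n$ and columns $\ol f_1,\dots,\ol f_n$). Indeed the support of the cycle binomial is exactly the edge set $\{\ol e_{i_a}\ol f_{j_a},\ \ol e_{i_a}\ol f_{j_{\tau(a)}}: a=1,\dots,k\}$, and because $\tau$ is a single $k$-cycle these $2k$ edges form one closed cycle of length $2k$ in $K_{n,n}$, visiting the rows in $I$ and the columns in $J$. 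Conversely, any cycle in $K_{n,n}$ has even length $2k$, alternates between $k$ distinct rows $I$ and $k$ distinct columns $J$, and determines a cyclic pairing hence a $k$-cycle $\tau$; the associated monomials are the two perfect matchings of the cycle, and the difference is a cycle binomial whose $S$ (the complement of its support, padded out to a full size-$n^2-2n$ configuration by Lemma \ref{Lemma: autoelimination} / Lemma \ref{fulfirlem}, which guarantees such a padding exists and does not change the variety) gives a hypersurface component. This correspondence is visibly a bijection, proving the count.

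\textbf{Step 3: well-definedness and distinctness.} The one point needing care is that distinct components really are distinct varieties and that the bijection is well-defined despite the noncanonical choice of $S$ for a given component: two configurations give the same hypersurface iff their defining binomials agree up to scalar, iff their supports — hence the underlying cycles of $K_{n,n}$ — agree; this follows from Lemma \ref{Lemma: two entries in each row and each column}, which pins down the omitted entries inside $K$ uniquely once we are in the hypersurface case. The main obstacle, and where I would spend the most effort, is precisely this bookkeeping: cleanly arguing that the permutation-normal-form of Lemma \ref{Lemma: two entries in each row and each column} produces a \emph{canonical} cycle in $K_{n,n}$ independent of the permutations used, and that padding a minimal $2k$-element $S^c$-core to the full codimension does not merge two distinct cores — i.e.\ that the auto-elimination of Lemma \ref{Lemma: autoelimination} only ever adds entries in rows/columns already "used'' and hence never creates an ambiguity. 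Once that is nailed down, the theorem follows by assembling Lemmas \ref{Lemma: autoelimination}, \ref{Lemma: binomial equation}, \ref{fulfirlem}, and \ref{Lemma: two entries in each row and each column} as above.
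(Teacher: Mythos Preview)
Your proposal is correct and follows essentially the same approach as the paper: both reduce via Lemma~\ref{Lemma: two entries in each row and each column} to a minimal $2k$-element cycle configuration in a $k\times k$ submatrix, identify the defining equation with the corresponding cycle binomial via Lemma~\ref{Lemma: binomial equation}, and then observe that such configurations are in bijection with cycles of $K_{n,n}$. Your Step~3 spells out the well-definedness of the bijection (that distinct cycles yield distinct hypersurfaces because the defining binomial is determined by its support) more explicitly than the paper, which leaves this implicit.
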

\begin{proof}
$\cR[n,1,n^2-2n]$ is equidimensional and its irreducible components are $\BJ(\sigma_1,L^S)$ where $S$ is a configuration of entries providing a join of expected dimension.

By Lemma \ref{Lemma: two entries in each row and each column} there exists a $k$ such that, up to a permutation of rows and columns, $S$ omits the entries $ x^1_1 \vvirg x^k_k, x^1_2 \vvirg x^k_1$ and the equation of $\BJ(\sigma_1,L^S)$ is $ x^1_1 \cdots x^k_k - x^1_2 \cdots x^k_1 = 0$. In particular, entries in $S^c$ that do not lie in the submatrix $K$ are free to vary. Let $F$ be the set of entries whose complement is  $  \{x^1_1 \vvirg x^k_k, x^1_2 \vvirg x^k_1 \}$; we obtain $\BJ(\sigma_1,L^S) = \BJ(\sigma_1,L^F)$. This shows that the irreducible components are determined by the choice of a $k\times k$ submatrix and by the choice, in this submatrix, of a configuration of $2k$ entries such that, after a permutation of rows and columns, it has the form of \eqref{eqn: Lemma: two entries in each row and in each column}.

Every configuration of this type, viewed as the adjacency matrix of a $(n,n)$-bipartite graph, determines a cycle in the complete bipartite graph $K_{n,n}$. This shows that the number of irreducible components of $\cR[n,1,n^2-2n]$ is the number of such cycles\end{proof}
 
\begin{remark} More precisely, the number of irreducible hypersurfaces of degree
$k$ in $\cR[n,1,n^2-2n]$ coincides with the number of cycles in $K_{n,n}$
of length $2k$. In particular, for every $k$ with  $2\leq k\leq n$, $\cR[n,1,n^2-2n]$ has
exactly $\binom{n}{k}^2 \frac{k! (k-1)!}{2}$ irreducible components of
degree $k$. The total number of irreducible components is
\[
 \sum_{k=2}^n  \binom{n}{k}^2 \frac{k! (k-1)!}{2}.
\]
\end{remark}


\begin{example} Examples of generators of ideals of  $\BJ(\s_1,L^S)$:
\begin{enumerate}
\item\label{cs1} If $s=1$, the ideal is generated by the $2\times 2$ minors
not including the element of $S$
and the degree is $\tdeg(\s_1)-1=\frac{(2n-2)!}{[(n-1)!]^2}-1$.
\item\label{cs2} If $s=2$, the ideal is generated by the $2\times 2$ minors
not including the elements of $S$.
If the elements of $S$ lie in the same column or row, the degree is $
\tdeg(\s_1)-n=\frac{(2n-2)!}{[(n-1)!]^2}-n$ and otherwise it is
$ \tdeg(\s_1)-2=\frac{(2n-2)!}{[(n-1)!]^2}-2$.
\item If $s=3$ and there are no entries of $S$ in the same row or column,
the ideal is generated in
degrees two and three by the $2\times 2$ minors not including the elements
of $S$ and the difference
of the two terms   in the $3\times 3$ minor containing all three elements
of $S$
and the degree is $\frac{(2n-2)!}{[(n-1)!]^2}-3$.
\item If $s=3$ and there are two entries of $S$ in the same row or column,
the ideal is generated in
degree  two   by the $2\times 2$ minors not including the elements of $S$.
of the three entries all in the same.
\item
If $s\leq n$ and there are no elements of $S$ in the same row or column,
then $\tdeg(\BJ(\s_1,L^S)=\frac{(2n-2)!}{[(n-1)!]^2}-s$. (See Theorem
\ref{degreethm}.)
\end{enumerate}
\end{example}

\begin{proof}
 Parts (1),(3),(5), as well as part (2) when the entries do not lie in the
same column of row, are consequence of Theorem \ref{degreethm}. Moreover
the generators of the ideal can be obtained from Proposition \ref{jjens}.
For  part (2), we prove that $\deg(\BJ(\s_1,L^S)) = \deg \sigma_1 - n$ when
the two entries of $S$ lie in the same row or in the same column. Assume $S
= \{x^1_1, x^1_2 \}$, so that $\BJ(\sigma_1, L^S) =
\BJ(\BJ(\sigma_1,  [a^1\otimes b_1]  ), [a^1\otimes b_2] )$ for some basis
vectors $a^1$ and $b_1,b_2$. From (1), equations for
$\BJ(\sigma_1, [a^1\otimes b_1] )$ are $2\times 2$ minors not involving
the variable $x^1_1$, and $\deg \BJ(\sigma_1, [a^1\otimes b_1] ) = \deg
\sigma_1 -1$.

 The ideal of the tangent cone $TC_{[a^1\otimes b_2]}
\BJ(\sigma_1, [a^1\otimes b_1] )$ is generated by the variables $x^j_k$
for $j,k \geq 2$ (obtained as the lowest degree term in the coefficient
$(x^1_2-1)^0$ in the expansion $ (x^1_2-1)^0 (x^j_k - x^1_k x^j_2) +
(x^1_2-1)x^j_k$ of $x^1_2 x^j_k - x^1_k x^j_2$) and by the minors $x^i_1
x^j_2 - x^i_2 x^j_1$, with $i,j \geq 1$. Therefore $TC_{[a^1\otimes b_2]}
\BJ(\sigma_1, [a^1\otimes b_1] )$ has the same degree as the variety of
matrices of size $(n-1) \times 2$ and rank at most $1$, that is $n-1$. From
Proposition \ref{conedegprop}, we conclude.
\end{proof}

\subsection{Case $r=2$}
The following propositions are straight-forward to verify with the help of a computer with explicit computations available at \url{www.nd.edu/~jhauenst/rigidity}. 

\begin{proposition}
Let $n=5$, $r=2$ and let  $S=\{ x^1_1,x^2_2\hd x^5_5\}$. Then $\BJ(\s_{2,5},L^S)$
has $27$ generators of degree $5$ of the form \eqref{s1}, e.g.,
$M^{123}_{456}M^{45}_{12}-M^{345}_{123}M^{12}_{45}$, and $5$ generators  of degree $6$   with   $6$ summands in their expression, each of the form $M^{\bullet\bullet\bullet}_{\bullet\bullet\bullet}M^\bullet_\bullet M^\bullet_\bullet M^\bullet_\bullet$:
\begin{align*}
&- M^{345}_{123}M^{1}_{4}M^{1}_{5}M^{2}_{1} +
M^{235}_{134}M^{1}_{2}M^{1}_{5}M^{4}_{1} -
M^{234}_{135}M^{1}_{2}M^{1}_{4}M^{5}_{1}
\\
& +
M^{134}_{235}M^{1}_{4}M^{2}_{1}M^{5}_{1} -
M^{123}_{345}M^{1}_{2}M^{4}_{1}M^{5}_{1} -
M^{135}_{234}M^{1}_{5}M^{2}_{1}M^{4}_{1}.
\end{align*}
\end{proposition}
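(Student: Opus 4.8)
\emph{Proof proposal.} The plan is to reduce the statement to a finite linear-algebra computation and then certify it. Write $E=F=\BC^5$ and $Mat_5=E\ot F$ with coordinates $x^i_j$, so $S=\{x^1_1\hd x^5_5\}$ is the diagonal and $S^c$ consists of the $20$ off-diagonal coordinates. By Proposition \ref{usefulprop}, $\BI(\BJ(\s_{2,5},L^S))$ is precisely the set of polynomials in $\BI(\s_{2,5})$ that involve only the variables of $S^c$; since $\BI(\s_{2,5})$ is generated by the $3\times3$ minors of the generic $5\times5$ matrix, this is the elimination ideal obtained by eliminating $x^1_1\hd x^5_5$ from the ideal of $3\times3$ minors. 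By Proposition \ref{qualprop} there are no equations of degree $<r+1=3$, and every minimal generator is a sum of terms $\Delta Q$ with $\Delta$ a $3\times3$ minor; by \eqref{expdimR} the variety is an irreducible component of $\cR[5,2,5]$ of codimension $4$ in $\BP^{24}$, so, being far from a complete intersection, we expect many generators. It then suffices to determine, degree by degree, the number $\mu_d$ of minimal generators of this elimination ideal, to show $\mu_d=0$ for $d\notin\{5,6\}$, $\mu_5=27$, $\mu_6=5$, and to match the explicit polynomials.

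First I would dispose of low degrees: $\mu_1=\mu_2=0$ by Proposition \ref{qualprop}; in degree $3$ a minor $M^I_J$ is free of diagonal variables only when $I\cap J=\emptyset$, impossible for $I,J\subset[5]$ with $|I|=|J|=3$, so $\BI_3(\BJ(\s_{2,5},L^S))=0$ and $\mu_3=0$; and in degree $4$ one checks that the (explicit, finite-dimensional) space of combinations $\sum_v M^{I_v}_{J_v}\ell_v$, $\ell_v$ linear, surviving elimination of the diagonal is $0$, so $\mu_4=0$. For degree $5$: for each diagonal entry $x^{i_0}_{i_0}$ there are exactly six $3\times3$ submatrices $(I,J)$ with $I\cap J=\{i_0\}$ (so $x^{i_0}_{i_0}$ is the only entry of $S$ they meet, $J\setminus\{i_0\}$ being forced to be the complement of $I\setminus\{i_0\}$ in $[5]\setminus\{i_0\}$), and Proposition \ref{twoprop} applied to pairs of these produces polynomials of degree $2r+1=5$ of the form \eqref{s1}. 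Since $\BI_{\leq4}=0$, every element of $\BI_5$ is a minimal generator, so the assertion ``$\mu_5=27$, all of shape \eqref{s1}'' is the rank computation that the $\eqref{s1}$-type polynomials span $\BI_5(\BJ(\s_{2,5},L^S))$ and that this space is $27$-dimensional. For degree $6$: $\mu_6=\dim\BI_6-\dim(W\cdot\BI_5)$, where $W$ is the $20$-dimensional space of linear forms in $S^c$; I would compute this to be $5$, exhibit the five explicit polynomials of shape $M^{\bullet\bullet\bullet}_{\bullet\bullet\bullet}M^\bullet_\bullet M^\bullet_\bullet M^\bullet_\bullet$ with six summands, and verify each lies in the ideal by expanding and checking that the $S$-variables cancel — the mechanism of the proof of Proposition \ref{degboost}. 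I would also note that $S$ is preserved by the subgroup $\FS_5\rtimes\ZZ_2\subset GL(E)\times GL(F)$ of diagonal permutation pairs together with transpose (Proposition \ref{weightprop} plus this extra symmetry), and this group permutes the five degree-$6$ generators, one attached to each index $i\in[5]$; this explains the count $5$ and reduces the verification to one orbit representative.

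It remains to certify $\mu_d=0$ for all $d\geq7$, i.e. that the ideal $\cJ$ generated by the exhibited $27+5$ polynomials equals $\BI(\BJ(\s_{2,5},L^S))$. I would establish this by a single Gr\"obner-basis computation — compute a Gr\"obner basis of $\cJ$ and of the elimination ideal in the $20$ off-diagonal variables and compare, or equivalently compare their Hilbert functions — which is carried out by the scripts referenced at the stated URL. \textbf{The main obstacle is precisely this global step.} Membership of each listed polynomial in the ideal is immediate from Propositions \ref{usefulprop}, \ref{qualprop} and \ref{twoprop}, and the counts $27$ and $5$ are finite linear algebra; but showing that nothing new appears in degree $\geq7$ is a genuine elimination computation in $20$ variables with no evident shortcut. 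One would like a conceptual Castelnuovo--Mumford regularity bound for $\BJ(\s_r,L^S)$ — in the spirit of the propagation Proposition \ref{degboost} and the join analysis behind Proposition \ref{usefulprop} (and \ref{jjens}) — that would prove regularity $\leq6$ a priori; absent such a bound, the proposition rests on the cited machine verification.
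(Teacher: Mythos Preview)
Your proposal is correct and takes essentially the same approach as the paper: the paper offers no conceptual proof, stating only that ``the following propositions are straight-forward to verify with the help of a computer with explicit computations available at \url{www.nd.edu/~jhauenst/rigidity}.'' You have added useful framing (the elimination-ideal interpretation via Proposition~\ref{jjens}, the low-degree vanishing, the $\FS_5\rtimes\ZZ_2$ symmetry explaining the count $5$), but you correctly identify that the global step --- certifying no new generators beyond degree $6$ --- rests on the same machine computation the paper invokes.
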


\begin{proposition}
Let $n=6$, $r=2$, $s=15$ and let  $S$ be given by
$$\begin{pmatrix}
0&0&0&0&\blacklozenge&\blacklozenge\\
\blacklozenge&0&0&\blacklozenge&0&0\\
0&\blacklozenge&\blacklozenge&0&0&0\\
0&\blacklozenge&\blacklozenge&0&\blacklozenge&0\\
\blacklozenge&0&0&\blacklozenge&0&\blacklozenge\\
0&0&0&\blacklozenge&\blacklozenge&\blacklozenge
\end{pmatrix}.
$$
Then
$\BJ(\s_{2,6},L^S)$ is a hypersurface of degree  {9}   whose equation  is:
\begin{align}
&\label{s5} - M^{235}_{235}M^{12}_{36}M^{16}_{12}M^{34}_{14} +
M^{235}_{235}M^{12}_{26}M^{16}_{13}M^{34}_{14} +
M^{126}_{236}M^{13}_{13}M^{25}_{25}M^{34}_{14}
 -
M^{126}_{236}M^{13}_{12}M^{25}_{35}M^{34}_{14}
\\
\nonumber &
 +
M^{126}_{235}M^{13}_{16}M^{25}_{23}M^{34}_{14}
 -
M^{126}_{235}M^{13}_{14}M^{25}_{23}M^{34}_{16} +
M^{134}_{146}M^{12}_{23}M^{25}_{23}M^{36}_{15} -
M^{134}_{146}M^{13}_{15}M^{25}_{23}M^{26}_{23}
\\
\nonumber & -
M^{136}_{136}M^{12}_{23}M^{25}_{25}M^{34}_{14} +
M^{136}_{126}M^{12}_{23}M^{25}_{35}M^{34}_{14}.
\end{align}
\end{proposition}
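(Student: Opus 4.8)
The statement splits into three independent checks: (a) $\BJ(\s_{2,6},L^S)$ is a hypersurface; (b) the polynomial in \eqref{s5} lies in $\BI(\BJ(\s_{2,6},L^S))$; (c) that polynomial is irreducible of degree $9$, hence generates the (principal) ideal. For (a), note first that $s=15=(n-r)^2-1$ with $n=6$, $r=2$, so by \eqref{hypers}–\eqref{expdimR} it suffices to verify that this particular configuration $S$ yields a join of the \emph{expected} dimension $r(2n-r)+s=35<36=n^2$. This is done by computing the rank of the differential of the join parametrization $\s_r\times L^S\dashrightarrow \BP(\BC^6\ot\BC^6)$ at one general point $(x,y)$ with $x$ of rank $2$ and $y\in L^S$ general: if that rank is $35$, the join is a hypersurface. (This is the computation posted at the referenced URL; it is a single Jacobian rank evaluation.)

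For (b), I would invoke the converse half of Proposition \ref{usefulprop} (equivalently Proposition \ref{jjens}): any polynomial of the form $\sum_{I,J}q_I^J M^I_J$, with the $M^I_J$ minors of size $r+1=3$ and with only the variables of $S^c$ appearing after full expansion, automatically lies in $\BI(\BJ(\s_2,L^S))$. The right-hand side of \eqref{s5} is manifestly of this shape: each summand is a product $M^{\bullet\bullet\bullet}_{\bullet\bullet\bullet}\cdot\big(M^\bullet_\bullet M^\bullet_\bullet M^\bullet_\bullet\big)$, i.e.\ a size-$3$ minor times a coefficient $q_I^J$ that is a product of three $2\times 2$ minors, homogeneous of degree $6$ in every term, so the expression is a well-defined element of $S^{9}Mat_6^*$. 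What must be checked is the cancellation: reading off $S^c$ (the $21$ entries marked $0$ in the displayed matrix) and expanding \eqref{s5}, one verifies that every monomial containing a variable from $S=\{x^5_1\text{'s complement}\dots\}$ cancels. This is the only genuinely nontrivial verification; it is finite and is exactly the minor identity whose existence is guaranteed, once codimension one is known, by Propositions \ref{jjens}/\ref{usefulprop}.

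For (c), since $\s_r$ is irreducible and $L^S$ is linear, $\BJ(\s_2,L^S)$ is irreducible, so its ideal is prime; being codimension one it is principal, generated by an irreducible $P$ whose degree equals $\deg\BJ(\s_2,L^S)$. By (b), $P$ divides \eqref{s5}, so it remains to confirm that \eqref{s5} is nonzero and irreducible — again a computer factorization check — whence \eqref{s5}$=cP$ and $\deg\BJ(\s_2,L^S)=9$.

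\textbf{Main obstacle.} The degree value $9$ cannot be extracted from the general degree formulas: Theorem \ref{bbdegthrmaluffi} gives only the maximum degree $d_{n,r,s}$ over all components, and the sharp component-wise formula \eqref{amazingsum} requires no two entries of $S$ to share a row or column — which badly fails here. So the entire proposition rests on the explicit computation, and the delicate point is producing (and then certifying the cancellation of) the particular coefficients $q_I^J$ in \eqref{s5}; these were found by searching, within the weight space prescribed by Proposition \ref{weightprop}, for a combination of $3\times 3$ minors supported off $S$.
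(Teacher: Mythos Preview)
Your proposal is correct and matches the paper's approach: the paper states that this proposition (and its neighbours in \S\ref{exsect}) ``are straight-forward to verify with the help of a computer with explicit computations available at \url{www.nd.edu/~jhauenst/rigidity}.'' You have simply unpacked what that computer verification must consist of --- the Jacobian rank check for (a), the cancellation check against Proposition~\ref{usefulprop} for (b), and the irreducibility/nonvanishing check for (c) --- and correctly observed that the degree $9$ is not predicted by the general formulas because $S$ has repeated rows and columns.
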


The weight of equation \eqref{s5} is $(1^2,2^2,3^2,4,5,6)\times(1^2,2^2,3^2,4,5,6)$.
(This weight is hinted at because the first, second and third columns and rows each have two elements
of $S$ in them and the fourth, fifth and sixth rows and columns each have three.)

\begin{proposition} Let $n=6$, $r=2$, $s=15$,  and let  $S$ be given by
\label{n6r2s15}
$$
\begin{pmatrix}
0&0&0&0&\blacklozenge&\blacklozenge \\
0&\blacklozenge&0&0&\blacklozenge&0  \\
0&0&\blacklozenge&\blacklozenge&0&0  \\
\blacklozenge&0&0&\blacklozenge&\blacklozenge&0  \\
\blacklozenge&0&\blacklozenge&0&0&\blacklozenge  \\
0&\blacklozenge&0&\blacklozenge&0&\blacklozenge
\end{pmatrix}.
$$
Then,
$\BJ(\s_{2,6},L^S)$ is a hypersurface of   degree $16$.
\end{proposition}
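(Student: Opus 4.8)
As with the two preceding propositions, both assertions — that $\BJ(\s_{2,6},L^S)$ is a hypersurface and that its degree is $16$ — are to be verified by computer (scripts at \url{www.nd.edu/~jhauenst/rigidity}); the point of this sketch is to record what is computed and why the general degree formulas of \S\ref{degsect} do not settle it directly.

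First, the hypersurface claim. By \eqref{expdimR}--\eqref{hypers}, since $|S| = 15 = (6-2)^2 - 1$, it suffices to check that this particular $S$ yields a join of the expected dimension $35$, i.e. that there is a single irreducible defining polynomial. Now $\hat\BJ(\s_{2,6},L^S)$ is the closure of the image of the addition map $\hat\s_2 \times L^S \to Mat_6$, $(A,N)\mapsto A+N$, so at a general point its dimension is the rank of the differential, namely $\dim\bigl(\hat T_A\hat\s_2 + L^S\bigr)$ for $A$ of rank exactly $2$ (Terracini's lemma for joins). Since $\dim\hat T_A\hat\s_2 = r(2n-r) = 20$ and $\dim L^S = 15$, one needs only that $\hat T_A\hat\s_2 \cap L^S = 0$ for generic $A = u_1 v_1^\transpose + u_2 v_2^\transpose$, a single numerical rank computation. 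This also excludes the degenerate behaviour of Lemma \ref{Lemma: autoelimination} and of the row/column-aligned cases, where a part of $S$ confined to a single row or column fails to raise the dimension; here $S$ has two or three entries in every row and every column, and the rank check confirms no collapse.

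Granting this, $\BJ(\s_{2,6},L^S)$ is irreducible — a join of the irreducible variety $\s_2$ with a linear space — hence has a unique irreducible defining polynomial, and ``degree $16$'' is unambiguous: it equals $\#\bigl(\BJ(\s_{2,6},L^S)\cap \ell\bigr)$ for a general line $\ell\subset\BP^{35}$. The formulas in Theorems \ref{bbdegthrm}, \ref{bbdegthrmaluffi} and the bound \eqref{amazingsum} all assume $s\le n$, which fails here ($s=15>6=n$), and $\deg\s_2 = \dim S_{4^4}\BC^6$ is only a large upper bound via ``degree does not increase under coning'', so the value must be computed. I would compute it by numerical algebraic geometry: using the rational parametrization $(u_1,u_2,v_1,v_2,N)\mapsto u_1 v_1^\transpose + u_2 v_2^\transpose + N$ with $N$ supported on $S$, build a witness set for the image by cutting the domain with general linear spaces (removing the $GL_2$ gauge on the rank-two summand) and tracking the resulting square system by homotopy continuation, then count the points. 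To make the complementary task of exhibiting or certifying the equation tractable, exploit Proposition \ref{weightprop}: the defining polynomial is a $T_E\times T_F$ weight vector, so it lies in a single small weight subspace of $S^{16}Mat_6^*$ fixed by $S^c$, inside which its coefficients are determined by evaluation at sampled points of the parametrized variety, and $16$ turns out to be the least degree carrying a nonzero such weight vector.

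The main obstacle is computational scale, not conceptual difficulty: the full degree-$16$ piece of the coordinate ring on $Mat_6$ is far too large to handle naively, so the verification depends on cutting it down — via the weight-space restriction of Proposition \ref{weightprop}, via the minors normal form of Proposition \ref{usefulprop} (every generator is $\sum q^J_I M^I_J$ with the $M^I_J$ of size $3$ and no variable of $S$ appearing), and via numerical rather than symbolic treatment of the parametrization — together with certifying that the intersection count with $\ell$ is complete (the homotopy tracks the correct number of paths, none diverging or coinciding). A secondary check is that the output is genuinely an irreducible hypersurface of degree $16$, not a power or a product of lower-degree factors; this is guaranteed in principle by irreducibility of the join, and in practice is confirmed by checking that the $16$-point witness set forms a single orbit under monodromy.
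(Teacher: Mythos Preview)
Your proposal is correct and matches the paper's approach: the paper offers no argument beyond ``straight-forward to verify with the help of a computer with explicit computations available at \url{www.nd.edu/~jhauenst/rigidity},'' and your sketch spells out precisely what such a verification entails (the Terracini rank check for the hypersurface claim, and a witness-set/intersection count for the degree). The only minor difference is emphasis: the paper's remark following the proposition indicates they actually produced the defining polynomial symbolically (reporting its $100$ monomials and weight $(1^4,2^3,3^3,4^2,5^2,6^2)\times(1^4,2^3,3^3,4^2,5^2,6^2)$), so the degree $16$ is read off directly from the equation rather than from a numerical intersection count --- but your weight-space/evaluation strategy would recover exactly this polynomial, so the two routes converge.
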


We do not have a concise expression for the equation of $\BJ(\s_{2,6},L^S)$. Expressed na\"\i vely, it is the sum of $96$ monomials, each
with coefficient $\pm 1$ plus two monomials with coefficient $\pm 2$, for a total of $100$ monomials counted
with multiplicity. The monomials are of weight $(1^4,2^3,3^3,4^2,5^2,6^2)\times (1^4,2^3,3^3,4^2,5^2,6^2)$.
(This weight is hinted at because the first, second and third columns and rows each have two elements
of $S$ in them, but the first is different because in the second and third a column element equals
a row element,  and the fourth, fifth and sixth rows and columns each have three.)

\subsection{Case $r=n-2$}

Proposition \ref{tworptwo} implies:

\begin{theorem}\label{rnmtwothm} In the hypersurface case $r=n-2$, $s=3$,
there are two types of varieties up to isomorphism:
\begin{enumerate}
\item If no two elements of $S$ are in the same row or column, then
the hypersurface is of degree $2n-3$ and can be represented by an equation of the form \eqref{eE}.
There are $6{\binom n3}^2$ such components, and they are of  of maximal degree.

\item
If two elements are in the same row and one in a different column  from those two,
or such that
one element shares a row with one and a column with the other,
then the equation
is the unique size $(n-1)$ minor that has no elements of $S$ in it.
There are $n^2$ such components.
\end{enumerate}

If all three elements of $S$ lie on a row or column, then $\BJ(\s_{n-2},L^S)$ is not a hypersurface.
\end{theorem}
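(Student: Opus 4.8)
The plan is to classify the three‑element configurations $S$ up to a simultaneous permutation of the rows and the columns of $Mat_n$ — such a permutation is an isomorphism fixing every $\s_r$ and carrying $\BJ(\s_r,L^S)$ to $\BJ(\s_r,L^{S'})$ — and then treat each class. Up to these permutations there are exactly four types: \textbf{(1)} the three entries lie in three distinct rows and three distinct columns, representative $S=\{x^1_1,x^2_2,x^3_3\}$; \textbf{(2a)} two entries share a row and the third lies in a row and column disjoint from them, representative $\{x^1_1,x^1_2,x^2_3\}$; \textbf{(2b)} the entries form an ``L'', i.e.\ one entry shares its row with a second and its column with the third, representative $\{x^1_1,x^1_2,x^2_1\}$; and \textbf{(3)} all three entries lie on a single row (or, after transposing, a single column). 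Since $s=3=(n-r)^2-1$ for $r=n-2$, \eqref{hypers} says that a given $\BJ(\s_{n-2},L^S)$ is a component of $\hat\cR[n,n-2,3]$ precisely when it attains the expected dimension $n^2-1$, so in each case I first decide whether that dimension is attained. For this I use Terracini's lemma: for generic $A$ of rank exactly $n-2$ the affine tangent space to $\hat\s_{n-2}$ at $A$ is $T_A=\{M\in Mat_n:M(\ker A)\subseteq \mathrm{im}\,A\}$, of codimension $4$, whence $\dim\hat\BJ(\s_{n-2},L^S)=n^2-1-\dim(T_A\cap L^S)$; so $\BJ(\s_{n-2},L^S)$ is a hypersurface iff $T_A\cap L^S=0$.

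For type \textbf{(3)}, say $S$ lies on row $i$. Any $A=B+C\in\hat\BJ^0(\s_{n-2},L^S)$ with $\trank B\le n-2$ agrees with $B$ off row $i$, so the $(n-1)\times n$ submatrix of $A$ obtained by deleting row $i$ has rank $\le n-2$. Hence $\BJ(\s_{n-2},L^S)$ is contained in the irreducible determinantal variety of $n\times n$ matrices whose row‑$i$‑deleted submatrix has rank $\le n-2$, which has codimension $(1)(2)=2$; therefore $\BJ(\s_{n-2},L^S)$ has codimension $\ge 2$ and is not a hypersurface. (Equivalently $T_A\cap L^S\ne0$: it contains a nonzero rank‑one matrix supported on row $i$ whose row vector annihilates $\ker A$, which exists because killing the $2$‑dimensional $\ker A$ is only $2$ linear conditions on the $3$‑dimensional space of admissible row vectors.)

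For types \textbf{(2a)} and \textbf{(2b)} a short Terracini computation gives $T_A\cap L^S=0$ — the obstruction seen in type (3) is absent because now $L^S$ contains no rank‑one matrix admitting a $\ge3$‑dimensional space of row or column vectors — so $\BJ(\s_{n-2},L^S)$ is an irreducible hypersurface. One checks directly that there is a \emph{unique} size $n-1$ minor $\Delta^a_b$ involving no entry of $S$: it is $\Delta^i_{j_3}$ for $S=\{x^i_{j_1},x^i_{j_2},x^{i'}_{j_3}\}$ and $\Delta^{i_1}_{j_2}$ for $S=\{x^{i_1}_{j_1},x^{i_1}_{j_2},x^{i_3}_{j_2}\}$. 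This minor uses only $S^c$‑variables and vanishes on $\hat\s_{n-2}$, hence on $\hat\BJ^0(\s_{n-2},L^S)$, so by Proposition \ref{usefulprop} it lies in $\BI(\BJ(\s_{n-2},L^S))$; being irreducible of degree $n-1$, it must generate the (principal) ideal of this irreducible hypersurface, which therefore equals $\{\Delta^a_b=0\}$ and has degree $n-1$. As $(a,b)$ runs over $[n]^2$ one gets $n^2$ distinct such irreducible hypersurfaces, each realized by a configuration of this type (e.g.\ $S=\{x^a_{j_1},x^a_{j_2},x^{i'}_b\}$), and each is a component since it attains the maximal possible dimension $n^2-1$ among all $\BJ(\s_{n-2},L^{S'})$; so there are exactly $n^2$ components of the second type.

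For type \textbf{(1)}, Proposition \ref{tworptwo} puts the degree $2n-3$ polynomial \eqref{s3} — which for $S=\{x^1_1,x^2_2,x^3_3\}$ is \eqref{eE}, and in general its image under a suitable row/column permutation — into $\BI(\BJ(\s_{n-2},L^S))$; restricting to matrices block diagonal with a generic $3\times3$ block and an $(n-3)\times(n-3)$ identity block sends the polynomial \eqref{eE} to $x^1_2x^2_3x^3_1-x^1_3x^2_1x^3_2\ne0$, so it is not identically zero. The degree formula for the maximal‑degree components — the theorem containing \eqref{amazingsum}, specialized to $k=n-r=2$, $s=3$, hence $u=k^2-s=1$, giving $D(n,2,1)=2n-3$ — shows that any component of $\hat\cR[n,n-2,3]$ with no two entries of $S$ in a common row or column has degree exactly $2n-3$; so, together with the Terracini step, $\BJ(\s_{n-2},L^S)$ is an irreducible hypersurface of degree $2n-3$, and since the nonzero polynomial \eqref{eE} of degree $2n-3$ vanishes on it, \eqref{eE} is a scalar multiple of its defining equation. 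For the count: $\binom n3$ choices of three rows, $\binom n3$ of three columns, and $3!$ matchings between them give $6\binom n3^2$ configurations, and distinct configurations yield distinct hypersurfaces because the vertex of the cone $\BJ(\s_{n-2},L^S)$, in the sense of Definition \ref{joindef}, is exactly $L^S$ — equivalently, by Proposition \ref{qualprop} the entries of $S$ are precisely those not appearing in the defining equation. Since $2n-3>n-1$ for $n\ge3$, these are the components of largest degree, and as types (1), (2a), (2b) exhaust all three‑element $S$ while type (3) yields no component, this finishes the proof. The step I expect to be most delicate is the dimension bookkeeping cleanly separating the four types — in particular isolating exactly why $\dim(T_A\cap L^S)$ jumps precisely when $S$ lies on one line — together with the identification of \eqref{eE} as \emph{literally} the defining equation in type (1); that identification here leans on the externally supplied degree formula, and without it one would need to prove directly that \eqref{eE} is irreducible, which is the genuinely hard point.
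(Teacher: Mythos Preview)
Your proof is correct and considerably more detailed than the paper's, which consists of the single sentence ``Proposition~\ref{tworptwo} implies:'' and leaves the reader to supply everything else. Both arguments use Proposition~\ref{tworptwo} to produce the degree $2n-3$ polynomial \eqref{eE} for type~(1), so on that core point you follow the paper. What you add is the explicit classification into four configuration types, the Terracini verification that types~(1),~(2a),~(2b) yield hypersurfaces, the clean codimension-$2$ containment argument disposing of type~(3) (the paper would presumably appeal to Lemma~\ref{Lemma: autoelimination} here), the identification of the unique $(n-1)$-minor in type~(2), and the component counts.

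The one substantive departure is how you establish that the type~(1) hypersurface has degree \emph{exactly} $2n-3$. You invoke the degree formula $D(n,2,1)=2n-3$ from Theorem~\ref{degreethm}, which rests on Aluffi's result (Theorem~\ref{bbdegthrmaluffi}); the paper's proof sketch does not do this, and the intended self-contained route is presumably to show directly that the polynomial \eqref{eE} is irreducible. Your approach is logically sound (Theorem~\ref{degreethm} does not depend on Theorem~\ref{rnmtwothm}) but imports heavier machinery than the placement of the theorem in \S\ref{exsect} suggests; you correctly flag this yourself. A minor point: your claim that distinct type~(1) configurations give distinct hypersurfaces via ``the entries of $S$ are precisely those not appearing in the defining equation'' needs the observation that \eqref{eE} actually involves every variable in $S^c$ --- this is true (one checks it from the explicit form, or by a weight argument), but you should say so rather than treat it as automatic from Proposition~\ref{qualprop}, which only gives one direction.
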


\begin{corollary}\label{corollary: n-2-maximal rigidity}  Let $M$ be an $n\times n$  matrix. Then $M$ is
maximally $(n-2)$-border rigid
if and only if  no size $n-1$ minor is zero and
  for all   
$(i_1,i_2,i_3)$ taken  from distinct elements of $[n]$, and  all   
$(j_1,j_2,j_3)$ taken  from distinct elements of $[n]$, the equation
$\Delta^{i_1}_{j_1}\Delta^{i_2i_3}_{j_2j_3} - \Delta^{i_2}_{j_2}\Delta^{i_1i_3}_{j_1j_3}$
does not vanish on $M$. 
\end{corollary}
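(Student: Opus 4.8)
The plan is to read the corollary off from Theorem \ref{rnmtwothm} through the geometric dictionary of Section \ref{bbgeomlang}. By definition $M$ is maximally $(n-2)$-border rigid exactly when $M\notin\hat\cR[n,n-2,3]$, since here $(n-(n-2))^2-1=3$. By \eqref{hypers} this locus is a hypersurface, and by \eqref{expdimR} and the equidimensionality established just before it, it is pure of dimension $n^2-1$. Writing $\hat\cR[n,n-2,3]=\bigcup_{|S|=3}\hat\BJ(\s_{n-2},L^S)$ as a finite union of irreducible closed sets, every irreducible component must equal one of the joins $\hat\BJ(\s_{n-2},L^S)$; those that fail to be hypersurfaces — by Theorem \ref{rnmtwothm}, exactly those with $S$ contained in a single row or single column — are absorbed into the others and contribute nothing. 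Hence $M$ is maximally $(n-2)$-border rigid if and only if it lies off every type (1) component and every type (2) component of Theorem \ref{rnmtwothm}.

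For type (2): Theorem \ref{rnmtwothm}(2) says each such component is cut out by a single size $n-1$ minor $\Delta^i_j$, and conversely, for each pair $(i,j)$ one may exhibit an $S$ (e.g. $\{x^i_{j_1},x^i_{j_2},x^{i_3}_{j}\}$ with $j_1,j_2\neq j$ and $i_3\neq i$) whose component is $\{\Delta^i_j=0\}$; so $M$ avoids all type (2) components precisely when no size $n-1$ minor of $M$ vanishes. For type (1): Theorem \ref{rnmtwothm}(1) indexes these components by configurations $S=\{x^{a_1}_{b_1},x^{a_2}_{b_2},x^{a_3}_{b_3}\}$ with the $a_\alpha$ distinct and the $b_\alpha$ distinct, the component attached to $S$ being the hypersurface $\{P_S=0\}$ with $P_S=\Delta^{a_3}_{b_2}\Delta^{a_1a_2}_{b_1b_3}-\Delta^{a_2}_{b_3}\Delta^{a_1a_3}_{b_1b_2}$ of degree $2n-3$ as in \eqref{eE} (that $P_S$ lies in the ideal is Proposition \ref{tworptwo}; since its degree equals that of the hypersurface, $\{P_S=0\}$ is exactly the component).

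It remains to check that the family $\{P_S\}$ agrees with the family appearing in the corollary. Given distinct $i_1,i_2,i_3$ and distinct $j_1,j_2,j_3$, set $(a_1,a_2,a_3)=(i_3,i_2,i_1)$ and $(b_1,b_2,b_3)=(j_3,j_1,j_2)$: this is a legitimate type (1) configuration $S=\{x^{i_3}_{j_3},x^{i_2}_{j_1},x^{i_1}_{j_2}\}$, and substituting into $P_S$ yields precisely $\Delta^{i_1}_{j_1}\Delta^{i_2i_3}_{j_2j_3}-\Delta^{i_2}_{j_2}\Delta^{i_1i_3}_{j_1j_3}$; conversely every $P_S$ arises this way. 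Therefore the union of the type (1) components is the common zero locus of the corollary's polynomials, and $M$ avoids all of them iff none of those polynomials vanishes on $M$. Together with the type (2) analysis, this is the asserted equivalence.

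All of the real content sits in Theorem \ref{rnmtwothm} (hence in Proposition \ref{tworptwo}), so the corollary's proof is bookkeeping. The one point that warrants care is completeness: one must be certain the component list in Theorem \ref{rnmtwothm} is exhaustive — in particular that configurations with $S$ on a single line, or with incidental coincidences among its entries, produce no new components — and that is precisely where I would lean on the pure-dimensionality statement \eqref{expdimR} rather than trying to enumerate components by hand.
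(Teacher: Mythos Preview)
Your proof is correct and follows exactly the approach the paper intends: the corollary is stated without proof because it is read off directly from Theorem \ref{rnmtwothm}, and you have simply made explicit the bookkeeping---matching the type (1) equations to the corollary's polynomials via the relabeling $(a_1,a_2,a_3,b_1,b_2,b_3)=(i_3,i_2,i_1,j_3,j_1,j_2)$, and noting that the $n^2$ type (2) components exhaust all size $n-1$ minors. There is nothing to add.
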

\section{Varieties of matrices}\label{matvarsect}

\subsection{General remarks}

 Recall the construction of matrices from directed acyclic graphs in \S \ref{sec:intro}. 
To each graph $\G$ that is the disjoint union of directed acyclic graphs with $n$ input gates  and $n$ output
gates we associate the set $\Sigma_{\G}^0\subset Mat_n$ of all matrices admitting a linear circuit  (see~\S\ref{sec:intro})  with
underlying graph $\G$. We let $\Sigma_{\G}:=\ol{\Sigma_{\G}^0}\subset Mat_n$, the {\it variety of linear
circuits~associated~to~$\G$}.

For example $\cR[n,r,s]^0=\cup \Sigma_{\G}^0$ where the union is over all  $\G=\G_1 + \G_2$ (addition as in Figure~\ref{fig:sumofgraphs})  where
$\G_1$ is of depth two  with $r$ vertices at the second level and  is  a complete bipartite graph at each level, and $\G_2$ is of depth one, with
$s$ edges.

\begin{proposition}\label{bigvarprop} Let $\Sigma \subset Mat_n$ be a variety of dimension $\d$. Then a general
element of $\Sigma$ cannot be computed by a circuit of size $\d-1$.
\end{proposition}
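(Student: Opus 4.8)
The plan is to show that the image of the parametrization map associated to any circuit $\G'$ with $\d-1$ edges has dimension at most $\d-1$, hence cannot be dense in $\Sigma$. Concretely, a linear circuit with underlying graph $\G'$ is determined by assigning a nonzero scalar $\lambda_e\in\BC$ to each of its edges, and the resulting matrix $A_{\lc}$ is a (fixed, $\G'$-dependent) polynomial function of the tuple $(\lambda_e)_e\in (\BC^*)^{|\G'|}$. Thus the set $\Sigma_{\G'}^0$ of matrices computable by a circuit with graph $\G'$ is the image of a morphism from an open subset of $\BC^{|\G'|}$, and so $\tdim \Sigma_{\G'}^0\le |\G'|=\d-1$; taking closures, $\tdim\Sigma_{\G'}\le\d-1<\d=\tdim\Sigma$, so $\Sigma_{\G'}$ is a proper closed subvariety of $\Sigma$ (it cannot equal $\Sigma$ since it has strictly smaller dimension, and $\Sigma$ is irreducible — or, if $\Sigma$ is reducible, $\Sigma_{\G'}$ cannot contain any component of $\Sigma$).

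The second step is to pass from a single graph to all graphs of size $\d-1$. Up to relabeling the internal vertices, there are only finitely many directed acyclic graphs (and finite sums of such) on $n$ input and $n$ output nodes with exactly $\d-1$ edges, since the number of internal vertices is bounded by $\d-1$ and the number of edge-sets on a bounded vertex set is finite. Hence the set of all matrices computable by some circuit of size $\d-1$ is a finite union $\bigcup_{\G'}\Sigma_{\G'}$, each term of dimension $\le \d-1$, so the whole union is a proper closed subvariety $Z\subsetneq\Sigma$ (again using that no proper-dimensional subvariety can contain a $\d$-dimensional component of $\Sigma$). Therefore the complement $\Sigma\setminus Z$ is a nonempty Zariski-open subset, and a general element of $\Sigma$ lies in it, i.e. cannot be computed by any circuit of size $\d-1$. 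One should note here the mild subtlety that "size at most $\d-1$" and "size exactly $\d-1$" give the same union of varieties, since a circuit with fewer edges is covered by one with $\d-1$ edges after adding irrelevant edges — or one simply takes the union over all sizes $\le\d-1$, still finitely many graphs.

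I do not expect a serious obstacle: the only point requiring a little care is making precise that "general element of a variety $\Sigma$" means "element avoiding some fixed proper closed subset," as set up in \S\ref{notation}, and checking that the finite union $Z$ is genuinely proper — which reduces to the dimension count already done. If $\Sigma$ is reducible one interprets "general element" componentwise (general on each irreducible component), and the argument applies verbatim on each component. The main conceptual content is simply that size counts the number of free parameters (edge weights) and an image cannot have dimension exceeding the dimension of the source; everything else is bookkeeping about finiteness of the relevant family of graphs.
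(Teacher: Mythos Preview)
Your proposal is correct and follows essentially the same approach as the paper: bound the dimension of $\Sigma_{\G'}$ by the number of edge-parameters, then use that there are only finitely many graphs of size at most $\d-1$ so that their union has dimension at most $\d-1$ and hence cannot cover $\Sigma$. The paper's proof is just a terser version of the same argument, without the explicit discussion of irreducibility or the ``size exactly'' versus ``size at most'' issue.
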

\begin{proof} Let $\G$ be a fixed graph representing a family of linear circuits with $\g$ edges. Then $\G$
can be used for at most a $\g$-dimensional family of matrices. Any variety of matrices of dimension greater
than $\g$ cannot be represented by $\G$, and since there are a finite number of graphs of size at most $\g$,
the dimension of their union is still $\g$.
\end{proof}

\subsection{Cauchy matrices}
Let $1\leq i,j,\leq n$. Consider the rational map
\begin{align}
\label{cauchymap} Cau_n: \BC^n\times \BC^n &\dashrightarrow Mat_n\\
\nonumber ((x^i),(z_j))&\mapsto (y^i_j):=\frac 1{x^i+z_j}
\end{align}
The variety of Cauchy matrices $Cauchy_n\subset Mat_n$ is defined to be the closure of the  image of~\eqref{cauchymap}.
It  is   $\FS_n\times \FS_n$ invariant and has dimension $2n-1$. To see the
dimension,  note that
$Cauchy_n$   is the {\it Hadamard inverse}  or {\it Cremona transform} of a linear subspace of $Mat_{n}$ of dimension $2n-1$
(that is contained in $\s_2$).
The Cremona map is
\begin{align*}
Crem_N: \BC^N&\dashrightarrow \BC^N\\
(w_1\hd w_N)&\mapsto \left(\frac 1{w_1}\hd \frac 1{w_N}\right)
\end{align*}
which is generically one to one.
The fiber
of $Crem_{n^2}\circ Cau_n$  over $(x^i+z_j)$ is
$((x^i+\l),(z_j-\l))$, with $\l\in \BC$.

One can obtain equations for $Cauchy_n$ by transporting the linear equations of its Cremona transform, which are the
  $(n-1)^2$ linear  equations, e.g.,
for $ i,j = 2\hd n$,  $y^1_1 + y^i_j - y^i_1 - y^1_j$.
(More generally,  it satisfies the equation $y^{i_1}_{j_1} + y^{i_1}_{j_2} - y^{i_2}_{j_1} - y^{i_1}_{j_2}$  for all $i_1,j_1,i_2,j_2$.)
Thus, taking reciprocals and clearing denominators,  the Cauchy variety has cubic equations
$$
    y^{i_1}_{j_2}   y^{i_2}_{j_1}  y^{i_1}_{j_2}
+y^{i_1}_{j_1}     y^{i_2}_{j_1}  y^{i_1}_{j_2}
-y^{i_1}_{j_1}   y^{i_1}_{j_2}     y^{i_1}_{j_2}
-y^{i_1}_{j_1}   y^{i_1}_{j_2}   y^{i_2}_{j_1}.
$$

Alternatively, $Cauchy_n$ can be parametrized by the first row and column: let $ 2  \leq \rho,\s\leq n$, and
denote the entries of $A$ by $a^i_j$. Then the space is parametrized by
$a^1_1,a^{\rho}_1,a^1_{\s}$, by setting  $ a^{\rho}_{\s} =[\frac 1{a^{\rho}_1}
+\frac 1{a^1_{\s}}-\frac 1{a^1_1}]^{-1}$.

Any square submatrix of a Cauchy matrix is a Cauchy matrix, and the determinant of a Cauchy matrix is given by
\be\label{cauchydet}
\frac{\prod_{i<j}(x^i-x^j)\prod_{i<j}(z^i-z^j)}{\prod_{i,j}(x^i+z_j)}
\ene
In particular, if $x^i, -z_j$ are all distinct, then  all minors of the Cauchy matrix are nonzero.

\begin{proposition}\label{cauchyrig} A general Cauchy matrix is both maximally $r=1$ rigid and maximally $r=n-2$ rigid.
\end{proposition}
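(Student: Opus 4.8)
The plan is to reduce each claim to the characterization of maximal border rigidity already established for the two relevant values of $r$, and then verify the appropriate genericity condition for a general Cauchy matrix. For $r=1$, by Theorem \ref{thm: hypersurface for r=1} (and the hypersurface condition $s=(n-r)^2-1$, i.e.\ $s=n^2-2n$ when $r=1$), the components of $\hat\cR[n,1,n^2-2n]$ are the hypersurfaces $\BJ(\s_1,L^S)$ whose equations are binomials $x^{i_1}_{j_1}\cdots x^{i_k}_{j_k}-x^{i_1}_{j_{\tau(1)}}\cdots x^{i_k}_{j_{\tau(k)}}$ for cycles $\tau\in\FS_k$. So $M$ is maximally $1$-border rigid if and only if no such binomial vanishes on $M$. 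For a general Cauchy matrix $A=(1/(x^i+z_j))$, I would argue this directly: the vanishing of $x^{i_1}_{j_1}\cdots x^{i_k}_{j_k}-x^{i_1}_{j_{\tau(1)}}\cdots x^{i_k}_{j_{\tau(k)}}$ on $A$ says $\prod_\ell (x^{i_\ell}+z_{j_{\tau(\ell)}})=\prod_\ell (x^{i_\ell}+z_{j_\ell})$, an algebraic relation on the parameters $(x^i,z_j)$ that is not identically satisfied (e.g.\ specialize all but two of the variables and compare), hence fails for a general choice of parameters; taking the finite intersection over all finitely many cyclic configurations still leaves a dense open set.

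For $r=n-2$, by Corollary \ref{corollary: n-2-maximal rigidity}, $M$ is maximally $(n-2)$-border rigid if and only if (i) no size $n-1$ minor of $M$ vanishes, and (ii) for all triples of distinct row indices $(i_1,i_2,i_3)$ and distinct column indices $(j_1,j_2,j_3)$, the degree-$(2n-3)$ polynomial $\Delta^{i_1}_{j_1}\Delta^{i_2i_3}_{j_2j_3}-\Delta^{i_2}_{j_2}\Delta^{i_1i_3}_{j_1j_3}$ does not vanish on $M$. Condition (i) is immediate for a general Cauchy matrix by the determinant formula \eqref{cauchydet}: all minors of a Cauchy matrix with distinct $x^i$ and distinct $-z_j$ are nonzero, and a general choice of parameters makes these distinct. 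For (ii), I would again show the relevant polynomial is not identically zero on $Cauchy_n$, so that for a general Cauchy matrix it is nonzero; one way is to plug the Cauchy determinant formula into each $\Delta$ (each is itself a Cauchy minor, hence a ratio of Vandermonde-type products over $\prod(x^i+z_j)$), clear the common denominator, and observe that the resulting polynomial identity in the $x$'s and $z$'s is nontrivial — or, more cheaply, exhibit a single numerical assignment of the parameters for which $\Delta^{i_1}_{j_1}\Delta^{i_2i_3}_{j_2j_3}\ne\Delta^{i_2}_{j_2}\Delta^{i_1i_3}_{j_1j_3}$, which suffices since the set of bad parameters is then a proper subvariety.

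In both cases the genericity argument is the same in spirit: each obstruction is cut out by one polynomial on the parameter space $\BC^n\times\BC^n$ of Cauchy data, there are finitely many such polynomials, none is identically zero, so their common non-vanishing locus is a dense (full-measure) open set, and a general Cauchy matrix lies in it. I expect the main technical point to be verifying that the polynomial in (ii) is genuinely not identically zero on $Cauchy_n$ — the $\Delta$'s are complementary minors, so one has to be careful that the products do not telescope to an identity among Cauchy minors; the cleanest route is probably to substitute the explicit formula \eqref{cauchydet} for each complementary minor and check that the numerator of the difference, a polynomial in the $x^i,z_j$, has a nonzero monomial, or simply to pick an explicit small example and note that the full-size statement follows by the propagation/specialization already available (every square submatrix of a Cauchy matrix is Cauchy, so a witness in the $n=3$ or $n=4$ case extends).
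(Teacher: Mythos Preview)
Your proposal is correct and follows essentially the same route as the paper: reduce to the hypersurface equations from Theorem~\ref{thm: hypersurface for r=1} for $r=1$ and from Corollary~\ref{corollary: n-2-maximal rigidity} for $r=n-2$, then check that none of these vanish identically on $Cauchy_n$. The paper streamlines both checks by using the $\FS_n\times\FS_n$-invariance of $Cauchy_n$ to reduce each family of equations to a single representative, and then carries out the computation explicitly---for $r=1$ it compares monomials in $\prod(x^i+z_i)$ versus $\prod(x^i+z_{\sigma(i)})$, and for $r=n-2$ it substitutes the Cauchy determinant formula~\eqref{cauchydet} into $\Delta^1_2\Delta^{23}_{13}/\Delta^2_1\Delta^{13}_{23}$ and simplifies to an explicit rational function of $x^1,x^2,x^3,z_1,z_2,z_3$ that is visibly not identically~$1$. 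One caution: your closing remark that ``a witness in the $n=3$ or $n=4$ case extends'' via the submatrix-is-Cauchy observation does not work for the $r=n-2$ equations, since the $\Delta$'s are \emph{complementary} minors whose sizes grow with $n$; the symmetry reduction (or your direct substitution of~\eqref{cauchydet}) is the right way to handle all $n$ at once.
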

\begin{proof}
For the $r=1$ case,  let  $\s$ be  a $k$-cycle in $\mathfrak{S}_k$ and  say there were an equation
$$
y^{i_1}_{j_1}\cdots y^{i_k}_{j_k} - y^{i_1}_{j_{\s(1)}}\cdots y^{i_k}_{j_{\s(k)}}
$$
 $Cauchy_n$  satisfied. By the $\FS_n\times \FS_n$ invariance we may assume the equation is
$$
y^{ 1}_{ 1}\cdots y^{ k}_{ k} - y^{ 1}_{ \s(1) }\cdots y^{ k}_{ \s(k) }
$$
which may be rewritten as
$$
 \frac 1{y^{ 1}_{ 1}\cdots y^{ k}_{ k} }=\frac 1{y^{ 1}_{ \s(1) }\cdots y^{ k}_{ \s(k) }}
$$
The first term contains the monomial $x^1 \cdots x^{k-1} z_k$, but the second does not.

For the $r=n-2$ case, we may assume the equation is
$\Delta^1_2\Delta^{23}_{13}-\Delta^2_1\Delta^{13}_{23}$,
because for a general Cauchy matrix all size $n-2$ minors are nonzero and
we have the $\FS_n\times \FS_n$ invariance.

Write our equation as $\frac{\Delta^1_2}{\Delta^2_1}= \frac{\Delta^{13}_{23}}{\Delta^{23}_{13}}$.
Then using \eqref{cauchydet} and canceling all repeated terms we get
$$
\frac{(x^2-x^3)(z_1-z_3)(x^1+z_3)(x^3+z_2)}{(x^1-x^3)(z_2-z_3)(x^2+z_3)(x^3+z_1)}=1
$$
which fails to hold for a general Cauchy matrix.
\end{proof}

\subsection{The Vandermonde variety} 
In \cite[ p26]{MR2870721}, they ask if a general Vandermonde matrix has maximal rigidity.

Consider the   map
\begin{align}\label{vandmap} \Van_n: \BC^{n+1}&\ra \Mat_n\\
\nonumber (y_0,y_1\hd y_n)&\mapsto
\begin{pmatrix}
(y_0)^{n-1} & \cdots & (y_0)^{n-1}\\
(y_0)^{n-2} y_1 & \cdots & (y_0)^{n-2}y_n\\
(y_0)^{n-3} y_1^2 & \cdots & (y_0)^{n-3} (y_n)^2 \\
& \vdots &\\
(y_1)^{n-1} &\cdots & (y_n)^{n-1}
\end{pmatrix}=(x^i_j)
\end{align}
Define the {\it Vandermonde variety} $\Vand_n$ to be the closure of the image of this map. Note that this variety contains
 $n$ rational normal curves (set all $y_j$ except $y_0,y_{i_0}$ to zero), and is $\FS_n$-invariant (permutation of columns).
The (un-normalized) Vandermonde matrices are the Zariski open subset where $y_0\neq 0$ (set $y_0=1$ to obtain the
usual Vandermonde matrices).
Give $Mat_{n\times n}$ coordinates $x^i_j$.
The variety $\Vand_n$ is contained in the linear space $\{ x^1_1-x^1_2=0\hd x^1_1-x^1_n=0\}$ and it is the zero set of  these
linear equations and the generators of the ideals of the rational normal curves $\Van[y_0,0\hd 0,y_j,0\hd 0]$. Explicitly, fix $j$, the generators
for the rational normal curves are
the $2\times2$ minors of
$$
\begin{pmatrix} x^1_j & x^2_j & \cdots &x^{n-1}_j\\
x^2_j & x^3_j & \cdots &x^{n}_j
\end{pmatrix} 
$$
see, e.g., \cite[p. 14]{Harris}, and thus the equations for the variety
are, fixing $j$ and $i<k$,    the quadratic polynomials $x^i_jx^k_j-x^{i+1}_jx^{k-1}_j$.

To see the assertion about the zero set, first consider the larger parametrized
variety where instead of $y_0$ appearing in each column, in the $j$-th column,
replace $y_0$ by a variable $y_{0j}$. The resulting variety is the join
of $n$ rational normal curves, each contained in a $\pp{n-1}\subset \BP Mat_n$,
where the $\pp{n-1}$'s are just the various columns. 
In general, given varieties $X_j\subset \BP V_j$, $j=1\hd q$, the
join $J(X_1\hd X_q)\subset \BP (V_1\opc V_q)$ has ideal generated
by $I(X_1,\BP V_1)\hd I(X_q,\BP V_q)$, see, e.g. \cite[18.17, \emph{Calculation III}]{Harris}.
The second set of equations exactly describes this join.  Now intersect this
variety with the linear space where  all entries on the first row are set equal.
We obtain the Vandermonde variety.

\begin{proposition}\label{vandrigid}
  $\Vand_n\not\subset \cR[{n,1,n^2-2n}]$ and   $\Vand_n\not\subset \cR[{n,n-2,3}]$, i.e.,   Vandermonde matrices are generically maximally $1$-border
 rigid
and  $(n-2)$-border rigid.
\end{proposition}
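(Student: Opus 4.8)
The plan is to handle the two cases separately, in each case following the pattern established for Cauchy matrices in Proposition \ref{cauchyrig}. For both statements it suffices, by the $\FS_n$-invariance of $\Vand_n$ (permutation of columns) together with the explicit descriptions of the irreducible components given in Theorem \ref{thm: hypersurface for r=1} and Theorem \ref{rnmtwothm}, to exhibit one representative equation of each component type and check that it does not vanish identically on the image of $\Van_n$. Since a general Vandermonde matrix has all minors of all sizes nonzero (the determinant of any square submatrix is again essentially a Vandermonde/confluent determinant, a product of differences of the $y_j$'s, which is not identically zero in the $y$'s), we never need to worry about the "degenerate" component of Theorem \ref{rnmtwothm}(2), namely a vanishing size $n-1$ minor; that component is automatically avoided.

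For the case $r=1$, $s=n^2-2n$: by Theorem \ref{thm: hypersurface for r=1} every irreducible component of $\cR[n,1,n^2-2n]$ is a hypersurface cut out by a binomial $x^{i_1}_{j_1}\cdots x^{i_k}_{j_k} - x^{i_1}_{j_{\tau(1)}}\cdots x^{i_k}_{j_{\tau(k)}}$ with $\tau$ a $k$-cycle, $2\le k\le n$. Using column permutations I may assume $\tau$ is the standard $k$-cycle and the rows used are $1\hd k$, so the equation becomes $x^{1}_{1}\cdots x^{k}_{k} - x^{1}_{2}\cdots x^{k-1}_{k}x^{k}_{1}$. Now substitute the Vandermonde entries $x^i_j = (y_0)^{n-i}(y_j)^{i-1}$ (I take $y_0=1$ for concreteness, which is harmless since $\Vand_n$ is the closure of that locus). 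Each of the two monomials becomes a monomial in $y_1\hd y_k$, and the equation holds on $\Vand_n$ iff these two monomials are equal as polynomials. The first monomial is $\prod_{i=1}^{k} (y_i)^{i-1} = y_2 y_3^2\cdots y_k^{k-1}$, while the second is $\prod_{i=1}^{k-1}(y_{i+1})^{i-1}\cdot (y_1)^{k-1} = y_1^{k-1} y_3 y_4^2\cdots y_k^{k-2}$. These differ: for instance the exponent of $y_1$ is $0$ in the first and $k-1\ge 1$ in the second. Hence the binomial does not vanish on a general Vandermonde matrix, proving $\Vand_n\not\subset \cR[n,1,n^2-2n]$.

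For the case $r=n-2$, $s=3$: by Theorem \ref{rnmtwothm}, and because all size $n-1$ minors of a general Vandermonde matrix are nonzero (so we avoid the second type of component), it suffices to show the degree $2n-3$ equation of type \eqref{eE}, say $\Delta^{i_1}_{j_1}\Delta^{i_2i_3}_{j_2j_3} - \Delta^{i_2}_{j_2}\Delta^{i_1i_3}_{j_1j_3}$ with all $i$'s and all $j$'s distinct, does not vanish on $\Vand_n$. Rewrite it as the identity $\frac{\Delta^{i_1}_{j_1}}{\Delta^{i_2}_{j_2}} = \frac{\Delta^{i_1 i_3}_{j_1 j_3}}{\Delta^{i_2 i_3}_{j_2 j_3}}$; here each $\Delta$ is a minor of a Vandermonde matrix and thus, by a confluent-Vandermonde determinant evaluation (or simply Laplace-expanding and using that deleting rows/columns of a Vandermonde matrix yields a generalized Vandermonde/Schur-polynomial determinant), factors into monomials in the $y_j$'s times differences $(y_a-y_b)$. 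After cancellation of the common factors one is left with a rational expression in $y_1\hd y_n$ which I must check is not identically $1$. I expect the main obstacle — and the only real computation — to be bookkeeping these minor evaluations cleanly: deleting one row from the Vandermonde matrix produces a "skew" Vandermonde whose determinant is a Schur polynomial times the full difference product, so the ratios above are explicit ratios of Schur polynomials (in disjoint or overlapping variable sets) times explicit monomial/difference factors. Once written out, specializing to generic distinct $y_j$ (e.g. $y_j = q^{j}$ for a transcendental $q$, or just comparing top-degree terms) shows the two sides are unequal, exactly as in the Cauchy computation at the end of Proposition \ref{cauchyrig}. This completes both cases and hence the proof that a general Vandermonde matrix is maximally $1$-border rigid and maximally $(n-2)$-border rigid.
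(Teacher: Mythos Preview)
Your $r=1$ argument contains an over-reduction. The Vandermonde variety has only the $\FS_n$ symmetry of \emph{column} permutations, not row permutations. Column permutations (together with conjugating the cycle) let you arrange the columns to be $1,\ldots,k$ and $\tau$ to be the standard $k$-cycle, but they do \emph{not} let you force the row indices $i_1,\ldots,i_k$ to be $1,\ldots,k$. Your explicit monomial computation $y_2 y_3^2\cdots y_k^{k-1}$ versus $y_1^{k-1}y_3\cdots y_k^{k-2}$ is therefore only the special case $i_\ell=\ell$ and does not cover all components. The paper's fix is painless: keep general distinct rows $i_1,\ldots,i_k$, arrange (by cyclically relabelling) that $i_k=\max_\ell i_\ell$, and observe that the first monomial $\prod_\ell y_\ell^{\,i_\ell-1}$ contains $y_k^{\,i_k-1}$ while in the second monomial the exponent of $y_k$ is $i_{\sigma^{-1}(k)}-1<i_k-1$; hence the two monomials differ.

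Your $r=n-2$ argument is only a plan, not a proof: you never actually carry out the non-vanishing check. Invoking Schur-polynomial or confluent-Vandermonde formulas for $\Delta^{i}_{j}$ is possible in principle, but the cancellation analysis is not trivial and you have not done it (and again, since only columns can be normalized, you must keep the row indices $i_1,i_2,i_3$ general, which forces a case distinction). The paper avoids closed-form determinant evaluation entirely: after normalizing the columns to $1,2,3$ so that the equation is $\Delta^{j}_{2}\Delta^{ik}_{13}-\Delta^{k}_{3}\Delta^{ij}_{12}$, it looks at the coefficient of $y_2$ (using that deleting column $\ell$ kills the $y_\ell$-dependence) and reduces the non-vanishing to a product of smaller Vandermonde minors, with a short case split on whether $2\in\{i,j,k\}$. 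That argument is two lines; your Schur-polynomial route, if completed, would be substantially longer.
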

\begin{proof}
Say we had $Vand_n$ in some component of $\cR[{n,1,n^2-2n}]$. 
Using the $\FS_n$-invariance, we may assume the equation it satisfies  is
$x^{i_1}_1\cdots x^{i_k}_k-x^{i_1}_{\s(1)}\cdots x^{i_k}_{\s(k)}$ for some $k$, where $\sigma \in \FS_k$ is a $k$-cycle. Assume $i_k = \max\{i_\ell\}$. Then the first monomial is divisible by $(y_k)^{i_k-1}$ but the second is not.

For the $n-2$-rigidity, since no minors are zero, by  Corollary \ref{corollary: n-2-maximal rigidity}
and the $\FS_n$-invariance, it suffices
to consider equations of the form
$\Delta^j_2\Delta^{ik}_{13}-\Delta^k_3\Delta^{ij}_{12}$,
where $S = \{x^i_1,x^j_2,x^k_3\}$.
First consider the case that $2 \notin \{i,j,k\}$.
The $y_2$-linear coefficient of $\Delta^j_2\Delta^{ik}_{13}-\Delta^k_3\Delta^{ij}_{12}$
is
$\Delta^j_2\Delta^{ik2}_{132}-\Delta^{k2}_{32}\Delta^{ij}_{12}$.
This expression is nonzero, because as a polynomial in $y_1$ it has linear coefficient
$\Delta^{j2}_{21}\Delta^{ik2}_{132}$, which is a product of minors and hence nonzero.
Now for the other case let $i=2$, $j \neq 2$, $k \neq 2$.
But the $y_2$-linear coefficient of $\Delta^j_2\Delta^{2k}_{13}-\Delta^k_3\Delta^{2j}_{12}$
is $-\Delta^{k2}_{32}\Delta^{2j}_{12}$, which is also nonzero.
\end{proof}

\subsection{The DFT matrix}
The following \lq\lq folklore result\rq\rq\ was communicated to us (independently)  by A. Kumar and A. Wigderson:
\begin{proposition} Let $A$ be a matrix with an eigenvalue of multiplicity $k>\sqrt n$. Then $A\in \hat\cR[n,n-k,n]^0$.
\end{proposition}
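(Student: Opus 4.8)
The plan is to show that having an eigenvalue $\lambda$ of multiplicity $k$ forces $A$ to lie in $\hat\cR[n,n-k,n]^0$, i.e.\ to write $A = B + C$ where $B$ has rank at most $n-k$ and $C$ has at most $n$ nonzero entries. The natural candidate for $B$ is $A - \lambda \Id$, or more precisely a rank-adjusted version of it. First I would use the hypothesis: if $\lambda$ is an eigenvalue of multiplicity $k$ (I would take this to mean geometric multiplicity $k$, i.e.\ $\dim\ker(A-\lambda\Id) \geq k$), then $\rank(A - \lambda\Id) \leq n - k$. Thus $A - \lambda\Id \in \hat\s_{n-k,n}$. Setting $C = \lambda\Id$, we have $A = (A-\lambda\Id) + \lambda\Id$, and $\lambda\Id$ has at most $n$ nonzero entries (exactly $n$ if $\lambda \neq 0$, fewer if $\lambda = 0$, in which case the statement is even easier). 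Hence $\rig_{n-k}(A) \leq n$, which is precisely $A \in \hat\cR[n,n-k,n]^0$.

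The one point that needs care is the interpretation of \emph{multiplicity}: if the paper means algebraic multiplicity, the geometric multiplicity could be as small as $1$, and then $\rank(A-\lambda\Id)$ need only be $\leq n-1$, which is not enough. I expect this is the main (and essentially only) obstacle, and it is resolved by observing that the $\sqrt n$ threshold is a red herring for the diagonal-perturbation argument — the argument above gives the stronger containment $A \in \hat\cR[n,n-k,n]^0$ whenever the \emph{geometric} multiplicity is at least $k$, with no hypothesis relating $k$ to $\sqrt n$. So either the intended reading is geometric multiplicity (in which case the proof is the two lines above and the $k > \sqrt n$ hypothesis is vacuous here, perhaps relevant only for a sharper companion statement), or one must work slightly harder: pass to the generalized eigenspace of dimension equal to the algebraic multiplicity $k$, and on that block replace the Jordan structure. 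Concretely, choosing a basis adapted to the decomposition $\CC^n = W \oplus W'$ with $W$ the generalized $\lambda$-eigenspace ($\dim W = k$), one can perturb the restriction $A|_W$ by a matrix supported on $W$ (at most $k^2 \leq n$ entries once $k \leq \sqrt n$) so that it becomes $\lambda\Id_W$; then $A$ minus that perturbation has $W$ as an honest eigenspace of dimension $k$, hence rank $\leq n-k$, and the perturbation has at most $k^2 \leq n$ nonzero entries. This is exactly where the hypothesis $k > \sqrt n$... — rather, $k \leq \sqrt n$ — would be used; but since the proposition states $k > \sqrt n$, I suspect the intended statement is with geometric multiplicity and the simple two-line argument, and the $\sqrt n$ is there to make the conclusion nontrivial (so that $n-k < n - \sqrt n$, keeping the rigidity parameter in an interesting range).

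I would therefore write the proof in the clean form: let $\lambda$ be the eigenvalue of multiplicity $k$; then $\rank(A - \lambda\Id) \leq n-k$, so $A - \lambda\Id \in \hat\s_{n-k,n}$; since $\lambda\Id$ has at most $n$ nonzero entries, $A = (A-\lambda\Id) + \lambda\Id$ exhibits $\rig_{n-k}(A)\leq n$, i.e.\ $A \in \hat\cR[n,n-k,n]^0$. If algebraic multiplicity is meant, I would insert the generalized-eigenspace block-perturbation step described above, noting that $k^2 \leq n$ is what keeps the perturbation within $n$ entries, and this is the only place the size hypothesis enters. I do not anticipate any further difficulty; the content is entirely the observation that a large eigenspace is a rank defect, and a rank defect of size $k$ is what $(n-k)$-rigidity with a diagonal sparse part detects.
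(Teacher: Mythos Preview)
Your clean two-line argument is exactly the paper's proof: write $A=(A-\lambda\Id)+\lambda\Id$ with $\rank(A-\lambda\Id)\leq n-k$ and $\lambda\Id$ having at most $n$ nonzero entries. The paper confirms your reading of the $k>\sqrt{n}$ hypothesis: it is there only so that the conclusion is nontrivial, since $\hat\cR[n,n-k,n]^0\subsetneq Mat_n$ requires $(n-k)(2n-(n-k))+n<n^2$, i.e., $k^2>n$; multiplicity is implicitly geometric, and your generalized-eigenspace digression (which, as you noticed, would need the opposite inequality $k^2\leq n$) is not needed.
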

\begin{proof} Let $\l$ be the eigenvalue with multiplicity $k$, then $A-\l Id$ has rank $n-k$.
To have the condition be nontrivial, we need $r(2n-r)+s=(n-k)(2n-(n-k))+n<n^2$, i.e., $n<k^2$.  
\end{proof}

Equations for the variety of matrices with eigenvalues of high multiplicity can be
obtained via resultants applied to the coefficients of the characteristic polynomial of a matrix.

\begin{corollary}Let $n=2^k$, then  $DFT_{n}\in \hat\cR[n,\frac{3n}4,n]^0$.
\end{corollary}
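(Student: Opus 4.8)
The plan is to deduce this immediately from the preceding Proposition, once we know that $DFT_n$ has an eigenvalue of large multiplicity. The only real input is the classical description of the spectrum of the Fourier matrix. Writing $F=DFT_n$ with $F^i_j=\omega^{(i-1)(j-1)}$, a one-line computation gives $(F^2)^i_j=\sum_{l=1}^n\omega^{(l-1)(i+j-2)}$, which equals $n$ when $i+j\equiv 2\pmod n$ and $0$ otherwise; thus $F^2=nP$, where $P$ is the permutation matrix of the involution $i\mapsto 2-i\bmod n$, and since $P^2=\Id$ we get $\left(\tfrac1{\sqrt n}F\right)^4=\Id$. Hence $DFT_n$ is diagonalizable ($\tfrac1{\sqrt n}F$ is unitary) and all of its eigenvalues lie in $\{\pm\sqrt n,\pm i\sqrt n\}$.

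Next I would invoke pigeonhole: an $n\times n$ diagonalizable matrix with at most four distinct eigenvalues has an eigenvalue $\lambda$ of multiplicity $m\ge\lceil n/4\rceil$, and since $4\mid n$ (as $n=2^k$, $k\ge 2$) this reads $m\ge n/4$. One can be sharper here — for $4\mid n$ the eigenvalue $\sqrt n$ has multiplicity exactly $n/4+1$, by the classical computation of the DFT eigenvalue multiplicities — but $m\ge n/4$ is all that is required. Applying the Proposition to this eigenvalue, which is legitimate as soon as $m>\sqrt n$, i.e. $n/4>\sqrt n$, i.e. $n\ge 32$, yields $DFT_n\in\hat\cR[n,n-m,n]^0$. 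Finally, since a rank-at-most-$r$ summand is also a rank-at-most-$r'$ summand whenever $r\le r'$, one has $\hat\cR[n,r,s]^0\subseteq\hat\cR[n,r',s]^0$; as $n-m\le 3n/4$ this gives $DFT_n\in\hat\cR[n,\tfrac{3n}{4},n]^0$.

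The small cases $n\in\{4,8,16\}$ (the only powers of two with $4\mid n$ and $n\le 16$, and for which the statement is essentially vacuous anyway) I would dispose of by hand: for $n=8,16$ the sharper multiplicity $n/4+1$ already exceeds $\sqrt n$, so the Proposition still applies and the same argument goes through; and for $n=4$ the top-left $3\times 3$ minor of $DFT_4$ equals $-4\ne 0$, so one may absorb all but the $(4,4)$ entry into a rank-$\le 3$ matrix, giving $\rig_3(DFT_4)\le 1\le 4$ directly. I do not expect any genuine obstacle: the entire content is the identity $F^4=n^2\Id$ together with the resulting four-point spectrum; the rigidity step is the black box supplied by the Proposition, and the passage from $n-m$ to $\tfrac{3n}{4}$ is just monotonicity of $\hat\cR[n,\,-\,,s]^0$ in the rank parameter.
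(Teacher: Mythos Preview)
Your proof is correct and follows the same approach as the paper: both rely on the Proposition together with the fact that $DFT_n$ has (at most) four distinct eigenvalues, each of multiplicity about $n/4$. The paper's own proof is a single sentence recording that fact; you supply the standard derivation via $F^4=n^2\Id$, the diagonalizability, the monotonicity $\hat\cR[n,r,s]^0\subseteq\hat\cR[n,r',s]^0$ for $r\le r'$, and a careful treatment of the small cases where $n/4\le\sqrt n$, none of which the paper spells out.
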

\begin{proof} The eigenvalues of $DFT_n$ are $\pm 1,\pm \sqrt{-1}$ with multiplicity roughly $\frac n4$ each.
\end{proof}

 \begin{proposition} Any matrix with $\BZ_2$ symmetry (either symmetric or symmetric about the anti-diagonal) is not
maximally $1$-border rigid.
 \end{proposition}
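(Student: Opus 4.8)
The plan is to show that any such $M$ lies in $\hat\cR[n,1,n^2-2n]$. Since $M$ is \emph{maximally $1$-border rigid} exactly when $M\notin\hat\cR[n,1,(n-1)^2-1]$ and $(n-1)^2-1=n^2-2n$ is the hypersurface value \eqref{hypers} (and $n\ge 3$ because $r=1\le n-2$), this is precisely the statement. By Theorem \ref{thm: hypersurface for r=1}, every irreducible component of $\cR[n,1,n^2-2n]$ is a hypersurface whose ideal is generated by a single cycle binomial $x^{i_1}_{j_1}\cdots x^{i_k}_{j_k}-x^{i_1}_{j_{\tau(1)}}\cdots x^{i_k}_{j_{\tau(k)}}$. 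Hence it suffices to produce one such binomial vanishing at $M$; then $[M]$ lies in the corresponding component.

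I would treat both symmetry types uniformly. Let $\phi\colon[n]\to[n]$ be the identity in the symmetric case and the reversal $\phi(a)=n+1-a$ in the anti-diagonal case; in either case $\phi$ is an involution, and writing $M=(m^i_j)$ the hypothesis reads $m^i_j=m^{\phi(j)}_{\phi(i)}$. Fix three distinct indices $a,c,e\in[n]$ (possible since $n\ge 3$) and set
\[
b:=x^a_{\phi(c)}\,x^c_{\phi(e)}\,x^e_{\phi(a)}-x^a_{\phi(e)}\,x^c_{\phi(a)}\,x^e_{\phi(c)}.
\]

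The three things to check are short. (i) $b(M)=0$: applying $m^i_j=m^{\phi(j)}_{\phi(i)}$ to each factor of the first monomial and using $\phi^2=\mathrm{id}$ turns it into $m^c_{\phi(a)}m^e_{\phi(c)}m^a_{\phi(e)}$, i.e.\ the second monomial. (ii) $b\in\BI(\s_1)$: on a rank--one matrix $x^i_j=u_iv_j$ both monomials equal $u_au_cu_e\,v_{\phi(a)}v_{\phi(c)}v_{\phi(e)}$, so $b$ vanishes on the rank--one matrices and hence on $\s_1$. (iii) $b$ matches the normal form of Theorem \ref{thm: hypersurface for r=1}: with $(i_1,i_2,i_3)=(a,c,e)$ and $(j_1,j_2,j_3)=(\phi(c),\phi(e),\phi(a))$ the second monomial is $x^{i_1}_{j_2}x^{i_2}_{j_3}x^{i_3}_{j_1}$, i.e.\ $\tau=(1\,2\,3)$, and $I=\{a,c,e\}$, $J=\phi(\{a,c,e\})$ both have size $3$ since $\phi$ is a bijection; the two monomials are distinct because $\phi(c)\ne\phi(e)$, so $b$ is (up to sign) the binomial attached to an honest $6$-cycle of $K_{n,n}$ and generates the ideal of an irreducible (degree $3$) component of $\cR[n,1,n^2-2n]$. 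Combining (i)--(iii): that component is the hypersurface $\{b=0\}$, and it contains $[M]$, so $M\in\hat\cR[n,1,n^2-2n]$. (For $n=3$ this recovers the component $S=\{x^1_1,x^2_2,x^3_3\}$ with equation \eqref{eE}.)

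The only slightly delicate point — and the step I would write out most carefully — is (iii): verifying that $b$ genuinely fits the normal form of Theorem \ref{thm: hypersurface for r=1} and that its two monomials are actually different, so that $b$ is a nonzero cycle binomial cutting out an honest hypersurface component rather than a trivial relation. Everything else is a one-line substitution using $\phi^2=\mathrm{id}$.
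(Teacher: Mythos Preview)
Your proof is correct and follows the same idea as the paper: exhibit a cycle binomial from Theorem~\ref{thm: hypersurface for r=1} that vanishes on $M$ by virtue of the symmetry. The paper uses the full $n$-cycle binomial $x^1_2x^2_3\cdots x^{n-1}_nx^n_1 - x^2_1 x^3_2\cdots x^n_{n-1}x^1_n$ and only writes out the symmetric case explicitly, whereas you use a $3$-cycle and treat both symmetry types uniformly via the involution $\phi$; these are cosmetic differences, not a different route.
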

 \begin{proof}
 Say the matrix is symmetric. Then $x^1_2x^2_3\cdots x^{n-1}_nx^n_1 - x^2_1 x^3_2\cdots x^{n}_{n-1}x^1_n$ is in the ideal
 of the hypersurface $\BJ(\s_1,L^S)$ where  $S$  is the span of all the entries not appearing in the expression.
 \end{proof}

\subsection{The DFT curve}
We define two  varieties that contain the DFT matrix, the first corresponds to a curve in projective space.

Define the {\it DFT curve} $CDFT_n\in Mat_n$ to be the image
of the map
\begin{align}
\label{dftcurve} \BC^2&\ra Mat_n\\
\nonumber (x,w)&\mapsto
\begin{pmatrix}
x^{n-1}& x^{n-1}& x^{n-1}& \cdots &x^{n-1}\\
x^{n-1}& x^{n-2}w&x^{n-3}w^2& \cdots & w^{n-1}\\
&\vdots & & &\\
x^{n-1}&  w^{n-1}&x^{1}w^{n-2}& \cdots & x^{n-2}w
\end{pmatrix}
\end{align}

This curve is a subvariety of $\Vand_n$ where $y_0=y_1=x$ and $y_j=w^{j-1}$. From this one obtains its equations.

\begin{proposition}   For general $w$, and even 
 for $w$ a fifth root of unity, the matrix
$$
M(w):=\begin{pmatrix} 1&1&1&1&1\\ 1&w&w^2&w^3&w^4\\ 1&w^2&w^4&w&w^3\\ 1&w^3&w&w^4&w^2\\ 1&w^4&w^3&w^2&w
\end{pmatrix}
$$
satisfies $M(w)\not\in \hat\cR[5,3,2]$, 
$M(w)\in\cR[5,3,3]^0$, $M(w)\not\in \hat\cR[5,1,12]$,
and $M(w) \in \hat\cR[5,1,13]^0$.
\end{proposition}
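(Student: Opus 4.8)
The plan is to verify each of the four claims about the specific $5\times 5$ matrix $M(w)$ using the results already established in the paper, namely the dimension count \eqref{expdimR}, the hypersurface criterion \eqref{hypers}, and the explicit equations for the relevant varieties. First I would record the relevant numerology: for $n=5$, $r=3$ one has $(n-r)^2-1 = 3$, so $\hat\cR[5,3,s]$ is a hypersurface exactly when $s=3$, is a proper subvariety (of codimension $3-s+1$) for $s\leq 2$, and fills $Mat_5$ for $s\geq 4$; similarly for $n=5$, $r=1$ one has $(n-r)^2-1 = 15$, so $\hat\cR[5,1,s]$ is a hypersurface exactly when $s=13$ (since $n^2-2n = 15$, wait: $(n-r)^2-1 = 16-1=15$), hence a proper subvariety for $s\leq 14$ and all of $Mat_5$ for $s\geq 16$. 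I will need to be careful here: the claim is $M(w)\notin\hat\cR[5,1,12]$ and $M(w)\in\cR[5,1,13]^0$, so the $s=13$ statement is the on-the-nose membership statement (not the hypersurface boundary), and the $s=12$ statement is non-membership in a codimension-three variety.

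Second, for the two \emph{membership} statements ($M(w)\in\cR[5,3,3]^0$ and $M(w)\in\hat\cR[5,1,13]^0$) I would exhibit an explicit decomposition $M(w) = A + B$ with $\trank A \le r$ and $B$ having the required number of nonzero entries. Since $M(w)$ is a submatrix of a $DFT$-type matrix and lies on the DFT curve inside $\Vand_5$, and $M(w)$ has eigenvalues among fifth roots of unity (or for general $w$, one can use the curve structure), the cleanest route for $r=3$ is to note that $M(w) - \lambda\,\mathrm{Id}$ drops rank for a suitable eigenvalue pattern, or more directly to find three columns (rows) spanning a rank $\le 3$ space after altering $3$ entries — I would do this by a direct linear-algebra computation at $w$ a primitive fifth root of unity and then argue the decomposition persists for general $w$ by semicontinuity/constructibility of the locus. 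For $r=1$, $s=13$: write $M(w) = B' + C$ where $B'$ is a rank-one matrix (e.g. the all-ones matrix, or $v v^T$ for a suitable $v$) and $C$ has at most $13$ nonzero entries; since $M(w)$ has all entries equal to $1$ in its first row and first column, subtracting the rank-one all-ones matrix already kills $2\cdot 5 - 1 = 9$ entries, leaving $16$, which is too many — so instead I would subtract a rank-one matrix chosen to vanish on as many entries as possible, and the combinatorics of the DFT/Vandermonde structure should push this down to $13$; this is exactly the kind of bound the propagation/autoelimination lemmas (Lemma \ref{Lemma: autoelimination}, Lemma \ref{Lemma: binomial equation}) control, so I would lean on them.

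Third, for the two \emph{non-membership} statements ($M(w)\notin\hat\cR[5,3,2]$ and $M(w)\notin\hat\cR[5,1,12]$) I would produce a polynomial vanishing on every irreducible component and show it does not vanish on $M(w)$. By the description of the components, $\hat\cR[5,3,2] = \bigcup_S \hat\BJ(\s_{3,5},L^S)$ over $2$-element $S$, and by Proposition \ref{usefulprop} each such component's ideal contains, for instance, a size-$4$ minor avoiding the two entries of $S$ (when such a minor exists) or a combination of the form \eqref{s1}. Since $M(w)$ is a submatrix of a $DFT_p$-type matrix with $p=5$ prime, by \cite[\S 2.2]{MR2539154} (cited in the Corollary after Proposition \ref{pro:minorsnonzero}) \emph{all} minors of all sizes of $M(w)$ are nonzero. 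This immediately handles any component whose ideal contains a single minor. For components where $S$ forces us to use a genuine multi-term equation \eqref{s1}, I would evaluate the relevant $2r+1 = 7$-degree binomial-type expression at $M(w)$: using the Cauchy/Vandermonde-style determinant formulas (as in the proof of Proposition \ref{cauchyrig} and Proposition \ref{vandrigid}) one reduces the non-vanishing to checking that a certain ratio of products of root-differences is not $1$, which fails for $w$ primitive fifth root of unity by a direct root-of-unity computation, and hence fails for general $w$. The analogous argument with \eqref{hypers}, Theorem \ref{thm: hypersurface for r=1}, and the binomial equations handles $\hat\cR[5,1,12]$: every component's ideal contains either a $2\times 2$ minor avoiding $S$ — nonzero since all minors are nonzero — or a binomial $x^{i_1}_{j_1}\cdots x^{i_k}_{j_k} - x^{i_1}_{j_{\tau(1)}}\cdots x^{i_k}_{j_{\tau(k)}}$, and this binomial does not vanish on $M(w)$ for the same reason as in Proposition \ref{vandrigid} (as a polynomial in one of the $y$-variables or equivalently in $w$, the two monomials are genuinely different and the resulting polynomial in $w$ is not identically zero, as one checks it is nonzero at a primitive fifth root of unity).

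The main obstacle I expect is the bookkeeping in the $r=1$, $s=13$ membership claim: one must actually produce a rank-one plus $13$-sparse decomposition of $M(w)$ and verify $13$ (and not $14$ or more) is achievable, which requires choosing the rank-one piece cleverly using the repeated-entry structure of the first row and column together with the cyclic structure of the $DFT_5$ matrix; the autoelimination lemma suggests the right count but the explicit certificate needs care. A secondary obstacle is ensuring all four statements, verified concretely at $w$ a primitive fifth root of unity, genuinely extend to \emph{general} $w$ — for the non-membership (open) conditions this is automatic by semicontinuity once they hold at one point, but for the membership (closed) conditions $M(w)\in\cR^0$ one must check the explicit decomposition deforms, i.e. the auxiliary parameters in the decomposition vary algebraically with $w$, which I would handle by writing the decomposition with coefficients rational in $w$ from the outset rather than specializing first.
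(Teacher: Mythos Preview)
The paper's own proof is a single sentence: ``This is proved by explicit calculation at \url{www.nd.edu/~jhauenst/rigidity}.'' There is no analytic argument; everything is delegated to a computer check. Your proposal is therefore a genuinely different route, and parts of it do go through, but two of the four claims are not handled.

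For $M(w)\notin\hat\cR[5,3,2]$ your idea is correct and complete: for any $2$-element $S$ one can always choose a $4\times 4$ minor avoiding $S$ (omit the row containing one entry and the column containing the other), that minor lies in $\BI(\BJ(\s_3,L^S))$ by Proposition~\ref{jjens}, and all minors of $DFT_5$ are nonzero. For $M(w)\in\hat\cR[5,1,13]^0$, the certificate you are groping for is supplied verbatim by the paper's very next proposition on the DFT curve: for $p=5$ the bound $\rig_1\le (p-1)^2+1-(p-1)=13$ is exactly the claim, and the explicit rank-one matrix is written there.

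Your treatment of $M(w)\notin\hat\cR[5,1,12]$ is misdirected. Theorem~\ref{thm: hypersurface for r=1} concerns the hypersurface case $s=n^2-2n=15$, not $s=12$, so the binomial equations there are irrelevant. What actually works is a pigeonhole: any $13$-element $S^c\subset[5]\times[5]$ must contain a full $2\times 2$ block (the Zarankiewicz number $z(5,5;2,2)$ equals $12$), that $2\times 2$ minor is in $\BI(\BJ(\s_1,L^S))$, and it is nonzero on $M(w)$. You gesture at $2\times 2$ minors but never supply this counting step, without which the argument is incomplete.

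The real gap is $M(w)\in\cR[5,3,3]^0$. Your eigenvalue suggestion yields at best $s=5$, not $s=3$: for $w$ a primitive fifth root the multiplicities of $DFT_5$ are $(2,1,1,1)$, so $M(w)-\lambda\,\mathrm{Id}$ has rank $3$ for the right $\lambda$, but $\lambda\,\mathrm{Id}$ has five nonzero entries; and for general $w$ there is no such spectral structure at all. Producing a rank-$3$-plus-$3$-sparse decomposition of $M(w)$ is precisely the kind of search the paper relegates to machine computation, and nothing in your outline manufactures one.
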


This is proved by explicit calculation at \url{www.nd.edu/~jhauenst/rigidity}. 
For a more general DFT matrix we have: 

\begin{proposition}
Let $p$ be prime, then the DFT curve $CDFT_p$  satisfies,
or all $A\in CDFT_p$
$$\rig_1(A)\leq (p-1)^2+1-(p-1).$$  
In other words, $CDFT_p\subset \hat\cR[p,1,p^2-3p+3]^0$.
\end{proposition}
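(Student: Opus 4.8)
The plan is to exhibit, for each matrix $A \in CDFT_p$, a subset $S \subset \{x^i_j\}$ of cardinality $p^2 - 3p + 3$ such that $A \in \hat\BJ(\s_1, L^S)$, i.e., $A$ can be written as a rank-one matrix plus a matrix supported on the $p^2-3p+3$ entries outside $S$. Equivalently, I need to find a rank-one matrix $B$ such that $A - B$ has at most $p^2 - 3p + 3$ nonzero entries; since $p^2 - 3p + 3 = (p-1)^2 + 1 - (p-1)$, this amounts to killing $(p-1) - 1 = p-2$ entries of $A$ beyond what the trivial ``subtract a rank-one matrix agreeing on one row'' bound $(p-1)^2 + 1$ gives (here $r = 1$, so the generic bound is $(n-1)^2 = (p-1)^2$, and the claim improves the $r$-rigidity upper bound). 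Recall that $CDFT_p$ sits inside $\Vand_p$ via $y_0 = y_1 = x$, $y_j = w^{j-1}$, so the entries have the explicit form $x^i_j = x^{p-i} w^{(i-1)(j-1)}$ for $j \geq 2$ (with appropriate reading of the first row and column from \eqref{dftcurve}); the key structural feature is that each column $j$ is a geometric progression in the row index with ratio $w^{j-1}$.

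First I would recall the baseline: for any matrix, subtracting the rank-one matrix whose first row equals $A$'s first row and whose first column is chosen to match $A$'s first column zeroes out the entire first row and first column, leaving an $(p-1)\times(p-1)$ block, so $\rig_1(A) \leq (p-1)^2 + 1$ (the $+1$ accounting for the $(1,1)$ overlap, or rather the precise count in \eqref{hypers}). The improvement must come from the special geometry of the DFT curve. Here is the mechanism I expect to work: because column $j$ of the Vandermonde-type matrix is $(\text{const}_j) \cdot (1, t_j, t_j^2, \ldots, t_j^{p-1})^{\top}$ for a suitable $t_j$ — a point on a rational normal curve — any two columns together lie in a $\pp 2$, and more importantly, a single rank-one matrix can be chosen to agree with $A$ along a full antidiagonal or along a cleverly chosen set of $2(p-1) - 1$ entries rather than just $p-1$. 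Concretely, I would pick the rank-one matrix $B = u \otimes v$ where $v$ matches the first row of $A$ and $u$ is chosen so that $B$ also agrees with $A$ in a second row — but this only works if that second row is proportional to the first, which it is not in general. The correct approach is subtler: use that $A$ restricted to a $2 \times (p-1)$ submatrix consisting of rows $1$ and $i$ has rank $1$ precisely when... it doesn't; instead one exploits that the rational-normal-curve structure forces certain $2\times 2$ minors to vanish, and these vanishing minors are exactly the quadratic generators $x^i_j x^k_j - x^{i+1}_j x^{k-1}_j$ of $\Vand_n$ listed in the text. So along each column, consecutive-index products coincide, which lets a single rank-one correction absorb more entries per column than naively expected.

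The cleanest route is probably: choose $S^c$ to be the union of the first column, the first row, and one extra carefully placed entry per remaining row — no wait, the count $p^2 - 3p + 3 = (p-1)^2 - (p-1) + 1 - (p-1) + (p-1)$... let me instead organize $S^c$ as follows. Take $S^c$ to consist of all entries of $A$ that one genuinely needs after subtracting an optimal rank-two-looking but actually rank-one matrix adapted to the curve. I would subtract the rank-one matrix $B$ whose column space is spanned by the vector $(1, w^{-1}, w^{-2}, \ldots)$ type vector matching one of the geometric progressions; because the columns of $A$ are themselves geometric progressions with distinct ratios $w^{j-1}$, subtracting $B = (\text{that progression}) \otimes (\text{first row})$ makes column $2$ (say) of $A - B$ identically zero, and simultaneously, by the Vandermonde determinant formula \eqref{cauchydet}-analogue, reduces the rank of the remaining interaction so that the total support of $A - B$ is $(p-1)^2$ entries in the last $p-1$ columns minus the $p-1$ entries of the now-vanished column $2$, plus the first row: that is $(p-1)^2 - (p-1) + (p-1) = (p-1)^2$, still not enough. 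I clearly need to kill one more column's worth partially; the extra $p-2$ savings should come from the first row's entries all being equal (the defining linear relation $x^1_1 = x^1_2 = \cdots$ of $\Vand_n$), so that after the rank-one subtraction the first row of $A - B$ is $(c, 0, 0, \ldots, 0, *, \ldots)$ with many zeros. Combining: zero column + constant-first-row gives exactly the count. So the proof structure is: (1) write $A$ via the explicit Vandermonde parametrization; (2) define $B$ = explicit rank-one matrix using one column's geometric progression tensored with the (constant) first row; (3) compute $A - B$ entrywise and verify its nonzero entries number at most $p^2 - 3p + 3$, using the defining equations of $\Vand_p$ (equal first row, vanishing $2\times 2$ minors along columns) to see the cancellations; (4) conclude $A \in \hat\BJ(\s_1, L^S)^0 \subseteq \hat\cR[p,1,p^2-3p+3]^0$, noting primality of $p$ ensures all the $w^{j-1}$ are distinct nontrivial roots of unity so the generic-position hypotheses hold.

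The main obstacle I anticipate is getting the bookkeeping of the support of $A - B$ exactly right — pinning down precisely which $p-2$ extra entries become zero and confirming the total is $p^2 - 3p + 3$ rather than $p^2 - 2p + 1$ or $p^2 - 3p + 2$ — and verifying that the rank-one matrix $B$ can be chosen with \emph{entries in the field generated by} $x, w$ (so that it is a genuine point of $\hat\BJ^0$, not a limit) for \emph{all} $A \in CDFT_p$ uniformly, not just generic ones. The primality of $p$ should enter exactly here: it guarantees $w^{j-1} \neq w^{k-1}$ for $j \neq k$ in $[p]$ when $w$ is a primitive $p$-th root of unity, so no two columns collide and the subtraction is well-defined; for general $w$ it is automatic. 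I would double-check the edge behavior of the first row/column of \eqref{dftcurve} (where the parametrization has $y_0 = y_1$, making row $1$ and row $2$ interact specially) since that is where an off-by-one in the count is most likely to hide.
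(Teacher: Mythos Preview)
Your overall framing is right --- you want a rank-one matrix $B$ so that $A-B$ has at most $p^2-3p+3$ nonzero entries --- but the specific $B$ you propose does not achieve the bound, and the missing idea is not a bookkeeping detail.

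If $B$ is (a column of $A$) tensored with (the constant first row), then $A-B$ has the first row zero and that one column zero, but generically every other entry is nonzero. That is $p^2 - p - (p-1) = (p-1)^2$ nonzeros, exactly the trivial bound you started from; no further cancellations occur, because the Vandermonde quadrics $x^i_j x^k_j - x^{i+1}_j x^{k-1}_j$ are relations \emph{within a single column} and do not force any entry of $A-B$ to vanish once you have subtracted a rank-one matrix. Your hope that the constant first row yields ``extra $p-2$ savings'' is already spent in killing the first row. So the approach as written stalls at $(p-1)^2$.

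The paper's argument supplies the missing mechanism, and it is arithmetic rather than Vandermonde-geometric. Work at the actual DFT point (i.e.\ $x=1$, $w$ a primitive $p$-th root of unity; the general $x\neq 0$ case reduces to this by row scaling). Then the $(i,j)$ entry is $w^{(i-1)(j-1)}$. Take
\[
B \;=\; \begin{pmatrix} w^{-1}\\ 1\\ \vdots\\ 1\end{pmatrix}\begin{pmatrix} 1 & w & \cdots & w\end{pmatrix},
\]
a rank-one matrix whose first row and first column (away from the corner) are all $1$'s and whose lower-right $(p-1)\times(p-1)$ block is constantly $w$. This agrees with $A$ on the $2(p-1)$ border entries (excluding the $(1,1)$ corner, which must be changed to $w^{-1}$), \emph{and} on the $p-1$ interior entries where $(i-1)(j-1)\equiv 1 \pmod p$. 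Primality of $p$ is exactly what guarantees there is one such $j$ for each $i\in\{2,\dots,p\}$, since $i-1$ is a unit in $\BZ/p\BZ$. That gives $3(p-1)$ agreements, hence $p^2-3(p-1)=p^2-3p+3$ changes.

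So the gap in your plan is twofold: you never allow yourself to change the $(1,1)$ entry (which the paper does), and you look for the extra coincidences in the Vandermonde/rational-normal-curve equations rather than in the relation $w^p=1$. Your diagnosis of where primality enters (``distinct $w^{j-1}$'') is also off; it enters via invertibility of $i-1$ modulo $p$.
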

\begin{proof}
Change the $(1,1)$-entry to $w\inv$ and all entries in the lower right $(p-1)\times (p-1)$ submatrix not already
equal to $w$ to $w$. The resulting matrix is
$$
\begin{pmatrix}
w\inv & 1 & 1 & \cdots & 1\\
1 & w& w&\cdots &w\\
& & \vdots & & \\
1 & w& w&\cdots &w
\end{pmatrix}
$$
\end{proof}

\subsection{The variety of factorizable matrices/the butterfly variety}
 The DFT algorithm may   be thought of as factorizing the size $n=2^k$    DFT matrix into a product
of $k$ matrices $S_1\hd  S_k$ with each $S_i$ having $2n$ nonzero entries.

If $S_j$, $1\leq j\leq d$  are matrices with $s_j$ nonzero entries in $S_j$,  with $s_j=f_jn$ 
for some natural numbers $f_j$, then $S_1S_2$ has at most $f_1f_2n$ nonzero
entries. Consider the set of matrices $A$ such that we may write
$A=S_1\cdots S_d$ with $s_j=f_jn$ and $f_1\cdots f_d=n$. Then $A$ may be computed
by a linear circuit of depth $d$ and size $(f_1+\cdots +f_d)n$.  In the DFT case we have $f_j=2$ and $d=\tlog(n)$.

This space of matrices is the union of a large number of components, each component is the image of a map:
$$
\Bfly: \hat L_{s_1}\ctimes \hat L_{s_d} \ra Mat_{n\times n}
$$
where $\hat L_{s_j}\subset Mat_{n\times n}$ is the span of some $S\subset \{ x^i_j\}$ of cardinality $s_j$.
In the most efficient configurations (those where the map has the smallest dimensional  fibers), each entry $y^i_j$ in a matrix in the image will be of the form
$y^i_j=(x_1)^i_{j_1}(x_2)^{j_1}_{j_2}\cdots (x_d)^{j_{d-1}}_{j}
$
where the $j_u$'s are fixed indices (no sum).

If we are not optimally efficient, then the equations for the corresponding variety become more complicated, and the dimension will
drop.

From now on, for simplicity  assume $n=2^k$, $d=k$ and $s_j=2n$ for $1\leq j\leq k$.
Let $FM_{n}^0
$ denote the set  of factorizable or {\it butterfly}  matrices, 
the set of matrices
$A$ such that $A=S_1\cdots S_k$ with $S_j$ as above, and let 
  $FM_n:=\ol{FM_n^0}$ denote its Zariski closure. The term \lq\lq butterfly\rq\rq\ comes from the name commonly used for the corresponding circuit, e.g., see \cite[\S 3.7]{MR1137272}.
By construction every $A\in FM_n^0$ admits a linear circuit of size $2n\tlog n$, see,  e.g.,  Figure~\ref{fig:butterfly}: 
 {t}he graph has $48$ edges compared with $64$ for a generic $8\times 8$ matrix, and in general one has
$2^{k+1}k=2n\tlog~n$ edges compared with $2^{2k}=n^2$ for a generic matrix.

\begin{figure}[!htb]\begin{center}
\begin{tikzpicture}
\node[vertex] (I000) at (0,3) {};
\node[vertex] (I001) at (1,3) {};
\node[vertex] (I010) at (2,3) {};
\node[vertex] (I011) at (3,3) {};
\node[vertex] (I100) at (4,3) {};
\node[vertex] (I101) at (5,3) {};
\node[vertex] (I110) at (6,3) {};
\node[vertex] (I111) at (7,3) {};
\node[vertex] (II000) at (0,2) {};
\node[vertex] (II001) at (1,2) {};
\node[vertex] (II010) at (2,2) {};
\node[vertex] (II011) at (3,2) {};
\node[vertex] (II100) at (4,2) {};
\node[vertex] (II101) at (5,2) {};
\node[vertex] (II110) at (6,2) {};
\node[vertex] (II111) at (7,2) {};
\node[vertex] (III000) at (0,1) {};
\node[vertex] (III001) at (1,1) {};
\node[vertex] (III010) at (2,1) {};
\node[vertex] (III011) at (3,1) {};
\node[vertex] (III100) at (4,1) {};
\node[vertex] (III101) at (5,1) {};
\node[vertex] (III110) at (6,1) {};
\node[vertex] (III111) at (7,1) {};
\node[vertex] (IIII000) at (0,0) {};
\node[vertex] (IIII001) at (1,0) {};
\node[vertex] (IIII010) at (2,0) {};
\node[vertex] (IIII011) at (3,0) {};
\node[vertex] (IIII100) at (4,0) {};
\node[vertex] (IIII101) at (5,0) {};
\node[vertex] (IIII110) at (6,0) {};
\node[vertex] (IIII111) at (7,0) {};
\draw (I000) -- (II001);
\draw (I001) -- (II000);
\draw (I010) -- (II011);
\draw (I011) -- (II010);
\draw (I100) -- (II101);
\draw (I101) -- (II100);
\draw (I110) -- (II111);
\draw (I111) -- (II110);
\draw (II000) -- (III010);
\draw (II001) -- (III011);
\draw (II010) -- (III000);
\draw (II011) -- (III001);
\draw (II100) -- (III110);
\draw (II101) -- (III111);
\draw (II110) -- (III100);
\draw (II111) -- (III101);
\draw (III000) -- (IIII100);
\draw (III001) -- (IIII101);
\draw (III010) -- (IIII110);
\draw (III011) -- (IIII111);
\draw (III100) -- (IIII000);
\draw (III101) -- (IIII001);
\draw (III110) -- (IIII010);
\draw (III111) -- (IIII011);
\draw (I000) -- (II000);
\draw (I001) -- (II001);
\draw (I010) -- (II010);
\draw (I011) -- (II011);
\draw (I100) -- (II100);
\draw (I101) -- (II101);
\draw (I110) -- (II110);
\draw (I111) -- (II111);
\draw (II000) -- (III000);
\draw (II001) -- (III001);
\draw (II010) -- (III010);
\draw (II011) -- (III011);
\draw (II100) -- (III100);
\draw (II101) -- (III101);
\draw (II110) -- (III110);
\draw (II111) -- (III111);
\draw (III000) -- (IIII000);
\draw (III001) -- (IIII001);
\draw (III010) -- (IIII010);
\draw (III011) -- (IIII011);
\draw (III100) -- (IIII100);
\draw (III101) -- (IIII101);
\draw (III110) -- (IIII110);
\draw (III111) -- (IIII111);
\end{tikzpicture}
\caption{\small{linear circuit for element of $FM_{8}^0$,   support is the \lq\lq butterfly graph\rq\rq }}  \label{fig:butterfly}
\end{center}
\end{figure}
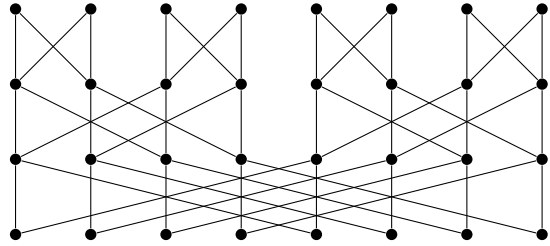

\begin{proposition}\label{factorprop}
A general factorizable matrix does not admit a linear circuit of size \mbox{$n(\tlog n+1)-1$}.
\end{proposition}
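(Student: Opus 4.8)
The plan is to reduce the statement to a dimension count and then invoke Proposition~\ref{bigvarprop}: I will argue that $\dim FM_n = n(\tlog n+1)$, so that a general element of $FM_n$ requires a linear circuit of size at least $n(\tlog n+1)$, and in particular cannot be computed by one of size $n(\tlog n+1)-1$.

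First I would realize $FM_n$ as the Zariski closure of the image of a monomial map. The butterfly graph has $k=\tlog n$ stages, each carrying $2n$ edges, so there are $2nk=2n\tlog n$ edge labels $x_e$; the decisive combinatorial fact is that between any input $a$ and any output $b$ there is a \emph{unique} directed path $p(a,b)$, so the $(a,b)$ entry of $S_1\cdots S_k$ is the single monomial $\prod_{e\in p(a,b)} x_e$. Hence $\Bfly$ restricts on the open torus to a homomorphism $(\BC^*)^{2nk}\to(\BC^*)^{n^2}$ of algebraic tori, and $\dim FM_n$ equals the rank of the $n^2\times 2nk$ incidence matrix $M$ defined by $M_{(a,b),e}=1$ if $e\in p(a,b)$ and $0$ otherwise.

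Next I would compute $\operatorname{rank} M = 2nk - \dim\ker M$ by identifying $\ker M$ exactly. An element $w=(w_e)\in\ker M$ is an assignment of scalars to the edges with $\sum_{e\in p(a,b)}w_e = 0$ for all $(a,b)$; multiplicatively, these are the rescalings of $S_1,\dots,S_k$ by invertible diagonal matrices on the internal indices that leave the product $S_1\cdots S_k$ unchanged, a family of dimension $(k-1)n$. The claim to prove is that $\ker M$ is exactly this family. Given $w\in\ker M$, define $\phi_j(v)=\sum_{e\in q}w_e$ for any directed path $q$ from an input to a level-$j$ vertex $v$; this is independent of $q$ because any two such paths can be completed by a common path to an output, and the two resulting input--output paths both have $w$-sum $0$. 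One uses here that the butterfly graph is connected, so every vertex is joined both to an input and to an output. Then $w_e = \phi_j(\mathrm{head}\,e) - \phi_{j-1}(\mathrm{tail}\,e)$ for every stage-$j$ edge $e$, with $\phi_0\equiv 0$ by construction and $\phi_k\equiv 0$ by the kernel condition; conversely, any such potential vector lies in $\ker M$ because the relevant sum telescopes to $\phi_k(b)-\phi_0(a)=0$. Since $\phi_1,\dots,\phi_{k-1}$ range freely over $\BC^n$ and the assignment $(\phi_1,\dots,\phi_{k-1})\mapsto w$ is injective (if $w=0$ the potential is constant along every edge, hence identically $0$ by connectivity), $\dim\ker M = (k-1)n$. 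Therefore $\operatorname{rank} M = 2nk-(k-1)n = n(k+1) = n(\tlog n+1)$, so $\dim FM_n = n(\tlog n+1)$, and Proposition~\ref{bigvarprop} completes the proof.

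The step I expect to be the main obstacle is the exact identification of $\ker M$, i.e.\ showing that $\Bfly$ has no fibers beyond the $(k-1)n$-dimensional torus of diagonal rescalings; this is precisely where the unique-path property of the butterfly network and its connectivity are used, and it is also what makes $\Bfly$ a genuine monomial map to begin with. A routine but necessary side point is to check that passing from the open torus $(\BC^*)^{2nk}$ to all of $\BC^{2nk}$ and taking Zariski closure does not change the dimension of the image, so that $\dim FM_n$ really equals $\operatorname{rank} M$.
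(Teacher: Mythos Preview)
Your proof is correct and takes a genuinely different route from the paper's. Both arguments reduce to establishing $\dim FM_n = n(\log n+1)$ and invoking Proposition~\ref{bigvarprop}, but they compute this dimension differently. The paper bounds the dimension from above by exhibiting the $(k-1)n$-dimensional diagonal-rescaling symmetry of the fibre, and from below by a direct (and somewhat informal) inspection of the differential $(Z_1,\dots,Z_k)\mapsto \sum_j S_1\cdots Z_j\cdots S_k$, counting how many matrix entries each $Z_j$ can move independently of the previous ones. You instead exploit the unique-path property of the butterfly network to recognise $\Bfly$ as a monomial (torus) map, so that $\dim FM_n$ equals the rank of the edge/path incidence matrix $M$; your potential-function identification of $\ker M$ then gives both bounds at once. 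Your argument is cleaner and makes the toric structure explicit, at the cost of using a combinatorial fact about the butterfly graph (unique paths) that the paper's differential computation does not need to invoke. One small point worth stating: $FM_n$ is defined as a union over all sparsity patterns with $2n$ entries per factor, not just the butterfly pattern; your rank computation handles only the butterfly component, but the diagonal-rescaling argument (which you mention in passing) applies to every component and supplies the matching upper bound for $\dim FM_n$.
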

\begin{proof}
We will show that a  general  component  of $FM_n$ has dimension
$n(\tlog n+1)$, so Proposition \ref{bigvarprop} applies.

First it is clear that $\tdim FM_n$ is at most $n(\tlog n+1)$, because
$\tdim (\hat L_1\oplus\cdots \oplus \hat L_k)=2nk$ and 
if $D_1\hd D_{k-1}$ are diagonal matrices (with nonzero entries on the diagonal), then
$\Bfly(S_1D_1,D_1\inv S_2 D_2,\hd D_{k-1}\inv S_k)= \Bfly(S_1 ,  S_2  ,\hd   S_k)$,
so the fiber has dimension at least $n(k-1)$.
Consider the differential of $\Bfly$ at a general point:
\begin{align*}
d(\Bfly)|_{(S_1\hd S_k)}: \hat L_1\oplus \cdots \oplus\hat L_k &\ra Mat_{n\times n}\\
(Z_1\hd Z_k)&\mapsto Z_1S_2\cdots S_k+S_1Z_2S_3\cdots S_k+\cdots + S_1\cdots S_{k-1}Z_k
\end{align*}
The rank of this linear map is the dimension of the image of $FM_n$ as its image is
the tangent space to a general point of $FM_n$.
We may use $Z_1$ to alter $2n$ entries of the image matrix $y=S_1\cdots S_k$. Then,
{\it a priori} we could use $Z_2$ to alter $2n$ entries, but $n$ of them overlap with the
entries altered by $Z_1$, so $Z_2$ may only alter $n$ new entries. Now think of the product
of the first two matrices as fixed, then $Z_3$ multiplied by this product again can alter
$n$ new entries, and similarly for all $Z_j$. Adding up, we get $2n+(k-1)n=n(\tlog n+1)$.
\end{proof}

\section{Geometry}\label{geomsect}

\subsection{Standard facts on joins}\label{gjoinsect}
We review standard facts as well as  observations in \cite{MR2870721,LV}.
Recall the notation $\BJ(X,Y)$ from  Definition \ref{joindef}.
The following are standard facts:

\begin{proposition}\label{joinstdprop} \

\begin{enumerate}

\item \label{as0} If $X,Y$ are irreducible, then $\BJ(X,Y)$ is irreducible.
\item \label{as1}  Let $X,Y\subset \BP V$ be   varieties, then $\BI(\BJ(X,Y))\subset \BI(X)\cap \BI(Y)$.

\item\label{as2} (Terracini's Lemma) The   dimension of $\BJ(X,Y)$ is $\tdim X+\tdim Y+1-\tdim \hat T_xX\cap \hat T_yY$,
where $x\in X$, $y\in Y$ are general points. In particular,
\begin{enumerate}
\item the dimension is   $\tdim X+\tdim Y+1$   if there exist $x\in X,\ y\in Y$
such that $\hat T_xX\cap \hat T_yY=0$. ($\tdim X+\tdim Y+1$ is called the expected dimension.)

\item   If $Y=L$ is a linear space, $\BJ(X,L)$ will have
the expected dimension if and only if there exists $x\in X$ such that $\hat T_xX\cap \hat L=0$.
\end{enumerate}

\item\label{as3} If $z\in \BJ(X,p)$ and $z\not\in \langle x,p\rangle$ for some $x\in X$, then
$z$ lies on a line that is a limit of secant lines $\langle x_t,p\rangle$, for some curve $x_t$ with $x_0= p$.
\end{enumerate}

\end{proposition}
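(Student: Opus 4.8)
The statement asserts that a point $z$ of the join which fails to lie on any honest secant line $\langle x,p\rangle$, $x\in X$, must nevertheless lie on a line arising as a limit of such secants as $x$ degenerates to $p$; equivalently, $z$ lies in the part of the tangent cone of $X$ at $p$ that is not visible from actual secants. The plan is to realize $\BJ(X,p)$ (notation of Definition~\ref{joindef}) as the image of an incidence variety and then run a one-parameter degeneration of secant lines.

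First I would introduce the incidence variety
$$\cI:=\ol{\bigl\{(x,z)\in X\times\BP V:\ x\neq p,\ z\in\langle x,p\rangle\bigr\}}\subset X\times\BP V,$$
which is projective, and note that the second projection $\pi_2:\cI\to\BP V$ is proper with image exactly $\BJ(X,p)$: it is closed, it contains $\BJ^0(X,p)$, and it is contained in $\ol{\BJ^0(X,p)}$. Given $z$ as in the hypothesis, pick $(x_0,z)\in\cI$; since $z$ lies on no secant $\langle x,p\rangle$ with $x\in X\setminus\{p\}$, the point $(x_0,z)$ cannot belong to the locally closed locus $\cI^0=\{(x,z)\in X\times\BP V:\ x\neq p,\ z\in\langle x,p\rangle\}$, so it lies in the boundary $\cI\setminus\cI^0$.

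Next, by the curve selection lemma applied to $\cI$ at $(x_0,z)$ (legitimate over $\BC$, using that the Zariski closure agrees with the classical closure here), I would choose an algebraic arc $t\mapsto(x_t,z_t)$ on a small disc with $(x_t,z_t)\in\cI^0$ for $t\neq 0$ and $(x_0,z_0)=(x_0,z)$; thus for $t\neq 0$ we have $x_t\in X$, $x_t\neq p$, and $z_t\in\langle x_t,p\rangle$, while $x_t\to x_0$ and $z_t\to z$. If $x_0\neq p$, then the secant line $\langle x_t,p\rangle$ depends continuously on $t$ near $0$ and converges to $\langle x_0,p\rangle$, forcing $z=\lim_t z_t\in\langle x_0,p\rangle$, contradicting the hypothesis; hence $x_0=p$. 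Finally I would regard $t\mapsto\langle x_t,p\rangle$ as a morphism from the punctured disc to the Grassmannian $\mathbb G(1,\BP V)$ of lines; by properness of this Grassmannian it extends across $t=0$ to a line $\ell$, and since $p$ lies on every $\langle x_t,p\rangle$ we get $p\in\ell$, while $z_t\in\langle x_t,p\rangle\to\ell$ together with $z_t\to z$ gives $z\in\ell$. By construction $\ell$ is a limit of the secant lines $\langle x_t,p\rangle$ with $x_0=p$, which is the assertion.

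The step I expect to demand the most care is the extraction of the limiting line $\ell$ and the verification that it passes through both $p$ and $z$: one must ensure the degenerating family of lines does not become indeterminate, which is handled by compactness of $\mathbb G(1,\BP V)$ together with the fact that $p$ is a base point of the whole family, and the arc $(x_t,z_t)$ must be taken algebraically even though $\cI^0$ is only locally closed, which is the standard content of curve selection. The continuity of secant lines in the case $x_0\neq p$ is routine.
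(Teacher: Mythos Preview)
Your argument for part~\eqref{as3} is correct: the incidence-variety setup, curve selection to produce an arc $(x_t,z_t)$ in $\cI^0$, the dichotomy on whether $x_0=p$, and the Grassmannian limit of the secant lines are exactly the standard route one finds upon unwinding the reference. The paper itself does not prove parts~\eqref{as0}, \eqref{as2}, or~\eqref{as3} but simply cites them as standard facts (to Harris and to \cite{MR2865915}); only part~\eqref{as1} is argued in the paper, via the one-line observation that $X,Y\subset\BJ(X,Y)$. So you have supplied the content the paper defers, and although your proposal addresses only part~\eqref{as3}, that is consistent with how the paper handles the remaining parts.
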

\begin{proof} For assertions \eqref{as0}, \eqref{as2}, \eqref{as3} respectively see  e.g., \cite[p157]{Harris}, \cite[p122]{MR2865915}, and
\cite[p118]{MR2865915}.
Assertion \eqref{as1} holds because $X,Y\subset \BJ(X,Y)$.
\end{proof}

To gain intuition regarding Terracini's lemma, a point on $\BJ(X,Y)$ is
obtained by selecting a point   $x\in X$ ($\tdim X$ parameters),
a point   $y\in Y$ ($\tdim Y$ parameters) and  a point on the
line joining $x$ and $y$ (one parameter). Usually these parameters
are independent, and Terracini's lemma says that if the infinitesimal
parameters are independent, the actual parameters are as well.

In the special case (3b), since $Y$ is a linear space, it is equal
to its tangent space.

To understand (4), consider   Figure \ref{figtan} where a point
on a limit of secant lines lies on a tangent line.

\begin{figure}[t]\label{figtan}
\begin{center}
\includegraphics[scale = 0.5]{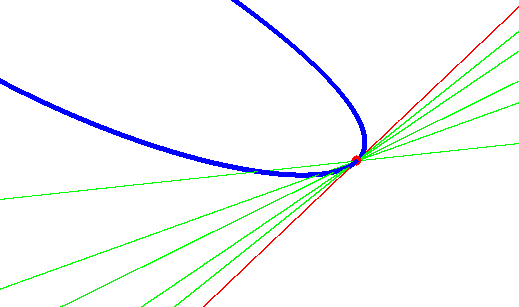}
\caption{Secant lines limiting to a tangent line}\label{tangentpic}
\end{center}
\end{figure}

\smallskip


\subsection{Ideals of cones}\label{idealofjoins}
 Define the {\it primitive part}  of the ideal of a variety $Z\subset \BP V$ as \mbox{$\BI_{prim,d}(Z):=\BI_d(Z)/(\BI_{d-1}(Z)\circ V^*)$}. 
 Here, if $A\subseteq S^{d}V$ and $B\subseteq S^{\d}V$, 
 $A\circ B:=\{pq\mid p\in A, 
 q\in B\}$.  
Note that
$\BI_{prim,d}(Z)$ is only nonzero in the degrees that minimal generators of the ideal of $Z$ appear and that
 (lifted)   bases of $\BI_{prim,d}(Z)$ for each such~$d$ furnish a set of generators of the ideal of $Z$.

\begin{proposition} \label{jjens}
 Let $X\subset \BP V$ be a variety and let $L\subset \BP V$ be a linear space.
\begin{enumerate}
\item
Then
$$
\BI_d(X)\cap S^dL\upperp\subseteq \BI_d(\BJ(X,L))\subseteq \BI_d(X)\cap (L\upperp \circ S^{d-1}V^*).
$$

\item
A set of generators
of $\BI(\BJ(X,L))$ may be taken from $\BI(X)\cap Sym(L\upperp)$.

\item In particular,  if $\BI_k(X)$ is empty, then $\BI_{k}(\BJ(X,L))$ is empty and
 $$
\BI_{k+1}(X)\cap S^{k+1}L\upperp= \BI_{k+1,prim}(\BJ(X,L))= \BI_{k+1}(\BJ(X,L)).
$$

\end{enumerate}
\end{proposition}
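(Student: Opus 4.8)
The plan is to realize the affine cone $\hat\BJ(X,L)$ (in particular $\hat\BJ(\s_r,L^S)$) as a \emph{cylinder} over a variety sitting in a linear complement of $\hat L$, and then read all three assertions off the standard description of the ideal of a cylinder. Fix a complement $W$ with $V=\hat L\oplus W$ and let $p\colon V\to W$ be the projection with $\ker p=\hat L$. Dually $V^*=(\hat L)^*\oplus W^*$, under which $L\upperp=W^*$; identify $\operatorname{Sym}(L\upperp)$ with the subring $p^*\operatorname{Sym}(W^*)\subset\operatorname{Sym}(V^*)$ of polynomials in the $W$-coordinates only, so that $\operatorname{Sym}(V^*)=\operatorname{Sym}(L\upperp)\otimes\operatorname{Sym}((\hat L)^*)$.

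First I would prove the geometric identity $\hat\BJ(X,L)=\overline{\hat X+\hat L}=p^{-1}\!\bigl(\overline{p(\hat X)}\bigr)$. The first equality is immediate from Definition \ref{joindef}, since the affine cone over a secant line $\langle x,\ell\rangle$ is $\CC x+\CC\ell$ and $\hat X$ is a cone through the origin, so the union of these cones over $x\in X,\ell\in L$ is $\hat X+\hat L$. For the second, $v\in p^{-1}(p(\hat X))$ exactly when, writing $v=w+\ell$ with $w\in W$, $\ell\in\hat L$, one has $w\in p(\hat X)$; hence $p^{-1}(p(\hat X))=p(\hat X)\times\hat L$ under the splitting, and taking closures, $\overline{p(\hat X)\times\hat L}=\overline{p(\hat X)}\times\hat L=p^{-1}\!\bigl(\overline{p(\hat X)}\bigr)$ (the closure of the product of a set with an entire affine factor is the closure of the set times that factor, in the Zariski or the classical topology). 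Putting $Y:=\overline{p(\hat X)}\subset W$, a cone, we get $\hat\BJ(X,L)=Y\times\hat L$. Since $\CC[Y\times\hat L]=\CC[Y]\otimes\CC[\hat L]$ with $\CC[\hat L]$ a polynomial ring (in particular flat), the ideal of this cylinder is generated by $\BI(Y)$, i.e. $\BI(\hat\BJ(X,L))=\BI(Y)\cdot\operatorname{Sym}(V^*)$; and a polynomial in $\operatorname{Sym}(L\upperp)$ vanishes on $Y$ iff it vanishes on $\hat X$, so $\BI(Y)=\BI(X)\cap\operatorname{Sym}(L\upperp)$. This is assertion (2).

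Assertions (1) and (3) are then bookkeeping with this description. For (1): $\BI_d(X)\cap S^dL\upperp=\BI_d(Y)\subset\BI_d(\BJ(X,L))$ is the degree-$d$ part of $\BI(Y)\subset\BI(Y)\cdot\operatorname{Sym}(V^*)$; $\BI_d(\BJ(X,L))\subset\BI_d(X)$ because $X\subset\BJ(X,L)$; and $\BI_d(\BJ(X,L))$ lies in the degree-$d$ piece of the ideal generated by $L\upperp$, written $L\upperp\circ S^{d-1}V^*$, because every homogeneous element of $\BI(Y)$ has positive $W^*$-degree, so each term $g_ih_i$ of an expression $\sum_i g_ih_i$ ($g_i\in\BI(Y)$, $h_i\in\operatorname{Sym}(V^*)$) of total degree $d\ge1$, and hence the whole sum, is divisible by linear forms of $L\upperp$. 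For (3): since $\operatorname{Sym}(V^*)$ is a domain, $\BI_k(X)=0$ forces $\BI_j(X)=0$ for all $j\le k$ (otherwise a nonzero element of $\BI_{k-1}(X)$ times a nonzero linear form would be a nonzero element of $\BI_k(X)$), whence $\BI_j(Y)=\BI_j(X)\cap S^jL\upperp=0$ for $j\le k$; therefore $\BI_k(\BJ(X,L))\subset\BI_k(X)=0$, so $\BI_k(\BJ(X,L))\circ V^*=0$ and $\BI_{k+1,prim}(\BJ(X,L))=\BI_{k+1}(\BJ(X,L))$; and in $\BI_{k+1}(\BJ(X,L))=\bigl(\BI(Y)\otimes\operatorname{Sym}((\hat L)^*)\bigr)_{k+1}=\bigoplus_{i+j=k+1}\BI_i(Y)\otimes S^j(\hat L)^*$ only the summand with $i=k+1,\,j=0$ survives, namely $\BI_{k+1}(Y)=\BI_{k+1}(X)\cap S^{k+1}L\upperp$.

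I expect no deep obstacle; the step deserving the most care is the geometric identification $\hat\BJ(X,L)=p^{-1}\!\bigl(\overline{p(\hat X)}\bigr)$ — that forming the join commutes, after closure, with the cylinder construction — together with the classical but worth-stating fact that the (radical) ideal of a cylinder $Y\times\hat L$ is generated by the ideal of the base $Y$.
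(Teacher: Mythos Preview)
Your argument is correct and complete. The key idea---realizing $\hat\BJ(X,L)$ as the cylinder $p^{-1}\bigl(\overline{p(\hat X)}\bigr)=Y\times\hat L$ over $Y=\overline{p(\hat X)}$, and then invoking the standard fact that $\BI(Y\times\hat L)$ is the extension of $\BI(Y)$---is sound; the identification $\BI(Y)=\BI(X)\cap\operatorname{Sym}(L\upperp)$ and the graded bookkeeping for (1) and (3) go through as you wrote.

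This is a genuinely different route from the paper's. The paper never names the cylinder $Y\times\hat L$; instead it argues directly with polarization. For (1) it uses that $P\in\BI_d(\BJ(X,L))$ iff all bihomogeneous pieces $P_{k,d-k}(x,\ell)$ vanish, and for (2) it reduces to $\tdim L=0$, expands $P=\sum_j q_j u^{d-j}$ with $u$ spanning a complement to $L\upperp$, and shows inductively (via the polarization identity) that each $q_j\in\BI_j(X)\cap S^jL\upperp$. In effect the paper is verifying by hand, one coefficient at a time, the graded decomposition $\BI(\BJ(X,L))=\bigoplus_{i,j}\BI_i(Y)\otimes S^j(\hat L)^*$ that you obtain structurally. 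Your approach is cleaner and yields all three parts at once, at the cost of appealing to the (standard) fact that $\CC[Y]\otimes\CC[\hat L]$ is reduced so that the extended ideal is radical; the paper's approach is more self-contained, needing only the polarization formalism, but is somewhat more computational and handles $L$ one dimension at a time.
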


 Proposition \ref{jjens} says  that  we only need to look for polynomials in the variables of 
  $L\upperp$ when looking for equations of
$\BJ(X,L)$. 

\begin{proof}
For the first assertion, $P\in \BI_d(\BJ(X,L))$ if and only if $P_{k,d-k}(x,\ell)=0$ for all $[x]\in X$,  $[\ell]\in L$ and $0\leq k\leq d$ where
$P_{k,d-k}\in S^kV^*\ot S^{d-k}V^*$ is a polarization of $P$ (in coordinates,
$P_{k,d-k}$ is the coefficient of $t^k$ in the expansion of
$P(tx+y)$ in $t$, where $x,y$ are independent sets of variables
and $t$ is a single variable, see \cite[\S 7.5]{MR2865915} for more details). Now $P\in S^dL\upperp$ implies
all the terms vanish identically except for the $k=d$ term. But $P\in \BI_d(X)$ implies that term vanishes as well. The second inclusion
  of the first assertion is Proposition~\ref{joinstdprop}\eqref{as1}.

For the second assertion, we can build $L$ up by points
as $\BJ(X,\langle L',L''\rangle)=\BJ(\BJ(X,L'),L'')$, so assume $\tdim L=0$. Let $P\in \BI_d(\BJ(X,L))$. Choose a (one-dimensional) complement $W^*$ to $L\upperp$ in~$V^*$.
Write $P=\sum_{j=1}^d q_ju^{d-j}$ where $q_j\in S^jL\upperp$ and $u \in  W^*$.
Then
\begin{align}
\label{ep1}P_{j,d-j}(x^j,\ell^{d-j})&=\sum_{i=0}^j\sum_{t=0}^i (q_i)_{t,i-t}(x^t,\ell^{i-t})(u^{d-i})_{j-t,d-j+t-i}(x^{j-t},\ell^{d-j+t-i})\\
&=\label{ep2}\sum_{i=0}^j q_i(x)(u^{d-i})_{j-i,d-j}(x^{j-i},\ell^{d-j})
\end{align}
Consider the case $j=1$, then \eqref{ep2} reduces to    $q_1(x)u^{d-1}(\ell)=0$ which implies $q_1\in \BI_1(X) \cap L\upperp$.
Now consider the case $j=2$, since $q_1(x)=0$, it reduces to $q_2(x)u^{d-2}(\ell)$, so we conclude $q_2(x)\in \BI_2(X)\cap S^2L\upperp$.
Continuing, we see each $q_j \in \BI_j(X)\cap S^jL\upperp\subset \BI(\BJ(X,L))$ and the result~follows.
\end{proof}

\subsection{Degrees of  cones} \label{degsect}
For a projective variety $Z\subset \BP V$ and $z\in Z$, let
$\hat TC_zZ\subset V$ denote the {\it affine tangent cone} to $Z$ at $z$ and $TC_zZ=\BP\hat TC_zZ\subset \BP V$ the {\it (embedded) tangent
cone}. Set-theoretically $\hat TC_zZ$ is the
union of all points on all lines of the form $\tlim_{t\ra 0}\langle z,z(t)\rangle$ where $z(t)\subset \hat Z$ is a curve with
$[z(0)]=z$. If  $Z$ is irreducible, then $\tdim TC_zZ=\tdim Z$. 

To gain some intuition regarding tangent cones, we compute
the tangent cone to 
$2x^5 - w^2x^3 + w^2x^2 y + w^2x y^2 - w^2y^3 = 0$ at $[(w,x,y)] = [(1,0,0)]$.

\begin{figure}[t]
\begin{center}
\includegraphics[scale = 0.15]{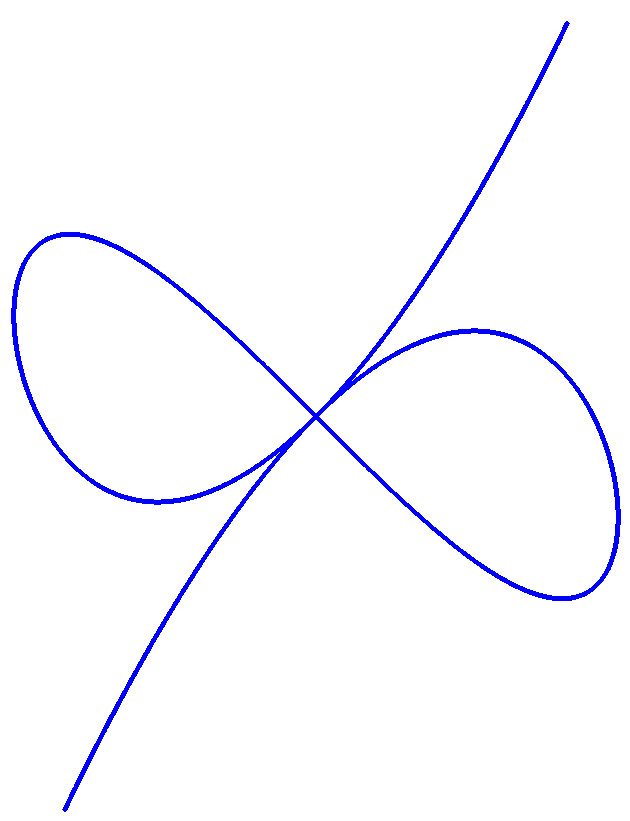} \hspace{0.2in}
\includegraphics[scale = 0.16]{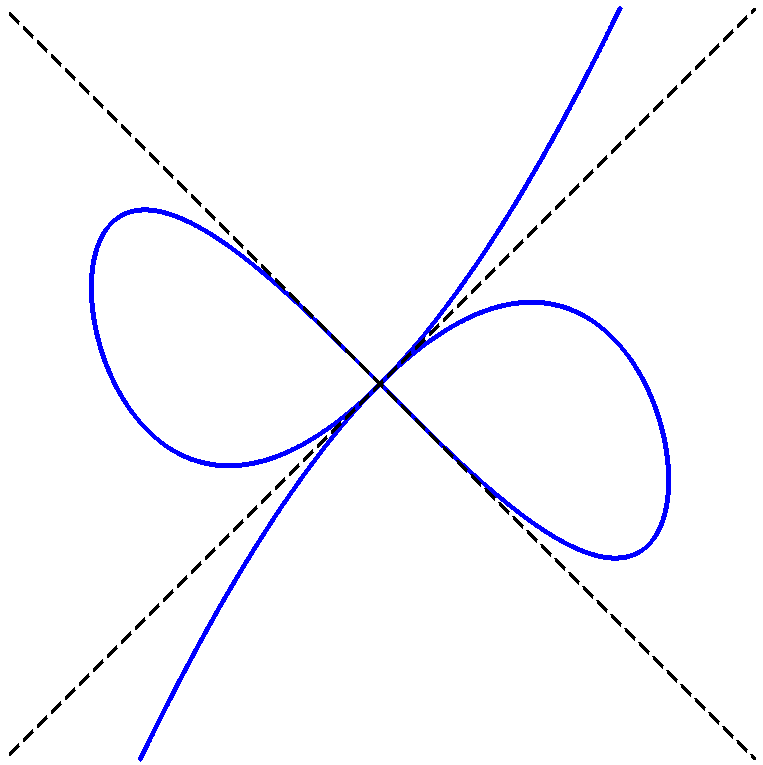} \\
(a)\hspace{1.3in}(b)
\caption{(a) is graph of   $2x^5 - x^3  + x^2 y  + x y^2  - y^3  = 0$    and (b) is the
graph with  the tangent cone at the origin.}
\label{jonpic}
\end{center}
\end{figure}

 Figure \ref{jonpic} depicts this curve in the affine space $w=1$. The
  line $\{ x+y=0\}$ has multiplicity one in the tangent
cone, and the line $\{ x-y=0\}$  has multiplicity two because two
of the branches of the curve that go through the origin are tangent to it. We will need to
keep track of these multiplicities in order to compute
  the degree of the tangent
cone as a subscheme  of the Zariski tangent space. That is,
we need to keep track of its ideal, not just its zero set.
To this end, 
 let $\fm$ denote
the maximal ideal in $\cO_{Z,z}$ of germs of regular functions on $Z$ at $z$ vanishing at $z$,
so the Zariski tangent space is $T_zZ=(\fm/\fm^2)^*$. Then the (abstract) tangent cone is the
subscheme of $T_zZ$ whose coordinate ring is the graded ring $\oplus_{j=0}^{\infty}\fm^j/\fm^{j+1}$.
 
To compute  the ideal of the tangent cone  in practice, one takes a set of generators for the ideal of $Z$ and local coordinates
$(w,y^{\a})$
such that  $z=[(1,0)]$, and writes, for each generator $P\in \BI(Z)$, $P=(w-1)^jQ(y)+ O( {(w-1)^{j+1}})$.
The generators for the ideal of the tangent cone are  the lowest degree
homogeneous components of the corresponding $Q(y)$.
See either of \cite[Ch. 20]{Harris} or
\cite[Ch. 5]{MR1344216} for details. The {\it multiplicity} of $Z$ at $z$ is
defined to be $\tmult_zZ=\tdeg(TC_zZ)$.

In our example, $(y^1,y^2)=(x,y)$ and we take coordinates with
origin at  $[(1,0,0)]$ so let $\tilde{w} = w-1$ to have the expansion $P = -\tilde{w}^2(x+y)(x-y)^2 -2 \tilde{w}(x+y)(x-y)^2 - (x+y)(x-y)^2 + 2x^5 = \tilde{w}^0[- (x+y)(x-y)^2 + 2x^5] + O(\tilde{w})$ and the ideal of the tangent cone is generated by $(x+y)(x-y)^2$. 
So in Figure \ref{jonpic}, the multiplicity at the origin  is three.
We will slightly abuse notation writing  $TC_zZ$ for both the abstract and embedded tangent  cone. 
While $TC_xZ$ may have many components with multiplicities, it is equi-dimensional, see \cite[{p162}]{MR1748380}.

\begin{proposition}\label{conedegprop}
Let $X\subset \BP V$ be a variety and let  $x\in X$. Assume that   $\BJ(X,x)\neq X$.
Let $p_x: \BP V\backslash x \ra \BP (V/\hat x)$ denote the projection
map and let $\pi:=p_x|_{X\backslash x}$. Then
$$
\tdeg(\BJ(X,x))=\frac 1{\tdeg \pi }[\tdeg(X)-\tdeg(TC_xX)].
$$
\end{proposition}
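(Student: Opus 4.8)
The plan is to relate the degree of the cone $\BJ(X,x)$ to the degree of the generic fiber of a linear projection, and to extract the correction term from the part of $X$ that collapses into the vertex. First I would fix a general linear space $\Lambda \subset \BP(V/\hat x)$ of dimension equal to $\tcodim \BJ(X,x)$ in $\BP(V/\hat x)$, so that $p_x^{-1}(\Lambda)$ is a general linear space $\Lambda' \subset \BP V$ of dimension $\tcodim \BJ(X,x) + 1 = \tcodim X$ passing through $x$ (here I use that $\dim \BJ(X,x) = \dim X + 1$, which holds because $\BJ(X,x)\neq X$ and $X$ is irreducible, cf. Proposition~\ref{joinstdprop}). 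The number of points of $X \cap \Lambda'$, counted with multiplicity, is $\tdeg X$; I would then split this intersection into the points lying at $x$ and the points away from $x$. Since $\Lambda'$ is general through $x$, the local contribution of $X\cap \Lambda'$ at $x$ is exactly $\tmult_x X = \tdeg(TC_x X)$: intersecting $X$ with a general linear space of the right codimension through $x$ computes the multiplicity, because the intersection multiplicity at $x$ of $X$ with a general linear subspace of complementary dimension equals $\tmult_x X$, and this is in turn $\tdeg(TC_x X)$ by definition. Hence the points of $X \cap \Lambda'$ away from $x$ number $\tdeg X - \tdeg(TC_x X)$, again with multiplicity.

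Next I would analyze the image of these residual points under $\pi = p_x|_{X\setminus x}$. A point $y \in X\setminus x$ lies in $\Lambda' \setminus x$ if and only if $\pi(y) \in \Lambda$, so $\pi$ restricts to a map from $(X\cap\Lambda')\setminus x$ onto $\pi(X)\cap \Lambda$. Since $\Lambda$ is general of dimension $\tcodim \BJ(X,x)$ and $\overline{\pi(X)} = \BJ(X,x)$ (as $\BJ(X,x)$ is the cone, its image under $p_x$ is its own image, which is $\pi(X)$ up to closure), the intersection $\overline{\pi(X)}\cap \Lambda$ consists of exactly $\tdeg(\BJ(X,x))$ reduced points, all lying in the open locus where $\pi$ is finite of degree $\tdeg\pi$ (generality of $\Lambda$ avoids the branch and indeterminacy loci). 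Therefore the fiber of $\pi$ over each such point has exactly $\tdeg\pi$ points, and each is transverse, so counting preimages gives $\tdeg X - \tdeg(TC_x X) = \tdeg\pi \cdot \tdeg(\BJ(X,x))$, which rearranges to the claimed formula.

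The main obstacle I anticipate is justifying the two transversality/genericity claims rigorously: (i) that for a \emph{general} linear space $\Lambda'$ of codimension $\tcodim X$ through the fixed point $x$, the local intersection multiplicity of $X$ with $\Lambda'$ at $x$ equals $\tmult_x X$ exactly (rather than merely being bounded below by it), and (ii) that the residual points $X\cap\Lambda' \setminus x$ map to points of $\BJ(X,x)\cap\Lambda$ with the correct fiber count and no collapsing or excess. For (i) the standard device is to pass to the blow-up of $\BP V$ at $x$: the proper transform $\tilde X$ meets the exceptional $\BP(V/\hat x) \cong \BP^{\dim V - 1}$ in a scheme whose class records $TC_x X$, and a general $\Lambda'$ through $x$ has proper transform meeting $\tilde X$ transversally both along the exceptional divisor and elsewhere, by a Kleiman-type transversality argument applied to the $PGL$-action, or by the generic flatness/semicontinuity statements in \cite[Ch.~20]{Harris}. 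For (ii) one invokes that the non-finite locus of $\pi$ and the image of the ramification have dimension strictly less than $\dim \BJ(X,x)$, so a general $\Lambda$ of the relevant dimension misses them; combined with the projection formula for the generically finite map $\pi$, this yields the degree count. Modulo these standard facts about tangent cones, multiplicities, and generic projections — all available in the cited references — the argument is the bookkeeping above.
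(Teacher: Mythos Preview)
Your argument is correct and matches the paper's approach: the paper gives exactly your intuition paragraph (general linear space through $x$, split off the contribution at $x$ as $\tmult_x X=\tdeg TC_xX$, count the residual points via the degree of $\pi$), and then for the formal proof simply cites \cite[Thm.~5.11]{MR1344216} for the identity $\tdeg(\ol{\pi(X\setminus x)})=\frac{1}{\tdeg\pi}[\tdeg X-\tdeg TC_xX]$ and observes that a hyperplane $H$ not through $x$ gives $\BJ(X,x)\cap H\cong \ol{\pi(X\setminus x)}$, hence equal degrees. One small cleanup: your line ``$\overline{\pi(X)}=\BJ(X,x)$'' is a type mismatch (the two sit in $\BP(V/\hat x)$ and $\BP V$ respectively); what you need, and what the paper makes explicit, is $\tdeg\,\overline{\pi(X)}=\tdeg\,\BJ(X,x)$ via the hyperplane slice.
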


To gain intuition for Proposition \ref{conedegprop}, 
assume $\pi$ has degree one, which
it will in our situation and let $\BP W\subset \BP V$ be
a general linear space of complementary dimension to $J(X,x)$, so it intersects
$J(X,x)$ in $\tdeg(J(X,x))$ points, each of multiplicity one.  Now consider the linear space
spanned by $\BP W$ and $x$. It intersects $X$ in $\tdeg(J(X,x))+1$ points ignoring
multiplicity but it may intersect $x$ with multiplicity greater than one. 
The degree of $X$ is the number of points of intersection counted
with multiplicity, so the degree of $J(X,x)$ is the degree
of $X$ minus the multiplicity of the intersection at $x$. If $x\in X$ is
a smooth point, the multiplicity will be one, in general the multiplicity will equal
the degree of the tangent cone, which can be visualized by considering
a horizontal line through the curve in Fig. \ref{jonpic}(a), and moving the line
upwards  just
a little. The three physical points of intersection become five on the moved line.
Here is the formal proof:

\begin{proof}
By \cite[Thm. 5.11]{MR1344216},
$$
\tdeg(\ol{\pi(X\backslash x)})=\frac 1{\tdeg{\pi}}
[\tdeg(X)-\tdeg (TC_xX)].
$$
Now let $H\subset \BP V$ be a   hyperplane    not
containing $x$  that  intersects $\BJ(X,x)$ transversely. Then  $\ol{\pi(X\backslash x)}\subset \BP (V/ \hat x)$ is
isomorphic to $\BJ(X,x)\cap H\subset H$. In particular their degrees are the~same.
\end{proof}

Note that the only way to have $\tdeg(\pi)>1$ is for every secant line through $x$ to be at least a trisecant line.

\begin{proposition}\label{tconecommutea} Let $X\subset \BP V$ be a variety, let $L\subset \BP V$ be a linear space,
and let $x\in X$.
Then we have the inclusion  of schemes
\be\label{schemeincl}
\BJ(TC_xX,\tilde L)\subseteq TC_x\BJ(X,L)
\ene
where $\tilde L\subset T_x\BP V$ is the image of $L$ in the projectivized Zariski tangent space, and both
are sub-schemes of $T_x\BP V$.
\end{proposition}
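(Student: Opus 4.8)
The plan is to reduce the statement to a computation in local coordinates, using the functorial description of the tangent cone in terms of the associated graded of the local ring that was recalled just above. Write $\hat X \subset V$ for the affine cone, pick a point $x \in X$ and a decomposition $V = \hat x \oplus V'$, so that $p_x \colon \BP V \setminus x \dashrightarrow \BP V'$ is projection from $x$ and local coordinates near $x$ are $(w, y^\alpha)$ with $x = [(1,0)]$. First I would recall that, by Proposition \ref{jjens}(2), the ideal $\BI(\BJ(X,L))$ is generated by elements of $\BI(X) \cap \Sym(L\upperp)$; hence it suffices to understand, for each generator $P \in \BI(X)\cap\Sym(L\upperp)$, the lowest-degree homogeneous component of $P$ expanded in the $\tilde w := w-1$ variable, and to check that this component lies in the ideal of $\BJ(TC_x X, \tilde L)$.

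Next I would unwind both sides of \eqref{schemeincl} concretely. On the right, following the recipe recalled before Proposition \ref{conedegprop}, the ideal of $TC_x \BJ(X,L)$ is generated by the leading forms $Q(y)$ obtained from writing each generator $P$ of $\BI(\BJ(X,L))$ as $P = \tilde w^{\,j} Q(y) + O(\tilde w^{\,j+1})$. On the left, the ideal of $\BJ(TC_x X, \tilde L)$ is, again by Proposition \ref{jjens}(2) applied to the pair $(TC_x X, \tilde L)$ inside $T_x\BP V$, generated by those elements of $\BI(TC_x X)$ that lie in $\Sym(\tilde L\upperp)$; and $\BI(TC_x X)$ is generated by the leading forms of generators of $\BI(X)$. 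So the containment I must prove is: every element of $\BI(TC_xX)\cap\Sym(\tilde L\upperp)$ vanishes on $TC_x\BJ(X,L)$. Because a generator $P$ of $\BI(X)$ that already lies in $\Sym(L\upperp)$ has its leading form automatically in $\Sym(\tilde L\upperp)$ — passing to the Zariski tangent space only quotients by $L$, so $\tilde L\upperp$ is identified with $L\upperp$ among the degree-one forms that survive — the leading form of such a $P$ is simultaneously a generator of $\BI(TC_xX)$ supported on $\tilde L\upperp$ and (by the coordinate description) a generator of $\BI(TC_x\BJ(X,L))$. Since such $P$ generate $\BI(\BJ(X,L))$ by Proposition \ref{jjens}(2), their leading forms generate $\BI(TC_x\BJ(X,L))$, and each of them lies in $\BI(\BJ(TC_xX,\tilde L))$; this is the desired scheme containment.

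The main obstacle I anticipate is the bookkeeping around the passage $L \rightsquigarrow \tilde L$: one must check carefully that the image of $L\upperp \subset V^*$ in the dual of the Zariski tangent space $T_x\BP V = (\fm/\fm^2)^*$ really is the annihilator $\tilde L\upperp$, i.e. that no spurious degree-one relations are created or destroyed when forming the associated graded, and that the identification is compatible with multiplication so that ``supported on $L\upperp$'' is preserved upon taking leading forms of arbitrary-degree generators. A secondary subtlety is that the containment is of \emph{schemes}, not just of their supports, so I cannot merely argue set-theoretically (indeed the displayed example with the doubled line $\{x=y\}$ shows multiplicities are the whole point); the argument above is engineered to respect this because it compares ideals generator-by-generator rather than zero-sets. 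I would not expect equality in \eqref{schemeincl} in general — the cone operation can introduce extra tangent directions at $x$ coming from secant lines that degenerate, exactly the phenomenon flagged in Proposition \ref{joinstdprop}\eqref{as3} — so the proof should only aim at the one-sided inclusion and should not try to match Hilbert functions.
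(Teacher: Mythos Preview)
Your proposal is correct and follows essentially the same route as the paper's proof: take generators of $\BI(\BJ(X,L))$ from $\BI(X)\cap \Sym(L\upperp)$ via Proposition~\ref{jjens}(2), pass to their leading forms to obtain generators of $\BI(TC_x\BJ(X,L))$, and observe that these leading forms lie in $\BI(TC_xX)\cap \Sym(\tilde L\upperp)\subseteq \BI(\BJ(TC_xX,\tilde L))$, giving the ideal containment $\BI(TC_x\BJ(X,L))\subseteq \BI(\BJ(TC_xX,\tilde L))$. The paper writes the leading form as $P(v^{f},\cdot)$ via polarization rather than via the $\tilde w$-expansion you use, but the content is identical.

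One small slip to fix: the sentence ``the containment I must prove is: every element of $\BI(TC_xX)\cap\Sym(\tilde L\upperp)$ vanishes on $TC_x\BJ(X,L)$'' is the wrong direction (that would give the reverse inclusion). Fortunately you do not actually use this sentence; the argument you carry out in the next sentences is in the correct direction, showing that generators of $\BI(TC_x\BJ(X,L))$ land in $\BI(\BJ(TC_xX,\tilde L))$. Just delete or correct that one line. Also, it suffices that the leading form of $P\in\BI(X)$ lies in $\BI(TC_xX)$; you need not (and should not) claim it is a \emph{generator} of $\BI(TC_xX)$.
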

\begin{proof}
Write $x=[v]$.
 For any variety $Y\subset \BP V$, generators for $TC_xY$
can be obtained from generators $Q_1\hd Q_s$ of $\BI(Y)$, see  e.g., \cite[Chap. 20]{Harris}. The generators are
  $Q_1(v^{f_1},\cdot)\hd Q_s(v^{f_s},\cdot)$ where $f_j$ is the largest  nonnegative
integer (which is at most $\tdeg Q_j-1$  since $x\in Y$) such that $Q_j(v^{f_j},\cdot)\neq 0$.
Here, if $\tdeg(Q_j)=d_j$, then  strictly speaking $Q_j(v^{f_j},\cdot)\in S^{d_j-f_j}T^*_xY$, but we
may consider $T_xY\subset T_x\BP V$ and may ignore the additional linear equations that arise as they don't effect the proof.

Generators of $\BJ(X,L)$ can be obtained from elements of $\BI(X)\cap Sym(L\upperp)$. Let
$P_1\hd P_g\in \BI(X)\cap Sym(L\upperp)$ be such a set of generators. Then, choosing the $f_j$ as
above, $P_1(v^{f_1},\cdot)\hd P_g(v^{f_g},\cdot)$
generate $\BI(TC_x(\BJ(X,L)))$.

Note that $P_1(v^{f_1},\cdot)\hd P_g(v^{f_g},\cdot)\in \BI(TC_xX)\cap Sym(L\upperp)$, so
they are in $\BI(\BJ(TC_xX,\tilde L))$.
Thus $\BI(TC_x(\BJ(X,L)))\subseteq \BI(\BJ(TC_xX,\tilde L))$.
\end{proof}

\begin{remark} The inclusion \eqref{schemeincl} may be strict. For example
$ \BJ(TC_{[a_1 \otimes b^1]}\s_r,[a_1 \otimes b^2]))\neq TC_{[a_1 \otimes b^1]}\BJ(\s_r,[a_1 \otimes b^2])$, where $a_i \otimes b^j$ is the matrix having $1$ at the entry $(i,j)$ and $0$ elsewhere. To see this,
first note that $[a_1\otimes b^2]\subset TC_{[a_1 \otimes b^1]}\s_r$, so as a set
$ \BJ(TC_{[a_1\otimes b^1]}\s_r,[a_1\otimes b^2]))= TC_{[a_1\otimes b^1]}\s_r$, in particular it
is of dimension one less than $\BJ(\s_r,[a_1\otimes b^2])$ which has the same
dimension as its tangent cone at any point.
\end{remark}

Proposition \ref{tconecommutea} implies:

\begin{corollary}\label{tconecummuteb}
Let $X\subset \BP V$ be a variety, let $L\subset \BP V$ be a linear space,
and let $x\in X$. Assume $TC_x\BJ(X,L)$ is reduced, irreducible, and $\tdim \BJ(TC_xX,\tilde L)=\tdim TC_x\BJ(X,L)$.
Then we have the equality   of schemes
$$
\BJ(TC_xX,\tilde L)=  TC_x\BJ(X,L).
$$
\end{corollary}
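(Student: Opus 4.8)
The statement is meant to be an immediate consequence of Proposition \ref{tconecommutea} together with the following elementary observation: if $Z\subseteq Y$ is a closed subscheme of a projective scheme $Y\subset\BP W$, with $Y$ reduced and irreducible and $\tdim Z=\tdim Y$, then $Z=Y$. So the plan is to verify this observation and then apply it with $Y=TC_x\BJ(X,L)$ and $Z=\BJ(TC_xX,\tilde L)$, both regarded as subschemes of $T_x\BP V$.

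To prove the observation, first I would check that $Z$ and $Y$ have the same underlying set. Since $\tdim Z=\tdim Y=:d$, the reduced subscheme $Z_{\mathrm{red}}$ has an irreducible component $W$ with $\tdim W=d$. Now $W$ is an irreducible closed subset of the irreducible space $Y$; if $W\neq Y$, then appending the strict inclusion $W\subsetneq Y$ to a chain of irreducible closed subsets of $W$ of length $d$ ending at $W$ produces a chain of irreducible closed subsets of $Y$ of length $d+1$, contradicting $\tdim Y=d$. Hence $W=Y$, and therefore $Z_{\mathrm{red}}=Y$ as sets. The inclusion $Z\subseteq Y$ means $\BI(Y)\subseteq\BI(Z)$ as homogeneous ideals, the equality of supports gives $\sqrt{\BI(Z)}=\sqrt{\BI(Y)}$, and $\sqrt{\BI(Y)}=\BI(Y)$ because $Y$ is reduced; thus $\BI(Z)\subseteq\sqrt{\BI(Z)}=\BI(Y)\subseteq\BI(Z)$, so $\BI(Z)=\BI(Y)$ and $Z=Y$ as schemes.

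Finally I would apply this with the two subschemes of $T_x\BP V$ in question. Proposition \ref{tconecommutea} gives the scheme inclusion $\BJ(TC_xX,\tilde L)\subseteq TC_x\BJ(X,L)$; by hypothesis $TC_x\BJ(X,L)$ is reduced and irreducible and the two schemes have equal dimension, so the observation yields equality. I do not anticipate any genuine obstacle here; the only point deserving a word of care is that the scheme structure on $\BJ(TC_xX,\tilde L)$ used in the statement should be the one produced in the proof of Proposition \ref{tconecommutea}, namely the subscheme of $T_x\BP V$ cut out by $\BI(TC_xX)\cap Sym(\tilde L\upperp)$, so that the scheme-theoretic inclusion we are invoking is exactly the one established there.
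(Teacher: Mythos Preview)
Your proposal is correct and matches the paper's approach: the paper simply writes ``Proposition \ref{tconecommutea} implies:'' before stating the corollary, so your write-up just supplies the elementary lemma (a closed subscheme of a reduced irreducible scheme of the same dimension must equal it) that makes the implication explicit. One small remark: in the paper $\BJ(TC_xX,\tilde L)$ is treated as a variety, hence already reduced, so your closing caveat about which scheme structure to use is unnecessary---the inclusion of Proposition \ref{tconecommutea} is exactly $\BI(TC_x\BJ(X,L))\subseteq \BI(\BJ(TC_xX,\tilde L))$ with the ordinary (radical) ideal on the right.
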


\subsection{Degrees  of the varieties $\BJ(\s_r,L^S)$}\label{degourvars}

\begin{lemma}\label{trisecantlemma}
Let $S$ be such that no entries of $S$ lie in a same column or row, and let $x\in S$. Assume $s<(n-r)^2$,
and let $S'=S\backslash x$. Let $\pi: \BJ(\s_r,L^{S'})\dashrightarrow \pp{n^2-2}$ denote
the projection from $[a \otimes b]$, where $a\otimes b$ is the matrix having $1$ at the entry $x$ and $0$ elsewhere. Then $\tdeg(\pi)=1$.
\end{lemma}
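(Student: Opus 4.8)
The plan is to show $\tdeg(\pi)=1$ by exhibiting, for a generic point $q$ in the image $\ol{\pi(\BJ(\s_r,L^{S'})\setminus [a\otimes b])}$, that the fiber $\pi^{-1}(q)$ consists of a single point of $\BJ(\s_r,L^{S'})$. Equivalently, writing $p=[a\otimes b]$ for the center of projection (the coordinate point associated to the entry $x=x^{i_0}_{j_0}$), I must show that a generic line through $p$ meets $\BJ(\s_r,L^{S'})$ in at most one point other than $p$ itself --- i.e. that the generic secant line of $\BJ(\s_r,L^{S'})$ through $p$ is not a genuine trisecant. By the last remark of \S\ref{degsect}, $\tdeg(\pi)>1$ would force \emph{every} secant line through $p$ to be at least a trisecant line, so it suffices to produce one secant line through $p$ that meets the variety in exactly two points.

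First I would set up coordinates: put $p$ at the coordinate point dual to $x^{i_0}_{j_0}$, so the projection $\pi$ simply forgets the $x^{i_0}_{j_0}$-coordinate. A generic point $A$ of $\BJ(\s_r,L^{S'})$ can be written $A = B + N$ with $B\in\hat\s_r$ of rank exactly $r$ and $N$ supported on $S'$ (this uses $s-1<(n-r)^2$, so the join has the expected dimension and generic such $A$ genuinely decompose). The key structural input is Lemma \ref{Lemma: autoelimination} (and the generality hypothesis that no two entries of $S$ share a row or column), which lets me control how the rank-$r$ part $B$ interacts with the support directions: because the entries of $S$ occupy distinct rows and columns, I can choose the rank-$r$ matrix $B$ and the perturbation $N$ so that $A$ and $A+\lambda\,(a\otimes b)$ are \emph{not} both generic points decomposable in $\BJ(\s_r,L^{S'})$ for all but finitely many $\lambda$. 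Concretely, the line $\{A+\lambda(a\otimes b):\lambda\in\BC\}$ meets $\hat\BJ(\s_r,L^{S'})$ where $A+\lambda(a\otimes b)-N'$ drops rank to $r$ for some $N'$ supported on $S'$; since the support of $S'$ together with the single extra entry $x^{i_0}_{j_0}$ still omits an $(n-r)\times(n-r)$ all-zero block in each row/column pattern forced by the distinctness hypothesis, the rank-drop condition along this pencil is governed by a nonzero polynomial in $\lambda$ of degree controlled by the relevant minors, and I would argue this polynomial is generically \emph{linear} (degree one) in $\lambda$ after accounting for the fixed point $p$ at $\lambda=\infty$ --- giving exactly one finite intersection point besides the trivial one.

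The cleanest way to package the degree count is via Proposition \ref{conedegprop} together with Corollary \ref{tconecummuteb}: $\tdeg\pi$ is forced to be $1$ once I know that $\tdeg(\BJ(\s_r,L^{S}))=\tdeg(\BJ(\s_r,L^{S'})) - \tdeg(TC_p\BJ(\s_r,L^{S'}))$ is a \emph{positive} integer consistent with the enumerative identity of Theorem \ref{bbdegthrmaluffi} (or the inequality of Theorem \ref{bbdegthrm}) --- if $\tdeg\pi$ were $\geq 2$, the resulting degree of $\BJ(\s_r,L^S)$ would be too small to match the known value $d_{n,r,0}-\sum_{j=1}^s d_{n-1,r-1,s-j}$ coming from the diagonal (no two entries in a row or column) configuration. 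So I would run the argument by contradiction: assume $\tdeg\pi\geq 2$, combine Proposition \ref{conedegprop} with the computation of $\tdeg(TC_p\BJ(\s_r,L^{S'}))$ via Corollary \ref{tconecummuteb} (where the tangent cone $TC_p\BJ(\s_r,L^{S'})$ is identified with $\BJ(TC_p\s_r,\tilde L^{S'})$, whose degree is computable), and derive a numerical inequality contradicting the degree formula for the configuration $S$.

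The main obstacle, I expect, is the verification that the pencil $\{A+\lambda(a\otimes b)\}$ meets $\hat\BJ(\s_r,L^{S'})$ in the \emph{scheme-theoretically correct} multiplicity --- i.e. ruling out that the rank-drop polynomial in $\lambda$ has an unexpected multiple root or higher-degree behaviour. This is exactly the place where the hypothesis $s<(n-r)^2$ (keeping us strictly below the hypersurface/filling threshold) and the "distinct rows and columns" condition must both be used essentially: the former guarantees genericity of the decomposition $A=B+N$, and the latter guarantees, via Lemma \ref{Lemma: autoelimination}, that adding the single direction $a\otimes b$ does not collapse the support combinatorics in a way that would create a genuine trisecant structure. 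I would isolate this into a short linear-algebra lemma about rank-drop along a pencil obtained by perturbing one entry of a generic sum "rank-$r$ plus sparse", and expect that lemma to be the technical heart of the proof.
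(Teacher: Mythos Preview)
Your proposal oscillates between two strategies without completing either, and the second one is circular within the paper's logic.

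The ``cleanest way'' you describe --- assume $\tdeg\pi\ge 2$, plug into Proposition \ref{conedegprop}, and contradict the degree formula of Theorem \ref{bbdegthrmaluffi} --- reverses the paper's logical flow. Lemma \ref{trisecantlemma} is precisely what is used to establish the degree inequality \eqref{coreqn}; Aluffi's equality (Theorem \ref{bbdegthrmaluffi}) is then quoted as an external upgrade. If you presuppose Aluffi's formula you have already bypassed the lemma, and the argument reduces to ``equality in \eqref{coreqn} forces equality at each step, hence $\tdeg\pi=1$'' --- true, but not an independent proof. Moreover, your appeal to Corollary \ref{tconecummuteb} to compute $\tdeg(TC_p\BJ(\s_r,L^{S'}))$ requires that tangent cone to be reduced and irreducible, which is Proposition \ref{setlem} --- itself deduced \emph{from} Aluffi's equality. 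So this route either assumes the conclusion or imports a result proved later and elsewhere.

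Your first strategy --- produce a single secant line through $p$ that is not a trisecant --- is the paper's approach and is correct in principle, but you have not supplied the key step. The reference to Lemma \ref{Lemma: autoelimination} is misplaced: that lemma treats entries in the \emph{same} column, the opposite of the present hypothesis. What the paper actually does is entirely explicit linear algebra. After normalizing $S$ to the first $s$ diagonal entries and $x=x^1_1$, it \emph{constructs} a specific $A\in\hat\s_r$ with zero first row and column (e.g. $A=\sum_{j=1}^r a_{\lfloor n/2\rfloor+1+j}\otimes b^{j+1}$ when $r\le\lceil n/2\rceil-1$, and an analogous two-block choice otherwise) such that some fixed $r\times r$ minor of $A$ avoids the diagonal entirely and equals $\pm 1$. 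Then for any $D$ supported on $S'\subset\{x^2_2,\ldots,x^s_s\}$, the submatrix of $A+D$ obtained by deleting the first row and column still has that minor equal to $\pm 1$, so its rank cannot drop to $r-1$. This rules out any second decomposition $u(A+M)+v\,a_1\otimes b^1=B+F$ with $[B]\ne[A]$, hence the line meets the variety only once away from $p$. Your sketch of ``a rank-drop polynomial in $\lambda$ that is generically linear'' is the right shape, but the content is this explicit choice of $A$ making the relevant minor constant --- without it the argument is incomplete.
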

\begin{proof}
We need to show a general line through $[a\otimes b]$ that intersects $\BJ(\s_r,L^{S'})$, intersects it in  a unique point.
Without loss of generality, take $S$ to be the first $s$ diagonal entries and $x=x^1_1$.
It will be sufficient to show 
  that there exist  $A\in\hat \s_r$  and $M\in \hat L^{S'}$ such that  are no elements $B\in \hat\s_r$, $F\in \hat L^{S'}$
such that $u( A+M)+ v a_1 \otimes b^1 = B+F$ for some $u,v\neq 0$ other than when  $[B]=[A]$.  
Assume $A$ has no entries in the first row or column, so, moving
$F$ to the other side of the equation,
 in order that the corresponding $B$ has rank at most $r$,
there must be a matrix $D$
 with entries in $S'$, 
  such that   $A+D$ with the first row and column removed has rank at most $r-1$.

If $r\leq \lceil \frac n 2 \rceil-1$, take $A$ to be the matrix $\sum_{j=1}^{r}a_{\lfloor\frac n 2\rfloor+1+j}\otimes b^{j+1}$. 
Then the determinant of a size $r$ submatrix in the lower left quadrant of $A$ is always $1$.

If $\lceil \frac n 2 \rceil-1< r \leq n-2$, take
$A=\sum_{j=1}^{\lceil \frac n 2\rceil-1}a_{\lfloor \frac n 2 \rfloor + 1 + j} \otimes b^{j+1}
+ \sum_{i=1}^{r-\lceil \frac n 2 \rceil - 1} a_{i+1} \otimes b^{\lfloor \frac n 2 \rfloor + 1 + i}$.
Then  the size $r$ minor
consisting of columns $\{2,3,\ldots,r+1\}$ and rows
 $\{\lfloor\frac n 2 \rfloor +2,\lfloor\frac n 2 \rfloor +3, \ldots, n, 2,3,\ldots,r-\lceil\frac n 2 \rceil +2\}$ is such that
its determinant is also always $\pm 1$,   independent of choice of $D$.
\end{proof}

Let $A=\BC^n$ with basis $a_1\hd a_n$, and let $A'=\langle a_2\hd a_n\rangle$, and similarly for $B=\BC^n$. Let $x=[x^1_1]$.
It is a standard fact  (see, e.g. \cite[p 257]{Harris}), that 
$$TC_{x }\s_r(Seg(\BP A\times \BP B))=\BJ(\BP \hat T_{x}\s_1(Seg(\BP A\times \BP B)),
  \s_{r-1}(Seg(\BP A'\times \BP B'))) ,
$$
so by Proposition \ref{tconecommutea}
$$
TC_x(\BJ(\s_r(Seg(\BP A\times \BP B)),L^{S'}))
\supseteq  \BJ(\BP \hat T_{x}\s_1(Seg(\BP A\times \BP B)), \BJ(\s_{r-1}(Seg(\BP A'\times \BP B'), L^{S'})).
$$
Since $\BP \hat T_{x}\s_1(Seg(\BP A\times \BP B))$ is  a linear space  and
$\BJ(\s_{r-1}(Seg(\BP A'\times \BP B'), L^{S'})$ lies in a linear space disjoint from it,  
$$
\tdeg  \BJ(\BP \hat T_{x}\s_1(Seg(\BP A\times \BP B)), \BJ(\s_{r-1}(Seg(\BP A'\times \BP B'), L^{S'})) =
\tdeg \BJ(\s_{r-1}(Seg(\BP A'\times \BP B'), L^{S'})
$$
because if $L$ is a linear space and $Y$ any variety and $L\cap Y=\emptyset$,  then $\tdeg J(Y,L)=\tdeg Y$.

Thus if 
\be\label{samedim}
\tdim(TC_x(\BJ(\s_r(Seg(\BP A\times \BP B)),L^{S'})))=\tdim   \BJ(TC_x\s_r(Seg(\BP A\times \BP B)),L^{S'}),
\ene 
we obtain
\be\label{keydeg}
\tdeg TC_x(\BJ(\s_r(Seg(\BP A\times \BP B)),L^{S'}))\geq \tdeg \BJ(\s_{r-1}(Seg(\BP A'\times \BP B'), L^{S'}).
\ene

\smallskip

Recall the notation $d(n,r,s):=\tdeg \BJ(\s_r,L^S)$ where $S$  with $|S|=s$ is such that no  two elements lie in the same
row or column. In particular
$d(n,r,0)=\tdeg(\s_r(Seg(\pp{n-1}\times \pp{n-1}))$.

\begin{proposition}  Let $S$ be such that no two elements of $S$ lie in the  same row or column. Then
\be\label{coreqn}d_{ {n,r},s}\leq d_{ {n,r},0}-\sum_{j=1}^s d_{ {n-1,r-1},s-j}
\ene
\end{proposition}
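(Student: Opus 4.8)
The plan is to induct on $s$, peeling off one element $x\in S$ at a time and using Proposition~\ref{conedegprop} together with the lower bound \eqref{keydeg} on the degree of the tangent cone. Fix $S$ with $|S|=s$, no two elements in the same row or column, pick $x\in S$, and set $S'=S\setminus x$. By Lemma~\ref{trisecantlemma} the projection $\pi$ from the point $[a\otimes b]$ (where $a\otimes b$ has a $1$ in position $x$ and zeros elsewhere) restricted to $\BJ(\s_r,L^{S'})$ has degree $1$. Hence Proposition~\ref{conedegprop} gives
\[
\tdeg\BJ(\s_r,L^S)=\tdeg\BJ\bigl(\BJ(\s_r,L^{S'}),[a\otimes b]\bigr)=\tdeg\BJ(\s_r,L^{S'})-\tdeg TC_x\BJ(\s_r,L^{S'}).
\]
Now I would bound the subtracted term from below. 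Assuming the dimension equality \eqref{samedim} holds (with $S'$ in place of $S$), the computation preceding \eqref{keydeg} yields
\[
\tdeg TC_x\BJ(\s_r,L^{S'})\ \geq\ \tdeg\BJ(\s_{r-1}(Seg(\BP A'\times\BP B')),L^{S'})\ =\ d_{n-1,r-1,s-1},
\]
the last equality because deleting the row and column through $x$ identifies $A',B'$ with $\BC^{n-1}$ and $S'$ maps to a configuration of $s-1$ entries still with no two in a common row or column. Combining, $d_{n,r,s}\leq d_{n,r,s-1}-d_{n-1,r-1,s-1}$.

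From here the inequality \eqref{coreqn} follows by induction on $s$: the base case $s=0$ is the identity $d_{n,r,0}=d_{n,r,0}$, and assuming $d_{n,r,s-1}\leq d_{n,r,0}-\sum_{j=1}^{s-1}d_{n-1,r-1,s-1-j}$, substitution into $d_{n,r,s}\leq d_{n,r,s-1}-d_{n-1,r-1,s-1}$ gives
\[
d_{n,r,s}\ \leq\ d_{n,r,0}-\sum_{j=1}^{s-1}d_{n-1,r-1,s-1-j}-d_{n-1,r-1,s-1}\ =\ d_{n,r,0}-\sum_{j=1}^{s}d_{n-1,r-1,s-j},
\]
after reindexing the sum (the new $j=1$ term is the extra $d_{n-1,r-1,s-1}$). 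One technical point to check is that the recursion stays in the regime where Lemma~\ref{trisecantlemma} and the tangent-cone computation apply, i.e.\ that $s<(n-r)^2$ and, after passing to $(n-1,r-1)$, that $s-1<(n-r)^2$ as well, both of which hold since $s\leq n$ and we may assume $r\leq n-2$; also $\BJ(\s_r,L^{S'})$ must properly contain $\s_r$ so that the cone is genuinely larger, which is guaranteed as long as we have not yet filled the ambient space.

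The main obstacle is verifying the dimension equality \eqref{samedim}, i.e.\ that $\tdim TC_x\BJ(\s_r,L^{S'})=\tdim\BJ(TC_x\s_r,L^{S'})$. Since $\BJ(\s_r,L^{S'})$ is irreducible, its tangent cone at any point has dimension equal to $\dim\BJ(\s_r,L^{S'})$; so one must show the right-hand side has the same dimension. Using $TC_x\s_r=\BJ(\BP\hat T_x\s_1,\s_{r-1}(Seg(\BP A'\times\BP B')))$ and the fact that $\BP\hat T_x\s_1$ is a linear space meeting the relevant ambient linear span transversally, a Terracini-type count (Proposition~\ref{joinstdprop}\eqref{as2}) should give $\dim\BJ(TC_x\s_r,L^{S'})=\dim\s_{r-1,n-1}+ (2n-1)+ (\text{contribution of }S')$, which one checks matches $\dim\BJ(\s_r,L^{S'})=r(2n-r)+|S'|$ under the expected-dimension hypothesis $s<(n-r)^2$. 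This bookkeeping — ensuring the joins involved all have expected dimension so that the inclusion of schemes in Proposition~\ref{tconecommutea} is an equality of dimensions — is the delicate part; everything else is formal.
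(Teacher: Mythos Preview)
Your proof is correct and follows essentially the same approach as the paper's: use Lemma~\ref{trisecantlemma} to get $\tdeg\pi=1$, apply Proposition~\ref{conedegprop} together with \eqref{keydeg} to obtain the one-step inequality $d_{n,r,t}\leq d_{n,r,t-1}-d_{n-1,r-1,t-1}$, and iterate. The paper simply asserts that \eqref{samedim} holds ``in this situation'' without elaboration; your sketch of why it holds (the tangent cone has the dimension of the variety, and the right-hand join has expected dimension because $\BP\hat T_x\s_1$ is a linear space disjoint from $\BP(A'\ot B')\supset \BJ(\s_{r-1}(Seg(\BP A'\times\BP B')),L^{S'})$, and $L^{S'}\subset\BP(A'\ot B')$ since no element of $S'$ shares the row or column of $x$) is more explicit than what the paper provides, and the dimension count indeed goes through.
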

\begin{proof} In this situation the equality \eqref{samedim} holds,  and Lemma \ref{trisecantlemma} says the degree
of $\pi$ in Proposition \ref{conedegprop} equals one, so apply it and equation \eqref{keydeg} iteratively 
to obtain   the inequalities  $d_{ {n,r},t}\leq d_{ {n,r},t-1}-d_{ {n-1,r-1},t-1}$.
\end{proof}

As mentioned in the introduction, P. Aluffi \cite{aluffideg} proved that equality holds in
\eqref{coreqn}. This has the following consequence, which was stated
as a conjecture in an earlier version of this paper:

\begin{proposition}\label{setlem} Let $S$ be such that no two elements of $S$ lie in the  same row or column and
let $x\in S$.
Then
$$
TC_x\BJ(\s_r, L^S)=\BJ(TC_x\s_r, L^{S'}).
$$
\end{proposition}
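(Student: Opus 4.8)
The plan is to combine the scheme-theoretic inclusion of Proposition~\ref{tconecommutea} with the degree equality of Aluffi (Theorem~\ref{bbdegthrmaluffi} / the affirmative answer to \eqref{coreqn}), using the criterion of Corollary~\ref{tconecummuteb}. By Proposition~\ref{tconecommutea} we already have the inclusion of schemes $\BJ(TC_x\s_r, L^{S'}) \subseteq TC_x\BJ(\s_r, L^S)$, where $S'=S\backslash x$ and where I use that $L^S$ is the span of $L^{S'}$ together with the coordinate line $x$, so that $\BJ(\s_r,L^S)=\BJ(\BJ(\s_r,L^{S'}),x)$ and the Zariski tangent direction $\tilde L^S$ at $x$ contains $L^{S'}$. (Strictly, the computation preceding the statement identifies $TC_x\s_r$ with $\BJ(\BP\hat T_x\s_1, \s_{r-1}(Seg(\BP A'\times\BP B')))$, and $\BJ$ of this with $L^{S'}$ absorbs the linear factor $\BP\hat T_x\s_1$; I would record this identification explicitly at the start.) So the entire content is the reverse inclusion, which by Corollary~\ref{tconecummuteb} follows once I verify three things: $TC_x\BJ(\s_r,L^S)$ is irreducible, it is reduced, and $\tdim\BJ(TC_x\s_r,L^{S'}) = \tdim TC_x\BJ(\s_r,L^S)$.

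For the dimension equality: $TC_x\BJ(\s_r,L^S)$ has dimension equal to $\dim\BJ(\s_r,L^S) = r(2n-r)+s$ (since $\BJ(\s_r,L^S)$ is irreducible of this expected dimension, as established in \S\ref{bbgeomlang}, using $s<(n-r)^2$). On the other side, as computed in the lines preceding Proposition~\ref{setlem}, $\BJ(TC_x\s_r,L^{S'})$ contains $\BJ(\s_{r-1}(Seg(\BP A'\times\BP B')), L^{S'})$ joined with the linear space $\BP\hat T_x\s_1$; since $S'$ has $s-1$ elements no two in the same row or column of the $(n-1)\times(n-1)$ block $A'\otimes B'$, this inner join has the expected dimension $(r-1)(2(n-1)-(r-1)) + (s-1)$, and joining with the disjoint linear space $\BP\hat T_x\s_1$ of dimension $2n-2$ (projectively; i.e.\ $2n-1$ affine directions, of which one is $x$ itself) adds $2n-1$. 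A direct bookkeeping gives $(r-1)(2n-r-1)+(s-1) + (2n-1) + 1 = r(2n-r)+s$, matching. I would also invoke Aluffi's equality $d_{n,r,s} = d_{n,r,0}-\sum_{j=1}^s d_{n-1,r-1,s-j}$ together with Proposition~\ref{conedegprop} and \eqref{keydeg}: equality in \eqref{coreqn} forces equality in \eqref{keydeg}, i.e.\ $\tdeg TC_x\BJ(\s_r,L^{S'}) = \tdeg\BJ(\s_{r-1}(Seg(\BP A'\times\BP B')),L^{S'})$, which — combined with the inclusion \eqref{schemeincl} and the matching dimensions — already pins down $TC_x\BJ(\s_r,L^{S'})$ as a scheme; one more application of Proposition~\ref{conedegprop} (peeling off $x$) upgrades this to the statement for $L^S$ itself.

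For irreducibility and reducedness of $TC_x\BJ(\s_r,L^S)$: I would argue that $\BJ(TC_x\s_r,L^{S'})$ is irreducible (it is a join of irreducible varieties — $TC_x\s_r$ is irreducible since $\s_r$ is, by Proposition~\ref{joinstdprop}\eqref{as0} applied to $\BJ(\BP\hat T_x\s_1,\s_{r-1})$, and joining with a linear space preserves irreducibility), hence by the inclusion \eqref{schemeincl} and the equality of dimensions, $\BJ(TC_x\s_r,L^{S'})$ is a (full-dimensional, hence union of) component(s) of $TC_x\BJ(\s_r,L^S)$; since $TC_x\BJ(\s_r,L^S)$ is equidimensional (Proposition, ``While $TC_xZ$ may have many components \dots it is equi-dimensional''), any other component would also be full-dimensional, and then the degree count $\tdeg TC_x\BJ(\s_r,L^S) \le \tdeg\BJ(TC_x\s_r,L^{S'})$ forced by \eqref{keydeg}-with-equality leaves no room, so $TC_x\BJ(\s_r,L^S)$ is irreducible with the same support, and the degree equality forces it to be generically reduced; being a tangent cone it is pure-dimensional, hence reduced. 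Then Corollary~\ref{tconecummuteb} applies verbatim.

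The main obstacle I anticipate is the interplay between the scheme structure and the multiplicities: \eqref{schemeincl} is only an inclusion of schemes and could a priori be strict (the remark after Proposition~\ref{tconecommutea} gives exactly such an example, when two elements of $S$ collide in a row or column — which is why the hypothesis ``no two elements in the same row or column'' is essential). So the delicate point is not the set-theoretic statement but ruling out that $TC_x\BJ(\s_r,L^S)$ carries strictly larger multiplicity along its components than $\BJ(TC_x\s_r,L^{S'})$ does; this is precisely what Aluffi's degree equality buys us — without it one only gets the inequality \eqref{coreqn} and the inclusion could be strict. I would therefore be careful to present the argument so that the degree equality is used at exactly the point where it converts the inclusion of schemes into an equality, rather than trying to prove the scheme equality by a local computation in coordinates (which would be the ``grind'' alternative and is what the remark warns is genuinely subtle).
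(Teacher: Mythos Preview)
Your overall strategy — the scheme inclusion from Proposition~\ref{tconecommutea}, Aluffi's degree equality, and Corollary~\ref{tconecummuteb} — is exactly the intended argument, and indeed the paper gives no proof beyond ``this has the following consequence'' of Aluffi's result; you are correctly filling in that gap. In particular, your argument that equality in \eqref{coreqn} forces equality in \eqref{keydeg}, which in turn forces the scheme inclusion \eqref{schemeincl} to be an equality (via equidimensionality of tangent cones and the degree count leaving no room for extra components or multiplicity), is sound and is the right way to prove
\[
TC_x\BJ(\s_r, L^{S'})=\BJ(TC_x\s_r, L^{S'}).
\]

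The problem is that this is not quite what the proposition literally says: the paper has $L^S$ on the left, and your attempt to ``upgrade'' from $L^{S'}$ to $L^S$ does not work. Your dimension bookkeeping contains an unjustified $+1$: the affine dimension of $\BJ(TC_x\s_r, L^{S'})$ is at most $r(2n-r)+(s-1)$, whereas $TC_x\BJ(\s_r,L^S)$ has dimension $r(2n-r)+s$, because $x\in L^S$ makes $\BJ(\s_r,L^S)$ a cone with vertex $x$, so $TC_x\BJ(\s_r,L^S)=\BJ(\s_r,L^S)$ as schemes. Concretely, for $n=3$, $r=1$, $S=\{x^1_1,x^2_2,x^3_3\}$, $x=x^1_1$: the left side is the degree-$3$ hypersurface $x^2_1x^3_2x^1_3-x^1_2x^2_3x^3_1=0$, while $\BJ(TC_{x}\s_1, L^{S'})=\BP\langle \hat T_x\s_1, x^2_2,x^3_3\rangle$ is the linear space $\{x^2_3=x^3_2=0\}$, strictly smaller. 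So the statement as printed has a typo (it should read $L^{S'}$ on both sides, matching the surrounding discussion and \eqref{keydeg}), and your intermediate result is the correct one; drop the final ``peeling off $x$'' step.
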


In particular $TC_x\BJ(\s_r,L^S)$ is reduced and irreducible.

\smallskip

\begin{theorem}\label{degreethm}  Each irreducible component of $\hat \cR[n,n-k,s]$ has degree at most
\be \label{amazingsum2}
\sum_{m=0}^s\binom sm (-1)^m d_{r-m,n-m,0}
\ene
with equality holding if  {no two elements of} $S$  {lie in the same row or column,} 
e.g., if the elements of $S$ appear on the diagonal.

Moreover, if we set $r=n-k$ and $s=k^2-u$ and consider the degree  {$D(n,k,u)$}
as a function of $n,k,u$, then,   fixing $k,u$ and considering $D_{k,u}(n)=D(n,k,u)$ as a function of $n$, 
it is of the form
$$
D_{k,u}(n)= (k^2)!\frac{\tb(k)^2}{\tb(2k)}p(n)
$$
where $p(n)=\frac{n^u}{u!}- \frac{k^2-u}{2(u-1)!}n^{u-1} +O(n^{u-2})$ is a polynomial of degree $u$. 
\end{theorem}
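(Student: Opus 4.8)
The plan is to combine the recursion of Theorem~\ref{bbdegthrmaluffi} with the explicit formula~\eqref{drno} for $d_{n,r,0}=\tdim S_{k^k}\BC^n$, and then extract the polynomial behaviour in $n$ by an inclusion--exclusion manipulation. First I would establish the closed form~\eqref{amazingsum2} for $d_{n,n-k,s}$ when no two elements of $S$ share a row or column. Starting from $d_{n,r,s}=d_{n,r,0}-\sum_{j=1}^s d_{n-1,r-1,s-j}$ and unwinding the recursion, one sees that each step peels off one row and one column while decrementing $s$; iterating and collecting terms by how many times one has ``descended'' produces exactly the alternating binomial sum $\sum_{m=0}^s \binom sm (-1)^m d_{r-m,n-m,0}$. (The binomial coefficient $\binom sm$ counts the number of order-$m$ descent paths landing on the term $d_{r-m,n-m,0}$; the sign is the parity of $m$; the upper limit $s$ is forced because $d_{r-m,n-m,0}$ is not defined for $m>s$ and, more to the point, the recursion terminates.) The inequality for a general component then follows from Theorem~\ref{bbdegthrm} together with the fact that the degree of a variety does not increase under specialising $S$ (equivalently, under projection/taking cones), since any configuration $S$ can be degenerated to a ``diagonal'' configuration of the same cardinality while the join only degenerates.

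Next, with $r=n-k$ and $s=k^2-u$, I would substitute the last expression in~\eqref{drno},
$$
d_{r-m,n-m,0}=d_{n-m,(n-m)-k,0}=\frac{\tdim[k^k]}{k^2!}\cdot\frac{\tb(n-m-k)\,\tb(n-m+k)}{\tb(n-m)^2},
$$
so that
$$
D(n,k,u)=(k^2)!\frac{\tb(k)^2}{\tb(2k)}\sum_{m=0}^{k^2-u}\binom{k^2-u}{m}(-1)^m\, R(n-m),
\qquad R(t):=\frac{\tb(t-k)\,\tb(t+k)}{\tb(t)^2}\cdot\frac{\tb(2k)}{\tb(k)^2}.
$$
The key structural point is that $R(t)$ is a \emph{polynomial in $t$ of degree $k^2$}: indeed $\tb(t)=G(t+1)$ satisfies $G(t+1)=\Gamma(t)G(t)$, so $\tb(t+k)/\tb(t)=\prod_{i=0}^{k-1}\Gamma(t+i)^{?}$ telescopes to a product of $\binom{k}{2}$-many shifted Gamma ratios, and $\tb(t-k)/\tb(t)$ does likewise; assembling these, every $\Gamma$ cancels and one is left with a product of linear factors in $t$, $k^2$ of them in total, with leading coefficient normalised to $1$ by the choice of the constant $\tb(2k)/\tb(k)^2$. (Concretely $R(t)=\prod_{i=1}^{k}\prod_{j=1}^{k}(t-k-1+i+j)$ up to checking the exact shifts, which is the routine part.) Thus $p(n):=\frac{1}{(k^2)!}\cdot\frac{\tb(2k)}{\tb(k)^2}D(n,k,u)=\sum_{m=0}^{k^2-u}\binom{k^2-u}{m}(-1)^m R(n-m)$ is a finite difference of the degree-$k^2$ polynomial $R$.

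Finally I would invoke the elementary fact about finite differences: applying the forward difference operator $\Delta^a$ to a polynomial of degree $D$ lowers its degree by exactly $a$ (for $a\le D$) and multiplies the leading coefficient by $1$, and more precisely $\Delta^a R(n)=R^{(a)}(n)/\,$(leading normalisation)$+\dots$; since $\sum_{m}\binom{N}{m}(-1)^m R(n-m)=(\Delta^{N}R)(n-N)$ with $N=k^2-u$, and $\deg R=k^2$, we get $\deg p = k^2-(k^2-u)=u$. Reading off the top two coefficients of $\Delta^{k^2-u}R$ from the top two coefficients of $R(t)=t^{k^2}+c_{k^2-1}t^{k^2-1}+\cdots$ — where $c_{k^2-1}=-\sum(\text{roots})$ is computed from the explicit linear factors and equals $-\tfrac{k^2}{2}(k^2-1)\cdot\frac{?}{?}$, to be pinned down by the shift bookkeeping — yields the leading term $\frac{n^u}{u!}$ and the subleading term $-\frac{k^2-u}{2(u-1)!}n^{u-1}$, with an $O(n^{u-2})$ remainder. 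The main obstacle is purely the bookkeeping in the middle step: verifying that the shifted Barnes-$G$ ratio $R(t)$ is genuinely a polynomial with the asserted roots and extracting its two top coefficients correctly, since an off-by-one in the shifts would corrupt the subleading coefficient $-\frac{k^2-u}{2(u-1)!}$; everything after that is a mechanical application of the finite-difference lemma. The special case $u=1$ (where $p(n)=n-\tfrac12(k^2-1)$) provides a useful consistency check on the constants.
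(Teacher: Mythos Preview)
Your approach is essentially the same as the paper's: unwind the recursion $d_{n,r,s}=d_{n,r,s-1}-d_{n-1,r-1,s-1}$ to obtain the alternating binomial sum, then observe that $\tb(t-k)\tb(t+k)/\tb(t)^2$ is a monic polynomial in $t$ of degree $k^2$ (the paper writes this explicitly as $(n-m)^k\prod_{t=1}^{k-1}(n-m+k-t)^t(n-m-k+t)^t$), and finally read off the top two coefficients of its $(k^2-u)$-th finite difference. The paper phrases this last step via the Stirling-type identities $\sum_m\binom sm(-1)^m m^s=(-1)^s s!$ and $\sum_m\binom sm(-1)^m m^{s+1}=(-1)^s s!\binom{s+1}2$, which is exactly your $\Delta^N$ computation in different notation; your packaging is slightly cleaner but not a different argument.

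One point to correct: your justification of the inequality for an \emph{arbitrary} component (\lq\lq degenerate $S$ to a diagonal configuration; degree can only drop\rq\rq) is not the right mechanism and is not obviously valid --- moving the support of $L^S$ does not in general give a flat family of the joins $\BJ(\sigma_r,L^S)$, so you cannot conclude semicontinuity of degree that way. The paper's (implicit) argument is simpler: $d_{n,r,s}$ is \emph{defined} as the maximum degree of an irreducible component, and by Aluffi's equality (Theorem~\ref{bbdegthrmaluffi}) this maximum satisfies the exact recursion; iterating yields the closed form for the maximum, and hence \emph{a fortiori} an upper bound for every component. Replace your degeneration sentence with this and the argument is complete.
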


For example:
\begin{align*}
D(n,k,1)&=\frac{(k^2)!\tb(k)^2}{\tb(2k)}(n- \frac 12 (k^2-1)),  \\
D(n,k,2)&=\frac{(k^2)!\tb(k)^2}{\tb(2k)}(\frac 12 n^2- \frac 12 (k^2-2)n+\frac 16(\frac 34 k^4-\frac {11}4 k^2+2)).
\end{align*}

\begin{proof} Apply induction on all terms of \eqref{coreqn}.
We get
\begin{align*}
d_{ {n,r},s}&=d_{ {n,r},s-1}-d_{ {n-1,r-1},s-1}\\
&=(d_{ {n,r},s-2}-d_{ {n-1,r-1},s-2})-(d_{ {n-1,r-1},s-2}-d_{ {n-2,r-2},s-2})\\
&=d_{ {n,r},s-3}-3d_{ {n-1,r-1},s-3}+3d_{ {n-2,r-2},s-3}+d_{ {n-3,r-3},s-3}\\
&\vdots\\
&=
\sum_{m=0}^{\ell} (-1)^m \binom{\ell}m d_{ {n-m,r-m},s-\ell },
\end{align*}
for any $\ell \leq s$, in particular, for $\ell = s$. 
 
To see the second  assertion, note that 
$$
\sum_{m=0}^s\binom sm (-1)^m q(m) =0
$$
where $q(m)$ is any polynomial of degree less than $s$ and 
\begin{align*}
\sum_{m=0}^s\binom sm (-1)^m m^s&=s!(-1)^{s}
\\
\sum_{m=0}^s\binom sm (-1)^m m^{s+1}&=s!\binom{s-1}2(-1)^s .
\end{align*}
 (See, e.g. \cite[\S 1.4]{MR2868112},  where  the relevant  function is
called $S(n,k)$.)

Consider
\begin{align*}
\sum_{m=0}^s &\binom sm (-1)^m d_{r-m,n-m,0} \\
&=\sum_{m=0}^s\binom sm (-1)^m \frac{\tb(k)^2\tb(n-m+k)\tb(n-m-k)}{\tb(2k)\tb(n-m)^2}\\
&=\frac{ \tb(k)^2}{\tb(2k)}\sum_{m=0}^s\binom sm (-1)^m \frac{\tb(n-m+k)\tb(n-m-k)}{\tb(n-m)^2}\\
&=\frac{ \tb(k)^2}{\tb(2k)}\sum_{m=0}^s\binom sm (-1)^m(n-m)^k{\textstyle \prod}_{t-1}^{k-1}(n-m+k-t)^t(n-m-k+t)^t
\end{align*}
Write $ (n-m)^k\prod_{t=1}^{k-1}(n-m+k-t)^t(n-m-k+t)^t=\sum_j c_{k,n,j}m^j$, then all values of $j$ less than $s=k^2-u$ contribute
zero to the sum,  the $j=s$ case gives $c_{k,n,k^2-u}(k^2-u)!(-1)^{k^2-u}$. Now consider  the highest power of $n$ in $c_{k,n,k^2-u}$.
$\sum_j c_{k,n,j}m^j$ is a product of  $k+2\binom k2=k^2$ linear forms, if we use $k^2-u+t$   of them for the $m$, there will
be $u-t$ to which $n$ can contribute, so the only term with  $n^u$ can come from the case $t=0$, in which case the coefficient
of $n^um^{k^2-u}$ is $(-1)^{k^2-u} \binom{k^2} u$. Putting it all together, we obtain the coefficient.
The next highest power,   $n^{u-1}$ {\it a priori} could appear  in two terms: $c_{k,n,k^2-u}$, but there the coefficient is 
$\binom{k^2}{u}[\sum_{t=1}^{k-1}(k-t)+ \sum_{t=1}^{k-1}(-k+t)]=0$,  and $c_{k,n,k^2-u+1}$, 
where the total contribution is 
$$\binom{k^2}{u-1}\binom{k^2-u+1}2 (k^2-u)!=\frac{k^2!}{(u-1)!}\frac{k^2-u}2.
$$\end{proof}

\bibliographystyle{amsplain}
 \def\cdprime{$''$} \def\cprime{$'$} \def\cprime{$'$} \def\cprime{$'$}
  \def\Dbar{\leavevmode\lower.6ex\hbox to 0pt{\hskip-.23ex \accent"16\hss}D}
  \def\cprime{$'$} \def\cprime{$'$} \def\cdprime{$''$} \def\cprime{$'$}
  \def\cprime{$'$} \def\Dbar{\leavevmode\lower.6ex\hbox to 0pt{\hskip-.23ex
  \accent"16\hss}D} \def\cprime{$'$} \def\cprime{$'$} \def\cprime{$'$}
  \def\cprime{$'$} \def\Dbar{\leavevmode\lower.6ex\hbox to 0pt{\hskip-.23ex
  \accent"16\hss}D} \def\cprime{$'$} \def\cprime{$'$}
\providecommand{\bysame}{\leavevmode\hbox to3em{\hrulefill}\thinspace}
\providecommand{\MR}{\relax\ifhmode\unskip\space\fi MR }
\providecommand{\MRhref}[2]{%
  \href{http://www.ams.org/mathscinet-getitem?mr=#1}{#2}
}
\providecommand{\href}[2]{#2}

\end{document}